\documentclass[reprint,
superscriptaddress,
bibnotes,
 amsmath,amssymb,
 aps,
prx,
floatfix,
]{revtex4-2}
\usepackage[dvipsnames]{xcolor}
\usepackage{booktabs}
\usepackage{comment}
\usepackage{etoolbox}
\usepackage{stmaryrd}
\usepackage[T1]{fontenc}
\usepackage{graphicx,braket,hyperref,amsthm,cleveref,amssymb,amsmath,mathtools}
\usepackage{dcolumn}
\usepackage{bm}
\usepackage{makecell}
\setcellgapes{4pt} 
\makegapedcells
\usepackage{booktabs}   
\usepackage{dcolumn}    
\usepackage{graphicx}
\usepackage{adjustbox}
\usepackage[caption=false]{subfig}
\theoremstyle{theorem} 
\newtheorem{theorem}{Theorem}
\newtheorem*{theorem*}{Theorem} 
\newtheorem*{criterion}{Criterion}
\newtheorem{proposition}[theorem]{Proposition}
\newtheorem{lemma}[theorem]{Lemma}
\newtheorem{definition}{Definition}[section]

\newtheorem{remark}{Remark}[section]
\newtheorem{corollary}[theorem]{Corollary}

\usepackage{graphicx}
\usepackage{booktabs}
\usepackage[percent]{overpic}

\usepackage[dvipsnames]{xcolor}
\usepackage{hyperref}
\hypersetup{
colorlinks=true,
linkcolor=BrickRed,
filecolor=blue,
citecolor=MidnightBlue,  
urlcolor=MidnightBlue,
}

\usepackage{todonotes}

{\par}
\newcommand{\supp}{\mathrm{supp}}
\newcommand{\vdef}{\coloneqq}
\newcommand{\ckz}{\mathrm{C}^{k-1}{Z}}
\newcommand{\ccz}{\mathrm{CC}{Z}}
\newcommand{\dimasd}{\mathrm{dim}_{\mathrm{Assouad}}}

\begin{document}



\title{Intrinsic locality dimension of quantum codes}
\date{\today}
\author{Yimin Lu}
\affiliation{Yau Mathematical Sciences Center, Tsinghua University, Beijing, 100084, China }

\author{Esther Xiaozhen Fu}
\affiliation{Joint Center for Quantum Information and Computer Science, University of Maryland, College Park,
Maryland 20742, USA}

\author{Zi-Wen Liu}
\affiliation{Yau Mathematical Sciences Center, Tsinghua University, Beijing, 100084, China }

\begin{abstract}

    Quantum error-correcting codes are a cornerstone of quantum computing, with broad and profound connections to physics and mathematics. 
In this work, we introduce the notion of intrinsic locality dimension of stabilizer codes, which is independent of the underlying geometry of quantum codes and naturally extends to non-integer values. Drawing on mathematical tools from fractal geometry and geometric measure theory, the intrinsic locality dimension accommodates flexible architectures and provides a quantitative measure of code connectivity, encompassing both topological codes and algebraic constructions such as bivariate-bicycle-type codes. We show how the intrinsic dimension serves as a fundamental organizing parameter that unifies code properties.  In particular, we prove general limitations on code parameters and compatible fault-tolerant logical gates induced by the intrinsic dimension, generalizing the Bravyi--Poulin--Terhal and Bravyi--K\"{o}nig bounds for regular topological codes, respectively. Furthermore, we consider implications on thermal properties: toward fully characterizing the geometry requirement for self-correcting quantum memories (SCQMs), we present a conditional no-go result for SCQMs in dimension $3-\epsilon$ and take stock of existing results on low-dimensional SCQMs. Our theory provides a unifying mathematical framework for understanding the fundamental capabilities and geometric implementations of quantum error correction and fault tolerance.


\end{abstract}
\maketitle


\section{Introduction}\label{sec:introduction}

The realization of reliable and scalable quantum computation relies on quantum error correction (QEC) to protect quantum information from diverse noise effects including environmental decoherence and operational errors. A central challenge in QEC is not only to suppress logical errors, but to do so with as low as possible overhead on certain realistic quantum hardware. This has led to a broad and intensive search for quantum codes that simultaneously achieve good parameters, versatile operations, efficient decoding, and high compatibility with practical architectures.

At one end of this landscape are the topological codes~\cite{kitaev2003fault,dennis2002topological}, whose geometric locality makes them particularly appealing for experimental implementation. However, their locality comes at a cost of suboptimal code parameters. At the other end are the asymptotically good quantum low-density parity-check (qLDPC) codes~\cite{panteleev2022asymptotically,dinur2023good,Golowich_2024} which achieve optimal scaling of rate and distance with bounded-weight checks, but are far from ideal for practical realization because they cannot be embedded into standard finite-dimensional lattices without incurring unmanageable long-range connectivity. These two extremes expose a fundamental tension between error correction performance and spatial locality~\cite{Dai2025LocalityvsQuantumCodes,Baspin_2022_2}. 

The difficulty in reconciling this tension has spurred sustained efforts across both theoretical and experimental domains. On the theoretical side, several works have established parameter bounds for geometrically local codes and more general variants~\cite{haah2012logical,Baspin_2022,Baspin_2022_2,fu2025nogotheoremslogicalgates,Dai2025LocalityvsQuantumCodes,dai_et_al:LIPIcs.TQC.2025.4} while new code families have been developed that (nearly) saturate  existing parameter bounds~\cite{portnoy2023localquantumcodessubdivided,Williamson_2024,Li_2024_GeometricallyLocal,lin2023geometrically,Jeong_PhysRevA.83.042330}. In parallel, substantial effort has been devoted to finding new code families and syndrome extraction schemes that allow a controlled amount of long-range connectivity in exchange for improved rate, distance, or overhead~\cite{Berthusen_2025,Eberhardt_2024_BB_logical,Shaw_2025_morphing,Voss_2025_MB,mianMultivariateMulticycleCodes2026,Hong_2024_LRESC,Old_2024_LCS}. These developments are further supported by emerging hardware platforms such as modular architectures and reconfigurable neutral atom arrays, where more flexible connectivity makes it feasible to implement codes that have higher connectivity requirements~\cite{Strikis_2023,Xu_2024_reconfigurable_atoms,Pecorari_2025_longrange,Poole_2025_rydberg_qldpc,Ye_2025_trapped_ions,Bravyi_2024_highthreshold}. Taken together, these works reveal a rich spectrum of quantum codes between local topological codes and highly nonlocal qLDPC codes. 

However, these works also reveal that our current understanding of quantum codes remains largely fragmented, highlighting the need for a general overarching theory that provides a systematic organizing principle.
A natural unifying perspective is geometry.
For topological codes, the relevant geometric structure is explicitly specified by an ambient spatial manifold or lattice, so the associated notion of dimension is readily manifested. However, for general quantum codes, it is much less clear how their geometric nature should be properly characterized.
One natural approach is to characterize a code by the lowest-dimensional manifold into which it can be locally embedded. This minimum ambient dimension is of great practical interest as it quantifies the geometric requirements for physical realization on hardware with local connectivity. 
This perspective underlies the work of Freedman and Hastings~\cite{freedman2021buildingmanifoldsquantumcodes} which reverse-engineers interesting manifolds from given quantum codes so that their stabilizer generators satisfy the spatial locality constraints imposed by the resulting manifold. This makes explicit how sparse code connectivity can arise from locality in an underlying finite-dimensional geometry.
However, the construction does not itself determine the minimum dimension for a general code, and any characterization based on an ambient realization remains extrinsic. This motivates a more intrinsic geometric framework defined directly from code connectivity.

Here, we develop such an intrinsic theory by introducing the framework of \emph{intrinsic locality dimension}, which provides a unifying geometric principle for systematically characterizing stabilizer quantum codes.  This concept is defined based on the Assouad dimension, a bi-Lipschitz invariant from fractal and metric geometry that captures the multiscale growth of metric neighborhoods and provides a natural measure of effective geometric connectivity. 
For regular topological codes living on finite-dimensional lattices, the intrinsic dimension $\beta$ recovers the usual spatial dimension. More generally, however, $\beta$ can take arbitrary values in $\left[1,+\infty\right]$, including noninteger real numbers and $+\infty$. 

The natural emergence of noninteger dimension places our framework within a broad and historically fundamental theme across mathematics and physics. Beginning with Hausdorff's measure-theoretic formulation, noninteger notions of dimension have become central to fractal and metric geometry~\cite{Hausdorff1918}. In physics, they have played an influential role in critical phenomena on fractal lattices~\cite{GefenMandelbrotAharony1980}, the Wilson--Fisher $\epsilon$ expansion~\cite{WilsonFisher1972}, and dimensional regularization in quantum field theory~\cite{BolliniGiambiagi1972,tHooftVeltman1972}. Although these geometric and analytic uses are conceptually distinct, together they demonstrate the broad power of treating dimension as a continuously varying quantity. Our framework brings this metric-geometric perspective to quantum error correction by introducing an intrinsic, potentially noninteger dimension for code connectivity.

To show that the noninteger regime is concrete rather than merely formal, we also provide systematic constructions of quantum codes with noninteger $\beta$  derived from Laakso spaces~\cite{Laakso2000AhlforsQS} and self-similar quasi-convex fractals. 
As we demonstrate throughout the paper, this intrinsic dimension governs a range of fundamental code properties of central theoretical and practical importance, including basic code parameters, fault-tolerant logical gates, and thermal properties such as passive self-correction, in a unified manner. 
Importantly, the intrinsic dimension is defined solely from the code, constituting a fundamental parameter independent of any background geometry. 
Beyond its theoretical interest, our framework enables the active design of code geometries to optimize code capabilities in versatile scenarios, giving it broad practical appeal.


This paper is organized as follows. In Section~\ref{sec:intrinsic locality dimension}, we introduce the intrinsic locality dimension and elucidate both its mathematical properties and the underlying physical intuition, with a wide range of representative examples. 
As a representative application of our framework, Section~\ref{sec:intrinsic dimension constrains code parameters} establishes our intrinsic code parameter bounds. In Section~\ref{sec:intrinsic dimension bridges code symmetry and indistinguishability}, we introduce the concept of graph-local gates alongside the Assouad--Nagata reduction technique, employing them to prove intrinsic fault-tolerant gate bounds for both single and multiple-code-block settings. Finally, in Section~\ref{sec:towards the nonexistence},  we use this reduction to prove a directional geometric obstruction to self-correcting quantum memories in any dimension below $3$ and discuss various related recent results.
  All detailed background, definitions, proofs, and extended discussions can be found in the appendix.

\section{Intrinsic locality dimension of stabilizer codes}\label{sec:intrinsic locality dimension}

\subsection{Assouad dimension and geometric classification of stabilizer codes}\label{subsec:Assouad dimension}

 To properly characterize the intrinsic geometrical degrees of freedom associated with codes, we introduce a key concept from fractal and metric geometry called \emph{Assouad dimension}.
 Originally introduced in the study of metric embeddings, this notion has been widely used in mathematics to study the feasibility of embedding an arbitrary metric space $(K,d_{K})$ into a Euclidean space~\cite{assouad1977espaces,assouad1983plongements} and plays an important role in, e.g., Kakeya-type problems~\cite{Wang_2025}. Here, we demonstrate its fundamental role in quantum error correction. 

The key starting point is that every stabilizer generator set naturally defines an intrinsic metric geometry through its corresponding connectivity graph or Tanner graph. 
Then the Assouad dimension of this graph formally measures the maximal growth of local connectivity over all locations and length scales. 
We show that for conventional topological stabilizer codes, this intrinsic dimension agrees with the usual integer dimension of the underlying lattice or manifold. 
Beyond this setting, however, the same definition applies directly to general sparse stabilizer code families, including those with flexible architectures or noninteger-dimensional connectivity. 
In this way, the Assouad dimension provides an intrinsic notion of dimension for quantum codes, independent of any ambient embedding space.

We will show that this intrinsic dimension is not merely a geometric descriptor: it controls wide-ranging code properties. In the following sections, we show that  $\beta$ imposes rate--distance tradeoffs, restricts fault-tolerant logical gates, and constrains thermal behaviors. 

We proceed by first defining the Assouad dimension and introducing some important properties of the Assouad dimension that will be useful later. 

\begin{definition}[Assouad dimension] \label{Assouad dimension short} (See also definition~\ref{Assouad dimension})
    The \emph{Assouad dimension} of a family of metric spaces 
\(\mathcal{X}=\{(X_{\lambda},d_{\lambda})\}_{\lambda\in I}\)
is defined as the infimum over all \(\beta\) for which the following uniform covering condition holds: there exists a constant \(C>0\), independent of \(\lambda\), \(x\), \(r\), and \(R\), such that for every \(\lambda\in I\), every \(x\in X_{\lambda}\), and all \(0<r<R\), the ball \(N_{x}(R)\) can be covered by at most $C\cdot(\frac{R}{r})^{\beta}$
balls of radius \(r\).
 \end{definition}

A stabilizer code can be represented by a Tanner or connectivity graph. The former is a bipartite graph defined on qubits and check nodes (see Definition~\ref{tanner graph}). The latter is defined only on the set of qubits, with an edge between two vertices if the corresponding qubits share a common stabilizer generator (see also Definition~\ref{Connectivity graph}). As we will show, the two representations give rise to the same intrinsic dimension (see Corollary~\ref{Tanner/connectivity unified}). We will simply use connectivity graphs throughout the rest of the paper.
 \begin{definition}[Intrinsic dimension]\label{code_dimension}

Let \(\{C_{\lambda}(\mathcal{S}_{\lambda})\}_{\lambda\in I}\) be a code family with specified stabilizer generator sets, where the index $\lambda$ parametrizes the system size.
Let \(\mathcal{X}=\{G_{\lambda}\}_{\lambda\in I}\) be the associated family of connectivity graphs.
     The \emph{intrinsic locality dimension} $\mathrm{dim}(C_{I}(\mathcal{S}_{I}))$ is defined as the Assouad dimension of  $\mathcal{X}$ at all scales:
     \begin{equation}
         \mathrm{dim}(C_{I}(\mathcal{S}_{I}))\vdef\dimasd(\mathcal{X})=\beta\in \{0\}\cup \left[1,+\infty\right].
     \end{equation}
Here each \(G_{\lambda}\) is required to be a finite connected graph (corresponding to a single finite-size code block). The case \(\beta=0\) occurs if and only if \(|G_{\lambda}|\leq M\) for some constant \(M\) independent of \(\lambda\).

     For brevity, we refer to the intrinsic locality dimension simply as the intrinsic dimension throughout the rest of the paper.
 \end{definition}

Naturally, the intrinsic dimension captures the growth rate of code connectivity across different scales. See e.g.~Fig.~\ref{Assouad condition}(a) for an illustration. For the leaf-shaped graph, a large ball in the graph can be covered by a uniformly bounded number of smaller balls, independent of the scale and the system size.

 We note that the intrinsic dimension is ultimately associated with a specific stabilizer generator set family (although we may sometimes speak of it as being associated with codes for simplicity of exposition). 
The choice of generators provides an important layer of flexibility, giving rise to a highly versatile framework that allows us to, e.g., incorporate architecture or experiment-specific structures. 
This flexibility is particularly useful in the study of fault-tolerant logical gates, as discussed in Section~\ref{sec:intrinsic dimension bridges code symmetry and indistinguishability}.  The minimal intrinsic dimension over all generating sets is itself a profound mathematical object, as we will also discuss.

Employing the Assouad dimension affords maximum theoretical flexibility in establishing our theorems for general connectivity graphs, as it naturally upper bounds other conventional fractal dimensions:
 \begin{equation}
     \mathrm{dim}_{\mathrm{Hausdorff}}X\leq \mathrm{dim}_{\mathrm{Minkowski}}X\leq \dimasd X.
 \end{equation}
 \begin{figure}[htbp]
    \vspace{-1em}
    \centering
    \subfloat{
        \includegraphics[width=\linewidth]{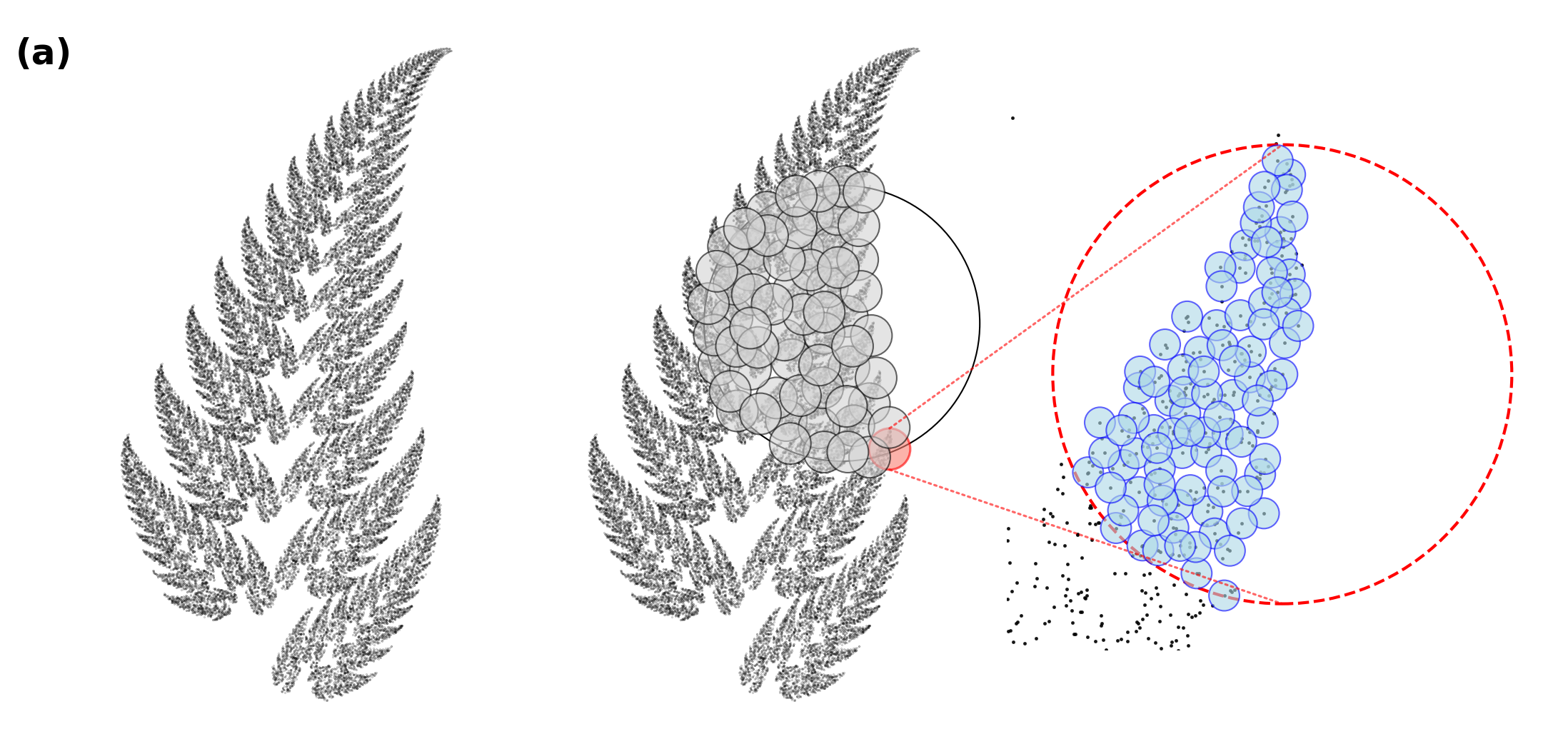}
    }
    \\ 
    \subfloat{
        \includegraphics[width=\linewidth]{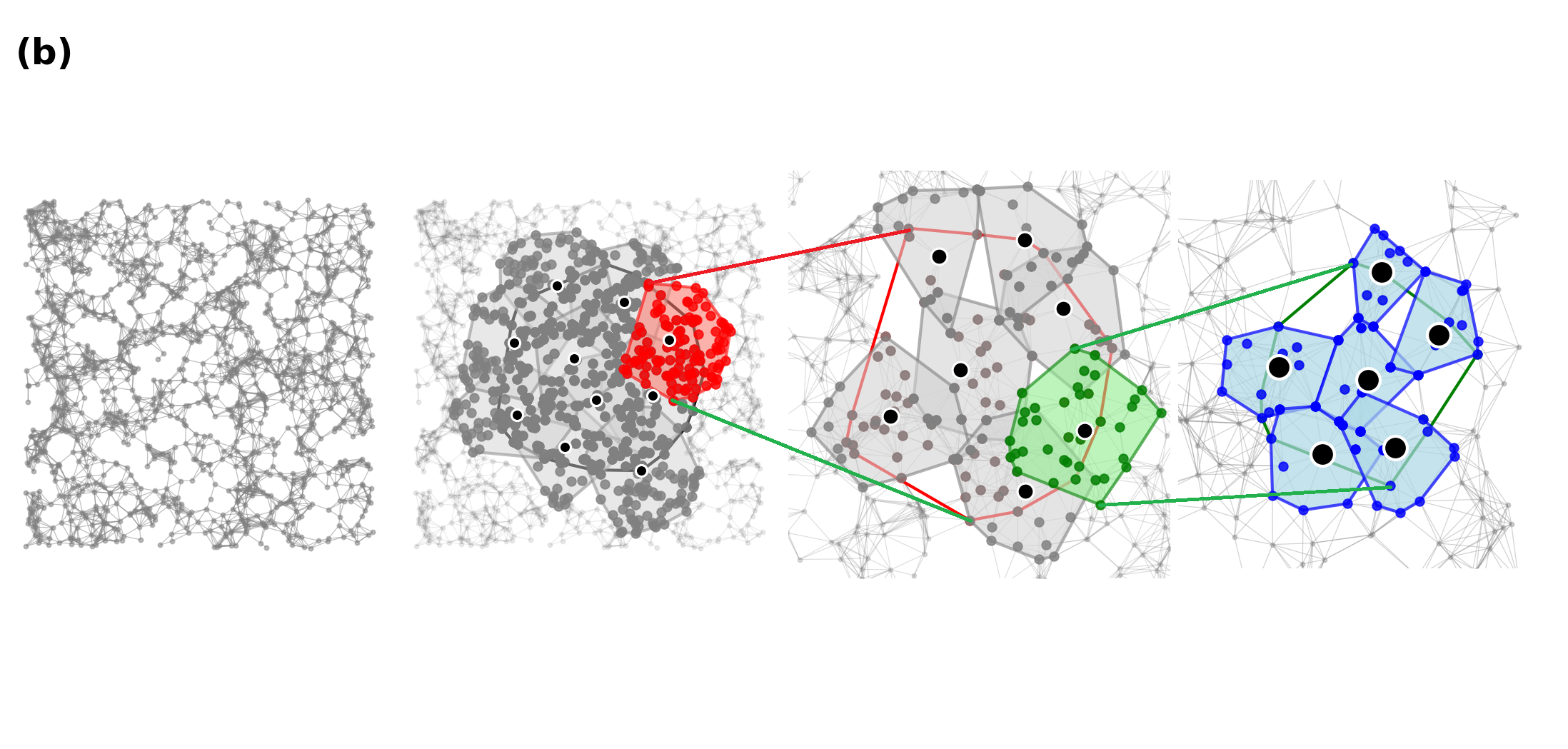}
    }
    
    \caption{Assouad dimension. (a) After ``zooming out'' so that a ball of radius $r$ becomes a ``unit'' ball. The number of such $r$-balls required to cover a ball $N_{x}(R)$ is at most $C(R/r)^\beta$ where $\beta$ is the Assouad dimension given by the infimum over all such exponents for all scales $r>0$. (b) Doubling condition: An example of the doubling condition for $R=4,2,1$ on a random unweighted graph with 1500 vertices, where each polygon is a discrete ball of radius $R$ centered at a black dot.}
    \label{Assouad condition}
\end{figure}

By definition, every family $\{C_{\lambda}(\mathcal{S}_{\lambda})\}_{\lambda\in I}$ such that $n$ goes to infinity as $\lambda$ goes to infinity falls into one of the following two cases:
\begin{enumerate}
    \item $\mathrm{dim}(C_I(\mathcal{S}_{I}))=\beta\in \left[1,+\infty\right)$;
    \item $\mathrm{dim}(C_I(\mathcal{S}_{I}))=+\infty$.
\end{enumerate}
The first case consists of code families whose connectivity remains a uniform polynomial growth with index $\beta$ at every scale. Examples include topological codes and multivariate multicycle codes with long-range connectivity. For the infinite case, examples include asymptotically good qLDPC codes, and more general stabilizer code families with unbounded check weights or unbounded degree. See Table~\ref{Assouad_dim_common_codes} for a discussion of various common codes and their corresponding intrinsic dimensions.

An equivalent characterization of having finite Assouad dimension is the \emph{doubling condition}:
\begin{proposition}\label{doubling_condition}
    A family of graphs $\mathcal{X}$  has finite Assouad dimension if and only if it is \emph{doubling}, i.e., there exists some constant $C$, such that for all $R$ every ball of radius $2R$ in $\mathcal{X}$ can be covered by at most $C$ balls of radius $R$.
\end{proposition}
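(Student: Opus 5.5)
The plan is to prove the two implications separately, using only Definition~\ref{Assouad dimension short} and the fact that graph metrics take integer values.

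\textbf{Finite dimension implies doubling.} Suppose $\dimasd(\mathcal{X})=\beta_0<\infty$. Fix any $\beta>\beta_0$. By the definition of the infimum, the covering condition holds for $\beta$ with some uniform constant $C$. Specializing to $r=R$ and radius $2R$, every ball $N_x(2R)$, in every $G_\lambda$ and at every center $x$, is covered by at most $C\cdot 2^{\beta}$ balls of radius $R$. This constant does not depend on $\lambda$, $x$, or $R$, which is exactly the doubling condition with constant $C2^\beta$.

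\textbf{Doubling implies finite dimension.} Assume every $2R$-ball is covered by $C$ balls of radius $R$, uniformly over the family. Given $0<r<R$, let $k=\lceil \log_2(R/r)\rceil$, so that $2^k r\ge R$ and $2^k<2R/r$. The argument is an induction on $k$.
\begin{itemize}
\item Cover $N_x(R)\subseteq N_x(2^k r)$ by $C$ balls of radius $2^{k-1}r$.
\item Cover each of those by $C$ balls of radius $2^{k-2}r$.
\item Continue halving the radius until it reaches $r$.
\end{itemize}
After $k$ steps, $N_x(R)$ is covered by at most
\[
C^k \le C\cdot C^{\log_2(R/r)} = C\,(R/r)^{\log_2 C}
\]
balls of radius $r$. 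Hence the covering condition holds with exponent $\log_2 C$ and constant $C$, so $\dimasd(\mathcal{X})\le\log_2 C<\infty$.

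The one point needing care is that the doubling condition is stated for real radii $R$, while the halved radii $2^{j}r$ need not be integers. Since the graphs are discrete, this is harmless: a ball of real radius $\rho$ equals the ball of radius $\lfloor\rho\rfloor$. I will phrase the doubling condition for all real $R>0$, which the statement permits ("for all $R$"), and then the induction uses it verbatim at each radius $2^{j}r$.

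The case $r<1$ also needs a check. There $N_y(r)=\{y\}$, and the count needed is the number of vertices in $N_x(R)$. The iteration still bounds this, because for $\rho<1$ the doubling condition with radius $\rho$ covers a $2\rho$-ball by singletons. So no separate bounded-degree argument is needed. This confirms the bound for every $0<r<R$ and completes the equivalence.
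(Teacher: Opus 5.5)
Your proof is correct and follows essentially the same route as the paper. The forward direction specializes the covering condition to the radius pair $(2R,R)$. The converse iterates the doubling condition $\lceil\log_2(R/r)\rceil$ times, giving the bound $C^{\lceil\log_2(R/r)\rceil}\le C\,(R/r)^{\log_2 C}$ and hence $\dimasd\mathcal{X}\le\log_2 C$. Your extra remarks on non-integer radii and the regime $r<1$ are harmless refinements that the paper leaves implicit.
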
See Proposition~\ref{doubling condition} for proof details. 
Intuitively, this condition says that the connectivity is uniformly sparse at every scale, i.e., any ball of radius $R$ in $G_{\lambda}\in\mathcal{X}$ is connected to at most a constant number of neighboring balls of radius $R$ (See Fig.~\ref{Assouad condition}(b)). In particular, for $\mathcal{X}$ with finite dimension, each vertex on $G_{\lambda}$ has a bound degree $C$ by taking $R=\frac12$ in Proposition~\ref{doubling_condition}. This is a natural geometric constraint to have for physical realization, since unbounded connectivity would be difficult to realize experimentally.

\begin{table*}[t]
\caption{Intrinsic dimensions of various representative quantum code families: a summary.}
\label{Assouad_dim_common_codes}
\begin{ruledtabular}
\begin{tabular}{l l c}
\parbox[c]{0.27\textwidth}{\centering \textrm{Code family}}
&
\parbox[c]{0.47\textwidth}{\centering \textrm{Setting information}}
&
\textrm{Intrinsic dimension}
\\
\colrule

\parbox[t]{0.27\textwidth}{\raggedright Regular topological codes}
&
\parbox[t]{0.47\textwidth}{\raggedright Defined by good triangulation on manifold \(\mathcal{M}^{D}\)}
&
\(D\)
\\[1.0ex]

\parbox[t]{0.27\textwidth}{\raggedright Fractal topological codes}
&
\parbox[t]{0.47\textwidth}{\raggedright Defined on quasi-convex, self-similar fractal of dimension \(\beta\)}
&
\(\beta\)
\\[1.0ex]

\parbox[t]{0.27\textwidth}{\raggedright Multivariate multicycle codes\\(constant-range)}
&
\parbox[t]{0.47\textwidth}{\raggedright Defined by translation-invariant constant-range stabilizer generators in
\(\mathbb{F}_2[x_1,\ldots,x_D]/(x_1^{L_1}-1,\ldots,x_D^{L_D}-1)\)}
&
\(D\)
\\[1.0ex]

\parbox[t]{0.27\textwidth}{\raggedright Multivariate multicycle codes\\(growing range)}
&
\parbox[t]{0.47\textwidth}{\raggedright Defined by \(t\) long-range translation-invariant stabilizers of weight at most \(M\)}
&
\(\beta\leq t\binom{M}{2},\ \beta\in\mathbb{Z}\)
\\[1.0ex]

\parbox[t]{0.27\textwidth}{\raggedright Hypergraph product codes}
&
\parbox[t]{0.47\textwidth}{\raggedright Constructed from homological product of classical codes with Ahlfors \(\beta_1,\beta_2\)-regularity}
&
\(\beta_1+\beta_2\)
\\[1.0ex]

\parbox[t]{0.27\textwidth}{\raggedright Asymptotically good qLDPC codes}
&
\parbox[t]{0.47\textwidth}{\raggedright Must have unbounded geometric locality}
&
\(+\infty\)
\\[1.0ex]

\parbox[t]{0.27\textwidth}{\raggedright Non-qLDPC codes}
&
\parbox[t]{0.47\textwidth}{\raggedright Have unbounded weight or degree}
&
\(+\infty\)

\end{tabular}
\end{ruledtabular}
\end{table*}

We can establish a sufficient condition based on the regularity conditions of Ahlfors~\cite{ahlfors1935theorie,david1997fractured,david1993analysis} to determine the intrinsic dimension $\beta$. In fractal geometry, the Ahlfors regularity expresses the spatial homogeneity of a metric space. This criterion naturally encompasses the case of translation-invariant stabilizer code families, since the volume growth centered at any fixed qubit suffices to determine its Ahlfors regularity condition, and hence its intrinsic dimension. 
\begin{criterion}[Ahlfors $\beta$-regularity]\label{Asdim_criterion}
Let $C_I=\{C_\lambda\}_{\lambda\in I}$ be a code family with associated
connectivity graphs $\mathcal{X}=\{G_\lambda\}_{\lambda\in I}$.  Suppose
that $\mathcal{X}$ is uniformly Ahlfors $\beta$-regular: namely, there
exist constants $0<v_*<v^*$, independent of $\lambda$, such that for all
$\lambda\in I$, all $q\in V(G_\lambda)$, and all
$0<R\leq \operatorname{diam}(G_\lambda)$,
\[
    v_* R^\beta
    \leq
    \mathrm{Vol}(N_q(R))
    \leq
    \max\{v^*R^\beta,1\}.
\]
where we denote $\mathrm{Vol}(N_{q}(R))\coloneqq\sharp\{x\in N_{q}(R)\subseteq G\}$ as the volume of a ball $N_{q}(R)$ centered at qubit $q$.

Then the intrinsic dimension of $C_I$ is given by $dim(C_I)=\beta $.
\end{criterion}
 \vspace{-0.7em}
  The proof is given in Appendix~\ref{Ahlfors to Assouad}.

Next, we show the existence of a graph for any value of $\beta$ in $\left[1,+\infty\right)$. Our construction is based on the discretization of the family of Laakso space in fractal geometry~\cite{Laakso2000AhlforsQS,ErikssonBique_2025,capolli2024overviewlaaksospaces}:
 \begin{proposition}
     For any $\beta\in \left[1,\infty\right)$, there exists a family of finite connected graphs $\mathcal{X}=\{G_{\delta}\}_{\delta\in \mathbb{R}^{+}}$ with shortest-path metric, which satisfies 
     \begin{equation}
         \dimasd\mathcal{X}=\beta.
     \end{equation}
 \end{proposition}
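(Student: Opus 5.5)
We construct $\mathcal{X}$ by discretizing Laakso spaces and then verify the dimension with the Ahlfors criterion (Criterion~\ref{Asdim_criterion}). The integer case is easy: take cycle graphs $C_L$ for $\beta=1$, and discrete tori $(\mathbb{Z}/L)^D$ with $D=\beta$ for $\beta\ge 2$. Both are uniformly Ahlfors $\beta$-regular, so the criterion gives dimension exactly $\beta$. For non-integer $\beta$, write $\beta = D + s$ with $D=\lfloor \beta\rfloor\ge 1$ and $s\in(0,1)$. We then work on the product of a $(D-1)$-dimensional torus factor with a one-parameter Laakso-type graph family of dimension $1+s$. Regularity is stable under products up to constants: balls in the $\ell^\infty$-product metric are products of balls, and the graph metric on a Cartesian product is comparable to $\ell^\infty$ within a factor of $2$. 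Since Assouad dimension is bi-Lipschitz invariant, it therefore suffices to build uniformly Ahlfors $(1+s)$-regular graphs $L_\delta$.

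For the Laakso part we use Laakso's construction with a branching number $m$ and scaling factor $k$, chosen so that $\log m/\log k=s$. To reach an arbitrary real $s$ we would use the variable-scale version, in which the stage-$j$ parameters $(m_j,k_j)$ are chosen with bounded ratios so that $\prod_{j\le n} m_j \asymp (\prod_{j\le n}k_j)^s$ uniformly in $n$. This is the standard Laakso / Eriksson-Bique construction, which realizes every Ahlfors dimension in $(1,\infty)$. The discretization proceeds in steps:

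\begin{itemize}
\item[(i)] At level $n$, start from the segment $[0,1]$ subdivided into mesh $\delta_n=\prod_{j\le n}k_j^{-1}$.
\item[(ii)] Form the inverse-limit graph: at each stage $j$, glue $m_j$ copies of the previous graph along the "wormhole" vertices located at the $k_j$-adic points.
\item[(iii)] Equip the result with the shortest-path metric rescaled by $\delta_n$, or equivalently use unit edges and scale $R$ by $\delta_n^{-1}$.
\end{itemize}

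$G_\delta$ is then the level-$n$ graph with $\delta_n\approx\delta$. It is finite and connected by construction, and it has bounded degree because at most $\max_j m_j+2$ edges meet at any vertex.

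The key verification is the uniform volume estimate $v_*R^{1+s}\le \mathrm{Vol}(N_q(R))\le v^*R^{1+s}$ for $1\le R\le\operatorname{diam}$. The plan is a self-similarity argument. A ball of radius $R$ with $\prod_{j\le \ell}k_j\approx R$ lies inside, and contains, a bounded number of level-$\ell$ cells. Each such cell is a copy of the level-$\ell$ graph and has $\asymp R\cdot\prod_{j\le \ell}m_j\asymp R^{1+s}$ vertices: a factor of $R$ comes from the linear direction and a factor of $R^s$ from the branching.

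We expect the main obstacle to be the metric comparison near wormhole junctions. We must show that the graph distance between two vertices is comparable to the Laakso quotient distance, that is, the projection to $[0,1]$ together with the branching separation. This is needed so that a ball neither leaks into many more cells nor misses cells. We would prove it as a quasi-convexity lemma: two points are joined by a path of length at most a constant times their projected distance plus the scale of their first differing branch. This follows because each wormhole sits at a $k_j$-adic point within distance $\delta_{j}$ of any point of the level-$j$ cell.

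Finally, given the uniform regularity, Criterion~\ref{Asdim_criterion} yields $\dimasd\mathcal{X}=1+s$ for the Laakso factor. Adding the torus factor gives $D-1+1+s=\beta$. The degenerate case $\beta\in\mathbb{Z}$ has already been handled by the tori, and the index set $\mathbb{R}^+$ is obtained by taking $G_\delta$ to be the level-$n$ graph with $\delta_n\approx\delta$.
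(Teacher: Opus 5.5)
\textbf{Comparison with the paper.} Your route differs from the paper's. The paper black-boxes Laakso's theorem: for each $\beta>1$ there is a compact, geodesic, Ahlfors $\beta$-regular Laakso space $X$. It then
\begin{itemize}
\item takes a $\delta$-net of $X$ and shows by volume comparison that the nets have Assouad dimension exactly $\beta$;
\item joins net points at distance at most $3\delta$;
\item uses that $X$ is geodesic to make the rescaled graph metric bi-Lipschitz to $d_X$ on the net, and concludes by bi-Lipschitz invariance.
\end{itemize}
It needs no volume estimate on the graphs, no case split and no products. You instead build discrete Laakso graphs explicitly and prove their Ahlfors regularity by hand. This gives concrete bounded-degree graphs with two-sided volume bounds, and it also covers $\beta=1$, which the paper's proof does not address. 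The cost is that you must redo Laakso's metric analysis yourself.

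\textbf{Remarks on your set-up.}
\begin{itemize}
\item The product reduction is unnecessary. The variable-parameter construction reaches every $\beta>1$ directly, since nothing requires $m_j\le k_j$.
\item If you keep the product, take the torus side $\asymp\delta_n^{-1}$. Otherwise the lower bound in Criterion~\ref{Asdim_criterion} fails for $R$ between the two diameters.
\item Level-$j$ wormholes must sit at $k_j$-adic points that are not wormholes of any other level. Otherwise coarse grid points merge $\prod_j m_j$ points and the degree is unbounded.
\item With disjoint levels, a wormhole vertex has degree up to $2m_j$, not $\max_j m_j+2$. This is still bounded, so it is harmless.
\end{itemize}

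\textbf{The step that fails as formulated: the metric comparison.} Graph distance is \emph{not} comparable to ``projected distance plus the scale of the first differing branch.'' Let $t_0$ be a level-$j$ wormhole and $t$ one mesh step from it. Then $(t,a)$ and $(t,b)$, differing only in digit $j$, are at graph distance $2$, however coarse level $j$ is. Your quasi-convexity lemma is true, but it gives only the upper bound on distance. That yields only that the ball contains a full cell, i.e.\ the lower volume bound.

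\textbf{How to get the upper volume bound.} You need that the ball meets only boundedly many cells, and this requires a separate counting argument:
\begin{itemize}
\item A path of length at most $R$ from $(t,a)$ projects into $[t-R,t+R]$.
\item It can change a digit only at a wormhole of that digit's level.
\item The wormholes of all levels coarser than scale $R$ lie in a grid of spacing $\asymp R$, so only $O(1)$ of them fall in this window.
\item Each such wormhole lets one coarse digit take at most $\max_j m_j$ values.
\end{itemize}
Hence only $O(1)$ coarse digit patterns occur, and
$\mathrm{Vol}(N_q(R))\lesssim R\prod_{\text{fine}}m_j\asymp R^{1+s}$. Here ``fine'' means the stages below scale $R$. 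In your top-down steps (i)--(ii) these give tail products $\prod_{\ell<j\le n}$, not $\prod_{j\le\ell}$; your uniform comparability of prefix products controls those as well. With this counting argument in place of the two-sided comparison, your proof goes through.
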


 \begin{proof}
     Given $\beta>1$, there exists a compact Laakso space $(X,d_{X})$ with Ahlfors $\beta$-regularity and rectifiable geodesics connecting any two points in $X$~\cite{Laakso2000AhlforsQS}. For any scale $\delta>0$, consider the $\delta$-net $\mathrm{Net}(\delta)\subseteq X$, such that for all $u,v \in \mathrm{Net}(\delta), d_{X}(u,v)\geq \delta$ and for all $ x\in X,d_{X}(x,\mathrm{Net}(\delta))\leq \delta$. $\mathrm{Net}(\delta)$ is a discrete metric space equipped with the metric inherited from $X$. We first prove that the family $\widetilde{\mathcal{X}}=\{\mathrm{Net}(\delta), \delta\in \mathbb{R}^{+}\}$ has Assouad dimension $\beta$. $\mathrm{Net}(\delta)\subseteq X$ implies $\dimasd \widetilde{\mathcal{X}}\leq \beta$. It remains to show that  $\dimasd \widetilde{\mathcal{X}}\geq \beta$. Consider the ball $N_u(R)$ of radius $R$ in $X$, and write $k\vdef | N_u(R)\cap \mathrm{Net(\delta)}=\{u_1,u_2,\cdots,u_k\}|$. By the property of $\delta$-net, $N_u(R)\subseteq \mathop{\cup}\limits_{i=1}^k N_{u_i}(\delta)$. Taking the volume and applying Ahlfors regularity in $X$, we have
     \[v_{*} R^{\beta}\leq \mathrm{Vol}(N_{u}(R))\leq k \mathrm{Vol}(N_{u_i}(\delta))\leq k v^{*}\delta ^{\beta}.\] Therefore, $k\geq \frac{v_{*}}{v^{*}}(\frac{R}{\delta})^{\beta}$. Set $r=\frac{\delta}{3}$. By the $\delta$ separation of the net, any ball of radius $r$ contains at most one point in $\mathrm{Net(\delta)}$. Therefore, one requires at least $\frac{v_{*}}{v^{*}}(\frac{R}{\delta})^{\beta}=\frac{v_{*}3^{\beta}}{v^{*}}(\frac{R}{r})^{\beta}$ balls of radius $r$ to cover $N_u(R)$. Hence, $\dimasd \widetilde{\mathcal{X}}\geq \beta$. 
     
     For the last step, to obtain a family of graphs with shortest-path metric, we construct a family $\mathcal{X}=\{G_{\delta}\vdef (V_{\delta},E_{\delta})\}_{\delta\in \mathbb{R}^{+}}$, where $V_{\delta}=\mathrm{Net(\delta)},E_{\delta}=\{(u,v)|d_{X}(u,v)\leq 3\delta\}.$ Any two adjacent vertices $u,v$ have distance $\delta\leq d_X(u,v)\leq 3\delta$ in Laakso space $X$. So by a scaling $\delta$ on the weight of the edges, there exist a uniformly bi-Lipschitz map between $\mathcal{X}$ and $\widetilde{\mathcal{X}}$. Applying proposition~\ref{factor multiplication}, we conclude that $\mathcal{X}$ has Assouad dimension $\beta$ with shortest-path metric.
 \end{proof}
 
Different generating sets of the same stabilizer group $\langle\mathcal{S}\rangle=\langle\mathcal{S}^{\prime}\rangle$ may give rise to non-isomorphic Tanner or connectivity graphs $G(\mathcal{S})\not\simeq G(\mathcal{S}^{\prime})$, and hence the Assouad dimensions may also differ due to possibly non-local products of stabilizer generators. For this reason, the minimal intrinsic dimension of code families are meaningfully defined by taking the infimum over all possible stabilizer generating sets:
\begin{definition}[]
    \begin{equation}
        \beta_0(C_I)\vdef \inf_{\mathcal{S}_I} \dimasd (\{G_I(\mathcal{S}_I)\}),
    \end{equation}
    where $C_I$ is the code family defined by generating set family $\mathcal{S}_I$.
\end{definition} 
 From a heuristic standpoint, because the growth function $R^{\beta}$ is convex with respect to the radius $R$, Jensen's inequality indicates that achieving the minimum dimension $\beta_0$ for the overall connectivity of the stabilizer group requires the strongest spatial homogeneity in entire space. Then by assuming a strengthened Ahlfors regularity condition, which is fairly physical, Theorem~\ref{Intrinsic BPT bound} provides a lower bound for the minimal intrinsic dimension,
 \begin{equation}
     \limsup_{d_{\lambda}\to +\infty} \frac{2}{\log_{d_{\lambda}}(\frac{n_{\lambda}}{k_{\lambda}})}+1\leq \beta_0.
 \end{equation}

An equivalent way of obtaining the Assouad dimension of a family of finite-diameter graphs is to realize the family inside a single infinite graph. Let $\mathcal{X}=\{G_{i} \mid i\in \mathbb{N}^{+}\}$ be a countable family of finite-diameter graphs. We construct an infinite graph, denoted by $\widetilde{G}$ as follows: For each \(i\), choose a base point \(x_i\in G_i\), and glue the sequence of points $x_i\in G_i$ along an infinite ray $[0,\infty)$ at designated base points $l_i$, such that the separation satisfies $l_{i+1}-l_i>2\mathrm{diam}(G_{i})$. And then we define 
\begin{equation}
    \widetilde{G}=\bigcup_{i\in \mathbb{Z}} G_{i}\bigcup \{l\in\left[0,+\infty\right)\}/(x_i\sim l_i).
\end{equation}
By construction, each \(G_i\) is realized as a subgraph of \(\widetilde G\). Moreover, $\dimasd \widetilde{G}=\dimasd \mathcal{X}$.

    The Assouad embedding theorem~\cite{assouad1983plongements,naor2010assouadstheoremdimensionindependent,Non-Probabilistic13} guarantees that $\widetilde{G}$ admits a bi-Hölder embedding $\tau$ of index $\gamma\in \left(\frac{1}{2},1\right)$ into a Euclidean space $\mathbb{R}^{N}, c^{-1}d^{\gamma}_{\widetilde{G}}(x,y)\leq d_{\mathbb{R}^N}(\tau(x),\tau(y))\leq cd^{\gamma}_{\widetilde{G}}(x,y)$. The ambient dimension $N$ for Theorem 1.2 in Ref.~\cite{naor2010assouadstheoremdimensionindependent} can be chosen to be $O(\beta)$, which depends only on the doubling constant $F$, even for $\gamma\to 1$. In general, there are metric spaces—such as the Heisenberg group~\cite{pansu1989metriques}—that fail to admit a bi-Lipschitz embedding into $\mathbb{R}^{N}$. We conjecture that for connectivity graphs with finite Assouad dimension $\beta$, there exists a tight bi-Hölder embedding.

\subsection{Representative code families and their intrinsic dimensions}\label{subsec:representative code family}
Now, we examine the intrinsic dimensions of several representative code families in detail, illustrating how the abstract definition recovers familiar geometric notions while also extending beyond them. 
For  topological codes, we show that their intrinsic dimensions agree with the conventional notion of integer dimensions. Furthermore, we provide a procedure for constructing fractal codes and discuss known examples of fractal codes with noninteger intrinsic dimensions. We also examine the intrinsic dimensions of multivariate multicycle (MM) or group algebra codes, hypergraph product codes and good qLDPC codes as representative instances. As a non-example,  we show in Appendix~\ref{app:Fractalized stabilizer code} that the fractalized quantum codes~\cite{Devakul2021fractalizingquantum} actually have integer intrinsic dimensions and thus cannot exhibit improved code parameters over integer-dimensional codes. 

Key information about the intrinsic dimensions of these code families is summarized in Table~\ref{Assouad_dim_common_codes}.
 
 \subsubsection{Topological codes on manifolds} For a code family derived from an ambient manifold, the intrinsic dimension captures the information of its geometric structure. The following result shows that our definition coincides with the conventional notion of dimension for standard topological codes.

 \begin{theorem}\label{Example1}
     Given a good triangulation (See Definition~\ref{Good triangulation}) $\mathcal{C}_{\mathcal{M,\delta}}^{\bullet}$ at scale $\delta$ of a connected compact Riemannian manifold $(\mathcal{M}^{D},g)$, the associated CSS code $\{C_{\delta},\delta=\frac{1}{L}\}$ obtained from any three consecutive terms in $\mathcal{C}_{\mathcal{M,\delta}}^{\bullet}$ has intrinsic dimension:
     \begin{equation}
         \mathrm{dim}(C_{\delta})=\mathrm{dim}\mathcal{M}=D.
     \end{equation}
 \end{theorem}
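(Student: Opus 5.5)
The plan is to reduce the statement to the Ahlfors $\beta$-regularity criterion (Criterion~\ref{Asdim_criterion}) with $\beta=D$. Concretely, I will show that the connectivity graphs $G_\delta$ of the CSS codes are uniformly bi-Lipschitz, after rescaling by $\delta$, to suitable nets of $\mathcal{M}$. Then I will transfer the volume growth of Riemannian balls to graph balls.

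\textbf{Step 1 (combinatorial locality).} A good triangulation (Definition~\ref{Good triangulation}) at scale $\delta$ has three features:
\begin{itemize}
\item every simplex has diameter comparable to $\delta$;
\item every simplex has volume comparable to $\delta^D$;
\item every vertex lies in a uniformly bounded number of simplices, with all constants independent of $\delta$.
\end{itemize}
The qubits are the $i$-simplices, and the $X$/$Z$ checks are the $(i\pm1)$-simplices through the coboundary and boundary maps. Hence each check has bounded weight and each qubit lies in boundedly many checks. Two qubits adjacent in $G_\delta$ therefore lie within Riemannian distance $c\,\delta$ of each other, measured between, say, their barycenters. This gives the upper Lipschitz bound $d_{\mathcal{M}}(p(q),p(q'))\le c\,\delta\, d_{G_\delta}(q,q')$, where $p$ sends a simplex to its barycenter.

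\textbf{Step 2 (reverse Lipschitz bound).} This is the main obstacle. I must show $d_{G_\delta}(q,q')\le c'(\delta^{-1}d_{\mathcal{M}}(p(q),p(q'))+1)$. Take a minimizing geodesic between the barycenters. It passes through a chain of top-dimensional simplices, and the number of simplices met is $O(d_{\mathcal M}/\delta+1)$ by bounded geometry and volume comparison. Consecutive simplices share a face. I will show that two $i$-simplices inside a common star, or inside two adjacent top simplices, are joined in $G_\delta$ by a path of bounded length. The argument connects $i$-faces through shared $(i+1)$-cofaces or $(i-1)$-faces: two $i$-faces of the same top simplex share an $(i+1)$-simplex containing both when they share an $(i-1)$-face, and iterating this within a simplex gives bounded distance. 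I expect the edge cases $i=0$ and $i=D$ to need separate handling, using only one check type, and the local-connectivity axioms of a good triangulation are exactly what is needed here.

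\textbf{Step 3 (volume growth).} With the quasi-isometry in hand, a graph ball $N_q(R)$ corresponds to a set lying between Riemannian balls of radii $\asymp R\delta$ about $p(q)$, up to additive $O(\delta)$. Each simplex has volume $\asymp\delta^D$ and the simplices have bounded overlap. On a compact manifold, Bishop--Gromov volume comparison, together with the injectivity-radius lower bound, gives $\mathrm{Vol}_g(B(x,s))\asymp s^D$ for $s\le\operatorname{diam}\mathcal M$, uniformly in $x$. Combining these yields $v_*R^D\le\mathrm{Vol}(N_q(R))\le\max\{v^*R^D,1\}$ for $R\le\operatorname{diam}G_\delta\asymp\delta^{-1}$. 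The small-$R$ regime is handled by the $\max\{\cdot,1\}$ term and the bounded degree.

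\textbf{Conclusion.} Criterion~\ref{Asdim_criterion} then gives $\dimasd\{G_\delta\}=D$. Alternatively, one can use the bi-Lipschitz invariance (Proposition~\ref{factor multiplication}) applied to the rescaled nets, exactly as in the Laakso construction above.
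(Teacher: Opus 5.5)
Your proposal is correct, but it finishes the argument differently from the paper. The geometric core is the same as your Steps 1--2. The paper also compares $\delta\cdot d_{G_\delta}$ with $d_g$ on the images of barycenters in both directions. One direction uses that adjacent $k$-simplices have barycenters within $r_0=4K_2\rho_0\delta$. The other uses that a geodesic of length $L$ meets $O(L/\delta+1)$ top simplices, and the $k$-faces of each top simplex form a Johnson graph $J(D+1,k+1)$ of bounded diameter. This also settles your worry about $i=0,D$: inside one simplex, two $i$-faces share an $(i-1)$-face exactly when they span an $(i+1)$-face, so one check type suffices.

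From there the paper stays purely metric:
\begin{itemize}
\item It proves $\dimasd\mathcal{M}=D$ from Ahlfors regularity of the manifold (Lemma~\ref{Volume on manifold}).
\item It transfers this to the barycenter nets by Lemma~\ref{Uniform net approximation lemma}.
\item It absorbs your additive $+1$ using the separation $d_g\geq \frac{K_1D\Upsilon_0}{k+1}\delta$ of distinct barycenters, which gives a genuine uniform bi-Lipschitz map.
\item It concludes with Proposition~\ref{bi-Lipschitz reduction}.
\end{itemize}

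You instead turn the quasi-isometry into cardinality bounds for graph balls. You count simplices of volume $\asymp\delta^D$ with bounded face multiplicity, and then apply the Ahlfors-regularity criterion to the graphs. Your route tolerates additive error, so it needs neither the separation estimate nor the net lemma. It also gives the stronger, quantitative statement that the connectivity graphs themselves are uniformly Ahlfors $D$-regular. The paper's route avoids volume counting altogether. That is why it carries over almost verbatim to fractal codes via Lemma~\ref{Controlled approximation lemma}, where the number of simplices meeting $K$ is not controlled by $D$-dimensional volume.
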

 \begin{proof} \emph{(Sketch, details in Appendix~\ref{Dimension_unification_for_classical_code})}

 We first sketch the general construction of a topological code on a compact manifold. Let $\mathcal{M}$ be a connected compact Riemannian manifold that is locally bi-Lipschitz to $\mathbb{R}^D$. Then, we know that its topological dimension is $D$. We can always find a good triangulation for $\mathcal{M}$, for example via the Riemannian Delaunay triangulation (See Appendix~\ref{Delaunay triangulation}). A triangulation is good if its simplexes have a uniformly bounded radius-edge ratio and thickness, and there exists a family of bi-Lipschitz homeomorphisms $H_\delta$ from $\mathcal{C}^{\bullet}_{\mathcal{M},\delta}$ to $\mathcal{M}$ at scale $\delta$. The triangulation induces a simplicial chain complex over the base field $\mathbb{F}$ of $\mathrm{characteristic}~2$. By taking any $3$ consecutive terms $\mathcal{C}_{\mathcal{M},\delta}^{k+1},\mathcal{C}_{\mathcal{M},\delta}^{k},\mathcal{C}_{\mathcal{M},\delta}^{k-1}$ of the chain complex, we obtain a CSS code whose parity check matrices are given by the two boundary maps.

 Next, to show that these two dimensions coincide, our proof proceeds by first showing that the Assouad dimension of $(\mathcal{M}, d_g)$ is $D$ (See Theorem~\ref{Dimension unification}). This is done by showing that the Ahlfors Regularity condition for a compact manifold (see lemma~\ref{Volume on manifold}) holds for $\mathcal{M}$. By lemma~\ref{Uniform net approximation lemma}, the Assouad dimension will be inherited by an arbitrary family of $\delta$-net approxiemation for $\mathcal{M}$. Then, consider the following map restricted on the barycenters $\delta$-net $\mathcal{M}_{\delta}(\mathcal{C}^{k}_{\mathcal{M},\delta})\coloneqq \{H_{\delta}(b(\sigma))\mid \sigma\in \mathcal{C}^{k}_{\mathcal{M},\delta}\}$:
\[ (\mathcal{M},d_{g}) \xrightarrow{H_{\delta}^{-1}} (\mathcal{C}^{\bullet}_{\mathcal{M},\delta},d_{PL}) \xrightarrow{\mathcal{S}_{\delta}} (G_{\delta},d_{\delta}). \]
where $G_\delta$ is the connectivity graph formed from the 3-term chain complex. $d_{PL}$ is the piecewise linear Euclidean metric on $C^\bullet_{\mathcal{M},\delta}$ and $d_\delta(q_i,q_j) = \delta \cdot |\text{shortest path between } q_i, q_j|$ is the natural metric defined on $G_\delta$. Theorem~\ref{Dimension unification for code} shows that $\mathcal{S}_\delta\circ H_\delta^{-1}|_{\mathcal{M}_{\delta}}$ is uniformly bi-Lipschitz between $(\mathcal{M}_{\delta},d_{g})$ and $(G_{\delta},d_{\delta})$ for sufficiently small $\delta$. 
Then, we use proposition~\ref{bi-Lipschitz reduction}, which states that these two families of metric spaces that are uniformly bi-Lipschitz have the same Assouad dimension. Consequently, $\mathrm{dim}(C_{\delta})=\dimasd(G_\delta)=\dimasd(\mathcal{M}_{\delta})=\dimasd{\mathcal{M}} =D$, concluding our proof. 
 \end{proof}

 \subsubsection{Fractal topological codes}

 Here we provide an explicit construction of stabilizer-code families with noninteger intrinsic dimension using quasi-convex self-similar fractals embedded in compact Riemannian manifolds $\mathcal{M}^{D}$. The resulting intrinsic dimension equals the fractal dimension of the attractor. After this construction, we list known examples of codes in the literature with noninteger intrinsic dimensions.
 
  A fractal set $K$ is said to be \emph{quasi-convex}~\cite{heinonen1998quasiconformal,durand2012poincare} if it admits a metric defined by the shortest rectifiable path and the metric on $K$ is bounded by the geodesic distance on ambient manifold $\mathcal{M}$ by a constant factor $\gamma$:
 \begin{equation}
     d_K(x,y)\leq \gamma d_{\mathcal{M}}(x,y).
 \end{equation}
 \begin{remark}
Self-similar fractal sets on manifolds have been extensively studied by Ngai and Xu~\cite{ngai2022separationconditionsiteratedfunction,ngai2022existencelqdimensionentropydimension,liu2024iteratedrelationsystemsriemannian}. Such fractal sets can be generated by \emph{conformal iterated function systems} $\{R_i\}$, satisfying the \emph{weak separation condition}. The Ahlfors regularity of self-similar set unifies all fractal dimensions of the attractor:
 \begin{equation}\label{fractal_dimension_coincide}
\mathrm{dim}_{\mathrm{Hausdorff}}K=\mathrm{dim}_{\mathrm{Minkowski}}K=\dimasd K.
 \end{equation}
 \end{remark} 

We now explain the construction of a general fractal code. Let $\{R_i\}_{i\in[N]}$ be a conformal iterated function system on \(\mathcal M\), satisfying the weak separation condition, and let \(K\subset \mathcal M\) be its compact attractor. 

Start with a good triangulation of the compact manifold $\mathcal{M}^D$, $\mathcal{C}^{\bullet}_{\mathcal{M},\delta}$, as described above for the case of topological codes, we obtain the $D$-simplex packing of $K$ in $\mathcal{C}^{\bullet}_{\mathcal{M},\delta}$ defined by
     \begin{equation}
         {\kappa}_{\mathcal{C},\delta}(K)=\{\sigma\in \mathcal{C}^{D}_{\mathcal{M},\delta}\mid\sigma \cap H_{\delta}^{-1}(K)\neq \emptyset\}.
     \end{equation}
The $k$-cells in the triangulation $\mathcal{C}^{\bullet}_{\mathcal{M},\delta}(K)$ at scale $\delta$ are given by:
     \begin{equation}
         \mathcal{C}^{k}_{\mathcal{M},\delta}(K)=\left\{\sigma\in\mathcal{C}^{k}_{\mathcal{M},\delta}
        \mid\exists\,\tau\in\kappa_{\mathcal{C},\delta}(K)\text{ such that }\sigma\preceq\tau\right\}.
     \end{equation}
The fractal stabilizer code \(C_\delta(K,k)\) is the CSS code associated with three consecutive terms $\mathcal{C}_{\mathcal{M},\delta}^{k+1}(K),\mathcal{C}_{\mathcal{M},\delta}^{k}(K),\mathcal{C}_{\mathcal{M},\delta}^{k-1}(K)$ in the induced chain complex $\mathcal C^\bullet_{\mathcal M,\delta}(K)$. Its connectivity graph $G_{\delta}(K,k)$ is determined by the two boundary maps corresponding to the $X$-checks and $Z$-checks. 

 The intrinsic dimension of fractal codes and the geometric dimension of their underlying self-similar sets are unified by the following theorem.
 \begin{theorem}
     For a quasi-convex self-similar set $K\subset\mathcal{M}$, the intrinsic dimension of $C_{\delta}(K,k)$ agrees with all the fractal dimensions (by Eq.~\eqref{fractal_dimension_coincide}, the Assouad, Hausdorff, and Minkowski dimensions of $K$ coincide) of attractor $K\subset \mathcal{M}$. More precisely,
     \begin{equation}
         \mathrm{dim}\big(C_{\delta}(K,k)\big)=\dimasd K,\quad\forall\,1\leq k\leq D-1.
     \end{equation}
 \end{theorem}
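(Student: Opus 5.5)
The plan is to mirror the proof of Theorem~\ref{Example1}. Establish a uniform bi-Lipschitz equivalence between the connectivity graphs $(G_\delta(K,k),d_\delta)$, with $d_\delta=\delta\cdot(\text{graph distance})$, and a uniform family of $\delta$-nets of $(K,d_K)$. Then invoke Proposition~\ref{bi-Lipschitz reduction} and Lemma~\ref{Uniform net approximation lemma}. The Assouad dimension of $(K,d_K)$ is obtained from Ahlfors regularity of the self-similar attractor via Criterion~\ref{Asdim_criterion} and Eq.~\eqref{fractal_dimension_coincide}. One point needs care. Eq.~\eqref{fractal_dimension_coincide} is stated for the ambient metric $d_{\mathcal M}$, so I first note that quasi-convexity gives $d_{\mathcal M}\le d_K\le \gamma d_{\mathcal M}$. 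Hence $\mathrm{id}:(K,d_{\mathcal M})\to(K,d_K)$ is bi-Lipschitz, and the two metrics have the same Assouad dimension $\beta$.

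Next I construct the comparison map. For each qubit, i.e.\ each $k$-cell $\sigma\in\mathcal C^k_{\mathcal M,\delta}(K)$, choose a point $p(\sigma)\in K$ with $d_{\mathcal M}(H_\delta(b(\sigma)),p(\sigma))\le c\delta$. Such a point exists because $\sigma$ is a face of some $D$-simplex meeting $H_\delta^{-1}(K)$, and good simplices have diameter $O(\delta)$. The image set is $c'\delta$-dense in $K$ because $K$ is covered by the packing $\kappa_{\mathcal C,\delta}(K)$. Every $D$-simplex has $k$-faces for $1\le k\le D-1$, so each packed simplex carries qubits. Moreover, by the bounded thickness and radius--edge ratio, each $O(\delta)$-ball of $\mathcal M$ contains only $O(1)$ cells. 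So $p$ is a uniformly bounded-to-one map onto a quasi-net, and bounded multiplicity does not change the Assouad dimension.

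The Lipschitz direction $d_K(p(\sigma),p(\sigma'))\lesssim d_\delta(\sigma,\sigma')$ is the easy step. Adjacent qubits share a check, hence lie in a common $(k\pm1)$-cell inside packed simplices, so their ambient distance is $O(\delta)$. This transfers to $d_K$ by quasi-convexity, and the triangle inequality along a path finishes it.

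The main obstacle is the reverse inequality $d_\delta(\sigma,\sigma')\lesssim d_K(p(\sigma),p(\sigma'))+\delta$. Here quasi-convexity is essential. Take a rectifiable path in $K$ of length $\le\gamma d_{\mathcal M}$ joining $p(\sigma)$ and $p(\sigma')$. Sample it at spacing $\delta$. Consecutive samples lie in $D$-simplices of $\kappa_{\mathcal C,\delta}(K)$ that are within $O(\delta)$ of each other. The remaining task is to show that two packed $D$-simplices at ambient distance $O(\delta)$ are joined by a chain of $O(1)$ packed simplices sharing $k$-faces connected through checks. I would attempt this through the chain of simplices crossed by $H_\delta^{-1}$ of the short arc of the path itself. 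Every simplex crossed by that arc meets $H_\delta^{-1}(K)$, so all of them lie in the packing. Adjacent crossed simplices share at least a vertex, and by the bounded local combinatorics the link of a vertex inside the packing can be traversed in $O(1)$ steps through $(k\pm1)$-cells. I expect some care to be needed when the arc passes only through lower-dimensional faces, which is why a bounded-geometry, Theorem~\ref{Dimension unification for code}-style local argument should be invoked there. Combining the two directions, the family $\{G_\delta(K,k)\}$ is uniformly bi-Lipschitz, up to an additive $O(\delta)$ absorbed at the graph scale, to $\delta$-nets of $K$. Hence its Assouad dimension equals $\dimasd K$.
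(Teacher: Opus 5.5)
Your outline follows the paper's proof in structure. You discretize $K$ at scale $\delta$ by the $k$-cells of the packing and get the Lipschitz bound from bounded local geometry. You get the reverse bound by following a quasi-convex path in $K$ through packed $D$-simplices, with $O(1)$ Johnson-graph steps per $r_0$-ball, and then transfer the Assouad dimension. The difference is the discretization, and as written your choice breaks the transfer step.

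The paper uses the barycenter images $K_\delta=\{H_\delta(b(\sigma))\}$ with the ambient metric. These points lie off $K$, but they are $\tfrac{DK_1\Upsilon_0}{k+1}\delta$-separated, so $\mathcal S_\delta\circ H_\delta^{-1}|_{K_\delta}$ is a genuine bijection. The additive $O(\delta)$ in the reverse inequality is then absorbed using $\delta\lesssim d_{\mathcal M}(w,v)$. The passage to $K$ is done by Lemma~\ref{Controlled approximation lemma}, which needs only Hausdorff distance $A\delta$ and bounded local multiplicity; points of $K$ appear only as auxiliary endpoints $x_w,x_v$ of the path.

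Your map $p(\sigma)\in K$ need not be injective or separated, so $d_K(p(\sigma),p(\sigma'))$ can vanish while $d_\delta(\sigma,\sigma')\ge\delta$. The additive $O(\delta)$ then cannot be absorbed. Proposition~\ref{bi-Lipschitz reduction}, stated for bijections, does not apply, and Lemma~\ref{Uniform net approximation lemma} requires a separated net inside $K$. There are two fixes:
\begin{itemize}
\item use barycenter images as the paper does; or
\item prove directly that a map which is Lipschitz, co-Lipschitz up to an additive $O(\delta)$, and has $O(1)$ preimages in each $t\delta$-ball preserves Assouad dimension. This is the two-case argument ($r\gtrsim\delta$ versus $r\lesssim\delta$) in the proof of Lemma~\ref{Controlled approximation lemma}.
\end{itemize}
The detour through $d_K$ and Ahlfors regularity is harmless but unnecessary, since the statement only compares with $\dimasd K$.

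On the reverse inequality, you are more careful than the paper. The paper covers the path by $r_0$-balls and counts $J_{D,k}N_0$ steps per ball. It leaves implicit why the connecting simplices stay in $\kappa_{\mathcal C,\delta}(K)$ and are joined through $k$-faces. In the manifold case this came from a general-position perturbation, which here could push the path off $K$. The fact you point toward is the right one; state it outright:
\begin{itemize}
\item Work with the finite set of $D$-simplices met by $H_\delta^{-1}$ of the path. By connectedness of the path, this set is connected under the relation of sharing a point of the path.
\item Any such shared point lies in $H_\delta^{-1}(K)$ and in the relative interior of a common face $F$. Hence every $D$-simplex containing $F$ is packed.
\item These $O(1)$ simplices are linked through $(D-1)$-faces containing $F$, because the star of a face in a manifold triangulation is strongly connected. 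So they are also linked through $k$-faces for every $1\le k\le D-1$.
\end{itemize}
With Proposition~\ref{local behavior of D-Triangulation} bounding the number of simplices per $r_0$-ball, this closes the step without any perturbation.
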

 The proof is similar in spirit to that in Theorem~\ref{Example1} and is given in Appendix~\ref{Appendix Dim unification fractal case}.
 
 The quasi-convexity of the fractal set is a crucial prerequisite for our construction to work. Sufficient conditions for fractal sets to be quasi-convex have been discussed extensively in Refs.~\cite{heinonen1998quasiconformal,durand2012poincare,hajlasz2000sobolev,semmes1996finding}. 
 
 Below, we list several standard quasi-convex fractal sets and their figures in Fig.~\ref{fig:fractalset} (See also examples in Section 6 of~\cite{tyson2002conformal}). Their iterated function systems admit explicit descriptions by holomorphic maps on the complex plane. 
 \begin{itemize}
    \item \emph{Sierpiński gasket and Sierpiński simplex} 
    \begin{equation}
        f_{j}(z)=\frac{1}{2}z+a_j,a_1=0,a_2=\frac{1}{2},a_3=\frac{1+\sqrt{3}\mathrm{i}}{4}.
    \end{equation}
    \item \emph{Sierpiński carpet and Menger sponge}
    \begin{equation}
        \begin{split}
            f_{i,j}(z)=\frac{1}{3}z+a_{i,j},a_{i,j}=\frac{i+j\cdot \mathrm{i}}{3},\\
            0\leq i,j\leq 2,(i,j)\neq(1,1).
        \end{split}
    \end{equation}
    \item \emph{Dendrites}
    \begin{equation}
        \begin{split}
             f_{j}(z)=\frac{1}{3}z+\frac23a_j,a_1=-1+\mathrm{i},a_2=1+\mathrm{i},\\
        a_3=0,a_4=-1-\mathrm{i},a_5=1-\mathrm{i}.
        \end{split}
    \end{equation}
    \item \emph{Hexagasket}
    \begin{equation}
        f_{j}(z)=\frac{1}{3}z+\frac23e^{\frac{\pi \mathrm{i}j}{3}}, 0\leq j\leq 5.
    \end{equation}
 \end{itemize}

\begin{figure}[t]
    \centering

    \begin{minipage}[b]{0.23\columnwidth}
        \centering
        \includegraphics[width=\linewidth]{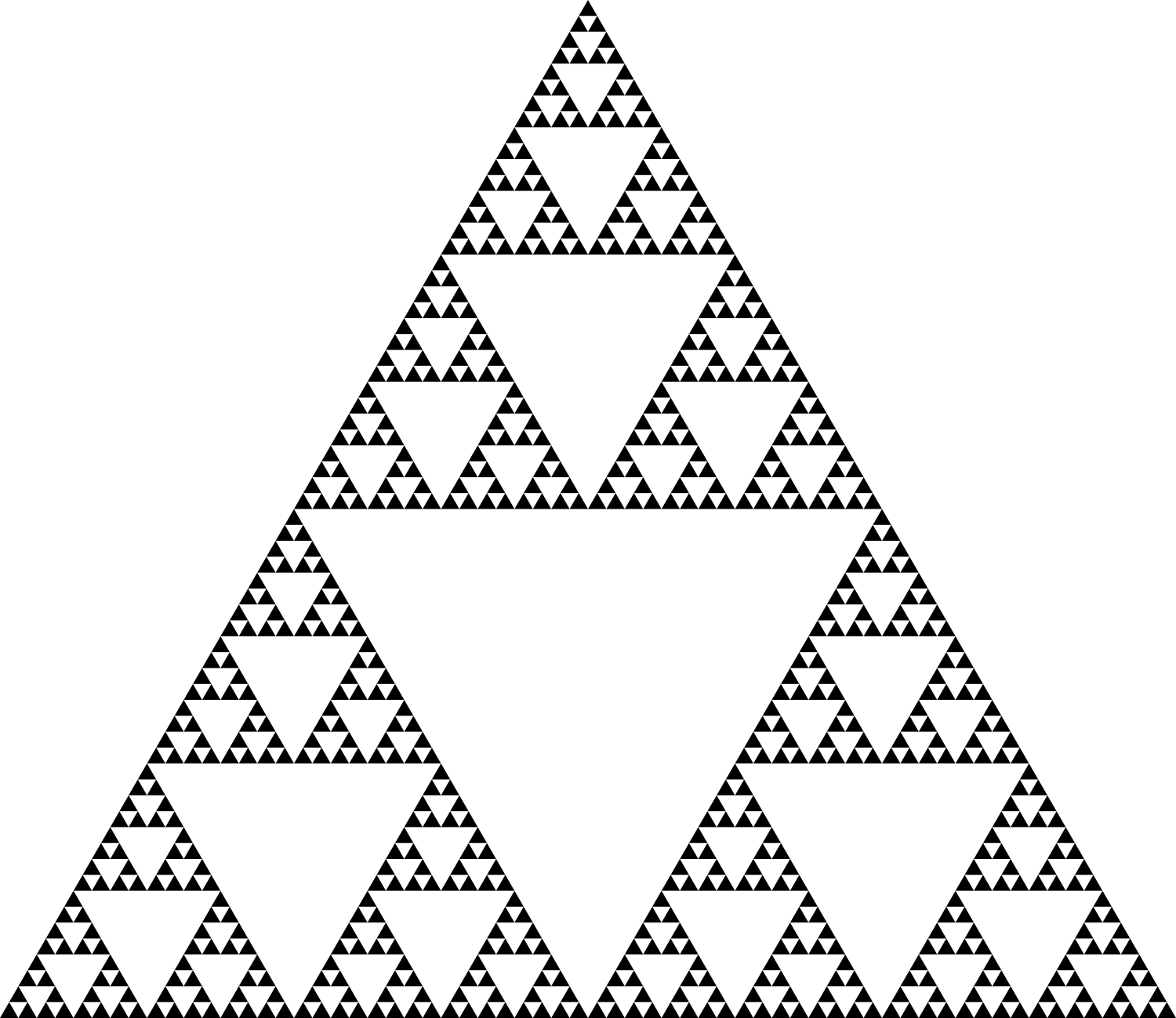}\\[-0.2em]
        {\scriptsize\bfseries (a)}
    \end{minipage}
    \hfill
    \begin{minipage}[b]{0.23\columnwidth}
        \centering
        \includegraphics[width=\linewidth]{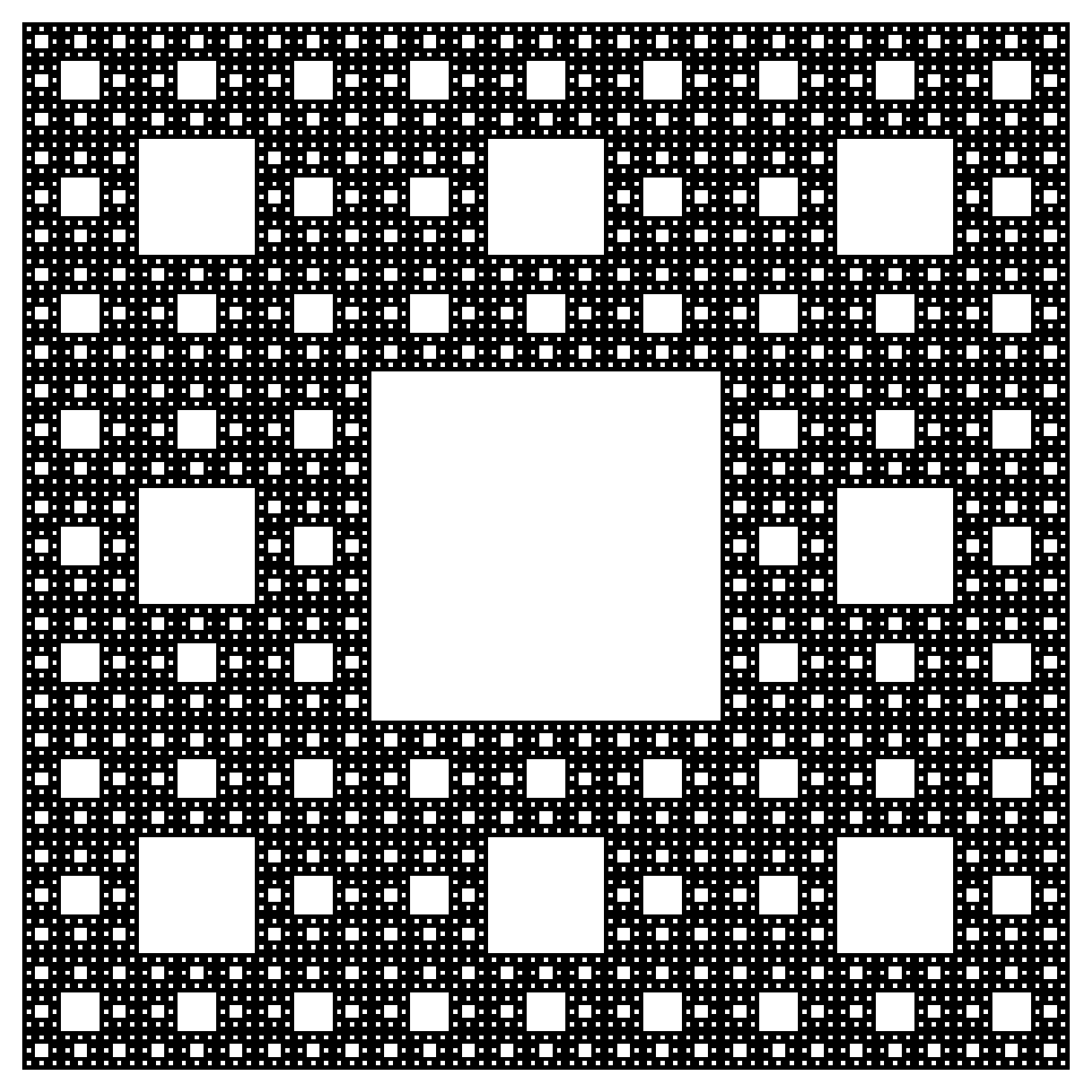}\\[-0.2em]
        {\scriptsize\bfseries (b)}
    \end{minipage}
    \hfill
    \begin{minipage}[b]{0.23\columnwidth}
        \centering
        \includegraphics[width=\linewidth]{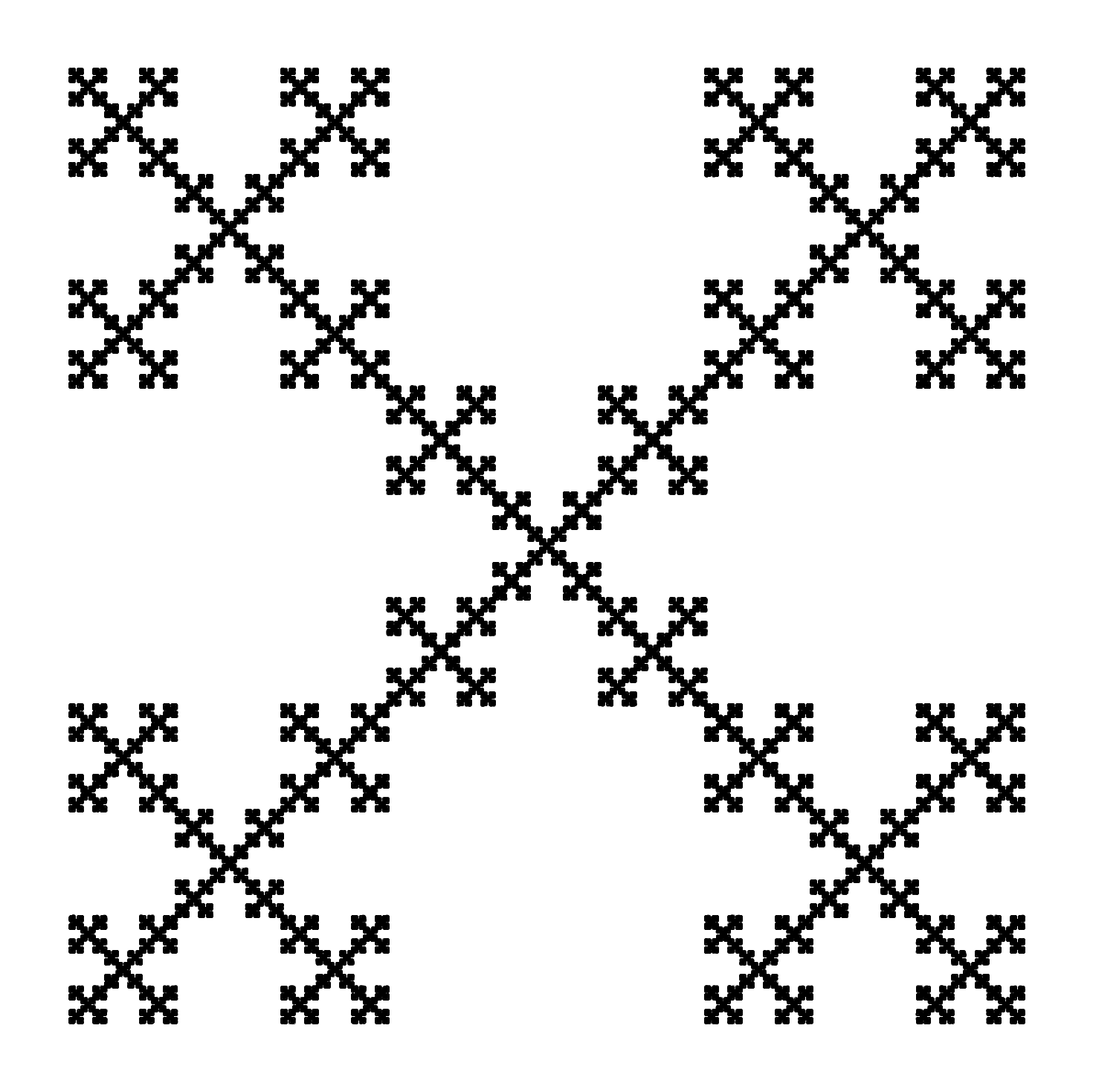}\\[-0.2em]
        {\scriptsize\bfseries (c)}
    \end{minipage}
    \hfill
    \begin{minipage}[b]{0.23\columnwidth}
        \centering
        \includegraphics[width=\linewidth]{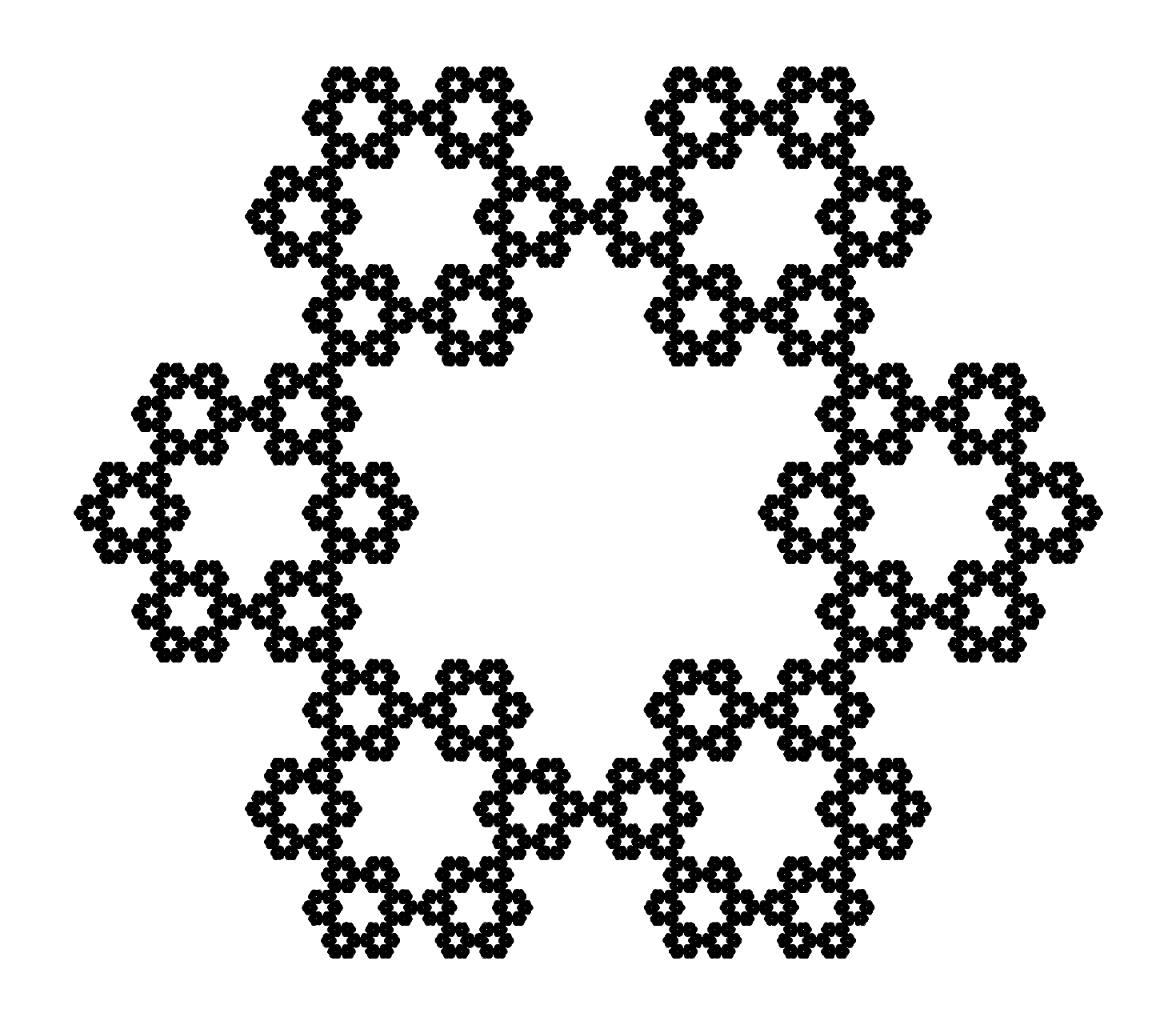}\\[-0.2em]
        {\scriptsize\bfseries (d)}
    \end{minipage}

    \vspace{0.7em}

    \begin{minipage}[b]{0.30\columnwidth}
        \centering
        \includegraphics[width=\linewidth]{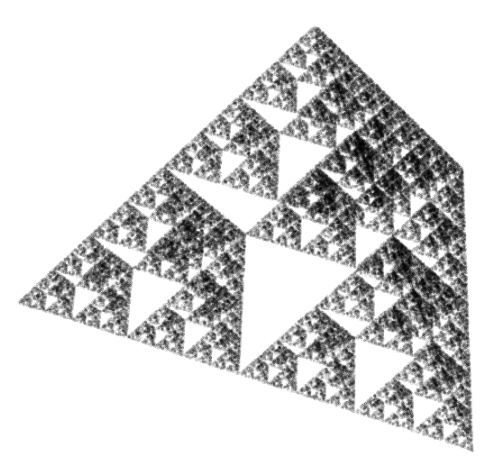}\\[-0.2em]
        {\scriptsize\bfseries (e)}
    \end{minipage}
    \hfill
    \begin{minipage}[b]{0.30\columnwidth}
        \centering
        \includegraphics[width=\linewidth]{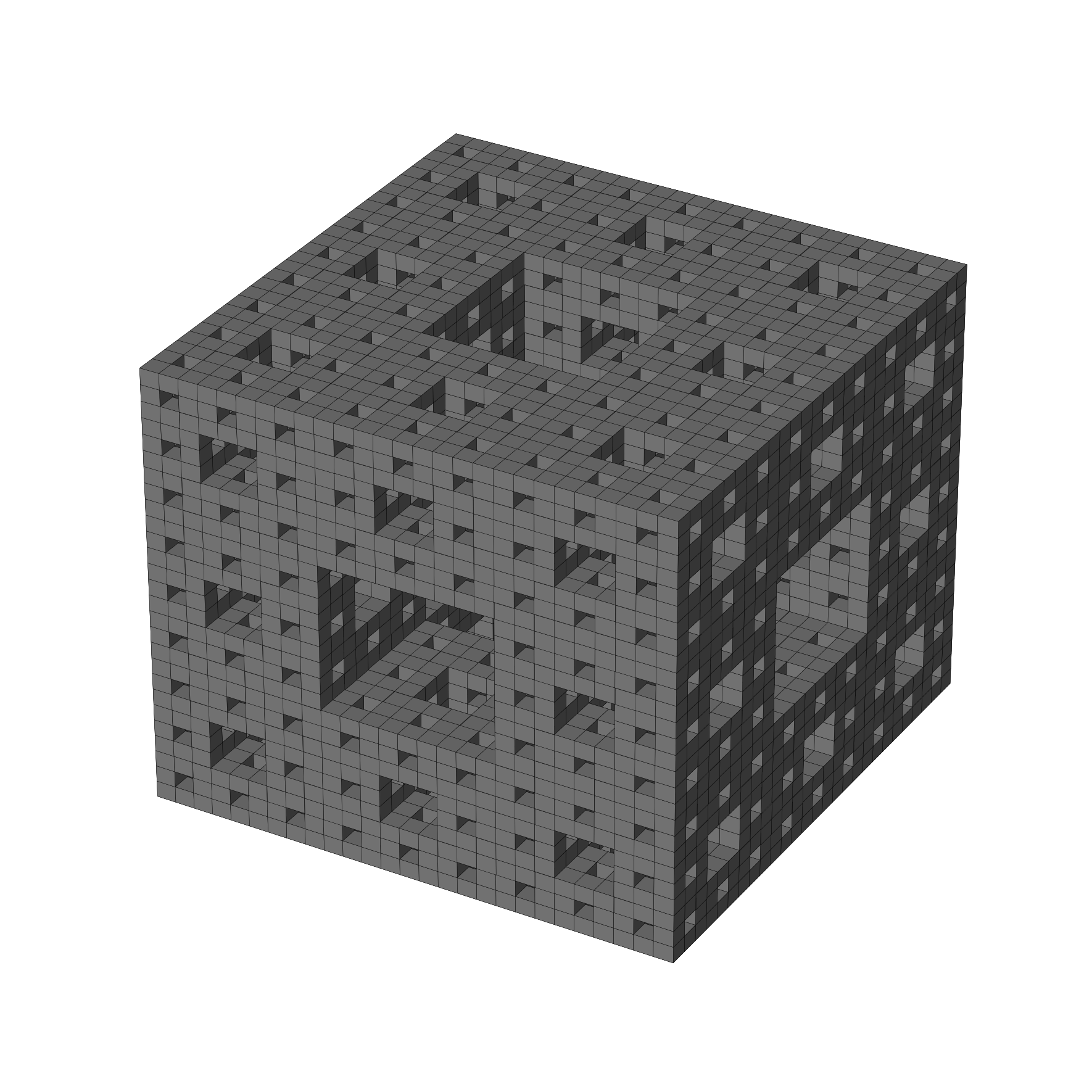}\\[-0.2em]
        {\scriptsize\bfseries (f)}
    \end{minipage}
    \hfill
    \begin{minipage}[b]{0.30\columnwidth}
        \centering
        \includegraphics[width=\linewidth]{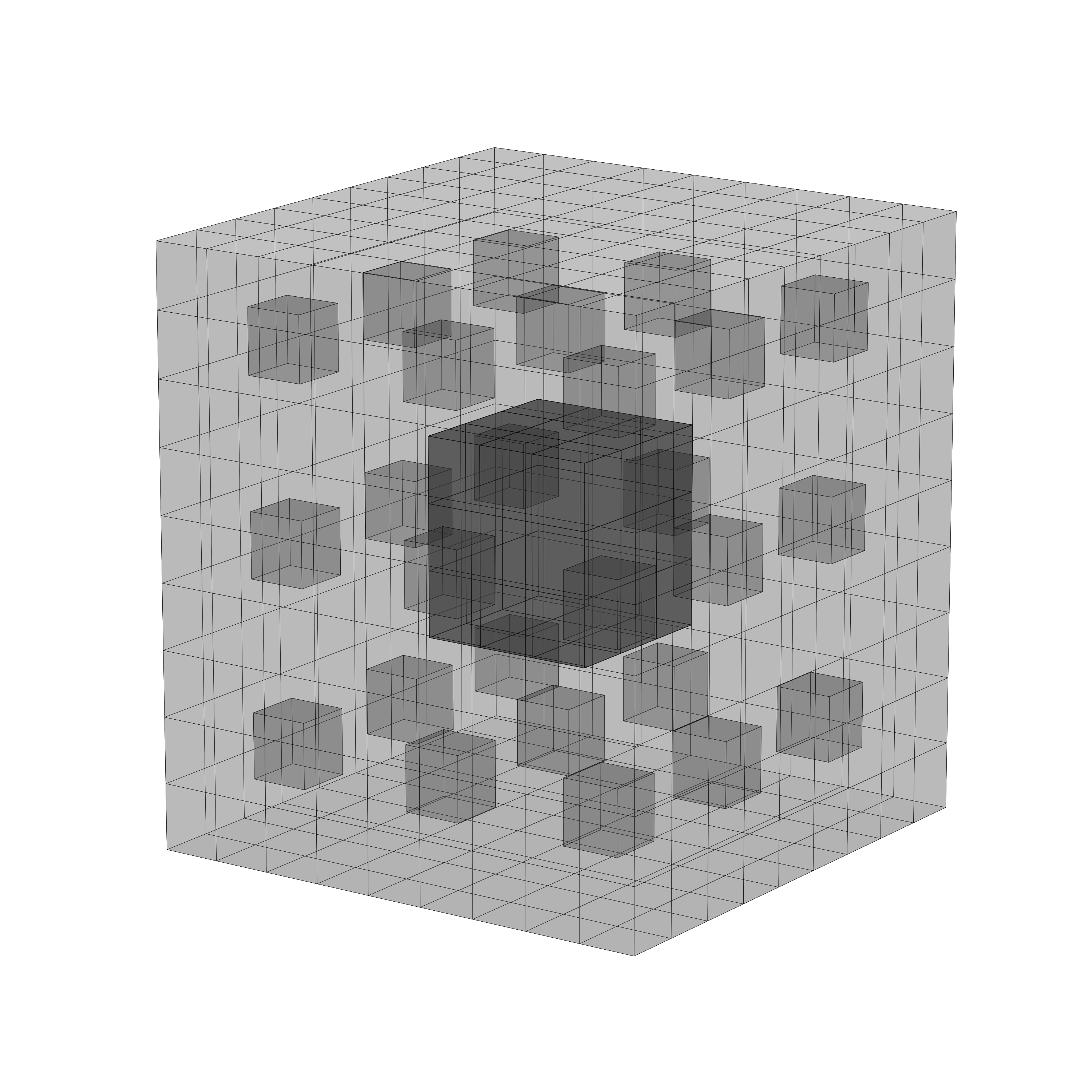}\\[-0.2em]
        {\scriptsize\bfseries (g)}
    \end{minipage}

    \vspace{0.9em}

    {\footnotesize
    \begin{tabular}{lc}
        \toprule
        \textbf{Fractal model} & \textbf{Assouad dimension $\beta$} \\
        \midrule
        (a) Sierpiński gasket   & $\log_2 3 \approx 1.58$ \\
        (b) Sierpiński carpet   & $\log_3 8 \approx 1.89$ \\
        (c) Dendrites           & $\log_3 5 \approx 1.46$ \\
        (d) Hexagasket          & $\log_3 6 \approx 1.63$ \\
        (e) Sierpiński simplex  & $\log_2 4 = 2$ \\
        (f) Menger sponge       & $\log_3 20 \approx 2.73$ \\
        (g) Cantor cube         & $\log_3 26 \approx 2.97$ \\
        \bottomrule
    \end{tabular}
    }

    \caption{
    Representative self-similar quasi-convex fractal models and their Assouad dimensions.
    }
    \label{fig:fractalset}
\end{figure}

The fractional 3D toric code~\cite{Zhu_2022,Dua_2023} is an example of a fractal code with macroscopic distance and noninteger intrinsic dimension. Using our construction outlined above, this code can be generated by the \emph{Cantor cube}, a sponge-like fractal formed by iteratively puncturing the central core in $\mathbb{R}^3$. However, in general it can be difficult to guarantee the distance of the constructed code as it can depend on the specific triangulation on the manifold and analyzing the construction can be highly nontrivial. 

\subsubsection{Multivariate multicycle (group algebra) codes} 

Motivated by recent progress on qLDPC code constructions with improved finite-size performance and increasingly flexible experimental architectures~\cite{Bravyi_2024_highthreshold,Berthusen_2025,Eberhardt_2024_BB_logical,Shaw_2025_morphing}, we discuss stabilizer codes defined by polynomials over quotient rings. Here we focus on code families with finite long-range connectivity, corresponding to defining polynomials of uniformly bounded degree. The case of unbounded-degree polynomials, where the interaction range grows with system size, will be treated separately in the next example. This class includes bivariate bicycle (BB) codes~\cite{Liang_2025,jacob2026singleshotdecodingfaulttolerantgates,Bravyi_2024}, multivariate bicycle (MB) codes~\cite{ll5p-z88p}, and trivariate tricycle (TT) codes~\cite{Menon_2026}. For such families, the intrinsic dimension is $D$, where $D$ corresponds to the dimension of the lattice.

To analyze the intrinsic dimension in a unified framework, we use the formulation for MM codes~\cite{mianMultivariateMulticycleCodes2026}. Consider a code family with unbounded system size $l_1,l_2,\cdots,l_D\to +\infty$ and translation-invariant stabilizers with bounded range $\mathrm{diam}(\mathrm{S_i})\leq \eta$. Qubit locations are identified by monomials in the quotient ring $\bm{x}^{\bm{v}}\in\mathbb{F}_2\left[x_1,x_2,\cdots,x_{D}\right]/(x_1^{l_1}-1,x_2^{l_2}-1,\cdots,x_D^{l_D}-1)$. The stabilizers are defined by finitely many terms with bounded degree $\eta$:
\begin{equation}
    \mathrm{S}_i: f_i\vdef\sum_{\bm{v}\in\left[-\eta,\eta\right]^{D}}c_{\bm{v}}\bm{x}^{\bm{v}},c_{\bm{v}}\in \{0,1\}; i=1,\cdots,t.
\end{equation}
 Here $\mathbb{F}_2\left[x_1^{\pm1},x_2^{\pm1},\cdots,x_{D}^{\pm1}\right]$ is naturally identified with an infinite ambient lattice $\mathbb{Z}^{D}$. The monomials that appear in $f_1+f_2+\cdots+f_t\in \mathbb{Z}\left[x_1^{\pm1},x_2^{\pm1},\cdots,x_{D}^{\pm1}\right]$ with non-zero coefficients correspond to the vertices connected to the origin by a stabilizer generator $\mathrm{S}_i$. Therefore, on the connectivity graph, the qubits adjacent to the origin $\mathcal{O}$ can be denoted by the set of differences between indices of monomials: \begin{equation}\Gamma=\{\bm{x}^{\bm{v}}\mid\bm{v}=\bm{a}-\bm{b};\bm{a},\bm{b}\in \supp(f_i) \text{ for some }i\},\end{equation} where the set of directions of adjacent vertices is denoted by $V=\{\bm{v}_1,\bm{v}_2\cdots,\bm{v}_r\},\bm{v}_i\in \left[-2\eta,2\eta\right]^{D}\subseteq \mathbb{Z}^D$ and $r$ is the degree of the origin in the connectivity graph. 
 
Since the connectivity graph is undirected and any pair of opposite directions is contained in $V$, the volume of the ball
centered at the origin is given by
\begin{equation}
    \mathrm{Vol}(N_{O}(R))=\sharp\left\{
        \sum_{i=1}^r n_i\bm v_i
        \mid n_i\in\mathbb Z,\ \sum_{i=1}^r|n_i|\leq R
    \right\}
\end{equation}
 on the infinite ambient lattice $\mathbb{Z}^D$. By the structure theorem for $\mathbb{Z}$-Mod, there exist a basis $\widetilde{\bm{v}}_1,\widetilde{\bm{v}}_2,\cdots,\widetilde{\bm{v}}_{r^{\prime}}$ such that
\begin{align}\label{Decomposition_stabilizers}
    &F\vdef\mathbb{Z}\langle\bm{v}_1,\bm{v}_2,\cdots,\bm{v}_r\rangle\nonumber\\
    &\simeq\mathbb{Z}\langle\widetilde{\bm{v}}_1\rangle\oplus\mathbb{Z}\langle\widetilde{\bm{v}}_2\rangle\oplus\cdots \oplus\mathbb{Z}\langle\widetilde{\bm{v}}_{r^{\prime}}\rangle\subseteq \mathbb{Z}^{D}.
\end{align}
 Observe that $r^{\prime}=\mathrm{rank}_{\mathbb{Z}}F\leq \mathrm{rank}_{\mathbb{Z}}\mathbb{Z}^D=D$, and the cardinality of the residual torsion part $F_{\mathrm{Torsion}}\coloneqq |\mathrm{Tor}(\mathbb{Z}^D/F)|$ is bounded by some constant $M_0$ depending on the range $\eta$ and $D$. In standard contexts~\cite{Liang_2025,jacob2026singleshotdecodingfaulttolerantgates,Bravyi_2024,mianMultivariateMulticycleCodes2026,Menon_2026}, the vectors generate the whole lattice , thus $F$ spans a full-rank sublattice of $\mathbb{Z}^D,r^{\prime}=D$. We show that after the basis change $(\bm{v}\to\widetilde{\bm{v}})$, the volume growth of $N_{O}(R)$ only differs from that of a standard ball in $\mathbb{Z}^D$ by a finite factor depending on the determinant of the transition matrix. By Eq.~(\ref{Decomposition_stabilizers}), the vectors $\bm{v}_i$ can be represented in the basis $\widetilde{\bm{v}_j}$ using a transition matrix $A_{(\bm{v}\to\widetilde{\bm{v}})} $
\begin{equation}\label{MMcode_coordinate_change}
     \begin{pmatrix}
         \bm{v}_1\\
         \bm{v}_2\\
         \vdots\\
         \bm{v}_r\\
     \end{pmatrix}=\mathrm{A}_{(\bm{v}\to\widetilde{\bm{v}})} 
     \begin{pmatrix}
         \widetilde{\bm{v}}_1\\
         \widetilde{\bm{v}}_2\\
         \vdots\\
         \widetilde{\bm{v}}_{D}\\
     \end{pmatrix},
 \end{equation}
  where $\mathrm{A}_{(\bm{v}\to\widetilde{\bm{v}})}=(a_{i,j})\in \mathrm{M}_{r\times D}(\mathbb{Z})$ depending on $\eta$ and $D$. The entries in $\mathrm{A}_{(\bm{v}\to\widetilde{\bm{v}})}$ are controlled by $\max_{i,j}|a_{i,j}|\leq M_1$. We use hypercubes in the connectivity graph to bound the ball $N_O(R)$:
    \begin{equation}
    \left[-\frac{R}{r},\frac{R}{r}\right]_{\mathbb Z}^{r}
    \bm{v}\subseteq N_{O}(R)\subseteq[-R,R]_{\mathbb Z}^{r}\bm v .
\end{equation}
  For any $\bm{v}=\sum\limits_{i=1}^{r}R_i\bm{v}_i, (R_1,R_2,\cdots,R_r)\in \left[-R,R\right]^{r}$, its coordinates in $\widetilde{\bm{v}}_i$ are given by $(R_1,R_2,\cdots,R_r)\mathrm{A}_{(\bm{v}\to\widetilde{\bm{v}})}\in \left[-rM_1R,rM_1R\right]^{D}$. Considering the enlarged hypercube, the volume of $N_O(R)$ has an upper bound given by
  \begin{align}
      \mathrm{Vol}(N_{O}(R))&\leq \sharp\{\left[-R,R\right]^{r}\bm{v}\}\leq \sharp\{\left[-rM_1R,rM_1R\right]^{D}\}\nonumber\\
      &=(2rM_1R+1)^{D}= O(R^{D}).
  \end{align}
  For the lower bound, since $\mathrm{rank}_{\mathbb{Z}}F=D$, after relabeling the generators if necessary, we may assume that $\bm v_1,\ldots,\bm v_D$ are linearly independent over $\mathbb{Q}$. Equivalently, the first $D$ rows of
 $\mathrm A_{(\bm v\to\widetilde{\bm v})}$ form a matrix $\mathrm A_1\in\mathrm M_{D\times D}(\mathbb Z)$ satisfying $\det(\mathrm A_1)\neq0$. Now we consider the hypercube
 \begin{equation}
     Q_R\coloneqq \left\{\sum_{i=1}^D n_i\bm v_i \mid n_i\in\mathbb{Z}, |n_i|\leq \frac{R}{D}\right\}
 \end{equation}
 Every element in $Q_R$ can be reached from the origin $O$ by a path of length at most $\sum\limits_{i=1}^D|n_i|\leq D\cdot \frac{R}{D}=R$, and hence $Q_R\subseteq N_{O}(R)$. Moreover, $\det(\mathrm A_1)\neq0$ implies that the map $(n_1,\ldots,n_D)\longmapsto \sum\limits_{i=1}^Dn_i\bm{v_i}$
 is injective. Therefore, 
 \begin{equation}\label{lower_bound_MM_code}
    \mathrm{Vol}(N_{O}(R))\geq|Q_R|=\left(\frac{2R}{D}+1\right)^D=\Omega(R^D).
\end{equation}
  The Ahlfors $D$-regularity implies that the MM code family associated with $\{f_i\}$ or equivalently with $V$ has intrinsic dimension $D$. Its code parameters may benefit from a ``twist'' by algebraic relation in the finite regime, but they are asymptotically constrained by the intrinsic code parameter bound (Theorem~\ref{Intrinsic BPT bound} in the next section). 
  

Then, we discuss the intrinsic dimensions for MM code families when their corresponding polynomials can have unbounded degree. A key result is that if each stabilizer generator contains at most $M$ monomials and there are $t$ generator types, then the intrinsic dimension is uniformly bounded by
$\beta\leq t\binom{M}{2}=\frac{M(M-1)}{2}t$.

Given a family of MM codes $\{C_{\lambda}\}_{\lambda\in \mathbb{N}}$ whose stabilizer generators are defined by finitely many polynomial terms with unbounded degree $\eta_{\lambda}$ over $F_{\lambda}\vdef\mathbb{F}_2\left[x_1,x_2,\cdots,x_{D}\right]/(x_1^{l^{\lambda}_1}-1,x_2^{l^{\lambda}_2}-1,\cdots,x^{l^{\lambda}_D}-1)$, 
\begin{equation}
    \mathrm{S}^{\lambda}_i: f^{\lambda}_i\vdef\sum_{\bm{v}\in\left[-\eta_{\lambda},\eta_{\lambda}\right]^{D}}c^{\lambda}_{\bm{v}}\bm{x}^{\bm{v}},c_{\bm{v}}\in \{0,1\},i\in\left[t\right].
\end{equation}
 where for every $ \lambda\in \mathbb{N}^{+}$, each $\mathrm{S}^{\lambda}_i$ has uniformly bounded connectivity $ \sharp\{c^{\lambda}_{\bm{v}}\neq 0| \bm{v}\in \left[-\eta_{\lambda},\eta_{\lambda}\right]^{D}\}\leq M$. We denote the set of edges corresponding difference of monomials by $\Gamma_{\lambda}=\{\bm{x}^{\bm{v}^{\lambda}_i}|i=1,2,\cdots,m_{\lambda},\bm{v}_i^{\lambda}\in\left[-2\eta_{\lambda},2\eta_{\lambda}\right]^{D} \}, |\Gamma_{\lambda}|\leq t\binom{M}{2}$. Thus, at most $m_{\lambda}\leq t\binom{M}{2}$ terms appear in the adjacency polynomial $f^{\lambda}\coloneqq f^{\lambda}_1+f^{\lambda}_2+\cdots+f^{\lambda}_t=\sum\limits_{i=1}^{m_{\lambda}}c_i \bm{x}^{\bm{v}^{\lambda}_i}\in\mathbb{Z}\left[x_1^{\pm1},x_2^{\pm1},\cdots,x_{D}^{\pm1}\right]$. The connectivity graph $G_{\lambda}$ is exactly the Cayley graph $\mathrm{Cay}(\langle\text{Monomials in }F_{\lambda}\rangle,\Gamma_{\lambda})$, where the group action is given by multiplication by monomials $\bm{x}^{\bm{v}^{\lambda}}$ in $\Gamma_{\lambda}$. 
 
 \begin{remark}
     The volume growth of a Cayley graph is closely related to the algebraic structure, or more precisely, the nilpotency of the underlying group. This has been a central topic in geometric group theory for over sixty years~\cite{Bass1972,10.4310/jdg/1214428658,gromov1981groups,trofimov1985graphs,shalom2010finitary,breuillard2014diophantine}.
 \end{remark} 
 
 In our scenario, $F_{\lambda}$ is an abelian group generated by $D$ elements. We will prove that $\mathrm{dim}(C_{\lambda\in I})\leq t\binom{M}{2}<+\infty$.
 By definition, the volume of $N_{O}(R)$ in the connectivity graph $G_{\lambda}$ is bounded by the number of terms in $({f^{\lambda})^R+(f^{\lambda}})^{R-1}+\cdots+1=\sum c_{\bm{v}}x^{\bm{v}}, \bm{v}=R_1\bm{v}_1+R_2\bm{v}_2+\cdots+R_{m_{\lambda}}\bm{v}_{m_{\lambda}}, 0\leq R_1+R_2+\cdots+R_{m_{\lambda}}\leq R$. Thus a basic combinatorial counting argument implies 
 \begin{equation}\label{Combinatorial_upper_growth_bound}
     \sharp N_{O}(R)\leq \binom{R+m_{\lambda}}{m_{\lambda}}= O(R^{m_{\lambda}}).
 \end{equation}
 The Theorem 2 in Ref.~\cite{Bass1972} and Theorem 3.2 in Ref.~\cite{10.4310/jdg/1214428658} prove that, for a finitely generated abelian group $\Gamma$, there exists an integer $\beta\in \mathbb{Z}$, such that $c_{*}R^{\beta}\leq \mathrm{Vol}(N_{O}(R))\leq c^{*}R^{\beta}$. Equivalently, the family $\{C_{\lambda}\}_{\lambda\in \mathbb{Z}}$ has Ahlfors $\beta$-regularity and intrinsic dimension $\beta\in \mathbb{Z}$. Comparing this with Eq.~(\ref{Combinatorial_upper_growth_bound}), we obtain the upper bound $\beta\leq m_{\lambda}\leq t\binom{M}{2}$. Note that we cannot apply the argument based on Eq.~(\ref{Decomposition_stabilizers}) here, since the transition matrix between basis $\mathrm{A}^{\lambda}_{(\bm{v}\to\widetilde{\bm{v}})}$ may depend on the unbounded range $\eta_{\lambda}$.
 \begin{figure*}[htbp]
    \centering
        \includegraphics[width=\textwidth]{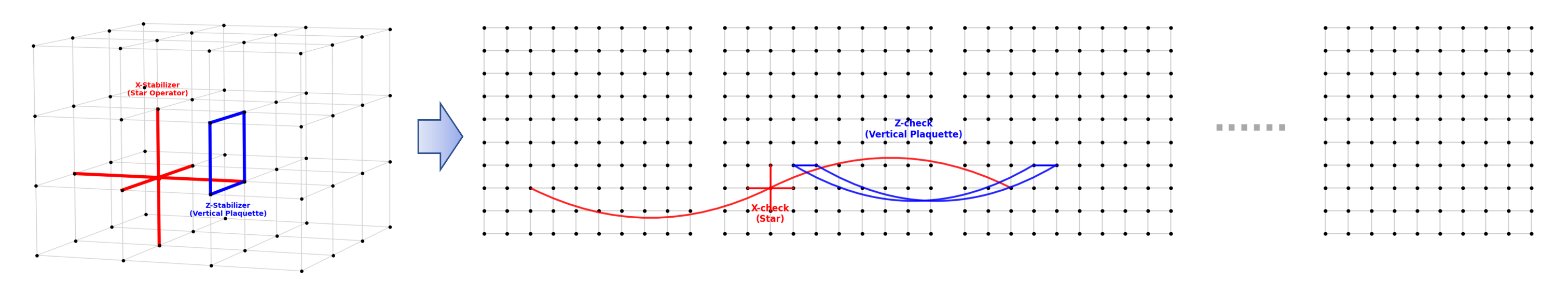}
        \caption{\raggedright\small Illustration of the mapping from 3D toric code to 2D planar code with long range connectivity: the $X$-star check $1+x^{-1}+x+y^{-1}+y+z^{-1}+z\in \mathbb{F}_2\left[x,y,z\right]/(x^{L}-1,y^{L}-1,z^{L}-1)$ maps to $1+x^{-1}+x+y^{-1}+y+x^{-L}+x^{L}\in\mathbb{F}_2\left[x,y\right]/(x^{L^2}-1,y^{L}-1)$, and the vertical $Z$-plaquette check $1+x+z+xz$ maps to $1+x+x^{L}+x^{L+1}$.}
        \label{expanded3D}
\end{figure*}

 Once long-range connectivity from polynomials with unbounded degrees is allowed, the intrinsic dimension $\beta$ can be an arbitrary integer value. Consider the translation invariant $\mathrm{Z}$-check $1+x+x^{L}+x^{L^2}+\cdots+x^{L^{\beta-1}}$ on $\mathbb{F}_2\left[x\right]/(x^{L^{\beta}}-1)$. It has at least $\beta$ effective directions $1,L,\ldots,L^{\beta-1}$ for $R<\log{L}\ll L$, therefore the non-overlapping $L$-adic expression indicates
 \begin{align}
     &\mathrm{Vol}(N_{O}(R))\geq \sharp\{R_0+R_1L+\cdots+R_{\beta-1}L^{\beta-1}\mid\nonumber\\&R_0+R_1+\cdots+R_{\beta-1}\leq R, 0\leq R_i\leq R\}\nonumber\\&=\binom{R+\beta}{\beta}= \Omega(R^{\beta}).
 \end{align}
 On the other hand, a path of length $R$ can be decomposed as an alternating sum of $L^i$ with total absolute weight less than $2R$, then 
 \begin{align}
     &\mathrm{Vol}(N_O(R))\nonumber\\
     &\leq \sharp\left\{n_0+n_1L+\cdots+n_{\beta-1}L^{\beta-1}\mid \sum_{i=1}^{\beta-1}|n_i|\leq 2R\right\}\nonumber\\
     &\leq (4R+1)^{\beta}= O(R^{\beta}).
 \end{align}
 So its volume growth satisfies the Ahlfors $\beta$-regularity at scale $R<\log L$.
 
 Another explicit example is the expanded 3D toric code on the bivariate quotient ring $\mathbb{F}_2\left[x,y\right]/(x^{L}-1,y^{L^2}-1)$ (See Fig.~\ref{expanded3D}), which can be seen as flattening one spatial direction of the 3D toric code. The local $X$-star (vertical $Z$-plaquette) stabilizers in the 3D toric code become stretched stabilizers with macroscopic range $L$ across consecutive 2D layers. Since it has the same connectivity graph, its intrinsic dimension remains unchanged as $\beta =3$.

It is worth specifically noting that the logical gates supported by MM constructions are of interest in current developments in fault tolerance. Given the above analysis of their intrinsic dimensions, we can apply the intrinsic logical gate bounds in Section~\ref{sec:intrinsic dimension bridges code symmetry and indistinguishability} to  systematically study this problem. In particular, TT codes can support fault-tolerant non-Clifford gates, a feature that has been leveraged in recent work to improve fault tolerance~\cite{jacob2026singleshotdecodingfaulttolerantgates,Menon_2026}, while BB codes even with modest long-range connectivity cannot have non-Clifford gates. 
 
 \subsubsection{Hypergraph product codes} 
 We now turn to code constructions that are algebraic in nature. As the prime example, we consider hypergraph product code families and show that the intrinsic dimension is at most $\beta_1 +\beta_2$, where $\beta_i$ is the intrinsic dimension of its constituent classical codes $\{C_i\}$.
 
 Given two families of classical error correcting codes whose Tanner graph families $\mathcal{X}_1,\mathcal{X}_2$ have Ahlfors $\beta_1$-regularity and $\beta_2$-regularity respectively, the Tanner graph family of the hypergraph product code is defined by the product of the Tanner graphs of the constituent classical codes $\widetilde{\mathcal{X}}=\{\widetilde{G_{\lambda}}\vdef G_{1,\lambda}\times G_{2,\lambda},\lambda\in I\}$. For any two vertices $q=(v_1,v_2),q^{\prime}=(v_1^{\prime},v_2^{\prime})$ in the product graph $\widetilde{G}=G_1\times G_2$, their distance in the new graph is given by $d_{G}(q,q^{\prime})=d_{G_1}(v_1,v_1^{\prime})+d_{G_2}(v_2,v_2^{\prime})$. Therefore, we have 
 \begin{equation}\label{product_space_shrinking}
     N_{v_1}\left(\frac{R}{2}\right)\times N_{v_2}\left(\frac{R}{2}\right)\subseteq N_{q}(R) \subseteq N_{v_1}(R)\times N_{v_2}(R).
 \end{equation}
 Taking the volume and applying the Ahlfors regularity of $G_1,G_2$, we obtain 
 \begin{equation}
     \frac{v_{1,*}v_{2,*}}{2^{\beta_1+\beta_2}}R^{\beta_1+\beta_2}\leq \mathrm{Vol}(N_q(R))\leq v_1^{*}v_2^{*}R^{\beta_1+\beta_2}.
 \end{equation} Therefore, the Tanner graph family of the derived quantum codes has Ahlfors $(\beta_1+\beta_2)$-regularity. 
 
 In general, if the component codes have intrinsic dimensions $\dimasd\mathcal{X}_i=\beta_i,i=1,2$ but are not known to satisfy Ahlfors regularity, then we can only obtain a weaker lower and upper bound:
 \begin{equation}
     \max\{\beta_1,\beta_2\}\leq \dimasd \widetilde{\mathcal{X}}\leq \beta_1+\beta_2.
 \end{equation}
 We show that the upper bound holds. First, by~(\ref{product_space_shrinking}), $N_{q}(R) \subseteq N_{v_1}(R)\times N_{v_2}(R)$. Each $N_{v_i}(R)$ can be covered by $O((\frac{R}{r})^{\beta_i})$ balls of radius $r$. Thus, any $N_q(R)$ can be covered by $O((\frac{R}{r})^{\beta_1+\beta_2})$ balls of radius $r$ in $\widetilde{\mathcal{X}}$. Thus, a hypergraph product code has intrinsic dimension at most $\beta_1 +\beta_2$.

 \subsubsection{``Good'' qLDPC and optimal finite-dimensional codes}

First, known asymptotically  ``good'' quantum low-density parity-check (qLDPC) codes~\cite{panteleev2022asymptotically,leverrier2022quantum,dinur2023good} which achieve the optimal scaling $\llbracket n, k=\Omega(n), d=\Omega(n) \rrbracket$ rely on expanding connectivity and hence have infinite intrinsic dimension.
To see this, recall that the Cheeger constant $h(G)$ for the connectivity graph $G \in \mathcal{X}$ is strictly lower bounded:
\begin{equation}
    h(G) = \min_{0 < |S| \leq \frac{|G|}{2}} \frac{|\partial S|}{|S|} \geq \epsilon > 0.
\end{equation}
Let $S = N_x(r)$ denote a ball of radius $r$ on the connectivity graph. Assuming $|S| \leq \frac{|G|}{2}$, the volume of the enlarged ball $N_x(r+1)$ satisfies:
\begin{align}
    \mathrm{Vol}(N_x(r+1)) &= |S \cup \partial S| \nonumber \\
    &\geq (1+\epsilon)|S| = (1+\epsilon)\mathrm{Vol}(N_x(r)). 
\end{align}
Iterating this relation gives exponential volume growth, $\mathrm{Vol}(N_x(r)) \geq C (1+\epsilon)^r$. This implies that the Assouad dimension is infinite, i.e., $\dim_{\mathrm{A}}(\mathcal{X}) = +\infty$, which indicates that the intrinsic dimension is infinite. This is consistent with the intrinsic code parameter bound (Theorem~\ref{Intrinsic BPT bound}) in the subsequent section.

Moreover, we note that code parameters arbitrarily close to asymptotically good can be achieved in finite dimensions.  Recently, Lin {et al.}~\cite{lin2023geometrically,Li_2024_GeometricallyLocal} constructed the qLDPC code families with parameters $\llbracket n, \Omega(n^{\frac{D-2}{D}}), \Omega(n^{\frac{D-1}{D}}) \rrbracket$ defined on arbitrary $D$-dimensional integer lattices. 
Therefore, for any arbitrarily small \(\epsilon>0\), one can achieve 
\(k=\Omega(n^{1-\epsilon})\) with \(D=2/\epsilon\), and 
\(d=\Omega(n^{1-\epsilon})\) with \(D=1/\epsilon\). 
That is, the code parameters can be made arbitrarily close to linear in finite dimensions.

\subsubsection{Non-qLDPC codes}
Finally, for generic non-qLDPC code families, the connectivity graphs have unbounded degree: either some stabilizer generators involve $ \omega(1)$ qubits, or a sequence of physical qubits are involved in $\omega(1)$ checks. Hence, the ball of constant radius can contain an unbounded number of vertices. This violates the doubling condition (Proposition~\ref{doubling_condition}) at scale $R=\frac12$, and therefore the intrinsic dimension is infinite. This does not, however, mean that this case is uninteresting: for example, as will be explained in Section~\ref{sec:intrinsic dimension bridges code symmetry and indistinguishability}, the infinite intrinsic dimension feature of non-qLDPC codes opens the possibility of fault-tolerant logical gates at arbitrarily high levels of the Clifford hierarchy, which is indeed achieved recently~\cite{Golowich_2024,He_2025}.

\section{intrinsic dimension constrains code parameters}\label{sec:intrinsic dimension constrains code parameters}

In this section, we establish a general tradeoff bound for the basic rate and distance parameters dictated by the intrinsic dimension. For integer dimensions, our bound recovers the known parameter tradeoffs~\cite{Bravyi_2010}, while extending and strengthening them for more general code families.
 
 Previous work by Baspin et al.~\cite{Baspin_2022} shows that the \emph{separation profile}, a function of graph separators, constrains the code parameters. For a graph family $\mathcal{X}=\{G_{{\lambda}}\}_{{\lambda}\in \mathbb{N}}$ with Assouad dimension $\beta$, one can bound the separation profile as $s_n(r)=O(r^{1-\frac{1}{\beta}})$ by identifying the separator between a maximal ball of volume $r=\Theta(R^{\beta})$ and the remainder of $G_{\lambda}$ with a boundary sphere of volume $O(R^{\beta-1})$. Applying Corollary 24 in Ref.~\cite{Baspin_2022} then yields a bound $k \cdot d^{\frac{2}{\beta}} \leq O(n)$ on the code parameters. However, as noted by the authors, this bound is suboptimal without more specific geometric conditions. Leveraging our  strengthened regularity condition that controls the growth of the boundary of a ball, we obtain an improved distance bound, further optimizing the exponent of $d$ from $\frac{2}{\beta}$ to $\frac{2}{\beta-1}$. The resulting bound can be seen as a generalization of the Bravyi--Poulin--Terhal bound~\cite{Bravyi_2010} based on intrinsic dimension.
 
\begin{definition}[Strengthened Ahlfors $\beta$-regularity]
    A family of graphs $\mathcal{X}=\{G_{\lambda}\}_{\lambda\in I}$ exhibits \emph{strengthened Ahlfors $\beta$-regularity} up to scale $R_{\lambda}$ if the volume of any sphere in $G_{\lambda}$ with radius $R\leq R_{\lambda}$ is uniformly bounded by
    \begin{equation}
        v_{*} R^{\beta-1} \leq \mathrm{Vol}(\partial N_{q}(R)) \leq \max\{v^{*}R^{\beta-1}, 1\}, \quad \forall \lambda \in I,
    \end{equation}
    where $\partial N_{q}(R) \coloneqq \{p \mid d_{\lambda}(p,q)=R\}$ denotes the boundary sphere centered at $q$. In particular, summing over $R$ implies standard Ahlfors $\beta$-regularity, thus confirming $\dimasd(\mathcal{X})=\beta$ up to scale $R_{\lambda}$.
\end{definition}
 This condition can be understood as requiring the Assouad dimension to be $\beta$ along with second-order spatial homogeneity. It holds naturally under translation invariance or self-similarity at local scale, where the constant $v_{*}$ ($v^{*}$) is chosen as the minimum (maximum) over a finite set.
\begin{theorem}[Intrinsic code parameter bound]\label{Intrinsic BPT bound}

Let \(\{C_{\lambda}(\mathcal{S}_{\lambda})\}_{\lambda\in I}\) be a code family with specified stabilizer generator sets, and
let \(\mathcal{X}=\{G_{\lambda}\}_{\lambda\in I}\) be the associated family of connectivity graphs, with intrinsic dimension $\beta\geq2$.
If $G_{\lambda}$ satisfies strengthened Ahlfors $\beta$-regularity up to scale $d_{\lambda}$, then 
the following quantitative constraints on code parameters holds:
    \begin{equation}
        {k_{\lambda}d_{\lambda}^{\frac{2}{\beta-1}}}\leq \frac{27\beta\cdot48^{\beta}(v^*)^{\frac{2\beta}{\beta-1}}}{2v_{*}^2}\cdot n_{\lambda}, ~\forall \lambda\in I.
    \end{equation}
More simply put, there exists a constant $w$ depending only on $\beta,v_{*},v^{*}$ such that $w{k_{\lambda}d_{\lambda}^{\frac{2}{\beta-1}}}\leq n_{\lambda}$ universally holds.
\end{theorem}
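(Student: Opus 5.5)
We follow the Bravyi--Poulin--Terhal strategy, replacing Euclidean geometry by the graph metric on $G_\lambda$ and using the strengthened Ahlfors condition wherever BPT uses the volume of cubes and of their boundaries. The two ingredients are the union lemma and the cleaning lemma. The union lemma says that if the regions $A_1,\dots,A_m$ are pairwise separated by graph distance $\geq 2$, so that no stabilizer generator meets two of them, and each $A_i$ is correctable, then $\bigcup_i A_i$ is correctable. The cleaning lemma (Bravyi--Terhal) says that if $A$ is correctable, then $k$ is at most the number of qubits in the complement $B=V\setminus A$ that lie in the boundary layer of $B$ adjacent to $A$. Both hold for any stabilizer code whose generators act on cliques of the connectivity graph, so we may take them as given.

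Step 1 is the cover. We fix the scale $R=c\,d_\lambda$, with $c$ a small constant. We choose a maximal $R$-separated net $\{x_j\}$ in $G_\lambda$, so that the balls $N_{x_j}(R)$ cover $G_\lambda$. The balls $N_{x_j}(R/2)$ are disjoint, so the lower bound $v_*(R/2)^\beta$ from Ahlfors regularity gives $\#\{x_j\}\leq 2^\beta n/(v_*R^\beta)$. We then thicken each center into a ball $N_{x_j}(\rho)$ with $\rho$ of order $R$. Since $\mathrm{Vol}(N(\rho))\leq v^*\rho^\beta<d_\lambda$ for a suitable $c$, every such ball is correctable.

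Step 2 is to build the correctable set $A$. Following BPT, we make the complement thin. Using the doubling property (Proposition~\ref{doubling_condition}), we colour the net points with $K\leq C(\beta)$ colours, for instance $K\lesssim 48^\beta v^*/v_*$, so that equally coloured points are at distance $\geq 4R$. Within one colour, the balls $N_{x_j}(R)$ are then far apart, and the union lemma makes their union correctable. The goal is to form $A$ from the interiors of all the balls, namely $N_{x_j}(R)$ minus a shell of width $w\sim1$ that separates it from the other chosen regions. In practice we take $A=\bigcup_j \big(N_{x_j}(R)\setminus\bigcup_{i<j} N_{x_i}(R+1)\big)$ and check correctability through the colour-class decomposition. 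The complement $B$ is then contained in the union of the spheres $\partial N_{x_j}(R')$ for $R'\in\{R,R+1\}$. By the strengthened regularity, $|B|\leq 2v^*R^{\beta-1}\cdot\#\{x_j\}$.

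Step 3 is counting. The cleaning lemma gives
\[
k\leq |B|\lesssim v^*R^{\beta-1}\cdot\frac{2^\beta n}{v_*R^\beta}=O\!\left(\frac{n}{R}\right).
\]
This alone yields only $kd\lesssim n$. To reach the exponent $2/(\beta-1)$ we need the second BPT step. We apply cleaning a second time, exchanging the roles of $A$ and $B$: the boundary region $B$ itself splits into pieces whose sizes are controlled by the sphere volumes, and every piece of size $<d$ is correctable. Concretely, we take $R=c'\,d^{1/(\beta-1)}$, so that each sphere of radius $R$ has volume $\leq v^*R^{\beta-1}<d$ and is therefore correctable. The skeleton $B$, made of intersections of spheres, then has its own boundary, namely the intersections of pairs of spheres, which are codimension-two sets. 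Their size is estimated at radius scale $\sim R$ as $O(R^{\beta-2})$ per ball, using the lower bound $v_*R^{\beta-1}$ to compare. This gives
\[
k\lesssim \frac{n}{R^\beta}\,R^{\beta-2}=\frac{n}{R^2}=O\!\left(\frac{n}{d^{2/(\beta-1)}}\right),
\]
which is the claimed bound. The explicit constant $27\beta\cdot48^\beta(v^*)^{2\beta/(\beta-1)}/(2v_*^2)$ comes from tracking the colouring constant, the factor $2^\beta$ from the packing, and the conversion $R^\beta=(d/v^*)^{\beta/(\beta-1)}$.

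The main obstacle is this final step: bounding the "codimension-two" part of the skeleton in a general graph. In a graph there are no hyperplanes, so we would control the intersection of two spheres by the sphere sizes themselves, using the two-sided bound to prevent any sphere from concentrating. If this fails, the fallback is to choose the radii $R_j\in[R,2R]$ by averaging, a pigeonhole argument over the $R$ possible shells, so that the total overlap is at most a $1/R$ fraction of $\sum_j\mathrm{Vol}(\partial N_{x_j}(R_j))$. This averaging is where the extra factor $1/R$, and hence the exponent $2/(\beta-1)$, would come from.
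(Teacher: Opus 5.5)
There is a genuine gap in the correctability of the interior region $A$. In Step 1 you certify the balls as correctable by volume, $v^*\rho^\beta<d_\lambda$. This is already false for $R=c\,d_\lambda$ when $\beta\geq 2$. It also fails at the scale you actually need, $R\sim d_\lambda^{1/(\beta-1)}$: there the lower Ahlfors bound gives balls of volume of order $d_\lambda^{\beta/(\beta-1)}\gg d_\lambda$. At that scale you only check that the \emph{spheres} have fewer than $d_\lambda$ qubits, which makes the skeleton pieces correctable but says nothing about the interiors.

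The missing ingredient is the Expansion Lemma. In the connectivity graph the outer boundary of $N_x(r)$ is exactly $\partial N_x(r+1)$, which has at most $v^*(r+1)^{\beta-1}<d_\lambda$ qubits for $r<2R$. Growing the ball shell by shell from $\{x\}$ therefore shows that every $N_x(t)$ with $t\leq 2R$ is correctable, even though its volume far exceeds $d_\lambda$. This is exactly how the paper proceeds, and it is where the improved exponent comes from. If balls are only certified by volume, you are forced to take $R\lesssim d_\lambda^{1/\beta}$, and the rest of the argument yields only $k\,d_\lambda^{2/\beta}\lesssim n_\lambda$, the weaker separator-type bound the paper contrasts itself with.

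Second, your main-line claim that two equal-radius spheres meet in $O(R^{\beta-2})$ qubits is false for graphs. Take $\mathbb{Z}_L^2$ with king's-move adjacency: spheres have $8s$ sites, so it is strengthened Ahlfors $2$-regular. There, for the square-grid net of spacing $R+1$, radius-$R$ spheres around neighbouring net points share straight segments of order $R$ sites, and $|C|$ degrades to order $n/R$. So the averaging over radii $t_j\in[R+1,2R]$ that you list as a fallback is the essential step, not an optional one. It is precisely the paper's pigeonhole:
\begin{itemize}
\item for fixed $t_j$, the spheres $\partial N_{x_i}(t_i)$ with different $t_i$ are disjoint, so the overlap averaged over $t_i$ is at most $|\partial N_{x_j}(t_j)|/R$;
\item only $M=O(1)$ centers are relevant for each $j$.
\end{itemize}
You must also put the thickened intersections into $C$ and into the averaging; in the paper's notation these are $\partial N_{x_i}(t_i+\delta)\cap\partial N_{x_j}(t_j)$ with $\delta\in\{0,\pm1\}$. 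Otherwise two adjacent qubits on different spheres survive outside $C$, and the union lemma cannot be applied to the sphere pieces.

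Your ``cleaning lemma'' is also false as stated (with $A$ correctable, $k$ at most the layer of $B$ adjacent to $A$): take $A=\emptyset$. What you need is the BPT entropy bound $k\leq|C|$ for a tripartition with $A$ and $B$ both correctable.

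With these repairs your construction does work. Take the disjointified interiors $A=\bigcup_j\bigl(N_{x_j}(t_j)\setminus\bigcup_{i<j}N_{x_i}(t_i+1)\bigr)$, whose pieces are automatically pairwise $2$-separated, so the colouring step is unnecessary. The remaining single-sphere pieces form $B$, and $C$ is the thickened intersections. This is a somewhat simpler alternative to the paper's parity (checkerboard) assignment of configuration cells.
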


This not only improves previous rate-distance tradeoff scalings and extends them to arbitrary dimensions, but also strengthens them beyond asymptotic scaling by setting explicit constants in the bounds.

Beyond the asymptotic exponent, the finite-size prefactors in geometric rate-distance tradeoffs are themselves of interest code design. In two dimensions, searches over BB codes and related generalized toric and planar constructions have increased the reported values of $kd^{2}/n$ by optimizing stabilizer geometry and boundary conditions~\cite{Bravyi_2024,Liang_2025,Liang_2025_planar,Liang_2026_selfdual}. A representative $n=360$ BB instance was reported with $kd^{2}/n=19.2$~\cite{Liang_2025}. Our Theorem~\ref{Intrinsic BPT bound} establishes general limitations also for the prefactors: for any code family with fixed intrinsic dimension $\beta\geq2$ and fixed regularity constants $(v_{*},v^{*})$, the generalized index $k_{\lambda}d_{\lambda}^{\frac{2}{\beta-1}}/n_{\lambda}$ remains uniformly bounded. Consequently, under mild regularity assumptions, there is a fundamental limit to how far this index can be improved. We expect that the bound can be further tightened by sharper combinatorial divisions and more refined estimates in our proof. Determining the optimal constant in physically relevant dimensions, and thereby closing the gap between constructive lower benchmarks and the general analytical upper bound, is an interesting direction for future study.

Moreover, as a direct implication:

\begin{remark}
    Theorem~\ref{Intrinsic BPT bound} directly indicates that under the regularity assumptions, no code families with finite intrinsic dimension can simultaneously have positive asymptotic rate and growing distance.
\end{remark}

This gives a precise mathematical sense in which expansion is not merely a pathology of many high-parameter qLDPC codes: nongeometric connectivity growth beyond any finite intrinsic dimension is necessary to support a good $k$ together with any growing distance.


Also interestingly:
\begin{remark}
    It is shown in Ref.~\cite{baspin2025stabilizercodesdimensionsconstant} that when $\beta<2$, the code distance is bounded by a constant independent of the number of physical qubits~. This result completes the theoretical picture of dimensional constraints, confirming that $\beta = 2$ is the critical dimension for asymptotic distance growth.
\end{remark}

We now proceed to the proof. We first state a lemma which is needed for the proof. This lemma constructs a cover of $G_\lambda$ with geodesic balls whose boundaries have controlled pairwise intersections. This enables us to build a tripartite partition $\{\mathcal{A},\mathcal{B},\mathcal{C}\}$ and in turn to prove an upper bound on the volume of region $\mathcal{C}$. 
\begin{widetext}
    
    \begin{lemma}\label{Division lemma}
    Let $G_{\lambda}$ be a connectivity graph with strengthened Ahlfors $\beta$-regularity up to scale $R_{\lambda}$. For any scale $R\in \left(0,\frac{R_{\lambda}}{6}\right]$, there always exists a set of centers $\mathcal{E}_{\lambda}=\{x_{1},x_{2},\cdots,x_{l(\lambda)}\}\subseteq G_{\lambda}$ and a sequence of integer radii $(t_1,t_2,\cdots,t_{l(\lambda)})\in \left[R+1,2R\right]^l$ such that
    \begin{itemize}
     \item The collection of balls $N_{x_i}(t_i),i=1,2,\cdots,l(\lambda)$ constitutes a cover of $G_{\lambda}$, $l(\lambda)\leq w_0 \cdot \frac{n}{R^{\beta}}$, where $w_0$ is a constant that only depends on $\beta,v_{*}$.
     \item For any $i\in \left[l\right]$, there are at most $M$ of $j\in \left[l\right]$ such that  $N_{x_i}(2R+1)\cap N_{x_j}(t_j)\neq \emptyset$. The volume of the intersection of the boundaries satisfies 
     \begin{equation}
         \sum\limits_{i,j=1,i\neq j}^{l}\sum\limits_{\delta_i=0,\pm 1}\mathrm{Vol}(\partial N_{x_i}(t_i+\delta_i)\cap \partial N_{x_j}(t_j))\leq w_2\cdot \frac{n}{R^2},
     \end{equation}
     where $w_2,M>0$ are constants that only depend on $\beta,v_{*},v^{*}$.
    \end{itemize}
    The factor $\delta_i$ is included here to control the behavior of the outer and inner boundaries of the domains $\partial_{+}N_{x_j}(t_j),\partial_{-}N_{x_j}(t_j), \forall j\in \left[l\right]$.
\end{lemma}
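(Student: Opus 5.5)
I will build the cover from a maximal $R$-separated net and then choose the radii $t_i$ by an averaging (pigeonhole) argument over the window $[R+1,2R]$. First, let $\mathcal{E}_\lambda=\{x_1,\dots,x_l\}$ be a maximal subset of $G_\lambda$ with pairwise distances $>R$. By maximality every vertex lies within distance $R$ of some $x_i$, so any choice $t_i\geq R+1$ gives a cover. The balls $N_{x_i}(R/2)$ are pairwise disjoint. Since $R\leq R_\lambda/6$, the strengthened Ahlfors condition summed over radii gives $\mathrm{Vol}(N_{x_i}(R/2))\geq c\,v_* R^{\beta}$, and hence $l\leq w_0 n/R^{\beta}$ with $w_0$ depending only on $\beta,v_*$. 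For the multiplicity bound, suppose $N_{x_i}(2R+1)\cap N_{x_j}(t_j)\neq\emptyset$. Then $d(x_i,x_j)\leq 4R+1\leq 5R$. All such $x_j$ carry disjoint balls $N_{x_j}(R/2)$ inside $N_{x_i}(6R)$, which is still within the regularity range. Comparing volumes gives at most $M\lesssim (v^*/v_*)12^{\beta}$ such $j$.

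Next I choose the radii greedily so as to control the boundary intersections. Fix $i$. For each of the at most $M$ neighbors $j$, write $I_{ij}(t)=\sum_{\delta=0,\pm1}\mathrm{Vol}(\partial N_{x_i}(t+\delta)\cap \partial N_{x_j}(t_j))$. The spheres $\partial N_{x_i}(s)$ are disjoint for distinct $s$, so $\sum_{t=R+1}^{2R} I_{ij}(t)\leq 3\,\mathrm{Vol}(\partial N_{x_j}(t_j))\leq 3v^*(2R)^{\beta-1}$, with the bound on the sphere volume coming from strengthened regularity. Summing over the $\leq M$ neighbors and averaging over the $R$ admissible values of $t$, some $t_i\in[R+1,2R]$ satisfies $\sum_j I_{ij}(t_i)\leq 3Mv^*2^{\beta-1}R^{\beta-2}$. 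This step has a circularity, because $t_j$ must already be fixed when $t_i$ is chosen. I handle it by processing the centers in order and choosing $t_i$ against the already-fixed $t_j$ with $j<i$. Each unordered pair is then controlled once, and I bound the reverse-ordered terms by the symmetric estimate: the same disjointness argument applied to $\partial N_{x_j}(t_j+\delta)$ versus $\partial N_{x_i}(\cdot)$ allows the averaging to be over both orderings simultaneously, replacing the weight $I_{ij}$ by $I_{ij}+I_{ji}$ in the pigeonhole, which only costs a factor $2$. Summing over $i$ gives a total of $O(l\,R^{\beta-2})=O(w_0 n/R^2)$, which is the claimed $w_2 n/R^2$.

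The main obstacle is the symmetric/greedy step. The term $I_{ji}$ depends on $t_i$ through the outer sphere, with the $\delta$ shift on $j$'s side, while $t_j$ may not yet be chosen. If the sequential argument fails, I would instead pick all radii independently and uniformly at random in $[R+1,2R]$. For a fixed pair $(i,j)$ and fixed $t_j$, the expected value of $\mathrm{Vol}(\partial N_{x_i}(t_i+\delta)\cap\partial N_{x_j}(t_j))$ over $t_i$ is at most $\mathrm{Vol}(\partial N_{x_j}(t_j))/R\lesssim v^* R^{\beta-2}$, again by disjointness of the spheres. Linearity of expectation over the $\leq Ml$ ordered pairs and three values of $\delta$ then gives an expected total of $O(Mv^* l R^{\beta-2})$. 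Some realization attains at most the expectation, which yields the bound with explicit constants depending only on $\beta,v_*,v^*$.
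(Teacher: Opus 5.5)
Your proposal is correct and is essentially the paper's argument. The paper uses the same maximal $R$-separated net, the same volume bounds for $l$ and for the multiplicity $M$, and then averages $\zeta(\mathbf t)$ over all $\mathbf t\in[R+1,2R]^l$ using disjointness of the spheres $\partial N_{x_i}(s)$; this is exactly your fallback with independent uniformly random radii. Your greedy variant also works, and the circularity you worry about does not arise: when $t_i$ is chosen after $t_j$, the reverse term satisfies $\sum_{t}\mathrm{Vol}(\partial N_{x_j}(t_j+\delta)\cap\partial N_{x_i}(t))\le \mathrm{Vol}(\partial N_{x_j}(t_j+\delta))$, so charging each unordered pair to its later index costs only a factor of $2$ in the constant.
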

\begin{proof}
    Consider a maximal disjoint collection of balls with radius $\frac{R}{2}$ on $G_{\lambda}$. The set of centers $\mathcal{E}_{\lambda}=\{x_1,x_2,\cdots,x_{l(\lambda)}\}$ is a $R$-net satisfying the $R$-separation condition: $d(x_i,x_j)> R$ for all $i\neq j$. The number $l(\lambda)$ is controlled by 
    \begin{equation}
        l=l(\lambda)\leq \frac{n}{\mathrm{Vol}(N_{x_i}(\frac{R}{2}))}= \frac{n}{\sum\limits_{s=0}^{\frac{R}{2}}\mathrm{Vol}(\partial N_{x_i}(s))}\leq \frac{2^{\beta}\beta}{v_{*}}\frac{n}{R^{\beta}}\eqqcolon w_0\frac{n}{R^{\beta}}.
    \end{equation}
    Note that for any integer $t_j\in\left[R+1,2R\right]$, $N_{x_i}(2R+1)$ intersecting with $N_{x_j}(t_j)$ implies $d(x_{i},x_j)\leq 2R+1+t_j\leq4R+1\leq 5R$. However, by the $R$-separation condition $d(x_{j^{\prime}},x_j)> R$, the disjoint union of balls centered at $x_j$ with radius $\frac{R}{2}$ is contained in $N_{x_i}(6R)$. Hence, the number of such $x_j$ is bounded by the following volume estimation:
    \begin{equation}
        \sharp|\{j|N_{x_i}(2R+1)\cap N_{x_j}(t_j)\}\neq \emptyset|\leq \frac{\mathrm{Vol}(N_{x_i}(6R))}{\mathrm{Vol}(N_{x_j}(\frac{R}{2}))}\leq \frac{v^{*}}{v_{*}}12^{\beta}\eqqcolon M,\quad \forall t_j \in \left[R+1,2R\right].
    \end{equation}
    For the next property, we make use of the pigeonhole principle and  double counting. We consider the following auxiliary quantity:
    \begin{equation}
        \zeta(\mathbf{t})=\zeta(t_1,t_2,\cdots,t_{l(\lambda)})\coloneqq\sum_{i,j=1,i\neq j}^{l} \sum_{\delta_i=0,\pm 1}\mathrm{Vol}(\partial N_{x_i}(t_{i}+\delta_i)\cap \partial N_{x_j}(t_j)).
    \end{equation}
Summing over integer vector $\mathbf{t}\in \left[R+1,2R\right]^{l(\lambda)}$, we derive an upper bound:
\begingroup
\allowdisplaybreaks[4]
\begin{align}
    \sum_{\mathbf{t}\in \left[R+1,2R\right]^{l}} \zeta(\mathbf{t})=&\sum_{\mathbf{t}\in \left[R+1,2R\right]^{l}}\sum_{i,j=1,i\neq j}^{l} \sum_{\delta_i=0,\pm 1}\mathrm{Vol}(\partial N_{x_i}(t_{i}+\delta_i)\cap \partial N_{x_j}(t_j))\\
    =&\sum_{i=1}^l\sum_{\mathbf{t}\setminus \{t_i\}\in \left[R+1,2R\right]^{l-1}}\left[\sum_{t_{i}=R+1}^{2R}\sum_{j=1,j\neq i}^{l}\sum_{\delta_i=0,\pm1} \mathrm{Vol}(\partial N_{x_i}(t_{i}+\delta_i)\cap \partial N_{x_j}(t_j))\right]\\
    =&\sum_{i=1}^l\sum_{\mathbf{t}\setminus \{t_i\}\in \left[R+1,2R\right]^{l-1}}\left[\sum_{j=1,j\neq i}^{l}\sum_{\delta_i=0,\pm1}\sum\limits_{t_{i}=R+1}^{2R}\mathrm{Vol}(\partial N_{x_i}(t_{i}+\delta_i)\cap \partial N_{x_j}(t_j))\right]\\
    <&\sum_{i=1}^l\sum_{\mathbf{t}\setminus \{t_i\}\in \left[R+1,2R\right]^{l-1}}\left[\sum_{\{j|j\neq i, N_{x_i}(2R+1)\cap N_{x_j}(t_j)\neq \emptyset\}} \sum_{\delta_i=0,\pm1}\mathrm{Vol}(N_{x_i}(2R+\delta_i)\cap \partial N_{x_{j}}(t_{j}))\right]\\
    \leq&\sum_{i=1}^l\sum_{\mathbf{t}\setminus \{t_i\}\in \left[R+1,2R\right]^{l-1}} \left(M\cdot 3\max\limits_{1\leq j\leq l}\{\mathrm{Vol}(\partial N_{x_j}(t_j))\}\right)\\
    \leq&\sum_{i=1}^l\sum_{\mathbf{t}\setminus \{t_i\}\in \left[R+1,2R\right]^{l-1}}\left(3Mv^{*}(2R)^{\beta-1}\right) =l\cdot 3Mv^{*}2^{\beta-1}R^{l+\beta-2}\\
    \leq&(3w_0Mv^{*}2^{\beta-1})R^{l-2}n=\frac{3\beta\cdot48^{\beta}(v^*)^2}{2v_{*}^2}R^{l-2}n.
\end{align}
\endgroup
By the pigeonhole principle for $\mathbf{t}\in \left[R+1,2R\right]^l$, there exists a sequence of integer radii $\mathbf{t}_0=(t_1,t_2,\cdots,t_l)$ such that $\zeta(\mathbf{t}_0)\leq (\frac{3\beta\cdot48^{\beta}(v^*)^2}{2v_{*}^2})\frac{n}{R^2}\eqqcolon w_2\cdot\frac{n}{R^2}$. After the radii are determined, we conclude that $\sum\limits_{i,j=1,j\neq i}^{l}\sum\limits_{\delta_i=0,\pm 1}\mathrm{Vol}(\partial N_{x_i}(t_{i}+\delta_i)\cap \partial N_{x_j}(t_{j}))\leq w_2 \frac{n}{R^2}$. Finally, we discard any possible redundant ball when some $N_{x_i}(t_i)\subseteq \bigcup\limits_{j\neq i}N_{x_j}(t_j).$ 
\end{proof}
Having established the volume estimation for the boundaries of the domain, we are ready to prove the intrinsic code parameter bound.
\begin{proof}[Proof of Theorem~\ref{Intrinsic BPT bound}]
Given a stabilizer code $C_{\lambda}$ and its connectivity graph $G_{\lambda}$, we first construct a tripartite partition for $G_{\lambda}=\mathcal{A}\bigsqcup \mathcal{B}\bigsqcup \mathcal{C}$, where regions $\mathcal{A}$ and $\mathcal{B}$ are correctable. Once we establish such a partition, the standard entropy argument then gives $k \leq|\mathcal{C}|$, from which we can generalize the Bravyi--Poulin--Terhal argument to obtain our result.  

For the scale $R=\frac13(\frac{d_{\lambda}}{v^{*}})^{\frac{1}{\beta-1}}< \frac{d_{\lambda}}{6}~(\text{when } \beta=2,\text{ we require } v^{*}> 2)$, there exists a cover $\{N_{x_i}(t_i),i=1,\cdots ,l\leq w_0\frac{n_{\lambda}}{R^{\beta}}\}$ for $G_{\lambda}, R+1\leq t_i\leq2R$ such that it satisfies the properties in Lemma~\ref{Division lemma}. We define the sign function by $\operatorname{sgn}(x)\vdef 
\begin{cases}
-1 & \text{if } x < 0, \\
0  & \text{if } x = 0, \\
1  & \text{if } x > 0.
\end{cases}$  Since $\{N_{x_i}(t_i),i=1,\cdots,l\}$ is a cover for $G_{\lambda}$, each $x\in G_{\lambda}$ has a unique configuration determined by sign functions
\begin{equation}
    \mathrm{config}(x)=(\mathrm{sgn}(t_1-|x-x_1|),\mathrm{sgn}(t_2-|x-x_2|),\cdots,\mathrm{sgn}(t_l-|x-x_l|))\in \{-1,0,+1\}^{l},
\end{equation} where $|x-x_i|$ stands for the length of shortest path between $x$ and $x_i$ on $G_{\lambda}$. Observe that $G_{\lambda}$ is covered by the union of balls in Lemma~\ref{Division lemma}, so there exists no vertex $x$ with $\mathrm{config}(x)=(-1,-1,\cdots,-1).$

The following partition is determined by this configuration (See Fig.~\ref{fig:demo_BPT_division} for an illustration):

\begin{align}
    &\mathcal{C}=\mathop{\bigcup}_{i,j=1,i\neq j}^{l}\mathop{\bigcup}_{\delta_i\in \{0,\pm1\}} \{\partial N_{x_j}(t_j)\cap \partial N_{x_i}(t_i+\delta_i)\};\\
    &\mathop{\bigsqcup}_{\text{At least two 0s in config } \vec{f}} \bigg\{x\in G_{\lambda}| \mathrm{config}(x)=\vec{f}\bigg\}=\mathop{\bigcup}_{i,j=1,i\neq j}^{l} \{\partial N_{x_j}(t_j)\cap \partial N_{x_i}(t_i)\}\subseteq \mathcal{C}\nonumber,\\
    &\mathcal{A}=\mathop{\bigsqcup}_{\text{odd +1 in config }\vec{f}\in \{-1,1\}^l} \left(\bigg\{x\in G_{\lambda}| \mathrm{config}(x)=\vec{f}  \bigg\}\backslash\mathcal{C}\right)=\mathop{\bigsqcup}_{\text{odd +1 in config }\vec{f}\in \{-1,1\}^l} A_{\vec{f}},\\
    &\mathcal{B}=\mathop{\bigsqcup}_{\text{even +1 in config }\vec{f}\in \{-1,1\}^l} \left(\bigg\{x\in G_{\lambda}|\text{ config}(x) \text{ differs from }  \vec{f} \text{ by at most one 0} \bigg\}\backslash\mathcal{C}\right)\nonumber\\
    &\quad =\mathop{\bigsqcup}_{\text{even +1 in config }\vec{f}\in \{-1,1\}^l}B_{\vec{f}}.
\end{align}

\begin{figure}[t]
    \centering

    \begin{minipage}[b]{0.31\columnwidth}
        \centering
        \includegraphics[width=\linewidth]{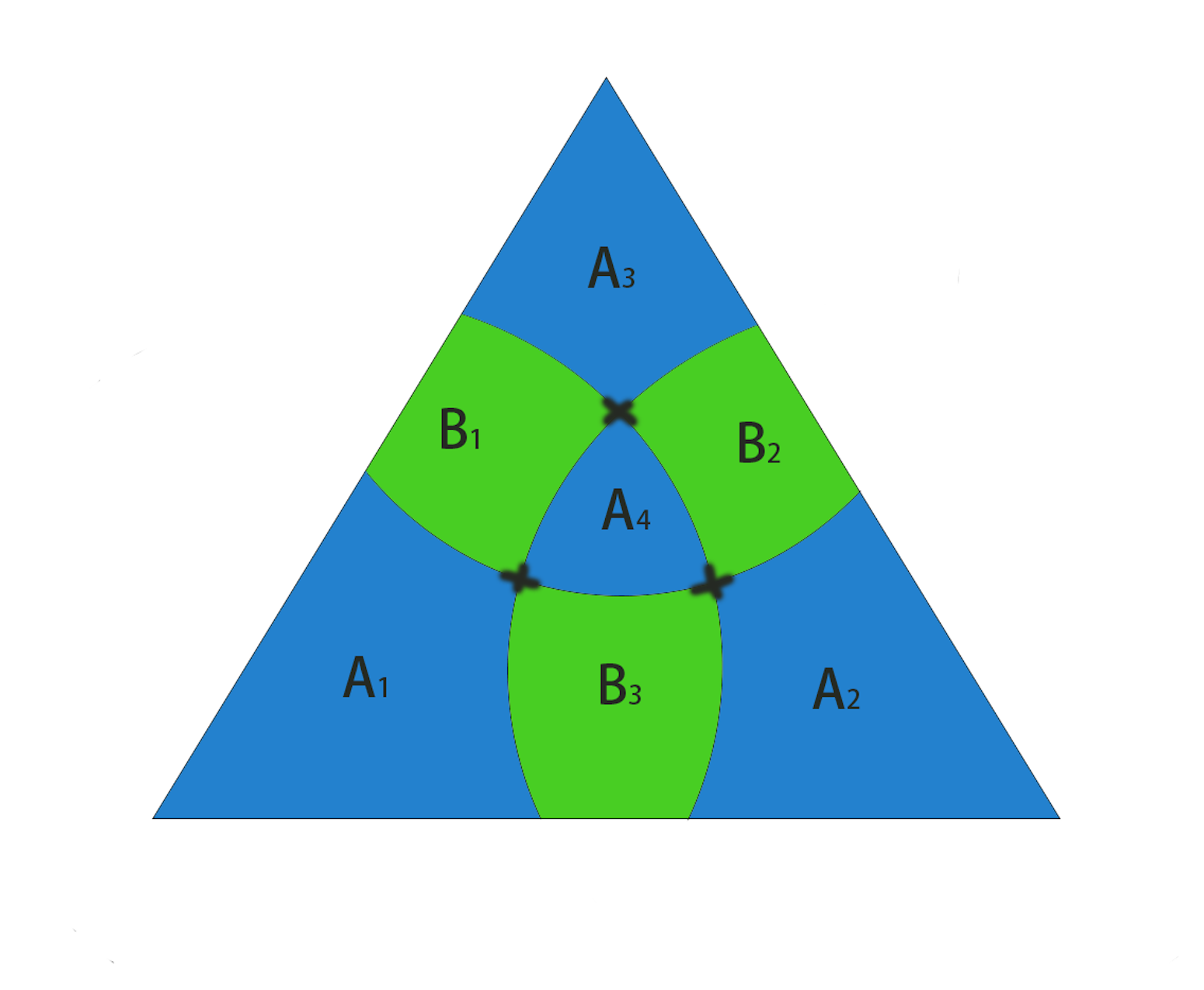}\\[-0.2em]
        {\scriptsize\bfseries (a)}
    \end{minipage}
    \hfill
    \begin{minipage}[b]{0.31\columnwidth}
        \centering
        \includegraphics[width=\linewidth]{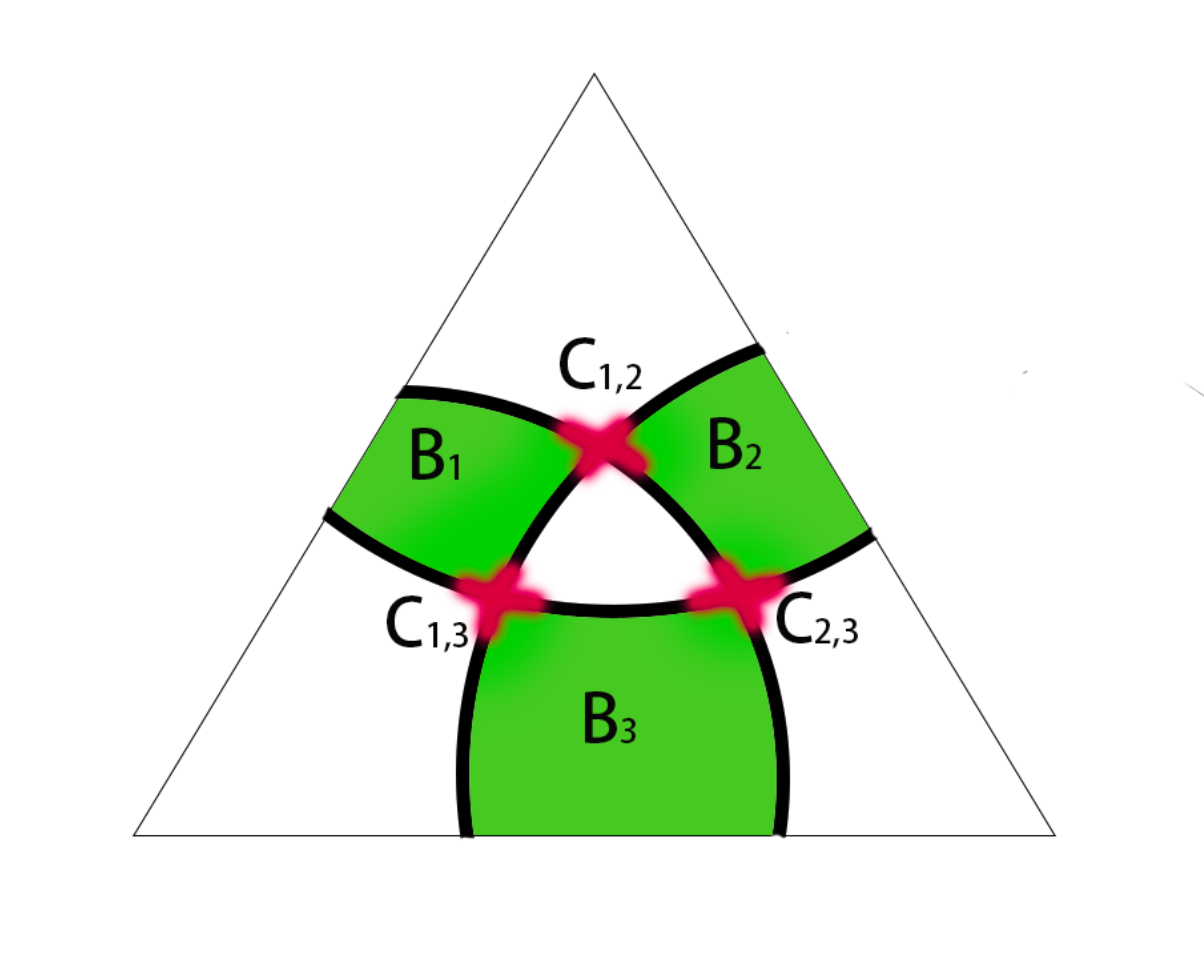}\\[-0.2em]
        {\scriptsize\bfseries (b)}
    \end{minipage}
    \hfill
    \begin{minipage}[b]{0.31\columnwidth}
        \centering
        \includegraphics[width=\linewidth]{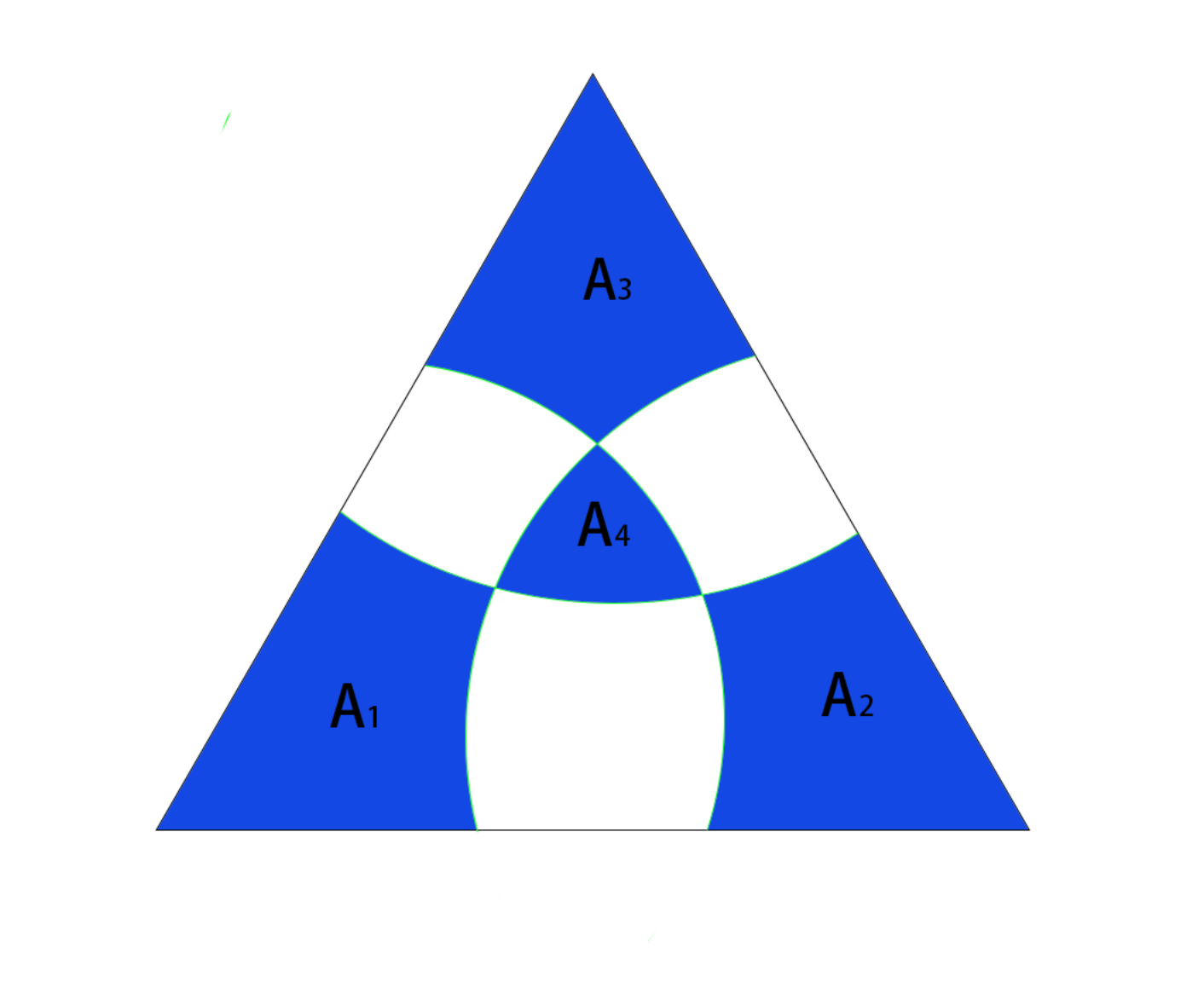}\\[-0.2em]
        {\scriptsize\bfseries (c)}
    \end{minipage}

    \caption{
    Illustration of the generalized BPT partition.
    (a) A tri-partition for a triangle.
    (b) $\mathcal{C}$ is the union of the thickened red crosses. The black interfaces are absorbed into the green regions
    $\mathcal{B}=\bigcup_{\vec f} B_{\vec f}$.
    (c) $\mathcal{A}=\bigcup_{\vec f'} A_{\vec f'}$ is the union of the blue regions without the interfaces.
    }
    \label{fig:demo_BPT_division}
\end{figure}

We now show that every component region making up $\mathcal{A}$ and $\mathcal{B}$ is correctable.
Notice that any ball $N_{x}(t_i)$ can be viewed as a consecutive expansion along the radial direction $r=1,2,\cdots,t_i$. The volume of boundary satisfies $\mathrm{Vol}(\partial N_x(r))\leq v^{*}(2R)^{\beta-1}=\left(\frac{2}{3}\right)^{\beta-1}d_{\lambda}< d_{\lambda}$. By Expansion Lemma~\ref{Expansion Lemma}, $N_x(t_i)$ is always a correctable set. As a subset of certain $N_{x_i}(t_i)$, $A_{\vec{f}}$ or $B_{\vec{f}}$ is still a correctable set, where $\vec{f}$ is required not to be $(-1,-1,\cdots,-1)$ for $B_{\vec{f}}$. Now we consider the case of $\vec f_{-}=(-1,\ldots,-1)$. Since the balls form a cover,
the all-minus configuration does not occur. Hence every point in
$B_{\vec f_{-}}$ has a unique zero coordinate. Define $B_{\vec f_{-},i}
=\{x\in B_{\vec f_{-}}:
\operatorname{config}_i(x)=0\}$, then 
$B_{\vec f_{-}}=\bigsqcup_i B_{\vec f_{-},i}, B_{\vec f_{-},i} \subseteq \partial N_{x_i}(t_i),$ so each $B_{\vec f_{-},i}$ is correctable. Suppose that $x\in B_{\vec f_{-},i}$ and
$y\in B_{\vec f_{-},j}$ are adjacent, where $i\neq j$.
Since $\operatorname{config}_i(y)=-1$, adjacency implies
$d(y,x_i)=t_i+1$. Therefore,$
y\in\partial N_{x_i}(t_i+1)\cap\partial N_{x_j}(t_j)\subseteq\mathcal C$, 
which is a contradiction. Thus the sets
$B_{\vec f_{-},i}$ are pairwise $2$-separated, and the Union
Lemma implies that $B_{\vec f_{-}}$ is correctable.

It remains to use the Union Lemma to show that $\mathcal{A}$ and $\mathcal{B}$ are both correctable. Observe that the outer boundary of $A_{\vec{f}},B_{\vec{f}}$ is contained piecewise within $\mathop{\bigcup}\limits_i\partial N_{x_i}(t_i+\delta_i)$. Any two correctable sets $A_{\vec{f}},A_{\vec{f}^{'}}\in\mathcal{A}$ are separated by some elements in $\mathcal{C}$ or $\mathcal{B}$. Indeed, the configuration of the outer boundary satisfies $\partial_{+}A_{\vec{f}}\subseteq \mathop{\bigcup}\limits_i\partial N_{x_i}(t_i) \subseteq\mathcal{B}\bigsqcup\mathcal{C}, \partial_{+}A_{\vec{f}}\cap A_{\vec{f}^{'}}\subseteq (\mathcal{B}\bigsqcup\mathcal{C})\cap\mathcal{A}=\emptyset$. The Union Lemma~\ref{Union Lemma} shows that $\mathcal{A}=\bigsqcup A_{\vec{f}}$ is a correctable set. Similarly for the family of sets $\mathcal{B}$, we prove its elements are mutually separated by contradiction. Given two disjoint correctable sets $B_{\vec{f}}\cap B_{\vec{f}^{'}}=\emptyset, \vec{f}\neq \vec{f}^{\prime}$, we suppose $x\in B_{\vec{f}}$ and $y\in B_{\vec{f}^{\prime}}$ are adjacent. Since $x,y\notin\mathcal{C}$, each of their configurations contains at most one zero. Notice that the distance to each center changes by at most one along the extension of the boundary. Since  $\vec{f}$ and $\vec{f}^{\prime}$ both have even parity and are distinct, their Hamming distance is at least two. It follows that the configurations of $x$ and $y$ must have zeros at two distinct coordinates $i\neq j$ with centers $x_i,x_j$. Consequently, for some $\delta_i\in\{-1,0,1\}$,
\begin{equation}
    y\in \partial N_{x_i}(t_i+\delta_i)\cap\partial N_{x_j}(t_j)\subseteq\mathcal C,
\end{equation}
 since $x$ and $y$ are adjacent vertices. 
 This contradicts $y\in B_{\vec{f}^{\prime}}\subseteq G_\lambda\setminus\mathcal C$, hence
\begin{equation}
\partial _{+}B_{\vec{f}}\bigcap B_{\vec{f}^{'}}\subseteq\mathop{\bigcup}\limits_{i\neq j,\delta_i}(\partial N_{x_i}(t_i+\delta_i)\cap \partial N_{x_j}(t_j))\backslash \mathcal{C}=\emptyset.
\end{equation}
 By the Union Lemma~\ref{Union Lemma}, $\mathcal{B}$ is also a correctable set.

 For any state $\rho\in \mathcal{H}_{\lambda}\simeq(\mathbb{C}^2)^{\otimes n_{\lambda}}$ and region $A\subseteq G_{\lambda}$, $\mathcal{S}(A)\vdef -\mathrm{Tr}(\rho_{A}\log{\rho_{A}})$ is the von Neumann entropy of the reduced density matrix $\rho_{A}$ on region $A$. $\mathcal{S}(A|A^{\prime})\vdef \mathcal{S}(AA^{\prime})-\mathcal{S}(A^{\prime})$ is the conditional entropy. Recall that $\mathcal{S}(A|\overline{A})=-\mathcal{S}(A)$ when $A$ is a correctable region (Fact 1 in Ref.~\cite{Bravyi_2010}). Consider the maximally mixed state $\rho_{\mathrm{max}}$ in the code space, we may estimate the logical dimension of code space by calculating its von Neumann entropy on the entire space:
\begin{align}
    &k_{\lambda}=\mathcal{S}(\mathcal{A}\mathcal{B}\mathcal{C})=\mathcal{S}(\mathcal{B}\mathcal{C})+\mathcal{S}(\mathcal{A}|\mathcal{B}\mathcal{C})\leq \mathcal{S}(\mathcal{B})+ \mathcal{S}(\mathcal{C})- \mathcal{S}(\mathcal{A}),\\
     &k_{\lambda}=\mathcal{S}(\mathcal{A}\mathcal{B}\mathcal{C})=\mathcal{S}(\mathcal{A}\mathcal{C})+\mathcal{S}(\mathcal{B}|\mathcal{A}\mathcal{C})\leq \mathcal{S}(\mathcal{A})+ \mathcal{S}(\mathcal{C})- \mathcal{S}(\mathcal{B}).
\end{align}
Adding the above two inequalities and dividing by 2 yields
\begin{equation}
     k_{\lambda}\leq \mathcal{S}(\mathcal{C})\leq |\mathcal{C}|\leq \sum_{i,j=1,j\neq i}^{l}\sum_{\delta_i=0,\pm1}\mathrm{Vol} (\partial N_{x_j}(t_j)\cap \partial N_{x_i}(t_i+\delta_i)).
\end{equation}
Recall from Lemma~\ref{Division lemma} that the total volume of the intersection of the boundaries is bounded by 
\begin{equation}
    k_{\lambda}\leq w_2 \frac{n_{\lambda}}{R^2}=(\frac{3\beta\cdot48^{\beta}(v^*)^2}{2v_{*}^2}) \frac{n_{\lambda}}{(\frac13(\frac{d_{\lambda}}{v^{*}})^{\frac{1}{\beta-1}})^2}=\frac{27\beta\cdot48^{\beta}(v^*)^{\frac{2\beta}{\beta-1}}}{2v_{*}^2}\cdot\frac{n_{\lambda}}{d_{\lambda}^{\frac{2}{\beta-1}}}.
\end{equation}
 Then we conclude that $k_{\lambda}\cdot d_{\lambda}^{\frac{2}{\beta-1}}\leq \frac{27\beta\cdot48^{\beta}(v^*)^{\frac{2\beta}{\beta-1}}}{2v_{*}^2} \cdot n_{\lambda}$.
\end{proof}
\end{widetext}

The asymptotic growth rate in Theorem~\ref{Intrinsic BPT bound} is known to be tight when $\beta\in \mathbb{Z}$. Indeed, a series of recent results~\cite{portnoy2023localquantumcodessubdivided,lin2023geometrically,Li_2024_GeometricallyLocal,Williamson_2024,liAlmostOptimalGeometrically2026a} demonstrate constructions of qLDPC codes embedded in $\mathbb{Z}^{D}$ with optimal code parameters $\llbracket n, \Omega(n^{\frac{D-2}{D}}), \Omega(n^{\frac{D-1}{D}}) \rrbracket$. Nonetheless, the tightness of the bound for noninteger dimension $\beta\notin \mathbb{Z}$ remains open.

We now illustrate the consequences of the intrinsic code parameter bound Theorem~5 through representative code families.

In Refs.~\cite{Zhu_2022,Dua_2023}, the iterated family of fractional 3D toric codes $\{\mathrm{FC}(a,b,l),l\geq 1\}$ has intrinsic dimension $\beta=\log_a{(a^3-b^3)}>2,b\leq a-1$ and code parameters $\llbracket n_{l},k_{l}, d_{l}\rrbracket=\llbracket (a^3-b^3)^l=a^{\beta l},3, a^{l}\rrbracket$. These parameters can be shown to satisfy the constraint in Theorem~\ref{Intrinsic BPT bound}:
\begin{align}
    &\frac{k_{l}\cdot d_{l}^{\frac{2}{\beta-1}}}{n_{l}}=3a^{(\frac{2}{\beta-1}-\beta)l}\leq 3a^{(\frac{2}{\beta-1}-\beta)},\forall l\in \mathbb{Z},\\
    &\text{ where } \frac{2}{\beta-1}<2<\beta =\log_a{(a^3-b^3)}. \nonumber
\end{align}
For MM codes with constant range, we can also apply Theorem~\ref{Intrinsic BPT bound} to give a general set of bounds on their code parameters.
Let the code family $\{C_{\lambda}\}_{\lambda\in \mathbb{N}}$ be defined by a series of polynomials with bounded degree on a quotient ring with $D$ variables, with the system size $\lambda\to +\infty$. $\mathrm{Vol}(N_{O}(R))$ equals to the volume of the ball spanned by product of monomials in $\Gamma$ on $\mathbb{Z}^{D}$. Existing literature shows that the growth series of finitely generated abelian group is rational (see e.g.~Theorem 1.2 in Ref.~\cite{benson1983growth} and Chapter 6 of  Ref.~\cite{de2000topics}), and can be expanded in the following form:
\begin{align}
    \Sigma(z)&=\sum_{R=0}^{+\infty}\mathrm{Vol}(\partial N_{O}(R))z^{R}=\frac{P(z)}{(1-z)^{m}}\\
    &=P(z)\cdot \sum_{i=0}^{+\infty}\binom{i+m-1}{m-1}z^i\\
    &= \sum_{R=0}^{+\infty} \left(\sum_{j=0}^k p_j \binom{R-j+m-1}{m-1}\right)z^R,
\end{align}
where $P(z)=\sum_{j=0}^k p_j z^j$ is a polynomial of constant degree $k$ in $z$, and $p_0=1$. Comparing the coefficient before $z^R$, we deduce that $\mathrm{Vol}(\partial N_{O}(R))$ is a polynomial of certain integer degree $d$ (which is consistent with Theorem 1 in Ref.~\cite{khovanskii1992newton}). Combined with the previous estimation $\mathrm{Vol}(N_{O}(R))=\Omega(R^D)$ from Eq.~($\ref{lower_bound_MM_code}$), we conclude that $\mathrm{Vol}(\partial N_{O}(R))$ is a polynomial of degree $D-1$, satisfying the strengthened Ahlfors $D$-regularity condition. Therefore, our Theorem~\ref{Intrinsic BPT bound} applies, so $\frac{{k_{\lambda}d_{\lambda}^{\frac{2}{D-1}}}}{n_{\lambda}}$ is uniformly upper bounded by the constant $\frac{27D\cdot48^{D}(v^*)^{\frac{2D}{D-1}}}{2v_{*}^2}$ depending on $D,v_{*},v^{*}$. 

\section{Intrinsic dimension bridges code symmetry and indistinguishability}\label{sec:intrinsic dimension bridges code symmetry and indistinguishability}

In this section, we discuss the connection between intrinsic dimension and logical gates.
For regular topological codes on integer-dimensional lattices, Bravyi and K\"onig showed that the achievable level in the Clifford hierarchy for constant-depth logical gates is constrained by the spatial dimension~\cite{Bravyi_2013}. 
Here, we show that the underlying idea of constructing suitable division patterns, mathematically related to the so-called Nagata dimension, formally generalizes the result and yields a highly versatile framework for understanding the possible fault-tolerant logical operations on general stabilizer codes. 
For instance, our theorems naturally accommodate tuning of the notion of locality and fault tolerance, and apply to extensive code families including fractal codes, codes enhanced by long-range connectivity, etc. 

To prove our results, a key technical ingredient is the \emph{Assouad--Nagata reduction}, discussed in Subsection~\ref{subsection: Nagata}, which gives a combinatorial division pattern at various length scales. 

\subsection{Clifford hierarchy}
For self-containedness, we first define the notion of Clifford hierarchy, originally introduced in works related to gate teleportation and gate injection~\cite{Gottesman_1999}, which is a particularly important language for studying fault-tolerant logical gates. 
It stratifies logical gates according to their action by conjugation on Pauli operators, and therefore naturally captures the algebraic structure underlying many fault-tolerant gate constructions and no-go theorems. 

The Clifford hierarchy is a nested tower of sets of unitary operators. 
The levels of the Clifford hierarchy can be defined recursively as follows.

\begin{definition}[Clifford hierarchy]
    For an $n$-qubit system, the Clifford hierarchy is a sequence of sets of unitaries
    $\{\mathcal P_\ell\}_{\ell\geq 1}$ in $\mathbb U(2^n)$ defined recursively as follows.
    The first level $\mathcal P_1$ is the $n$-qubit Pauli group. For $\ell\geq 2$, the $\ell$-th level is defined by
    \begin{align}
        \mathcal P_\ell
        =
        \left\{
        U\in \mathbb U(2^n)
        \,\middle|\,
        U P U^\dagger \in \mathcal P_{\ell-1}
        \ \text{for all } P\in \mathcal P_1
        \right\}.
    \end{align}
\end{definition}

 The Clifford group $\mathcal{P}_2$ can be generated by the Hadamard gate, Pauli gates, the phase gate $S=\sqrt{Z}$ and the CNOT gate. For $\ell>2$, $\mathcal{P}_{\ell}$ is not closed under multiplication and therefore is not a group. Two prominent examples of non-Clifford gates in the third level are the $T=\sqrt{S}$  gate and the $\ccz$ gate. More generally, $
Z_\ell := \operatorname{diag}\left(1,e^{2\pi i/2^\ell}\right)
$ and $\mathrm{C}^{(\ell-1)}Z$ are typical examples of gates in $\mathcal{P}_{\ell}$ for $\ell>2$.

\subsection{Graph freedom and graph-local gates}
In this subsection, we introduce the notion of graph-local gates. As a generalization of geometrically local gates, graph-local gates are defined with respect to a connectivity graph, with the requirement that errors propagate locally on the graph. 

Before turning to details, we first highlight the practical significance of the flexibility of the graph.
For a given physical architecture or experimental setup, one may choose an admissible set of stabilizer generators that are compatible with the available connectivity. In this way, we can adapt the fault-tolerant gate implementations and the constraints imposed on them to the specific physical scenario.
This graph adaptivity naturally encompasses broader gate implementation schemes such as fold-transversal or multi-block gates by augmenting the locality graph to reflect additional physically allowed connectivity.  More
generally, changing the locality graph changes both the operations regarded as local and the degree of geometric obstruction to fault-tolerant gates that we will show below. 
In particular, there is a potential tradeoff: increased connectivity may increase the intrinsic dimension and consequently allow a larger set of gates, but it also introduces additional pathways for error propagation. 
Our results below provide a mathematical understanding of the interplay between geometry and fault tolerance.


We now formally define the notion of graph-local gate set. The graph-local gate set naturally supplies fault-tolerant gates due to restricted error propagation, analogous to how geometrically-local gates are fault-tolerant in the setting of regular topological codes. Specifically, graph-local gates can only propagate errors within a constant radius on the connectivity graph, that is, the errors can be detected by at most a constant number of stabilizers that form a connected path within this ball of constant radius on the graph. 

 \begin{definition}[Graph-local gate set]

Let \(\{C_{\lambda}(\mathcal{S}_{\lambda})\}_{\lambda\in I}\) be a code family with specified stabilizer generator sets, and
let \(\mathcal{X}=\{G_{\lambda}\}_{\lambda\in I}\) be the associated family of connectivity graphs.
A gate set $\mathcal{G}$ is said to be \emph{graph-local} with respect to 
$\mathcal{X}$ if any two vertices within the support of a graph-local gate can be connected by a path with bounded length $\eta=O(1)$, independent of 
$\lambda$. In other words, for every $g\in \mathcal{G}$ and every operator $P$,
\[
\supp(U_g P U_g^\dagger)
\subseteq N_{G_\lambda}(\supp(P),2\eta).
\]
Here $N_{G_\lambda}(\mathcal{V},r)$ denotes the $r$-neighborhood of a set of vertices 
$\mathcal{V}$ in the graph $G_\lambda$.
\end{definition}

The concept of graph-local gates is naturally dependent on the choice of stabilizer generator set. For a given stabilizer code, it is possible to find two different sets of stabilizer generators, leading to two different connectivity graphs with different intrinsic dimension. As such, it is possible that their respective graph-local gates can implement different logical gates.

In practice, verifying that the graph-local condition holds can be reduced to checking whether a generating set of gates are all graph-local. For regular topological codes, this means checking that each physical gate is geometrically local.

\subsection{Nagata dimension and Assouad--Nagata reduction}~\label{subsection: Nagata}

We now proceed to show that the intrinsic dimension restricts logical action with graph-local structures. The first step requires the notion of Nagata dimension and applying the Assouad--Nagata reduction technique.

\begin{definition}[Nagata dimension, cf.~\cite{Assouad1982,Lang2005,Buyalo2007,le2015assouad}]\label{Nagata dimension}
    Let $\mathcal{X}=\{(G_{\lambda},d_\lambda)\}_{\lambda \in I}$ be a family of graphs indexed by $I$, where $d_{\lambda}$ is the path metric on $G_{\lambda}$. We say a graph $G_{\lambda}=(V_{\lambda},d_{\lambda})$ satisfies the $D$-division condition  if for any $r>0$, there always exists $D+1$ families of subsets of $V_{\lambda}$,  $\mathcal{B}_1=\mathop{\bigcup}\limits_{j_1} B_1^{(j_1)},\mathcal{B}_2=\mathop{\bigcup}\limits_{j_2} B_2^{(j_2)},\cdots,\mathcal{B}_{D+1}=\mathop{\bigcup}\limits_{j_{D+1}} B_{D+1}^{(j_{D+1})}$, which satisfy the conditions:
    \begin{itemize}
        \item $\mathop{\bigcup}\limits_{i=1}^{D+1}{ \mathcal{B}_i}=V_{\lambda}$,
        \item Each $\mathcal{B}_i$ is $r$-separated, i.e., any pair of sets $B_i^{(1)},B_i^{(2)}$ in $\mathcal{B}_i$ has distance $d_{\lambda}(B_{i}^{(1)},B_i^{(2)})\geq r$,
        \item $B\in \mathcal{B}_i$ is $c\cdot r$-bounded, i.e., each $B$ is contained in a ball with radius $c\cdot r$. 
    \end{itemize}
     The \emph{Nagata dimension} of $\mathcal{X}$ is the minimal integer $D\in \mathbb{N}$ such that the $D$-division conditions are satisfied for all $G_{\lambda},\lambda\in I$. Furthermore, we say $\mathcal{X}=\{(G_{\lambda},d_\lambda)\}_{\lambda \in I}$ has Nagata dimension $D$ up to scale $\overline{r}$, if $D$ is the minimal integer such that $D$-division conditions are satisfied for any $0<r<\overline{r}$. 
\end{definition}

Intuitively, the Nagata dimension represents the minimum number of colors required such that there exists a cover consisting of bounded balls and neighboring balls are assigned different colors at all scales (like the $D$-colex for  color codes~\cite{bombin2015gauge,kubica2015universal}).
The thickening of a $D$-skeleton formed from the triangulation of a $D$-dimensional Euclidean space naturally provides a $D$-division for the Nagata dimension.

We then formulate the following lemma, which converts an Assouad-type covering growth bound into a Nagata-type decomposition at comparable scales. 
In this sense, the lemma bridges intrinsic dimensional growth and the coarse decompositions needed to control locality.

\begin{lemma}[Assouad--Nagata reduction with scale]\label{Assouad--Nagata Reduction at scale $R$} Let $(X,d)$ be a metric space with finite Assouad dimension $\dim_{\mathrm{Assouad},R}(X) = \beta$ up to scale $R$, satisfying the covering condition with constant $C_R$. For any $\beta < \alpha < \lfloor\beta\rfloor + 1$, there exists a constant $C_\alpha$, depending only on $\alpha$ and $C_R$, such that $X$ has Nagata dimension at most $\lfloor\alpha\rfloor$ up to scale $r_{\mathrm{Nagata}} = C_\alpha R$.
\end{lemma}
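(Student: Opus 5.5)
We will follow the classical Assouad/Lang--Schlichenmaier route: first build a well-separated net, then color it, then thicken the colored pieces into separated bounded sets. Set $D\vdef\lfloor\alpha\rfloor=\lfloor\beta\rfloor$. Fix a target scale $r$ with $0<r<C_\alpha R$, and choose a large multiplier $\Lambda$, to be fixed later, depending only on $\alpha$ and $C_R$.

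\textbf{Step 1: net and bounded multiplicity.} Take a maximal $s$-separated net $N\subseteq X$ with $s=\Lambda r$, so that $s$ stays below $R$ once $C_\alpha\le 1/\Lambda^{2}$. Form the cover of $X$ by the balls $N_x(s)$, $x\in N$. By the Assouad covering condition at scales below $R$, any ball $N_x(\kappa s)$ contains at most $C_R(2\kappa)^{\beta}$ net points. Hence the cover has multiplicity bounded by a constant depending only on $C_R,\beta$. This alone gives a \emph{bounded}-color division, with the number of colors being roughly the chromatic number of the $4s$-proximity graph on $N$. That number is of order $C_R\,8^\beta$, not $D+1$, so Step 1 is not enough by itself.

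\textbf{Step 2: reduce the colors to $D+1$ (the main obstacle).} To get exactly $D+1$ families, we will use a multiscale argument in the spirit of Lang--Schlichenmaier's proof that doubling spaces satisfy Nagata-type covering with exponent controlled by the growth. The key quantitative input is the following. At scale ratio $\Lambda$, a ball of radius $\Lambda s$ meets at most $C_R\Lambda^{\beta}$ pieces of size $s$. We need $C_R\Lambda^{\beta}<\Lambda^{\alpha}$, which holds for $\Lambda$ large because $\alpha>\beta$. This slack of $\Lambda^{\alpha-\beta}$ is what we exploit.

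We first try a randomized radius/shift argument, à la Calinescu--Karloff--Rabani or the Assouad--Nagata proof of Lang--Schlichenmaier. Around each net point $x$ we pick a random radius $t_x\in[s,2s]$ and order the net points randomly. Assign each point to the first center whose ball contains it. Then we bound the probability that a point $y$ lies within $r$ of the boundaries of $D+1$ distinct clusters simultaneously.

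Each boundary event has probability $O(r/s)=O(\Lambda^{-1})$ per relevant center. The number of relevant centers is at most $C_R2^{\beta}$. Requiring $D+1$ independent-ish coincidences then gives a probability bound of order $(C_R2^\beta)^{D+1}\Lambda^{-(D+1)}$. We want $D+1>\alpha$ to win against the volume factor $\Lambda^{\alpha}$ counting competing configurations; this is exactly where the restriction $\alpha<\lfloor\beta\rfloor+1$, i.e.\ $D+1>\alpha$, is used.

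A union/Lovász-type or averaging argument then yields a deterministic choice of radii. For this choice, every point is $r$-close to at most $D$ cluster boundaries. This is the same pigeonhole-over-radii technique used in Lemma~\ref{Division lemma}.

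\textbf{Step 3: convert to a $(D+1)$-family division.} From a cover by $cs$-bounded sets in which each point sees at most $D+1$ sets within distance $r$ (multiplicity $\le D+1$ at scale $r$), we apply the standard equivalence between Nagata dimension and the covering formulation. Concretely, $X_k$ is the set of points whose $r$-neighborhood meets exactly $k+1$ clusters. We split each $X_k$ into connected-at-scale-$r$ components indexed by the $(k+1)$-tuple of clusters met. Same-$k$ components with different tuples are $r$-separated, possibly after shrinking $r$ by a constant factor. Each component is bounded by $c\Lambda r$.

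This gives families $\mathcal{B}_1,\dots,\mathcal{B}_{D+1}$ that are $r$-separated and $c\cdot r$-bounded with $c=O(\Lambda)$. All constants depend only on $\alpha$ and $C_R$. Finally, setting $C_\alpha\vdef\Lambda^{-2}$ guarantees that every scale used in the argument is below $R$.
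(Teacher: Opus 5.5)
Your route is different from the paper's. You do one-shot random-radius ball carving at the single scale $s=\Lambda r$, then convert bounded $r$-multiplicity into $D+1$ separated families. The paper follows Le Donne--Rajala: at each of $\lfloor\alpha\rfloor+1$ stages it picks, separately for each net point and by pigeonhole, an $r$-thick annulus between radii $R_0/2$ and $R_0$ covered by $O((R_0/r)^{\alpha-1})$ balls. It takes the inner parts as one $r$-separated family and recurses on the union of the annuli, whose covering exponent has dropped by one, until $C_m(R_0/r)^{\alpha-m}<1$ forces the residual to be empty. Your comparison $\Lambda^{\alpha-(D+1)}\to 0$ is the same inequality, and your route can be completed, but two steps fail as written.

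\textbf{Derandomization (Step 2).} A single point fails with probability $\lesssim\Lambda^{-(D+1)}$, a constant independent of $|X|$, so a union bound over points fails. Averaging over radii as in Lemma~\ref{Division lemma} only makes the \emph{total} volume of bad points small. That is enough for the entropy argument of Theorem~\ref{Intrinsic BPT bound}, but a Nagata division needs the bad set to be empty. Of your three options only the Lov\'asz Local Lemma works, and you should carry it out:
\begin{itemize}
\item For $m$ in a maximal $r$-separated net, let $E_m$ be the event that $N_m(2r)$ meets at least $D+2$ clusters.
\item Fix any order of the centers. Every met cluster except the last has its sphere cutting $N_m(2r)$. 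So $E_m$ forces $D+1$ of the $K=O_{\alpha,C_R}(1)$ centers within $3s$ of $m$ to satisfy $|t_x-d(m,x)|\le 2r$.
\item The radii are genuinely independent, so $\Pr(E_m)\le\binom{K}{D+1}(4/\Lambda)^{D+1}$; the random order is unnecessary.
\item $E_m$ is determined by the radii within $3s$ of $m$, so it depends on at most $O(\Lambda^{\alpha})$ other events. Then $ep(d+1)\lesssim\Lambda^{\alpha-(D+1)}<1$ for large $\Lambda$.
\end{itemize}
If $X$ is infinite you also need discretized radii and a compactness argument. The paper's per-center choices avoid both the global coordination and this issue.

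\textbf{Conversion (Step 3).} The sets of points whose $r$-neighborhood meets exactly the clusters of a given tuple $S$ are not separated for distinct $S$ of the same size, at any constant fraction of $r$. Near a junction of three clusters $A,B,C$, a point just within $r$ of $A$ and just beyond $r$ of $C$ (type $\{A,B\}$) can be adjacent to a point just within $r$ of $C$ and just beyond $r$ of $A$ (type $\{B,C\}$). Merging into $r$-connected components instead can chain along the whole network of cluster boundaries, which destroys boundedness. The standard equivalence you invoke is proved differently, for instance via the $1$-Lipschitz functions $f_C(y)=\max\{0,\rho-d(y,C)\}$ with $\rho=r$:
\begin{itemize}
\item At most $D+1$ of them are nonzero at $y$ and the largest equals $\rho$, so some gap between consecutive order statistics is at least $\rho/(D+1)$.
\item Set $B_S=\{y:\min_{C\in S}f_C(y)-\max_{C\notin S}f_C(y)\ge\rho/(D+1)\}$.
\item The Lipschitz property gives $d(B_S,B_{S'})\ge\rho/(D+1)$ whenever $|S|=|S'|$ and $S\neq S'$.
\item $B_S$ lies within $\rho$ of every $C\in S$, so it is $O(\Lambda r)$-bounded.
\end{itemize}
With these two repairs your argument proves the lemma. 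It gives a one-scale, modular proof with explicit constants polynomial in $\Lambda$. The paper's iterative peeling is deterministic and local, and produces the $D+1$ families directly without a conversion step.
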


We relegate more detailed information and full proof to Appendix~\ref{app:A-N reduction}, and provide key intuitions here. The goal is to construct a decomposition satisfying the conditions in Definition~\ref{Nagata dimension}. The main idea is to first choose a cover of the space by large balls. Inside each large ball, we select an annular shell of thickness $r$, lying between radii $R/2$ and $R$. One can choose this annulus so that it has effectively one lower dimension. Removing these annuli separates the remaining inner regions from one another. These inner regions then form a family of subsets: they are uniformly bounded and mutually $r$-separated. The removed annuli form a lower-dimensional residue region. We then repeat the same construction on this residue region. Each iteration lowers the effective dimension by one, and the process terminates once the remaining region becomes zero-dimensional. For a $\beta$-dimensional space, the iteration ends at $k = \lfloor \beta+1\rfloor$, proving that the Nagata dimension is at most $\lfloor\beta \rfloor$.

\begin{corollary} \label{Assouad-Nagate bound} 
    Let $\mathcal{X}=\{(G_{\lambda},d_{\lambda})\}_{\lambda\in I}$ be a family of graphs with shortest-path metric $d_{\lambda}$. If the Assouad dimension of $\mathcal{X}$ at  all scales is $\beta$, then the Nagata dimension of $\{G_{\lambda}\}$ is bounded above by $\lfloor\beta\rfloor$ at all scales:
    \begin{align}\mathrm{dim}_{\mathrm{Nagata}}\mathcal{X}\leq \lfloor\beta\rfloor\leq \dimasd\mathcal{X}=\beta.
    \end{align}
\end{corollary}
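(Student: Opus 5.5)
The plan is to derive the corollary directly from Lemma~\ref{Assouad--Nagata Reduction at scale $R$} by letting the scale go to infinity, and then to add the trivial comparison $\lfloor\beta\rfloor\leq\beta$.

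First I dispose of the degenerate cases. If $\beta=+\infty$ there is nothing to prove. If $\beta=0$, then every $G_\lambda$ has uniformly bounded size $|G_\lambda|\leq M$. In that case a single set per graph works: take $\mathcal{B}_1=\{V_\lambda\}$. It is trivially $r$-separated, since there is only one set. It is $c\cdot r$-bounded with $c$ uniform, for every $r\geq 1$, because $\operatorname{diam}G_\lambda\leq M$. For $r<1$ on a graph, I use the singletons instead, which are $1$-separated and $0$-bounded. This gives Nagata dimension $0$.

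The main case is $\beta\in[1,\infty)$. Assuming the Assouad covering condition holds at all scales with a uniform constant $C$, the family satisfies the scale-$R$ hypothesis of Lemma~\ref{Assouad--Nagata Reduction at scale $R$} for every $R$, with $C_R\leq C$ independent of $R$.

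\begin{itemize}
\item If $\beta\notin\mathbb{Z}$, I fix $\alpha\in(\beta,\lfloor\beta\rfloor+1)$ and note that $\lfloor\alpha\rfloor=\lfloor\beta\rfloor$.
\item If $\beta\in\mathbb{Z}$, the interval $(\beta,\lfloor\beta\rfloor+1)$ is empty. I then use the fact that the Assouad dimension is an infimum, so the covering bound holds for every exponent $\beta'>\beta$ with some constant $C_{\beta'}$. The delicate point is that the lemma needs $\alpha$ strictly between $\beta$ and $\lfloor\beta\rfloor+1$. In the integer case I would instead apply the lemma with $\beta'$ slightly above $\beta$, which yields $\lfloor\alpha\rfloor\leq\beta$, and check in the appendix proof that the iteration count is $\lfloor\alpha\rfloor$ rather than $\lceil\alpha\rceil$.
\end{itemize}

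The lemma then gives Nagata dimension at most $\lfloor\beta\rfloor$ up to scale $C_\alpha R$. The constant $C_\alpha$ depends only on $\alpha$ and the uniform constant $C$, so it is the same for all $\lambda$ and all $R$. Since $R$ is arbitrary, the $\lfloor\beta\rfloor$-division condition holds for every $r>0$, with the boundedness constant $c$ from the lemma likewise independent of $r$ and $\lambda$. This is exactly the statement that the Nagata dimension at all scales is at most $\lfloor\beta\rfloor$.

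\textbf{Main obstacle.} The step I expect to need care is the uniformity just used. I must verify from the proof of the lemma that the boundedness constant $c$ and the separation do not degenerate as $R\to\infty$, and do not depend on $\lambda$. That proof builds nested annular shells whose number and thickness are controlled solely by $C_R$ and $\alpha$, so this should hold. For $r$ beyond the diameter of a given $G_\lambda$, the single set $V_\lambda$ again suffices.
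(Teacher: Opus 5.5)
Your proposal is correct and follows the paper's route. The paper likewise applies the scale-$R$ Assouad--Nagata lemma with $\alpha=\beta+\epsilon$ chosen so that $\lfloor\alpha\rfloor=\lfloor\beta\rfloor$, notes that the resulting constants depend only on $\alpha$ and the uniform covering constant (so not on $\lambda$ or $R$), and lets $R\to+\infty$.

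One small correction: your claim that the interval $(\beta,\lfloor\beta\rfloor+1)$ is empty for integer $\beta$ is false, since it equals $(\beta,\beta+1)$. The lemma therefore applies verbatim, and neither your separate integer case nor the $\lfloor\alpha\rfloor$ versus $\lceil\alpha\rceil$ check is needed (the iteration uses $\lfloor\alpha\rfloor+1$ families, giving Nagata dimension $\lfloor\alpha\rfloor=\beta$).
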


 Below we give the all-scale intrinsic fault-tolerant gate bounds. In principle, to refine the gate bounds, we may choose any growing scale $R_{\lambda}$ such that $r\sim h\eta\ll R_{\lambda}, R_{\lambda}^\beta\ll d_{\lambda}$, where $d$ is the code distance and $r$ is the (constant) thickness of the annular shell.

\subsection{Limitations on constant-depth logical gates from intrinsic dimension}

In this subsection, we prove the intrinsic fault-tolerant gate bound. 
It is worth noting again that the bound may in general depend on the choice of stabilizer generators, which can give different connectivity graphs and $\beta$. This is natural and useful, as different connectivity can allow different gates. 

\begin{theorem}[Intrinsic fault-tolerant gate bound] \label{Intrinsic gate bound}
.Let \(\{C_{\lambda}(\mathcal{S}_{\lambda})\}_{\lambda\in I}\) be a code family with specified stabilizer generator sets admitting macroscopic code distance, and
let \(\mathcal{X}=\{G_{\lambda}\}_{\lambda\in I}\) be the associated family of connectivity graphs, with intrinsic dimension $\dim(C_I(\mathcal{S}_{I})) = \beta\in\left[2,+\infty\right)$. 
 Let $U$ be a unitary operator realized by a sequence of constant depth quantum circuits over a graph-local gate set $\mathcal{G}$. Suppose that $U$ preserves the codespace and hence implements an encoded logical gate $\bar{L}$. Then the logical gate $L$ corresponding to $\bar{L}$ belongs to the $\lfloor \beta \rfloor$-th level of the Clifford hierarchy: $L \in \mathcal{P}_{\lfloor\beta\rfloor}$.
\end{theorem}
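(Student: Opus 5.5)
The plan is to follow the Bravyi--König argument, replacing the lattice-based partition of space with the Nagata-type decomposition produced by Corollary~\ref{Assouad-Nagate bound}. Let $U$ have depth $h=O(1)$ over the graph-local gate set $\mathcal{G}$. Then for any operator $P$, $\supp(UPU^\dagger)\subseteq N_{G_\lambda}(\supp(P),2h\eta)$, so conjugation by $U$ spreads supports by at most a constant radius $r_0\vdef 2h\eta$.

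First I fix a constant $r$ somewhat larger than $r_0$ times $D+1$, where $D\vdef\lfloor\beta\rfloor$. I apply Corollary~\ref{Assouad-Nagate bound}, or Lemma~\ref{Assouad--Nagata Reduction at scale $R$} at a suitable scale, to cover each $G_\lambda$ by $D+1$ families $\mathcal{B}_1,\dots,\mathcal{B}_{D+1}$. Each family is $r$-separated, and each piece lies in a ball of radius $cr=O(1)$.

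Second, I show each $\mathcal{B}_i$, and moderate thickenings of it, is correctable. Each piece has bounded size $O((cr)^\beta)$, using finite Assouad dimension and bounded degree. The pieces are mutually $r$-separated with $r>1$, so they are separated in the connectivity graph. By the Union Lemma~\ref{Union Lemma}, a union of separated pieces, each smaller than the distance, is correctable. This uses that $d_\lambda\to\infty$ (macroscopic distance) exceeds the constant piece size. The same holds for the $k r_0$-neighborhoods of the $\mathcal{B}_i$ for $k\le D+1$, as long as $(D+1)r_0<r/2$ keeps the thickened pieces separated and bounded.

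Third, I run the Bravyi--König induction. Let $P_1,\dots,P_{D+1}$ be logical Paulis. By cleaning, each $\bar P_j$ can be supported on $\mathcal{B}_j$; cleaning holds because the complement of a correctable region supports every logical operator. Define the group commutators $K_1=U$ and $K_{m+1}=K_m \bar P_m K_m^\dagger \bar P_m^\dagger$. By induction on $m$, $K_{m+1}$ is supported on the $O(m r_0)$-thickening of $\mathcal{B}_1\cap\cdots$, more precisely near the intersection of the supports of the $\bar P_j$ with $j\le m$. This is where graph-locality enters. After $D+1$ commutators the support lies in a thickening of $\bigcap_j\mathcal{B}_j$, restricted to the union of the correctable sets. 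The standard argument then shows $K_{D+1}$ acts on the code space as a logical Pauli or phase. It is supported on a correctable region, so it is logically trivial up to phase, and this forces $L\in\mathcal{P}_{D}$. The level-counting follows the standard Bravyi--König recursion: $D+1$ regions give level $D$.

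I expect the main obstacle to be the support-tracking step. Unlike a lattice, the Nagata families are arbitrary and interleave. I need the thickened region at each stage to remain a union of $r$-separated, bounded pieces, so that the Union Lemma still applies after the $D$ rounds of spreading. My first attempt will be to choose $r\gg (D+1)r_0$ and track supports as subsets of $N(\mathcal{B}_i,(D+1)r_0)$. If that fails, I would carry out the induction more carefully via the Pastawski--Yoshida cleaning formulation, where $D+1$ correctable covering regions suffice directly.
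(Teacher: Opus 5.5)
Your architecture is the paper's: you use the Assouad--Nagata reduction to get $D+1$ families with $D=\lfloor\beta\rfloor$. You make each thickened family correctable via bounded piece size and the Union Lemma, and then run the Bravyi--K\"onig nested commutators. Working at a constant Nagata scale $r$ instead of the paper's $r\sim\log d$ is fine, since the pieces then have $O(1)$ size and the distance diverges. However, the localization step in your third paragraph is stated backwards and, as written, fails. A correctable set supports no nontrivial logical operator, so $\bar P_j$ cannot be supported on $\mathcal{B}_j$. The Cleaning Lemma gives a representative of $P_j$ supported on the \emph{complement} of a thickening $N(\mathcal{B}_j,\ell_j)$; this is what your own justification, "the complement of a correctable region supports every logical operator," actually says. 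Consequently the final commutator is not localized near $\bigcap_j\mathcal{B}_j$, and nothing in your sketch justifies that claim.

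The correct bookkeeping, which is the paper's, uses only $D$ Paulis $P_1,\dots,P_D$. In your indexing, $\supp K_{m+1}\subseteq\bigcap_{j\le m}N(\supp P_j,\ell_j)$, where $\ell_j$ is the spreading radius of conjugation by $K_j$. Since $P_j$ avoids $N(\mathcal{B}_j,\ell_j)$, this set is disjoint from $\mathcal{B}_1\cup\cdots\cup\mathcal{B}_m$. Because the families \emph{cover} $V_\lambda$, $K_{D+1}$ is then supported in the leftover family $\mathcal{B}_{D+1}$. Correctability of that family gives $K_{D+1}\Pi=c\Pi$, and the recursion yields $U\in\mathcal{P}_D$. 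The covering property and the special role of the one family you do not clean off are what your step 3 is missing. If you clean $D+1$ Paulis off all $D+1$ families, the last commutator has empty support and you only get $\mathcal{P}_{D+1}$.

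With the cleaning done in this direction, the obstacle you anticipate disappears. You never need intermediate supports to be unions of separated pieces. You only need, a priori, that each $N(\mathcal{B}_j,\ell_j)$ is correctable (for cleaning) and that $\mathcal{B}_{D+1}$ is correctable (for the last step), and your second step already provides both.

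One quantitative correction: the light cone roughly doubles with each commutator, so $\ell_m\sim 2^{m-1}r_0$ rather than $O(mr_0)$, matching the paper's $\Delta_j=2^jh+3\cdot 2^{j-1}-2$. You should therefore take $r\gg 2^{D}r_0$, which is still a constant.
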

\begin{proof}
    By Theorem~\ref{Assouad-Nagate bound}, the Nagata dimension of connectivity graphs $\{G_{\lambda}\}$ is bounded by $D\stackrel{\text{def}}{=}\lfloor\beta\rfloor$. According  to Definition~\ref{Nagata dimension}, there exists a $(D+1)$-partition for each connective graph $G_{\lambda}\coloneqq\bigcup_{i=1}^{D+1}\mathcal{B}_i$ with macroscopic scale $r\coloneqq r_{\mathrm{Nagata}}= \log d$, and constant $c_{\mathrm{Nagata}}$, where $h\eta\ll r\ll d^{\frac{1}{\beta}}$ and $h$ is the upper bound for the depth of the quantum circuits.
    
    The $(c_{\mathrm{Nagata}}r)$-boundedness of $B\in\mathcal{B}_i$ means that the volume of $B$ on $G_{\lambda}$ is at most $O(r^{\beta})= o(d)$. Thus, a single $B$ is correctable. By the graph-local condition, any enlargement of $B\in \mathcal{B}_i$ on $G_{\lambda}$ by $i$ layers of gates can be reached by additional $\Delta_i \cdot \eta$ stabilizers on connectivity graph, where $\Delta_i$ is some constant that depends on $i,h$ to be determined later. 
    Therefore, the enlarged support of $B$ remains $(r-2\Delta_i \eta)$-separated from each other. In particular, $N_{\Delta_i\eta}(B)$ on $G_{\lambda}$ is still 2-separated. By the Union Lemma~\ref{Union Lemma}, we conclude that $N_{\Delta_i\eta}(\mathcal{B}_i)$ still forms a correctable set. 
    
    For an arbitrary sequence of logical Pauli operators $P_1,\cdots, P_{D}$, where each is a product of single-qubit Paulis, we first apply the Cleaning Lemma~\ref{lem:cleaning} to change the logical representative of each operator in the sequence such that the support of $P_i$ avoids $N_{\Delta_i \eta }(\mathcal{B}_i)$, that is the correctable set $N_{\Delta_i \eta }(\mathcal{B}_i)\cap \supp(P_i)=\emptyset$ by choosing appropriate representatives.
    
    We define $K_1\vdef UP_1 U^{\dagger}$ and then recursively, $K_j\vdef P_jK_{j-1}P_j^{\dagger}K_{j-1}^{\dagger}$. Since the unitary only spreads the support by a radius of $(2h+1)\eta$, $K_1$ does not intersect $\mathcal{B}_1$. 
    \begin{equation}
        \mathrm{supp}(K_1)\subseteq \mathcal{B}_2\cup\mathcal{B}_3\cup\cdots\cup\mathcal{B}_{D+1}.
    \end{equation}
    For $K_2=P_2K_1P_2^{\dagger}K_1^{\dagger}$, we can show that the support is contained in $\mathrm{supp}(K_1)\cap\mathrm{supp}(K_1P_2^{\dagger} K_1^{\dagger})\subseteq\mathcal{B}_3\cup\cdots\cup\mathcal{B}_{D+1}$, while its circuit depth is of $\Delta_2=2(2h+1+1)$. The iteration proceeds. A similar argument shows that $K_j$ can be implemented within circuit depths $\Delta_j\stackrel{\text{def}}{=}2^jh+3\cdot 2^{j-1} -2$, and that $\mathrm{supp}(K_j)\subset \mathcal{B}_{j+1}\cup\cdots\cup \mathcal{B}_{D+1}$. Iterating through, we obtain that $\supp(K_D) \subset \mathcal{B}_{D+1}$ is a correctable set. Therefore, we have $K_D \Pi=c\Pi$, or equivalently, $K_{D-1}P_D^{\dagger}K_{D-1}^{\dagger}\Pi=cP_D\Pi$. 
    
    Now using the following argument, we can determine that $c=\pm1$.
    Notice that $(K_{D-1}P_D^{\dagger}K_{D-1}^{\dagger})^2=K_{D-1}P_D^{\dagger}K_{D-1}^{\dagger}\cdot K_{D-1}P_D^{\dagger}K_{D-1}^{\dagger}=\mathrm{I}$, we have
    \begin{equation}
        \begin{aligned}
            \Pi=&(K_{D-1}P_D^{\dagger}K_{D-1}^{\dagger})^2\Pi=(K_{D-1}P_D^{\dagger}K_{D-1}^{\dagger})cP_D\Pi\\
            =&c(K_{D-1}P_D^{\dagger}K_{D-1}^{\dagger}P_D)\Pi=cK_D\Pi=c^2\Pi\\ \Rightarrow& c=\pm 1 \Rightarrow (K_{D-1}P_D^{\dagger}K_{D-1}^{\dagger}P_D)\Pi=\pm \Pi.
        \end{aligned}
    \end{equation}
    Thus the commutator of logical operators $\left[K_{D-1},P_D\right]=\pm\overline{\mathrm{I}}$.
    Recall that any unitary operator which commutes or anti-commutes with an arbitrary logical Pauli operator has to be a logical Pauli operator.  This implies that $\overline{K_{D-1}}$ is a logical Pauli operator. Repeating the argument backwards gives $K_{D-j}\in \mathcal{P}_j$. Therefore $K_1 
    \in \mathcal{P}_{D-1}$ and $U \in \mathcal{P}_D$.
\end{proof}
 Theorem~\ref{Intrinsic gate bound} implies that, starting from an arbitrary family of connectivity graphs, the fault-tolerant logical gate set is restricted to $\mathcal{P}_{\lfloor\beta\rfloor}$. 
 
 
 On the other hand, given a set of physical gates that are allowed to be implemented on a certain physical architecture, we have the flexibility of choosing stabilizer generators to increase the intrinsic dimension of its connectivity graph, which can potentially enrich the graph-local gate set $\mathcal{G}$. Such generator and graph engineering may potentially enable more fault-tolerant logical gates compiled by enlarged $\mathcal{G}$ and thus of practical interest.

\subsection{Intrinsic fault-tolerant gate bound across multiple code blocks}
 We now move on to the multi-block case, where we consider the tensor product of $K=O(1)$ copies of codes with additional cross-block interactions. Denote the new gate set by $\widetilde{\mathcal{G}}=\mathcal{G}\bigcup \mathcal{G}^{\prime}$, where $\mathcal{G}$ is the set of gates within a single code block and $\mathcal{\mathcal{G}^{\prime}}$ is the set of cross-block physical gates that are translation invariant. The modified connectivity graph $\{\widetilde{G_{\lambda}}\}_{\lambda\in I}$ is updated by adding a set of \emph{virtual edges}. The vertices of any virtual edge lie in the support of a cross-block action $g^{\prime}\in \mathcal{G}^{\prime}$ on $\bigsqcup_{i =1}^{K}G_{\lambda}^{(i)}$. 
 \begin{equation}
     \widetilde{G_{\lambda}}=\mathop{\bigcup}_{i=1}^{K}G_{\lambda}^{(i)}\bigcup \{e\mid e=(q_i,q_j),q_i,q_j\in \mathrm{supp}(g^{\prime})\}.
 \end{equation}
  \begin{figure}[htbp]
    \vspace{-1em}
    \centering
        \includegraphics[width=0.8\linewidth]{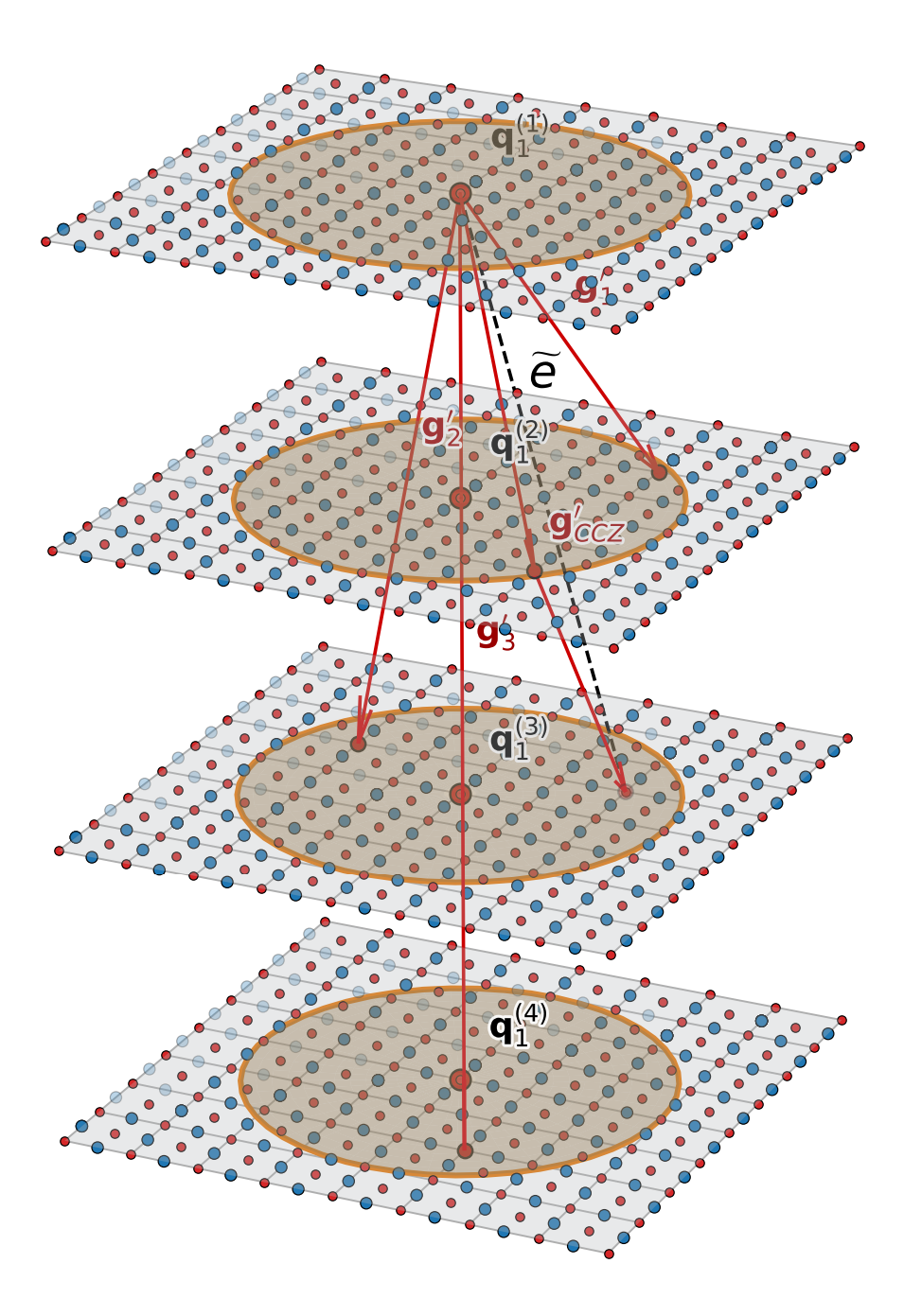}
    \caption{{Illustration of multi-code-block gates}. The red dot $q_1^{(i)}$ denotes the unique physical qubit parallel to $q_1$ on $i$-th code block. The arrows $g_1^{\prime},g_2^{\prime},g_3^{\prime},g_{\ccz}^{\prime}$ are virtually graph-local cross-block multi-qubit gates, whose support is contained in a constant-radius ball centered at $q_1^{(i)}$ on different blocks. The dashed black line $\widetilde{e}$ stands for a new virtual edge on $\widetilde{G}_\lambda$. By our definition, its weight $w_{\widetilde{e}}=d_{G_{\lambda}^{(3)}}(q_1^{(3)},q_3)+1=4$, as we ignore the blue ancilla qubits on connectivity graph. }
    \label{cross-block_gates}
\end{figure}
 Since each code block is identical, we can identify the unique corresponding qubits on each code block. For a qubit $q_i\in G_{\lambda}^{(i)}$, we denote the corresponding $q_i^{(j)}\in G_{\lambda}^{(j)}$. 
 For each newly added inter-block edge, $e=(q_i,q_j),q_i\in G^{(i)}_{\lambda}, q_j\in G_{\lambda}^{(j)}, i\neq j$, if  $q_i^{(j)}\in G_{\lambda}^{(j)}$ corresponding to $q_i$, then the edge $e\in \widetilde{G_{\lambda}}$ is assigned  weight $w_e$ as follows,
 \begin{align}\label{new_weight}
     w_e&\vdef d_{G_{\lambda}^{(j)}}(q_i^{(j)},q_j)+1\nonumber\\
     &=|\text{Shortest path between }q_i^{(j)},q_j \text{ on } G_{\lambda}^{(j)}|+1.
 \end{align} 
 
 $\widetilde{G_{\lambda}}$ is now a connected graph with a well-defined weighted path metric $\widetilde{d_{\lambda}}$ between all vertices. The original metric within each block remains unchanged. 
 \begin{equation}\label{Loop-free condition}
    \forall p,q \in G_{\lambda}^{(i)}, \widetilde{d_{\lambda}}|_{G_{\lambda}^{(i)}}(p,q)=d_{G_{\lambda}}(p,q), i\in\left[K\right].
 \end{equation}
 For two qubits on the same code block, any path involving cross-block interaction has an additional length penalty due to the ``+1'' in Eq.~(\ref{new_weight}). Consequently, the only modification is the distances between qubits residing in distinct blocks.
 
 We say the cross-block gate set $\mathcal{G}^{\prime}$ is \emph{virtually graph-local}, if every newly added edge is of weight $w_e$ is bounded by a constant $\eta=O(1)$ independent of $\lambda$ (See Fig.~\ref{cross-block_gates}). This is a sufficient condition that ensures the locality of the cross-block gates with respect to the new connectivity graph. Thus, the entire set of gates $\widetilde{\mathcal{G}}=\mathcal{G}\bigcup\mathcal{G}^{\prime}$ remains graph-local.  The virtually graph-local condition justifies the locality of the cross-block gate set on connectivity graphs. More precisely, given a graph-local gate set $\mathcal{G}$ and a virtually graph-local cross-block gate set $\mathcal{G}^{\prime}$, $\widetilde{\mathcal{G}}=\mathcal{G}\bigcup\mathcal{G}^{\prime}$ satisfies the graph-local condition for $\{\widetilde{G}_\lambda\}_{\lambda\in I}$. For any $\widetilde{g}\in \widetilde{\mathcal{G}}$ and operator $P$, 
 \begin{equation}
     \supp(U_{\widetilde{g}} P U_{\widetilde{g}}^\dagger)
\subseteq N_{\widetilde{G}_{\lambda}}(\supp(P),2\eta).
 \end{equation}
 The \emph{range} of cross-block interaction $\rho$ on $\widetilde{G_{\lambda}}$ is defined by 
 \begin{equation}
     \rho=\sup_{\lambda\in I,q\in \widetilde{G}_{\lambda}}\left|\bigcup_{g^{\prime}\in \mathcal{G}^{\prime},q\in\mathrm{supp}(g^{\prime})}\mathrm{supp}(g^{\prime})\right|\in \mathbb{R}\cup\{+\infty\}.
 \end{equation}
 The range $\rho$ measures the maximal number of virtual edges connected to a single qubit within the entire graph family $\widetilde{\mathcal{X}}$. Note that the range $\rho$ is always finite if the cross-block gate set $\mathcal{G}^{\prime}$ is a both virtually graph-local and translationally invariant. In this case, the range $\rho$ is upper bounded by the sum of volume of balls with constant radius $\eta$ in each block.
 \begin{equation}\label{finite_range}
     \rho\leq \mathrm{Vol}\left(\bigcup_{i=1}^{K}N_{q_i\in G_{\lambda}^{(i)}}(\eta)\right)\leq K\cdot C_{\beta} \eta^{\beta}<+\infty.
 \end{equation}

 \begin{theorem}[Intrinsic multi-block fault-tolerant gate bound]\label{Intrinsic gate bound Multi}
Let \(\{C_{\lambda}(\mathcal{S}_{\lambda})\}_{\lambda\in I}\) be a code family with specified stabilizer generator sets admitting macroscopic code distance, and
let \(\mathcal{X}=\{G_{\lambda}\}_{\lambda\in I}\) be the associated family of connectivity graphs, with intrinsic dimension $\dim(C_I(\mathcal{S}_{I})) = \beta\in\left[2,+\infty\right)$.
     For a $K$-fold code family $\{\widetilde{C_{\lambda}}=C_{\lambda}^{\otimes K}\}_{\lambda\in I}$, let $ \widetilde{\mathcal{X}}$ be the family of connectivity graphs $\{\widetilde{G_{\lambda}}\}_{\lambda\in I}$ updated by the virtually graph-local cross-block gate set $\mathcal{G}^{\prime}$, then the intrinsic dimension of $\widetilde{C_I}$ remains unchanged:
     \begin{equation}
         \mathrm{dim}(\widetilde{C_I})=\dimasd(\widetilde{\mathcal{X}})= \beta.
     \end{equation}
     Any constant-depth unitary operator $U$ implements encoded logical action restricted to the $\lfloor \beta\rfloor$-th level of the Clifford hierarchy $\mathcal{P}_{\lfloor\beta\rfloor}$.
 \end{theorem}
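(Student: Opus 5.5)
The proof splits into two parts: first show that adding the virtual cross-block edges does not change the Assouad dimension, then rerun the proof of Theorem~\ref{Intrinsic gate bound} on the enlarged graph family $\widetilde{\mathcal{X}}$.

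\textbf{Step 1: the dimension is unchanged.} Fix a base block, say $G_\lambda^{(1)}$, and compare $\widetilde{G_\lambda}$ with the disjoint union $\bigsqcup_{i=1}^K G_\lambda^{(i)}$. Let $\pi:\widetilde{G_\lambda}\to G_\lambda$ send a qubit $q_i\in G_\lambda^{(i)}$ to its corresponding qubit in $G_\lambda$. The weight rule \eqref{new_weight} makes every virtual edge at least as long as the block distance between the projected endpoints. Hence a path in $\widetilde{G_\lambda}$ of length $L$ projects to a path in $G_\lambda$ of length at most $L$, which gives
\[
d_{G_\lambda}(\pi x,\pi y)\le \widetilde d_\lambda(x,y).
\]
So the fibre of $\pi$ over a ball of radius $R$ in $\widetilde{G_\lambda}$ is contained in $K$ copies of $N_{\pi x}(R)$.

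Cover $N_{\pi x}(R)$ by $C(R/r)^\beta$ balls of radius $r$ in $G_\lambda$ and lift them to each block. Equation \eqref{Loop-free condition} says intra-block distances are unchanged, so each lifted ball is contained in an $\widetilde{G_\lambda}$-ball of radius $r$. This yields at most $KC(R/r)^\beta$ balls of radius $r$, hence $\dimasd\widetilde{\mathcal{X}}\le\beta$.

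For the reverse inequality, a ball of $\widetilde{G_\lambda}$ centred in block $1$ contains the corresponding block ball of the same radius, again by \eqref{Loop-free condition}. A radius-$r$ ball of $\widetilde{G_\lambda}$ meets block $1$ in a set of $G^{(1)}_\lambda$-diameter at most $2r$, since $\pi$ is $1$-Lipschitz, so that set lies in a block ball of radius $2r$. The lower covering bounds of $G_\lambda$ therefore transfer up to constants, and $\dimasd\widetilde{\mathcal{X}}=\beta$.

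There is one technicality. $\widetilde{G_\lambda}$ carries a weighted path metric while Definition~\ref{Assouad dimension short} is stated for general metric spaces, so no issue arises there. If a connectivity-graph (unweighted) formulation is needed, subdividing each virtual edge into $w_e\le\eta$ unit edges is uniformly bi-Lipschitz, and Proposition~\ref{bi-Lipschitz reduction} applies. Bounded degree is also preserved, because $\rho<\infty$ by \eqref{finite_range}.

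\textbf{Step 2: the gate bound.} The gate set $\widetilde{\mathcal G}=\mathcal G\cup\mathcal G'$ is graph-local on $\widetilde{\mathcal X}$ with constant $\eta$, as noted before the statement. By Corollary~\ref{Assouad-Nagate bound}, $\widetilde{\mathcal X}$ has Nagata dimension at most $D=\lfloor\beta\rfloor$, so it admits a $(D+1)$-colour division at macroscopic scale $r$ with $h\eta\ll r\ll d^{1/\beta}$.

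The remaining requirement is that each bounded piece, and each $r$-separated union of thickened pieces, is correctable for $C_\lambda^{\otimes K}$. This is the step I expect to be the main obstacle, because correctability is a statement about the tensor-product code rather than a single block. The plan is as follows.
\begin{itemize}
\item A set $B$ of $\widetilde d$-radius $cr$ meets each block in a set of block-radius at most $cr$, by the $1$-Lipschitz projection above. Its intersection with each block therefore has volume $O(r^\beta)=o(d)$.
\item A set $S\subseteq \bigsqcup_i G^{(i)}_\lambda$ is correctable for $C_\lambda^{\otimes K}$ if and only if each slice $S\cap G^{(i)}_\lambda$ is correctable for $C_\lambda$. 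This holds because logical operators and stabilizers factor across blocks.
\item Separation in $\widetilde d$ implies separation within each block, since $\widetilde d$ restricted to a block equals $d_{G_\lambda}$. So the Union Lemma~\ref{Union Lemma} applies blockwise to $N_{\Delta_i\eta}(\mathcal B_i)\cap G^{(j)}_\lambda$.
\item The Cleaning Lemma~\ref{lem:cleaning} is likewise applied blockwise.
\end{itemize}

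With these facts in place, the nested group-commutator argument of Theorem~\ref{Intrinsic gate bound} carries over verbatim. Define $K_1=UP_1U^\dagger$ and $K_j=P_jK_{j-1}P_j^\dagger K_{j-1}^\dagger$, with the support of $K_j$ confined to $\mathcal B_{j+1}\cup\cdots\cup\mathcal B_{D+1}$. This forces $K_D\Pi=\pm\Pi$. Descending the hierarchy as before then yields $L\in\mathcal P_{\lfloor\beta\rfloor}$.
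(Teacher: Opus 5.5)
Your proposal is correct, but Step 1 follows a genuinely different route from the paper; Step 2 has the same structure but is more careful.

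\textbf{Step 1, the paper's route.} The paper covers a ball $N_q(R)$ in $\widetilde{G_\lambda}$ by decomposing shortest paths into intra-block segments and cross-block jumps. It uses translation invariance of $\mathcal{G}'$ to move all jumps to the front, and it uses the loop-free condition to show that at most $K$ jumps occur. This produces a cover indexed by jump sequences with constant $K\rho^K c_\beta$, which is why finiteness of the range $\rho$ enters. The lower bound is read off from the inclusion $G_\lambda\subseteq\widetilde{G_\lambda}$, which is implicitly isometric by the loop-free condition.

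\textbf{Step 1, your route.} Your single observation, that the weight rule makes the block projection $\pi$ $1$-Lipschitz, does all of this at once. It gives the upper bound with constant $KC$, with no dependence on $\rho$ and no path rearrangement, so it uses no translation invariance of $\mathcal{G}'$ beyond what is built into the weights. It derives the loop-free condition that makes each block isometrically embedded, where the paper only asserts it. It also supplies the per-block radius control that you reuse in Step 2.

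\textbf{Step 2.} The paper simply invokes Theorem~\ref{Intrinsic gate bound} on $\widetilde{\mathcal{X}}$. You instead check explicitly that three facts transfer to $C_\lambda^{\otimes K}$: volume-$o(d)$ sets are correctable, and the Union and Cleaning Lemmas still apply. This check is worth making. The virtual edges do not come from stabilizer generators, so $\widetilde{G_\lambda}$ is not literally the connectivity graph of the tensor-product code, and that theorem's hypotheses do not hold verbatim.

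\textbf{One caveat.} Subdividing the virtual edges adds new vertices, so Proposition~\ref{bi-Lipschitz reduction}, which is stated for bijections, does not apply as written. You do not need it, however. The Assouad--Nagata reduction lemma is stated for arbitrary metric spaces, so it applies directly to the weighted path metric $\widetilde{d}_\lambda$.
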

\begin{proof}
    We will show that the Assouad dimension of $\{\widetilde{(G_{\lambda}},\widetilde{d_{\lambda}})\}_{\lambda\in I}$ is $\beta$. Consider the ball $N_{q}(R)$ centered at $q$ of radius $R\in \mathbb{Z}^{\geq 0}$ on $\widetilde{G_{\lambda}}$, denote its restriction on the subgraph of its own code block by $N_{q}^{\circ}(R)$. Let \(A_q(R,r)\) and \(A_q^\circ(R,r)\) denote the minimum numbers of balls of radius \(r\) needed to cover \(N_q(R)\) and \(N_q^\circ(R)\), respectively.
     
     To derive a decomposition of $N_q(R)$, we classify its qubits in $N_{q}(R)$ by path configuration to $q$. For any $p\in N_{q}(R)$, the shortest path connecting $q=q_0$ to $p=q_{\text{final}}$ can be expressed by:
    \begin{equation}
        q_0\sim q_1\xrightarrow{g^{\prime}_1}q_1^{\prime}\sim q_2\xrightarrow{g^{\prime}_2}q_2^{\prime}\sim\cdots \xrightarrow{g^{\prime}_{f}}q_f^{\prime}\sim q_{\text{final}}=p,
    \end{equation}
    where $\sim$ is the shortest path within the same code block, $g_i^{\prime}$ belongs to the cross-block gate set $\mathcal{G}^{\prime}_i$. Since the metrics within each $G^{(i)}_{\lambda}$ are identical, the path configuration can be rearranged by translation of $g^{\prime}_{i}$:
    \begin{equation}
        q_0\xrightarrow{g^{\prime}_1}q_1\xrightarrow{g^{\prime}_2}q_2\cdots \xrightarrow{g^{\prime}_{f}}q_f\sim q_{\text{final}}=p.
    \end{equation}
    where each $q_i$ belongs to different code block. Otherwise, assume $q_r,q_s\in G_{\lambda}^{(i)},r\neq s$. Then, by Eq.~(\ref{Loop-free condition}), there exists a shorter path connecting $q_r,q_s$ within $G_{\lambda}^{(i)}$, which contradicts  the shortest path assumption. Therefore, $f\leq K$. We define $W_K$ as the set of sequences of cross-block gates in $\mathcal{G}^{\prime}$ with  length less than or equal to $K$. The sequence $(g_1^{\prime},g_2^{\prime},\cdots,g_f^{\prime})\in W_k$ is the cross-block configuration of $p\in N_q(R)$. This leads to the estimation of the covering number of $N_q(R)$ at scale $r$:
\begin{align}
    N_q(R)
    &\subseteq
    \bigcup_{(g'_1,g'_2,\ldots,g'_f)\in\mathcal G'}
    N_{q_{\mathrm{final}}}^{\circ}
    \left(
        R-\sum_{i=1}^{f}|g'_i|
    \right),
    \\
    A_q(R,r)
    &\leq
    \sum_{(g'_1,g'_2,\ldots,g'_f)\in\mathcal G'}
    A_{q_{\mathrm{final}}}^{\circ}
    \left(
        R-\sum_{i=1}^{f}|g'_i|,r
    \right)
    \\
    &<
    |W_K|\, A_{q_{\mathrm{final}}}^{\circ}(R,r)\leq
    \left(\sum_{i=1}^{K}\rho^i\right)
    \cdot c_{\beta}
    \left(\frac{R}{r}\right)^{\beta}
    \\
    &\leq
    K\rho^K c_{\beta}
    \left(\frac{R}{r}\right)^{\beta}.\label{multicodeblock_IGB}
\end{align}
    Since $\rho$ is finite, by Eq.~(\ref{multicodeblock_IGB})  $\dimasd(\widetilde{X}) \leq \beta$. On the other hand, since $G_{\lambda} \subseteq \widetilde{G_{\lambda}}$, it follows that $\dimasd(\widetilde{X}) \geq \dimasd(\{G_{\lambda}\}_{\lambda\in I}) = \beta$. Hence $\dimasd(\widetilde{X}) = \beta$. Finally, applying Theorem~\ref{Intrinsic gate bound} for $\widetilde{\mathcal{X}}$, the set of fault-tolerant logical gate is restricted to the $\lfloor\beta\rfloor$-th level of the Clifford hierarchy.
\end{proof}
  As a typical setting, substituting each qubit in a given code family $\mathcal{X}$ by a site with $N=O(1)$ new qubits with full connectivity among themselves will not change the intrinsic dimension. 
  
   An immediate example that saturates this bound for integer dimension is the stack of $D$-dimensional toric codes, which supports transversal $\overline{\mathrm{C}^{D-1}Z} \in \mathcal{P}_D$~\cite{Wang_2024,Jochym_O_Connor_2021}. 
   
   For codes that are non-qLDPC, i.e.~have diverging weight or degree, our intrinsic gate bounds allow the possibility of achieving arbitrarily high level of the Clifford hierarchy. This bound is saturated in recent work~\cite{Golowich_2024,He_2025} using asymptotically good algebraic geometry (AG) codes with linear weight, where transversal $\ckz$ gates for arbitrarily large $k$ can be realized on this model by leveraging \(k\)-orthogonality. Such codes illustrate how code properties can benefit from infinite intrinsic dimension: by sacrificing qLDPC-ness and geometric locality, one can simultaneously achieve good parameters and high-level gates.
   Whether these desirable properties can be achieved with lower weight or in the single-block case remains as interesting open problems.
   
   For generic qLDPC codes, it remains unknown whether fault-tolerant logical gates can systematically saturate the hierarchy levels corresponding to their Nagata dimensions. Recent studies~\cite{Breuckmann_2026,Li_2025,li2026theorycohomologicalinvariantsquantum} suggest that (co)homological  invariants may offer the algebraic machinery to relate the support of transversal operations to the underlying Nagata division.

  Refs.~\cite{Zhu_2022,Dua_2023} indicate that the fractional 3D toric code with intrinsic dimension $\beta<3$ can still implement cross-block logical $\ccz$ gate using transversal global $\ccz$ followed by lattice surgery. Here we briefly review the construction of $\mathrm{FC}(a, b)$ in 3D lattice and show that this does not contradict our result. A standard cube is first divided equally as $a \times a \times a$ smaller cubes. The $b \times b \times b$ small $m$-type cubes in the center are then punched out. This process is repeated $l$ times for fractional 3D code at level $l$. The induced code family is denoted by $\mathrm{FC}(a, b) =\{\mathrm{FC}(a, b, 1)\subseteq \mathrm{FC}(a, b, 2)\subseteq \cdots\subseteq \mathrm{FC}(a, b, \infty)\}$. A direct calculation indicates that the Assouad dimension of the fractal set $\beta=\dimasd\mathrm{FC}(a,b)=\log_{a}{(a^3-b^3)} \to 2+\log_{a}(3(a-b))$. 
  The procedure for executing a logical $\ccz$ gate is divided into three steps. First, tile a
stack of three blocks $\mathrm{FC}(a, b)$ in a predetermined direction with the $(m_a,m_b,e_c)$ boundary on the outer surface and the $(m_1,m_2,m_3)$ boundaries on the inner holes. After applying a global transversal gate $\widetilde{\ccz}=\otimes_j \ccz_{j;1.2.3}$, or equivalently a cross-block physical $\ccz$ sweeper $s^{(3)}_{1,2,3}$ along the tiled system, the domain wall condenses on the outer boundary but turns the original hole boundaries into exotic $(m_1s^{(2)}_{2,3},m_2s^{(2)}_{1,3},m_3s^{(2)}_{1,2})$ boundaries. Finally, apply lattice merge and split to copy the logical states to ancilla code blocks. Such operations involve adaptive logical action and connectivity modification, which implicitly violate the original circuit setting and gate local assumption of the intrinsic fault-tolerant gate bound.  
Thus, it does not contradict the result from Corollary~\ref{Intrinsic gate bound Multi} for a stack of three fractional 3D toric code, which shows that only the constant-depth logical $\overline{\mathrm{C}Z}$ is allowed. Our result matches the convenient and neat expression for transversal $\overline{\mathrm{C}Z}$ by avoiding $(m_1,m_2,m_3)$hole boundaries in Fig.~37 of Ref.~\cite{Zhu_2022}.

\section{Geometric obstructions to self-correcting quantum memories}
\label{sec:towards the nonexistence}

Beyond the above static properties, the possibility of ``self-correction'' without the need for active error-correction operations is a particularly desirable feature of quantum codes that has attracted longstanding interest.
More specifically, a self-correcting quantum memory (SCQM) is a family of quantum many-body systems that can passively preserve encoded quantum information for a memory time that diverges rapidly with system size when coupled to a sufficiently cold thermal bath at a fixed temperature. For stabilizer Hamiltonians, self-correction is closely tied to the logical energy barrier: a logical operator that can be implemented through a sequence of local errors with only a constant maximal energy cost provides a direct obstruction to self-correction. 

Understanding the dimensional requirements for SCQM has been a central problem, attracting sustained attention in quantum information and many-body physics. 
Commuting stabilizer Hamiltonians in 1D and 2D have a constant energy barrier and thus cannot be self-correcting~\cite{Bravyi_2009}.
The 4D toric code provides the canonical SCQM~\cite{alicki2010thermal}. Fractalized 4D toric-code models have also been shown to retain topological order and macroscopic code distance on supports of Hausdorff dimension $3+\epsilon$, providing candidate SCQMs just above three dimensions~\cite{Zhu_2022}. After a series of candidate models and partial self-correction results in 3D~\cite{williamson2025partialselfcorrectionlayercodes,gu2025layercodespartiallyselfcorrecting,lin2024proposals3dselfcorrectingquantum}, very recent work by Balasubramanian, Davydova and Lin constructed a 3D local Pauli stabilizer Hamiltonian with an exponentially growing memory time at sufficiently low nonzero temperature~\cite{balasubramanian2026passiveselfcorrectingquantummemory}. In the commuting stabilizer setting, $3$ is the minimum Euclidean dimension known to support a self-correcting logical qubit.

Our framework naturally raises the question of whether  SCQMs can be realized in dimension below three. 
Here, we advance this inquiry by using the Assouad--Nagata reduction (see Section~\ref{sec:intrinsic dimension bridges code symmetry and indistinguishability} and Appendix~\ref{app:A-N reduction}) to identify a natural geometric obstruction to $\beta<3$ SCQMs. Specifically, we prove the following conditional no-go theorem, which essentially states that if the stabilizer geometry admits a transverse projection of Assouad dimension strictly below $2$, then it has a constant energy barrier and cannot be self-correcting. 

 \begin{theorem}\label{thm:conditional no-go on scqm}
    Given a family of stabilizer codes $\{\mathcal{C}_{\lambda}\}_{\lambda\in I}$
    defined on lattice $\mathbb{Z}^3$ with macroscopic code distance and local stabilizer generators, suppose that there exists a projection direction $p\in \mathbb{S}^{2}$ on the 2D sphere such that the projection image along $p$ has dimension $\dimasd \pi_p \mathcal{C}_I<2$, then the quantum code family realized by the degenerate ground spaces of the corresponding stabilizer Hamiltonians cannot be self-correcting. Furthermore, the existence of a dimension-deficit projection among the stabilizer generators imposes an extended constraint, forbidding self-correction in any subsystem code family $\{\mathcal{C}^{\prime}_{\lambda}(\mathcal{G}_{\lambda})\}_{\lambda\in I}$ realized by the ground space of gauged local stabilizer Hamiltonians.
  \end{theorem}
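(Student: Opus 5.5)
Since a code with a constant logical energy barrier cannot be self-correcting, we will aim to exhibit, for each $\lambda$, a nontrivial logical operator together with a sequence of local Pauli errors implementing it whose energy never exceeds a constant independent of $\lambda$. The energy of a partial string is the number of violated generators, so it suffices to move a logical operator step by step while keeping the violated checks in an $O(1)$ region. This follows the Bravyi--Terhal style argument for 2D codes, with the 2D lattice replaced by the projected geometry $\pi_p\mathcal{C}_I$. Concretely, we set $Y_\lambda\vdef\pi_p(\mathcal{C}_\lambda)\subset\mathbb{R}^2$ with the induced metric and, as in Theorem~\ref{Example1}, pass to a net graph so that Lemma~\ref{Assouad--Nagata Reduction at scale $R$} applies. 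Since $\dimasd Y_I=\beta'<2$, Corollary~\ref{Assouad-Nagate bound} gives Nagata dimension at most $\lfloor\beta'\rfloor\le 1$. So for any constant scale $r\gg\eta$ (the stabilizer range), $Y_\lambda=\mathcal{B}_1\cup\mathcal{B}_2$, where each family consists of $cr$-bounded pieces that are mutually $r$-separated.

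Next we pull this decomposition back along the fibers of $\pi_p$. A piece $B\subset Y_\lambda$ lifts to a \emph{column} $\pi_p^{-1}(B)\cap\mathcal{C}_\lambda$, of bounded transverse width but unbounded length along $p$. To get bounded pieces, we cut each column into slabs of height $r$ along $p$, alternating the labels of $\mathcal{B}_1$ and $\mathcal{B}_2$ columns. This is the step where the dimension deficit in the base is used: it yields a $3$-colouring of the lattice by $O(r)$-bounded, $r$-separated blocks, i.e.\ a Nagata-$2$ decomposition of the 3D code geometry. The key point is that one colour class, say the $\mathcal{B}_2$-columns cut into slabs together with the ``vertical'' interfaces, forms a union of quasi-1D \emph{strands} that run parallel to $p$.

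With this decomposition, the Cleaning Lemma~\ref{lem:cleaning} and the Union Lemma~\ref{Union Lemma} let us move a logical operator, with $d_\lambda$ macroscopic, off the correctable union of $r$-separated bounded blocks of two colours. This leaves a dressed logical operator $\bar L$ supported on $O(r)$-thickened one-dimensional strands. Each such strand is a graph of bounded width, so along it $\bar L$ can be built up sequentially, à la Bravyi--Terhal: truncating $\bar L$ to a growing prefix of the strand violates only the checks near the truncation front, whose number is bounded by the volume of a ball of radius $O(r+\eta)$ in $\mathbb{Z}^3$, i.e.\ $O(1)$. If $\bar L$ is supported on several disjoint strands, we process them one after another; the finished strands contribute no energy. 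This gives a constant energy barrier, and with it the failure of self-correction via the standard Arrhenius/Peierls bound on memory time. For the subsystem statement, we would run the same construction with the dressed logical operator of the gauge code. The cleaning lemma holds for bare/dressed operators in subsystem codes, and the Hamiltonian terms, being local gauge generators, still have only $O(1)$ terms near the truncation front.

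The main obstacle is the second step, where the $1$-dimensional Nagata decomposition of the projection must be lifted to a decomposition in which cleaning leaves the logical operator on \emph{one-dimensional} strands rather than on 2D sheets. A naive lift gives $2$-dimensional residue, which would reproduce only the usual $3$D situation with no obstruction. The plan is to exploit that residue sets of $\mathcal{B}_2$ project to $r$-bounded sets, so their preimages are genuinely tube-like. We will first attempt to prove a fibered version of Lemma~\ref{Assouad--Nagata Reduction at scale $R$} in which the residue is contained in $\pi_p^{-1}$ of a bounded-diameter set. If that fails, we will fall back on adding the hypothesis that the fiber direction carries no independent logical information. That hypothesis would have to be stated as part of the conditional assumption.
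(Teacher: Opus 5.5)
Your first step (the Nagata decomposition $\mathcal{B}_1\cup\mathcal{B}_2$ of the projection, pulled back to columns) and your last step (sweeping a logical supported in one column layer by layer, paying only for generators near the truncation front) match the paper. The gap is in the middle, at the step you flag yourself.

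\textbf{Why the slab route fails.} Cutting the columns into slabs gives three colour classes. Each class is correctable, by macroscopic distance plus the Union Lemma. But the Union Lemma says nothing about the union of \emph{two} classes, because blocks of different colours can be adjacent. Under the theorem's own hypotheses such a union can in fact never be correctable. If it were, cleaning would push any nontrivial logical onto the third class, a union of $r$-separated $O(r)$-bounded blocks. Its restrictions to those blocks would each commute with all generators. One of them would then be a nontrivial logical of constant weight, contradicting $d_\lambda\to\infty$. Re-attaching ``vertical interfaces'' to get strands just returns you to unbounded columns, with no argument that they are correctable. The fallback of adding an extra hypothesis would change the theorem.

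\textbf{The missing idea.} You never need bounded pieces. Correctability of the unbounded tubes comes from a Bravyi--Terhal dichotomy, not from the distance.
\begin{itemize}
\item Either some tube $\pi_p^{-1}(B)$ with $B\in\mathcal{B}_1$ supports a nontrivial logical operator, and you are done.
\item Or every such tube is correctable. Since these tubes are $r$-separated with $r$ larger than the generator diameter, the Union Lemma makes $\pi_p^{-1}(\mathcal{B}_1)$ correctable. Clean a nontrivial logical off it; the result is supported on $\pi_p^{-1}(\mathcal{B}_2)$. Those tubes are again $r$-separated, so each generator meets at most one of them, and each restriction $\mathcal{L}_i$ commutes with every generator. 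Because the product is nontrivial, some $\mathcal{L}_i$ is a nontrivial logical supported in a single tube.
\end{itemize}
In both cases you obtain a logical operator in a tube of cross-sectional area at most $A$. Your truncation argument then bounds the barrier by the number of generators meeting two cross-sections. There is no ``$2$-dimensional residue'' to worry about: the residue after cleaning the $\mathcal{B}_1$-tubes is exactly the union of the $\mathcal{B}_2$-tubes.

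\textbf{Subsystem case.} The same dichotomy works with dressed logical operators and the subsystem cleaning lemma. Note that in the paper's setting only the stabilizer generators are assumed local, with the gauge operators arbitrary. So energy should be counted with respect to local stabilizers, not local gauge terms.
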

  \begin{proof}
      Assume such a projection $\pi_p \mathcal{C}_I$ exists. By the Assouad--Nagata reduction (\ref{Assouad-Nagate bound}), the Nagata dimension of $\pi_{p}\mathcal{C}_{I}$ is 1. There exists a Nagata division for $\pi_{p}\mathcal{C}_I\coloneqq\mathcal{B}_1\cup\mathcal{B}_2$ at a sufficiently large constant scale $r$ that is greater than the diameter of the local stabilizer generators. By definition of the Nagata division, any set in either of those two families is $cr$-bounded whose area is bounded above by some constant $A$. As a result, taking the preimages of these sets along projection $p$ provides a partition of the 3D lattice into a union of strips whose bases are elements in $\mathcal{B}_1$ or $\mathcal{B}_2$. 
      
      Next, we claim that there exists a logical operator living in one of the vertical strips by contradiction. Assume the contrary holds, we deduce that any strip with a base set in $\mathcal{B}_1$ contains no logical operator and is hence correctable by the cleaning lemma (see Appendix~\ref{app: preliminary}). We also notice that a single stabilizer generator merely intersects with at most one of the strips in $\pi_p^{-1}(\mathcal{B}_1)$ since $\mathcal{B}_1$ is $r$-separated. Therefore, the disjoint union of strips $\pi_p^{-1}(\mathcal{B}_1)$ is still correctable by the union lemma (see Appendix~\ref{app: preliminary}). The cleaning lemma further allows us to modify the support of the logical operator $P$ to avoid $\pi_p^{-1}(\mathcal{B}_1)$ by multiplying some stabilizers $S\in\mathcal{S}$. Afterwards, the support of the modified logical operator $\mathcal{L}=PS$ lies in $\pi_p^{-1}(\mathcal{B}_2)$, whose domain can be further decomposed into subregions within the disjoint strips in $\pi_p^{-1}(\mathcal{B}_2)$,
      \begin{equation}
          \mathcal{L}=\mathcal{L}_1\times\cdots\times \mathcal{L}_{|\mathcal{B}_2|},\supp(\mathcal{L}_i)\subseteq \pi_p^{-1}(B_2), B_2\in \mathcal{B}_2 .
      \end{equation}
     Since $\mathcal{B}_2$ is $r$-separated and any stabilizer generator interacts with at most one $\mathcal{L}_{i}$. The commutativity of $\mathcal{L}$ with stabilizer group $\mathcal{S}$ can be expressed locally, 
     \begin{equation}
         \mathcal{L}\in \mathcal{C}(\mathcal{S})\iff \mathcal{L}_{i}\in \mathcal{C}(\mathcal{S}) \text{ for every }i.
     \end{equation} Any fixed $\mathcal{L}_{i}$ commutes with the generators in $\mathcal{S}$ on its own support in $\pi_p^{-1}(B_2)$ or simply has no interaction. For both cases, this proves $\mathcal{L}_i$ to be a local logical operator whose support lies in a single strip. Since $\mathcal{L}\neq \overline{\mathrm{I}}$ is a nontrivial logical operator, there always exists a logical operator $\mathcal{L}_i\neq \overline{\mathrm{I}}$ located within a single strip belonging to $\pi_p^{-1}(\mathcal{B}_2)$. One can construct an energy walk $\mathcal{W}(\mathrm{I}, \mathcal{L}_i)$ to implement $\mathcal{L}_i$ in a layer-by-layer manner along the projection $p$ within this strip. The emerged violated check of this energy walk always lives on the top and the bottom layer of the column within this strip, which can be bounded above by the constant $2A$. 

     This argument essentially demonstrates the existence of the logical quasi-string and thus proves an $O(1)$ energy barrier for this code family $\{\mathcal{C}_{\lambda}\}_{\lambda\in I}$.
     
     Furthermore, the above argument can be naturally extended to the case of the subsystem code $\{\mathcal{C}^{\prime}_{\lambda}(\mathcal{G}_{\lambda})=\mathcal{C}^{\prime}_{\lambda}\left(\langle\mathcal{S}_{\lambda}\rangle;\overline{X}_1,\overline{Z}_1,\cdots,\overline{X}_{g_{\lambda}},\overline{Z}_{g_{\lambda}}\right)\}_{\lambda\in I}$ for any choice of the gauge Pauli operators $\overline{X}_1,\overline{Z}_1,\cdots,\overline{X}_{g_{\lambda}},\overline{Z}_{g_{\lambda}}$ with only the generating set $\{\langle\mathcal{S}_{\lambda}\rangle\}_{\lambda\in I}$ being local. We can go through a similar argument by firstly assuming no dressed logical operator $\mathcal{L}_{\mathrm{dress}}\in \mathcal{C}(\mathcal{S}_{\lambda})\backslash\mathcal{G}$ lives in any single strip of $\pi_p^{-1}(\mathcal{B}_1)$. And we apply the cleaning lemma on a certain logical operator in $\mathcal{C}(\mathcal{G})$ for subsystem code (Lemma 2 in Ref.~\cite{Bravyi_2009}) by multiplying some $S\in \mathcal{S}_{\lambda}$, and then decompose its residual support to derive a local quasi-string in $\mathcal{C}(\mathcal{S}_{\lambda})\backslash\mathcal{G}$. This contradiction proves that there always exists a logical quasi-string beyond the gauge Pauli operator $\overline{X}_i,\overline{Z}_i$ having a constant energy cost. Such a string precludes the possibility of designing a quantum memory in a multi-logical qubit system by simply gauging out the constant-scale logical qubits.
  \end{proof}

This result establishes a general local geometric constraint that any \(\beta<3\) SCQM must evade. Specifically, if, within some region, projecting the stabilizer geometry transverse to a given direction produces a set of Assouad dimension below \(2\), then a quasi-string-like logical operator is guaranteed to exist within a rigid constant-radius tube oriented along that direction, thereby ruling out self-correction. Intuitively, this projection condition captures a dimensional deficit feature: although the code is embedded in 3D, it does not appear uniformly 2D filling when viewed along some direction. 
The hierarchical geometry  of the recent 3D SCQM construction~\cite{balasubramanian2026passiveselfcorrectingquantummemory}  evades this projection-deficit condition.
Note that besides the single-qubit main construction, Ref.~\cite{balasubramanian2026passiveselfcorrectingquantummemory} also discusses in Appendix C a multiqubit extension based on tensor products and a subsystem encoding without ``small relations.'' Our result
 also addresses the subsystem code case to clarify that a naive approach, by simply treating thermally unstable logical strings as gauge operators, is insufficient. Realizing self-correction demands a substantial modification to the geometric structure of stabilizer generators at a macroscopic scale. 


Theorem~\ref{thm:conditional no-go on scqm} thus provides a foundation for comprehensively understanding the  obstructions to SCQMs in below 3D. 
  An important route toward advances would be to successively apply this conditional no-go argument across multiple local domains. This program has  intriguing connections to Kakeya-type problems, a central class of questions in geometric measure theory and harmonic analysis. Consider the 2D slice of constant thickness perpendicular to an arbitrary direction $p$ with a fixed base point. By Theorem~\ref{thm:conditional no-go on scqm}, the dimension deficit of stabilizer support within such a slice will imply the existence of a local quasi-string along $p$, which can be seen as an energy walk commuting with those stabilizer generators. The failure of admitting a dimension deficit for any direction would force a Kakeya-type $\delta$-thickening disc obstruction, where the cubic connectivity growth should be impossible in Assouad dimension $<3$~\cite{Wang_2025}. By imposing a spatial similarity condition weaker than the translation invariance, we may patch those local strings at constant energy cost. By ``patching'' these local string-like segments at their ending points, one could construct a global logical operator confined entirely within a flexible, piecewise-straight tubular neighborhood of constant radius. The existence of such a bendable quasi-1D support would rigorously imply a constant energy barrier for the entire system and hence preclude self-correction. 

We close this section with a discussion of fractal constructions and the role of geometric regularity in self-correction below 3D.
 The ferromagnetic Ising model on the Sierpiński carpet exhibits finite-temperature order~\cite{vezzani2003spontaneous,gefen1984phase}, while generalized Sierpiński carpets have Hausdorff dimensions dense in $(1,2)$. Inspired by this behavior, Brell suggested that a homological product of Sierpiński carpet codes supported on a fractal geometry of Hausdorff dimension $2+\epsilon$ may be SCQM~\cite{Brell_2016}. Nonetheless, a growing memory lifetime was not established and the proposal has since been argued not to be self-correcting~\cite{lin2024proposals3dselfcorrectingquantum,balasubramanian2026passiveselfcorrectingquantummemory}.
The recent 3D SCQM construction~\cite{balasubramanian2026passiveselfcorrectingquantummemory},  viewed through our framework, can have intrinsic dimension $\beta=2+\epsilon$. 
The construction, however, is hierarchical and breaks translation invariance. This suggests the more physical question of whether a geometrically homogeneous 3D SCQM can be realized, and whether a broader geometric obstruction can rule out all sufficiently regular SCQMs below 3D. We expect progress toward answering these questions to be highly profound both mathematically and physically.

\section{Discussion and outlook}\label{sec:Discussion and outlook}

 In this work, we proposed and developed a mathematical framework for characterizing the intrinsic locality of stabilizer codes. This concept should be viewed as a fundamental parameter associated with the stabilizer generators independent of any prescribed background geometry, extending qLDPC codes beyond the usual integer-dimensional setting in a natural and rigorous manner. 
 Defined through the maximal growth rate of connectivity over the entire space and across all scales, this dimension classifies quantum codes by their intrinsic degree of locality rather than some ambient space.

We show, across fundamental aspects of quantum codes including code parameters, fault-tolerant logical gates, and thermal stability, that intrinsic dimension serves as a fundamental organizing parameter that encapsulates the essential nature of quantum codes.
These results lift the conceptual  role of geometric dimension from a passive background feature to an active structural parameter that governs the capabilities and limitations of quantum codes in a unified fashion. 
 Importantly, the versatility of this perspective opens up rich avenues for both the mathematical theory and practical  implementation of quantum codes.

 Here we have laid the foundation for the theory of intrinsic dimension.  {Many intriguing questions and promising future directions emerge from this theory, a few of which we now outline.  }

\begin{itemize}

\item \textbf{Algorithmic optimization of stabilizer generators and minimal intrinsic dimension.} For a given encoding, an important open question is whether there exists an efficient algorithm for finding a set of stabilizer generators that realizes the minimal intrinsic dimension $\beta_0$. This is a mathematically profound and challenging problem, with some related discussion in the main text. This would pave the way for optimal hardware efficiency by systematically minimizing the overhead required for scalable physical implementations.
    
\item \textbf{Optimal bi-Lipschitz geometric embeddings.}  Further, it would be ideal to understand whether a code family can always be strictly embedded into a geodesic space exhibiting tight Ahlfors $\beta_0$-regularity via a uniformly bi-Lipschitz map. 
This is expected to be feasible, with intuition that the minimal intrinsic dimension should exhibit an underlying spatial homogeneity that facilitates such embeddings. 
Such embeddings would provide a concrete geometric foundation for the mathematical dimension notions and guide the optimal design of experimental code architectures.

    \item \textbf{Saturating intrinsic code parameter bounds.} Another important question is to identify well-behaved codes and stabilizers that saturate the intrinsic code parameter bounds imposed by Theorem~\ref{Intrinsic BPT bound}. As discussed above, this has been  achieved for integer dimensions, whereas the noninteger dimensional regime remains largely open and constitutes an interesting direction for future work.
    
    \item \textbf{Dimension unification on non-compact manifolds.} Extending our framework to noncompact Riemannian manifolds also raises an appealing and deep mathematical question.  A primary difficulty is the intrinsic divergence of the global Assouad dimension in spaces with negative curvature, such as hyperbolic manifolds. In particular, this is consistent with the polylogarithmic breaking of the BPT bound achieved by hyperbolic codes. Developing a more refined theory and better bounds for hyperbolic codes would likely require modification of the hyperbolic metric. Furthermore, as the code block size scales to infinity, a subsequent geometric challenge is to construct a well-behaved triangulation bi-Lipschitz to the original manifold and then control the scale of Nagata division. For fault-tolerant logical gates in this non-compact setting, a detailed analysis on the Assouad dimension up to scale $o(d_{\lambda}^{\frac{1}{\beta}})$ may improve the results.

    \item \textbf{Tuning intrinsic dimension via Tanner graph engineering.} As discussed above, the 
    intrinsic dimension can be manipulated by introducing or pruning long-range connectivity. However, this is subject to the  fundamental physical interplay between connectivity costs and code capabilities. Notably, understanding how code distance and logical gate properties behave under such tuning requires a more careful analysis, and is of evident practical value.

    \item \textbf{Higher-level logical gates via non-local cross-block interactions.} In the regime of unrestricted interactions across multiple code blocks, the homological product on the ambient geometric space $\mathbb{R}^k$ may enable the transversal implementation of logical $\overline{\ckz}$ gates within the $k$-th level of the Clifford hierarchy $\mathcal{P}_{k}$~\cite{Breuckmann_2026,Li_2025}. By permitting non-local cross-block interactions, our intrinsic fault-tolerant gate bounds can be circumvented. Such block padding effectively amplifies both the connectivity and the intrinsic locality dimension of the code family. This dimensional inflation potentially unlocks the capability to implement fault-tolerant logical gates at higher levels of the Clifford hierarchy (see also Appendix B of Ref.~\cite{Jochym_O_Connor_2018}).

\end{itemize}

Beyond these specific mathematical questions, our framework also suggests insights and connections to broader physical and experimental contexts.

 As mentioned, noninteger dimensions have long played a fundamental
role in physics from the $\epsilon$-expansion and dimensional
regularization to many-body physics and field theories on fractal geometry
~\cite{WilsonFisher1972,BolliniGiambiagi1972,tHooftVeltman1972,
ElShowk2014,DelmastroLiRathore2026,gefen1984phase,
Eyink1989,KarRajeev2012}.
On the other hand, quantum error correction has revealed deep physical connections to our understanding of e.g.~quantum many-body order~\cite{Kitaev_2003,dennis2002topological,BravyiHastingsMichalakis2010,YiYeGottesmanLiu2024} and holographic quantum gravity~\cite{AlmheiriDongHarlow2015,PastawskiYoshidaHarlowPreskill2015}. 
As locality and geometry underlie the essential physics in these settings, our theory of intrinsic dimension creates a new bridge between quantum error correction and quantum phases of matter, field theory, and holography, providing a promising framework for understanding and characterizing their physics across integer and noninteger dimensions. 
Relatedly, an interesting goal is to characterize the intrinsic interaction geometry of many-body quantum systems directly from physical diagnostics such as entanglement and correlation structures.

On the experimental side, our framework may serve as a geometric design principle for practical implementations of quantum error correction and fault-tolerant quantum computation, especially on platforms with flexible or nonuniform connectivity. Incorporating the available connectivity and its physical costs into the locality graph may provide a systematic route to jointly optimizing stabilizer generators, code layouts, and fault-tolerant gate implementations under realistic hardware constraints.

\begin{acknowledgments} 
We thank Margarita Davydova, Shankar Balasubramanian, and Ting-Chun Lin for valuable correspondence regarding their recent self-correcting quantum memory construction in 3D.
YL and ZWL are supported in part by NSFC under Grant No.~12475023, Dushi Program, and a startup funding from YMSC. XF acknowledges the international visiting student program of Tsinghua University for hospitality.
\end{acknowledgments}

\bibliography{references}


\onecolumngrid

\appendix
\counterwithin{theorem}{section}

\section{Preliminaries} \label{app: preliminary}
\subsection{Quantum error-correcting codes} \label{app:pre_quantum_code}
This subsection includes notations and definitions from quantum error correction that will be used throughout the paper. Readers already familiar with these notions may skip and refer back to it as needed. We first review stabilizer codes, code distance, and qLDPC code families. We then recall logical gates and fault-tolerant implementations, including transversal gates, constant-depth circuits, and the Clifford hierarchy. Finally, we introduce the connectivity graph and Tanner graph associated with a stabilizer code, and restate several standard theorems related to no-go theorems in a form adapted to a graph-theoretic language.
  \subsubsection{Stabilizer codes}
  The single qubit Pauli group is given by $\hat{\mathcal{P}}=\langle X,Y,Z\rangle $, and the $n$-qubit Pauli group is given by $\mathcal{P}_1=\hat{\mathcal{P}}^{\otimes n}$. We denote $\langle \mathcal{S}\rangle$ as the group generated by $\mathcal{S}$, where $S\subset \mathcal{P}_1$. $\mathcal{N}(\mathcal{S})$ is the normalizer of $\langle \mathcal{S}\rangle$ defined as $\mathcal{N}(\mathcal{S})=\{p\in \mathcal{P}_1\mid S p=pS,\forall S\in \mathcal{S}\}$.
 \begin{definition}[Quantum code and stabilizer code] A quantum code is a subspace of a Hilbert space $\mathcal{H}$. A stabilizer code $C=C\langle\mathcal{S}\rangle$ with parameters $\llbracket n,k,d\rrbracket$ is a quantum code on $n$ physical qubits, given by the common $+1$ eigenspace of an abelian stabilizer group $\langle \mathcal{S}\rangle$ generated by $n-k$ independent Pauli operators (a $2^k$-dimensional subspace of the $2^n$-dimensional physical Hilbert space, encoding $k$ logical qubits into $n$ physical qubits), with code distance $d$.
\end{definition}
\begin{definition}[Code distance]
    The nontrivial logical operators of a stabilizer code $C$ are the elements of
    $\mathcal{N}(\mathcal{S})\setminus \langle \mathcal{S}\rangle$.
    The distance $d$ of $C$ is the minimum weight of any Pauli operator in
    $\mathcal{N}(\mathcal{S})\setminus \langle \mathcal{S}\rangle$, namely,
    the smallest weight of a nontrivial logical operator.
\end{definition}
Quantum low-density parity-check (qLDPC) codes form a central and widely studied class of stabilizer codes, with both fundamental theoretical significance and practical promise.

\begin{definition}[Quantum low-density parity-check (qLDPC) code]\label{def:qLDPC}
  A family of stabilizer codes is called a quantum low-density parity-check (qLDPC) code family if it admits a choice of stabilizer generators such that each generator has constant weight, i.e., acts nontrivially on at most a constant number of qubits, and each qubit has constant degree, i.e., participates in at most a constant number of generators, where both constants are independent of the code length.
\end{definition}

\subsubsection{Connectivity and Tanner graphs} \label{app:pre_b_logical_gate}
Here we formally define the closely related notions of connectivity graph and Tanner graph of stabilizer codes.
\begin{definition}[Connectivity graph] \label{Connectivity graph}
 Let $\mathcal C = \langle\mathcal{S}\rangle$ be a stabilizer code with generating set $\mathcal S$. 
    The \emph{connectivity graph} associated with $\mathcal S$ is the graph
    $G(\mathcal S)=(V,E)$ defined as follows. 
    The vertex set $V$ consists of the physical qubits. Two distinct vertices $v_i,v_j\in V$ are connected by an edge $e=(v_i,v_j)\in E$ if and only if they are both contained in the support of some stabilizer generator $S_i\in \mathcal{S}$.
    The graph distance between any two distinct vertices $v_1,v_2\in V$, denoted by $d(v_1,v_2)$, is defined as the length of the shortest path between them on the graph.
\end{definition}

\begin{definition}[Tanner graph]\label{tanner graph}
Let $C=\langle \mathcal{S}\rangle$ be a stabilizer code with generating set
$\mathcal{S}=\{S_i\}$. 
The \emph{Tanner graph} associated with $\mathcal{S}$ is the bipartite graph
$G_T=(V_0,V_1,E)$, where the two vertex sets $V_0$ and $V_1$ correspond to
stabilizer generators and physical qubits, respectively. There is
an edge $(S_i,q_j)\in E$ if and only if
$q_j\in \operatorname{supp}(S_i).$
\end{definition}

These graphs depend on the choice of generators and are naturally assumed to be connected. 
For an undirected graph $G$, the ball of radius $r$ on $G$ is denoted by $N_{v}(r)\vdef\{u\in G\mid d_{G}(u,v)\leq r\}$. The volume of $N_{v}(r)$ is the number of vertices it contains $\mathrm{Vol}(N_{v}(r))\vdef\sharp\{x\in N_{v}(r)\subseteq G\}$.

\subsubsection{Logical gates} \label{app:pre_c_logical_gate}
 Here we recall various key definitions and notation concerning logical gates, which are used in the paper.
 \begin{definition}[Support]
    The \emph{support} of an operator $P$, denoted supp$[P]$, is the set of physical qubits on which $P$ acts nontrivially.
    The \emph{size} of the support, denoted $|P|$, is the weight of the non-identity elements of the operator $P$.
 \end{definition}
 \begin{definition}[Code projector]
     Let $\Pi$ denote the projector from the Hilbert space $\mathcal{H}$ onto the code space of $C$. For any encoded state $\rho$, we have $\Pi\rho\Pi=\rho$.
 \end{definition}
 \begin{definition}[Logical gate]
    A unitary operator $U$ is a logical gate for the code $C$  if it preserves the code space of $C$, i.e., $U\Pi_{C}U^{\dagger}=\Pi_{C}$.
 \end{definition}
 Loosely speaking, for fault tolerance, we would like the physical implementation of logical gates to be sufficiently ``local'' so that small errors do not spread into uncorrectable ones. The simplest and most  studied such gates are transversal gates.

 \begin{definition}[Transversal gate]
  A unitary operator $U$ is said to be \emph{transversal} if it can be decomposed into a tensor product of unitaries with pairwise disjoint supports, each acting on at most a constant number of qubits. Namely,
\begin{align}
    U = \bigotimes_{i=1}^{\ell} u_i,
    \qquad
    u_i \in \mathbb{U}(d_i),
    \qquad
    \operatorname{supp}(u_i)\cap \operatorname{supp}(u_j)=\emptyset
    \quad \text{for } i\neq j .
\end{align}
 \end{definition}
 Transversal gates are necessarily fault-tolerant, but the converse need not hold. The relaxation to constant-depth circuit implementations guarantees locality preservation, namely that a local error can propagate only to bounded number of other qubits, and hence the circuit remains fault-tolerant.

\begin{definition}[Constant-depth quantum circuit]\label{def:constant-depth-circuit}
    A unitary operator $U$ is said to be a \emph{constant-depth circuit} if it can be implemented with $\mathcal O(1)$ layers of a local gate set $\mathcal G$, with disjoint supports within each layer.
\end{definition}

\subsubsection{Useful code lemmas} \label{app:pre_d_useful_lemma}

Several fundamental lemmas concerning logical operators and correctable regions
are central to previous studies of the limitations of stabilizer codes~\cite{Bravyi_2009,Bravyi_2010,Bravyi_2013} and to
our derivations. 
We collect them here, reformulated in the language of connectivity graphs for
later use.

\begin{lemma}[Cleaning Lemma]\label{lem:cleaning}
    Let $\mathcal C$ be a stabilizer code on the set of qubits $\Lambda$, with
    stabilizer group $\mathcal S$, and let $A\subseteq \Lambda$ be a region.
    Then, for any logical Pauli operator $L$, exactly one of the following
     holds:
    \begin{enumerate}
        \item $L$ can be cleaned from $A$, i.e., there exists a stabilizer
        $S\in \mathcal S$ such that
            $\operatorname{supp}(LS)\subseteq \Lambda\setminus A$.
        \item There exists a nontrivial logical Pauli operator $M$ supported in
        $A$ such that $M$ anticommutes with $L$.
    \end{enumerate}

    In particular, if $A$ is correctable, equivalently if $A$ supports no
    nontrivial logical Pauli operator, then every logical Pauli operator can
    be cleaned from $A$.
\end{lemma}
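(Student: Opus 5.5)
We prove the Cleaning Lemma by linear algebra over $\mathbb{F}_2$, identifying Pauli operators modulo phases with vectors in the symplectic space $V=\mathbb{F}_2^{2|\Lambda|}$. Let $\omega$ denote the symplectic form, so $\omega(P,Q)=0$ iff $P,Q$ commute. The stabilizer group $\mathcal S$ is then an isotropic subspace, and the normalizer $\mathcal N(\mathcal S)$ is its symplectic complement $\mathcal S^{\perp}$. For a region $A$, write $\mathcal P_A$ for the Paulis supported on $A$, and $\pi_A$ for the restriction map $P\mapsto P|_A$. The lemma asks us to compare two subspaces: the restrictions $\pi_A(\mathcal S)$ of stabilizers to $A$, and the logical operators $\mathcal N(\mathcal S)\cap\mathcal P_A$ supported in $A$.

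The key step is a duality statement on $\mathcal P_A$, which carries the nondegenerate restricted form $\omega_A$:
\[
\pi_A(\mathcal S)^{\perp_A}=\mathcal N(\mathcal S)\cap \mathcal P_A .
\]
This holds because for $M\in\mathcal P_A$ and $S\in\mathcal S$ we have $\omega(M,S)=\omega_A(M,\pi_A(S))$. Taking complements inside $\mathcal P_A$ gives $\pi_A(\mathcal S)=(\mathcal N(\mathcal S)\cap\mathcal P_A)^{\perp_A}$.

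With this in hand, let $L$ be a logical Pauli. Then $L$ can be cleaned from $A$ iff there is $S\in\mathcal S$ with $\pi_A(LS)=0$, i.e.\ iff $\pi_A(L)\in\pi_A(\mathcal S)$. By the duality, this happens iff $\omega_A(\pi_A(L),M)=0$ for every $M\in\mathcal N(\mathcal S)\cap\mathcal P_A$. Since such $M$ are supported in $A$, $\omega_A(\pi_A(L),M)=\omega(L,M)$, so the condition says $L$ commutes with every element of $\mathcal N(\mathcal S)\cap\mathcal P_A$.

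It remains to account for the elements of $\mathcal S\cap\mathcal P_A$. These lie in $\mathcal N(\mathcal S)\cap\mathcal P_A$ but commute with $L$ automatically, because $L\in\mathcal N(\mathcal S)$. So failure of cleaning means some $M\in\mathcal N(\mathcal S)\cap\mathcal P_A$ anticommutes with $L$, and such an $M$ is necessarily not in $\mathcal S$, i.e.\ it is a nontrivial logical operator. Conversely, if such an $M$ exists, cleaning is impossible: if $LS$ were supported off $A$ it would commute with $M$, and $S$ commutes with $M$, so $L$ would commute with $M$. This gives exactly one of the two alternatives. The ``in particular'' clause follows at once, since a correctable $A$ supports no nontrivial logical operators and so alternative 2 cannot occur.

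The main obstacle is the duality step. It needs care with phases, which we sidestep by working modulo phases throughout. It also needs nondegeneracy of $\omega_A$ on $\mathcal P_A$, which holds because $\mathcal P_A$ is a direct factor of the full Pauli group and is therefore a symplectic subspace. The rest is bookkeeping.
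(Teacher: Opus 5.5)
Your proof is correct. The key identity $\pi_A(\mathcal S)=(\mathcal N(\mathcal S)\cap\mathcal P_A)^{\perp_A}$ on the nondegenerate space $\mathcal P_A$ turns cleanability of $L$ into commutation with every logical supported in $A$, and the ``exactly one'' dichotomy follows because $\mathcal S\cap\mathcal P_A$ automatically commutes with $L$. The paper does not prove this lemma itself but quotes it from Bravyi--Terhal and related works, and your symplectic double-complement argument is essentially the standard proof given there.
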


Then the following lemmas allow us to build larger correctable regions.

\begin{definition}[Outer boundary]
    Let $\mathcal C$ be a stabilizer code with generating set
    $\mathcal S=\{S_i\}$ on the set of physical qubits $V$, and let
    $U\subseteq V$. The \emph{outer boundary} $\partial_{+}U$ is the set of all physical qubits $v\in V\setminus U$ such that there exists some $u\in U$ and some stabilizer generator $S_i\in\mathcal{S}$ with $u,v\in \mathrm{supp}(S_i)$.
\end{definition}

\begin{lemma}[Union Lemma]\label{Union Lemma}
      Let $U_1,U_2,\ldots,U_t$ be disjoint correctable regions. Suppose that
$\partial_+ U_i\cap U_j=\emptyset$ for all
        $i\neq j$.
    Then the union $\bigcup_{i=1}^t U_i$ is also correctable.
\end{lemma}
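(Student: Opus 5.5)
The proof follows the standard argument of Bravyi, Poulin and Terhal for stabilizer codes, restated in the connectivity-graph language of Definition~\ref{Connectivity graph}. The reduction is to a single region, using the characterization that a region $A$ is correctable iff it supports no nontrivial logical Pauli operator. Concretely, it suffices to show the following: if $L$ is a logical Pauli operator with $\supp(L)\subseteq \bigcup_i U_i$, then $L$ is equivalent, up to multiplication by an element of $\mathcal{S}$, to the identity. Equivalently, $L$ lies in $\langle\mathcal{S}\rangle$ times a phase.

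\textbf{Step 1 (decomposition).} Write $L=\prod_{i=1}^t L_i$ with $L_i\coloneqq L|_{U_i}$, the restriction of $L$ to the qubits of $U_i$. The restrictions are well defined because the $U_i$ are disjoint.

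\textbf{Step 2 (each $L_i$ is logical).} Take any generator $S\in\mathcal{S}$. Suppose $\supp(S)$ meets two distinct regions, say $u\in U_i\cap\supp(S)$ and $v\in U_j\cap\supp(S)$ with $i\neq j$. Since $u$ and $v$ share the generator $S$, we get $v\in\partial_+U_i$. This contradicts the hypothesis $\partial_+U_i\cap U_j=\emptyset$. Hence every generator overlaps $\supp(L)$ inside at most one region $U_i$. The commutation of $S$ with $L$ is decided qubit by qubit on the overlap, so it reduces to the commutation of $S$ with the single factor $L_i$. All other factors commute with $S$ trivially, since their supports are disjoint from $\supp(S)$. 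Therefore each $L_i$ commutes with all of $\mathcal{S}$, i.e.\ $L_i\in\mathcal{N}(\mathcal{S})$.

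\textbf{Step 3 (each $L_i$ is trivial).} Each $L_i$ is a logical operator supported in the correctable region $U_i$. Correctability means $U_i$ supports no nontrivial logical operator, so $L_i\in\langle\mathcal{S}\rangle$ up to phase. Multiplying over $i$, $L$ is also in $\langle\mathcal{S}\rangle$ up to phase. Hence $\bigcup_i U_i$ supports no nontrivial logical operator and is correctable. If one prefers the Knill--Laflamme formulation, it follows from this by the standard equivalence for stabilizer codes: a Pauli error supported on $A$ is correctable for all such errors iff no element of $\mathcal{N}(\mathcal{S})\setminus\langle\mathcal{S}\rangle$ is supported on $A$.

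\textbf{Main obstacle.} The argument is routine. The only delicate point is Step~2, where one must verify that the commutation check localizes. This relies on the outer boundary being defined through shared generators, which makes it symmetric. In particular, the condition $\partial_+U_i\cap U_j=\emptyset$ for all ordered pairs $i\neq j$ guarantees that no single generator straddles two regions.
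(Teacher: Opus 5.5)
Your proof is correct. Every generator meets at most one $U_i$, so commutation with $L$ localizes to a single restriction $L_i$; each $L_i$ is then a logical operator supported on a correctable region, hence a stabilizer up to phase, and so is $L=\prod_i L_i$. The paper gives no proof of its own and imports the lemma from Bravyi--Poulin--Terhal and related works. Your argument is the standard stabilizer-code proof of that cited statement, and it uses the same ``no nontrivial logical Pauli on the region'' notion of correctability that the paper adopts in its Cleaning Lemma.
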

In other words, if a collection of correctable sets is pairwise 2-separated in the connectivity graph, then their union remains correctable. 

Finally, we introduce the following lemma which allows one to enlarge a correctable region along correctable boundaries.
\begin{lemma}[Expansion Lemma] \label{Expansion Lemma}
    Suppose $U$ and $T \supset \partial _{+}U$ are both correctable sets. Then $T\cup U$ is correctable. 
\end{lemma}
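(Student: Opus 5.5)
The plan is to reduce the Expansion Lemma to the standard characterization of correctability in terms of logical operators, combined with the Cleaning Lemma~\ref{lem:cleaning}. Recall that a region $A$ is correctable iff every Pauli operator supported on $A$ that lies in $\mathcal{N}(\mathcal{S})$ belongs to $\langle\mathcal{S}\rangle$ (up to phase). We argue by contradiction. Suppose $T\cup U$ is not correctable, so there is a nontrivial logical Pauli $L\in\mathcal{N}(\mathcal{S})\setminus\langle\mathcal{S}\rangle$ with $\supp(L)\subseteq T\cup U$. The goal is to clean $L$ off the part of $U$ not already in $T$, so that it ends up supported inside $T$, contradicting the correctability of $T$.

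\emph{Step 1 (clean off $U$).} Since $U$ is correctable, the last clause of the Cleaning Lemma~\ref{lem:cleaning} gives a stabilizer $S\in\langle\mathcal{S}\rangle$ with $\supp(LS)\subseteq \Lambda\setminus U$. The product $LS$ is still a nontrivial logical operator. However, multiplying by $S$ may enlarge the support outside $T\cup U$, so we cannot stop here.

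\emph{Step 2 (control the support of the cleaning stabilizer).} This is the main point, and the step where care is needed. We must ensure that $S$ can be chosen as a product of generators that each touch $U$. Then $\supp(S)\subseteq U\cup\partial_+U\subseteq U\cup T$, directly from the definition of the outer boundary.

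To justify this, we use the standard proof of the Cleaning Lemma. Restrict $L$ to $U$, giving $L|_U$. The operator $L|_U$ commutes with every stabilizer generator supported entirely in $U$. Any mismatch with the other generators comes only from generators meeting both $U$ and its complement. Correctability of $U$ means the restriction map from the stabilizer subgroup generated by the generators intersecting $U$ onto Paulis on $U$ hits every element of $(\mathcal{N}(\mathcal{S})|_U)$. Concretely, the canonical argument (Bravyi--Terhal) realizes $L|_U$ as $S|_U$ for some $S$ in the group generated by the generators whose support intersects $U$. We will therefore state and use the Cleaning Lemma in this refined form, with $S\in\langle S_i : \supp(S_i)\cap U\neq\emptyset\rangle$. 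We do not yet know whether this refinement follows from the linear-algebra dimension count $\dim$ of the restricted stabilizer space versus the restricted normalizer; checking this is the main obstacle. If it fails, an alternative is to use the entropic characterization $\mathcal{S}(U|\overline{U})=-\mathcal{S}(U)$ together with strong subadditivity on the tripartition $U,\ T\setminus U,\ \text{rest}$, following the proof of the Expansion Lemma in Ref.~\cite{Bravyi_2010}. That route only needs $\partial_+U\subseteq T$ to guarantee a Markov-like decoupling.

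\emph{Step 3 (conclude).} With this choice, $\supp(LS)\subseteq (T\cup U)\setminus U\subseteq T$. So $LS$ is a nontrivial logical operator supported on the correctable set $T$, which is impossible. Hence $T\cup U$ is correctable.
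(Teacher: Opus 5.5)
The paper gives no proof of its own for this lemma; it imports it from the Bravyi--Terhal / Bravyi--Poulin--Terhal literature. Your strategy is exactly the standard stabilizer-code argument: clean a putative logical $L$ off $U$ using only generators that touch $U$, so that the result lands in $T$ and contradicts correctability of $T$. The weak spot is Step~2. You leave it open as ``the main obstacle'' and hedge with an entropic fallback you do not carry out. In fact Step~2 follows in one line from the Cleaning Lemma~\ref{lem:cleaning} as the paper states it, with no dimension count and no Markov argument.

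Take the stabilizer $S$ provided by the Cleaning Lemma, expand it in the generators as $S=\prod_i S_i^{a_i}$, and split $S=S'S''$. Here $S'$ collects the generators with $\supp(S_i)\cap U\neq\emptyset$ and $S''$ collects the rest.

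\begin{itemize}
\item Since $S''$ acts as the identity on $U$ and restriction of Pauli products is multiplicative up to phase, $LS'$ still acts trivially on $U$ up to phase.
\item Since $S'\in\langle\mathcal{S}\rangle$, the operator $LS'$ is still a nontrivial logical operator.
\item For each generator in $S'$, every qubit of $\supp(S_i)\setminus U$ shares $S_i$ with some qubit of $U$. By the paper's definition of the outer boundary, $\supp(S')\subseteq U\cup\partial_+U\subseteq U\cup T$.
\end{itemize}

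Hence $\supp(LS')\subseteq (T\cup U)\setminus U\subseteq T$, and your Step~3 finishes the proof. Your sentence that correctability ``means'' the generators meeting $U$ hit every element of $\mathcal{N}(\mathcal{S})|_U$ is not an extra hypothesis to be checked. It is just the Cleaning Lemma combined with this observation that generators disjoint from $U$ can be dropped. With that line inserted, your proof is complete.
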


\subsection{Dimensions in fractal geometry} \label{Appendix A.2}

 In this subsection, we collect various fundamental properties of the Assouad dimension and their proofs. These results provide the mathematical foundation for bridging the gap between abstract geometry and explicit quantum code constructions. We begin by first defining the Assouad dimension. We then prove two key results: its relation to the doubling condition and the sufficient condition given by Ahlfors $\beta$-regularity. Finally, we show that, for a fixed code, the Assouad dimension of its Tanner graph agrees with that of its connectivity graph.

\begin{definition}
A metric space $(X,d)$ consists of a set $X$ and a metric $d:X\times X \to \mathbb{R}_{\geq 0}$ satisfying for all $x, y, z \in X$:
    \begin{enumerate}
        \item (Positive definiteness) $d(x,y) \geq 0$, with $d(x,y)=0 \iff x=y$;
        \item (Symmetry) $d(x,y) = d(y,x)$;
        \item (Triangle inequality) $d(x,z) \leq d(x,y) + d(y,z)$.
    \end{enumerate}
\end{definition}

For the following definition and also in the rest of the paper, $N_x(R)$ is defined as a ball centered at $x$ of radius $R$ in a graph $X_\lambda$ with $x \in X_\lambda$. For a set $\mathcal{F}\subseteq X_\lambda$ , $N_\mathcal{F}(R)$ is defined as $\bigcup_{x\in \mathcal{F}}N_x(R)$.

 \begin{definition}[Assouad dimension in relative version, cf.~\cite{assouad1977espaces,fraser2020}] \label{Assouad dimension}
    The Assouad dimension of a family $\mathcal{X}$ of metric spaces $\{(X_{\lambda},d_{\lambda})\}_{\lambda \in I}$, is defined to be the infimum of $\beta$ satisfying the following \emph{covering condition}: There exists a uniform constant $C$ such that for any $0<r<R$, for all $\lambda \in I,\forall x\in X_\lambda$, a ball centered at $x$ of radius $R$ in $X_{\lambda}$ can always be covered by at most $C\cdot(\frac{R}{r})^{\beta}$ balls with radius $r$ within $X_{\lambda}$.
    
    We say that $\mathcal{X}$ has an Assouad dimension up to scale $\overline{R}$, if the covering condition is satisfied for any $0<r<R\leq \overline{R}.$
    \begin{align}
        \mathrm{dim}_{\mathrm{Assouad},\overline{R}}(\mathcal{X})=\inf_{\beta\in \mathbb{R}\cup \infty}\{\beta&\mid \exists C>0,\forall X_{\lambda}\in\mathcal{X}, 0<r<R\leq \overline{R}, x\in X_{\lambda},\exists \text{ set }\mathcal{F}\subseteq X_{\lambda},|\mathcal{F}|\leq C\cdot \left(\frac{R}{r}\right)^{\beta}\nonumber\\
        &\text{ such that } N_{x}(R)\subseteq\bigcup N_{\mathcal{F}}(r)\}.
    \end{align} 
    \begin{equation}
        \dimasd(\mathcal{X})=\sup_{\overline{R}\in \mathbb{R^{+}}} \mathrm{dim}_{\mathrm{Assouad},\overline{R}}(\mathcal{X})\in \mathbb{R}_{\geq 0}\cup\{+\infty\}.
    \end{equation}
    If the connected space $X_\lambda$ has unbounded cardinality as $\lambda$ varies over $ I$, then we always have $\dimasd(\mathcal{X})\geq 1$.
\end{definition}

The Assouad dimension is invariant under rescaling of metric, as proven below. The same argument also can be directly extended to proving that the Assouad dimension is stable under more general bi-Lipschitz maps.   
\begin{proposition} \label{factor multiplication}~\cite{fraser2020}
    Let $\mathcal{X}=\{(G_{\lambda}, d_{\lambda})\}_{\lambda\in I}$ be a family of metric spaces with $\dimasd(\mathcal{X})=\beta$. Let $\delta: I \to \mathbb{R}^+$ be a positive real-valued function, and define the rescaled family as $\widetilde{\mathcal{X}}=\{(G_{\lambda}, \delta(\lambda)d_{\lambda})\}_{\lambda\in I}$. Then, $\dimasd(\widetilde{\mathcal{X}}) = \beta$
\end{proposition}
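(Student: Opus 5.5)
The plan is to compare covering numbers directly. Write $\delta_\lambda=\delta(\lambda)$ and $\widetilde d_\lambda=\delta_\lambda d_\lambda$. The first observation is that balls transform exactly under the rescaling: for every $x\in G_\lambda$ and every $\rho>0$,
\[
\widetilde N_x(\rho)=\{y:\delta_\lambda d_\lambda(x,y)\le\rho\}=N_x(\rho/\delta_\lambda).
\]
So a ball of radius $\rho$ in $(G_\lambda,\widetilde d_\lambda)$ is the same set as a ball of radius $\rho/\delta_\lambda$ in $(G_\lambda,d_\lambda)$, with the same center.

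Next I bound $\dimasd(\widetilde{\mathcal X})$ from above. Fix any $\beta'>\beta$, and let $C$ be the uniform constant that the covering condition provides for $\mathcal X$ with exponent $\beta'$. Take $0<r<R$, an index $\lambda$, and a point $x$. Apply the covering condition for $\mathcal X$ at radii $r/\delta_\lambda<R/\delta_\lambda$. This gives a set $\mathcal F$ with
\[
|\mathcal F|\le C\bigl(\tfrac{R/\delta_\lambda}{r/\delta_\lambda}\bigr)^{\beta'}=C\bigl(\tfrac{R}{r}\bigr)^{\beta'}
\quad\text{and}\quad
N_x(R/\delta_\lambda)\subseteq N_{\mathcal F}(r/\delta_\lambda).
\]
By the identity above, this says $\widetilde N_x(R)\subseteq\widetilde N_{\mathcal F}(r)$. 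The key point is that the ratio $R/r$ is invariant, so the same constant $C$ works uniformly in $\lambda$, even though $\delta_\lambda$ varies with $\lambda$. Hence $\widetilde{\mathcal X}$ satisfies the covering condition with exponent $\beta'$. Letting $\beta'\downarrow\beta$ gives $\dimasd(\widetilde{\mathcal X})\le\beta$.

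For the reverse inequality, I apply the same argument to $\widetilde{\mathcal X}$ with the rescaling function $1/\delta$, which returns exactly $\mathcal X$. This gives $\beta\le\dimasd(\widetilde{\mathcal X})$. The case $\beta=+\infty$ follows from the same symmetric argument.

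The only subtlety, and the step I expect to need care, is the scale-restricted version $\dim_{\mathrm{Assouad},\overline R}$. Rescaling by a $\lambda$-dependent factor maps the constraint $R\le\overline R$ to $R\le\overline R/\delta_\lambda$, so the scale restrictions do not correspond directly. My plan is to work with the all-scale definition in the statement, taking the supremum over $\overline R$, where all scales $0<r<R$ are allowed and no such issue arises. For bi-Lipschitz maps with constant $L$, the analogous argument loses a factor $L^{2}$ in the radius ratio. That loss only changes the constant, to $C L^{2\beta'}$, and leaves the exponent unchanged.
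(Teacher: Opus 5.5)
Your proof is correct and follows the same idea as the paper's: a ball of radius $\rho$ for $\delta_\lambda d_\lambda$ is exactly the ball of radius $\rho/\delta_\lambda$ for $d_\lambda$, so the ratio $R/r$ and hence the uniform covering constant are unchanged. You also make explicit the reverse inequality (via rescaling by $1/\delta$), which the paper leaves implicit.

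You are right to use the uniform all-scale covering condition of the main-text definition, but it is not the same as the appendix's $\sup_{\overline R}$ of scale-restricted dimensions, whose constants may depend on $\overline R$. Under that literal definition the tori $(\mathbb{Z}/L\mathbb{Z})^2$ get value $0$, while their rescalings by $1/L$ get value $2$. So the statement genuinely needs the uniform version, and the two should not be identified as your last paragraph does.
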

 \begin{proof}
    Rescaling the metric by $\delta(\lambda)$ from $(r, R) \to (\delta(\lambda)r, \delta(\lambda)R)$, leaves the underlying covering sets invariant, so one can always find a finite $\delta(\lambda)r$ cover for any $N_x(\delta(\lambda)R)$. Because this is true for all $(r,R)$ pairs, the Assouad dimension of the rescaled family $\widetilde{\mathcal{X}}$ remains as $\beta$.
 \end{proof}
An undirected graph $G=(V,E)$ can also be viewed as a metric space, where the distance $d$ between two vertices is defined as:
 \[d(v_{i},v_{j})=|\text{the length of the shortest path between } v_{i},v_{j}|\]  The Assouad dimension of a family of undirected graphs is defined by the covering condition with respect to the balls of path distance. For a family of connectivity graphs, the following proposition~\ref{doubling condition} shows that having finite Assouad dimension means that $G(\mathcal{S})$ has bounded degree for each vertex.
\begin{proposition}\label{doubling condition}
    The family $\mathcal{X}$ has finite Assouad dimension if and only if it is \emph{doubling}, , that is there exists some  constant $C$ such that for any $R>0$, every ball of radius $2R$ in $\mathcal{X}$ can be covered by at most $C$ balls of radius $R$.
\end{proposition}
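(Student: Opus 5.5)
The plan is to prove the two directions separately, working directly from the covering condition in Definition~\ref{Assouad dimension}. All balls live in a single $X_\lambda$, and the constants must be uniform in $\lambda$, $x$ and the radii.

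\emph{Finite dimension implies doubling.} Suppose $\dimasd\mathcal{X}<\infty$, and pick any $\beta$ above the infimum for which the covering condition holds with some constant $C$. Fix $R>0$ and apply the covering condition with outer radius $2R$ and inner radius $r=R$. Then every ball $N_x(2R)$ is covered by at most $C\cdot 2^{\beta}$ balls of radius $R$. So the doubling constant $C'=\lceil C2^{\beta}\rceil$ works uniformly.

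\emph{Doubling implies finite dimension.} Assume every ball of radius $2\rho$ is covered by at most $C$ balls of radius $\rho$, for every $\rho>0$. Given $0<r<R$, let $m=\lceil\log_2(R/r)\rceil$, so that $R/2^m\le r$. Iterating the doubling condition $m$ times gives the cover. First, $N_x(R)$ is covered by $C$ balls of radius $R/2$. Each of those is covered by $C$ balls of radius $R/4$, and so on. After $m$ steps we have at most $C^m$ balls of radius $R/2^m\le r$. Enlarging each to radius $r$ still gives a cover. The count is
\[
C^m\le C^{1+\log_2(R/r)}=C\left(\tfrac{R}{r}\right)^{\log_2 C}.
\]
Hence the covering condition holds with $\beta=\log_2 C$ and constant $C$, so $\dimasd\mathcal{X}\le\log_2 C<\infty$.

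The one point needing care is the center convention. The covering balls produced by the doubling condition need not be centered at points that let us reapply doubling, but the doubling hypothesis applies to balls around arbitrary points of $X_\lambda$, so the iteration is legitimate. If instead one insists on centers in $\mathcal{F}\subseteq X_\lambda$, which is already how Definition~\ref{Assouad dimension} is phrased, nothing changes.

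For graphs there is one more minor subtlety. Radii below $1$ give singleton balls, so the doubling condition at $\rho=\tfrac12$ forces bounded degree. This step is not needed for the equivalence itself, but I would record it as the remark made in the main text. I expect no real obstacle here: the only step requiring attention is the rounding in $m$, together with the observation that the iteration never needs $r$ to be a specific dyadic fraction of $R$.
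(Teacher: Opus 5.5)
Your proof is correct and follows essentially the same route as the paper's. The forward direction applies the covering condition at the radius pair $(2R,R)$. The converse iterates the doubling condition $\lceil\log_2(R/r)\rceil$ times, obtaining at most $C^{\lceil\log_2(R/r)\rceil}\le C\,(R/r)^{\log_2 C}$ balls of radius $r$, which gives $\dimasd\mathcal{X}\le\log_2 C$.
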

\begin{proof}
    The forward direction is obvious by taking $r = \frac{R}{2}$ for a given $R$ in the covering condition of Assouad dimension. For the converse, any ball of radius $R$ can be covered by at most $C^{\lceil\log_2{(\frac{R}{r})}\rceil}$ balls of radius $r$, after iteratively applying the doubling condition. Then $C^{\lceil\log_2{(\frac{R}{r})}\rceil}< C^{1+\log_2{(\frac{R}{r})}}=C\cdot (\frac{R}{r})^{\log_2C}$ implies that  $\dimasd\mathcal{X}\leq \log_2C<+\infty.$
\end{proof}
\begin{remark}
    The Assouad dimension captures both the local and macroscopic properties of any family of metric spaces. As a direct consequence of its definition, it provides an upper bound for the other fractal dimensions: \[\mathrm{dim}_{\mathrm{Hausdorff}}\mathcal{X}\leq \mathrm{dim}_{\mathrm{Minkowski}}\mathcal{X}\leq \dimasd \mathcal{X}\]
\end{remark}
 \begin{definition}[Ahlfors $\beta$-regularity on graphs]
    Given a family of undirected graphs $\mathcal{X}=\{G_{\lambda}\}_{\lambda\in I}$, we say that $\mathcal{X}$ has Ahlfors $\beta$-regularity if for any $q\in G_\lambda$, $R\leq \mathrm{diam}(G_{\lambda})$, the volume of $N_{q}(R)\subseteq G_{\lambda}$ is uniformly bounded by
     \begin{equation}
         v_{*}\cdot R^{\beta}\leq \mathrm{Vol}(N_{q}(R))\leq \max \{v^{*}R^{\beta},1\}
     \end{equation}
 \end{definition}
 \begin{proposition}\label{Ahlfors to Assouad}
     Let $\mathcal{X}$ be a family of graphs satisfying uniform Ahlfors $\beta$-regularity, then the Assouad dimension of $\mathcal{X}$ at all scales is $\beta$.
 \end{proposition}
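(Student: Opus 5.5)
The plan is to prove matching upper and lower bounds on covering numbers by a volume-packing argument. Both directions use the uniform constants $v_*,v^*$ from the Ahlfors condition. The graph metric is integer-valued, so I will first dispose of degenerate scales. If $r<1$, balls of radius $r$ are single vertices, and $\mathrm{Vol}(N_x(R))\le \max\{v^*R^\beta,1\}$ already gives a cover by at most $\max\{v^*,1\}(R/r)^\beta$ balls. If $R>\operatorname{diam}(G_\lambda)$, then $N_x(R)=N_x(\operatorname{diam})$ and we may replace $R$ by $\operatorname{diam}(G_\lambda)$, which only decreases $(R/r)^\beta$ in the favourable direction for the upper bound. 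So for the upper bound it suffices to treat $1\le r<R\le \operatorname{diam}(G_\lambda)$.

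\textbf{Upper bound $\dimasd\mathcal X\le\beta$.} Take a maximal $r$-separated subset $\mathcal F=\{y_1,\dots,y_m\}$ of $N_x(R)$, meaning its points are pairwise at distance $>r$. By maximality, every point of $N_x(R)$ lies within $r$ of some $y_i$, so the balls $N_{y_i}(r)$ cover $N_x(R)$. The balls $N_{y_i}(r/2)$ are pairwise disjoint and contained in $N_x(R+r/2)\subseteq N_x(2R)$. Counting volumes gives
\[
m\, v_*(r/2)^\beta\le \mathrm{Vol}(N_x(2R))\le \max\{v^*(2R)^\beta,1\}.
\]
This yields $m\le C(R/r)^\beta$ with $C=4^\beta v^*/v_*$ (using $R\ge1$ so the max is attained by the first term).

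Two small technicalities arise. First, the lower Ahlfors bound is only assumed for radii $\le\operatorname{diam}$, and $r/2\le R\le\operatorname{diam}$, so it applies. Second, the upper volume bound at radius $2R$ may exceed $\operatorname{diam}$. In that case I use $\mathrm{Vol}(N_x(2R))=|G_\lambda|=\mathrm{Vol}(N_x(\operatorname{diam}))\le v^*\operatorname{diam}^\beta\le v^*(2R)^\beta$. This proves the covering condition with exponent $\beta$, uniformly in $\lambda$.

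\textbf{Lower bound $\dimasd\mathcal X\ge\beta$.} Suppose the covering condition holds with some exponent $\beta'<\beta$ and constant $C'$. For $1\le r<R\le\operatorname{diam}(G_\lambda)$, a cover of $N_x(R)$ by $m$ balls of radius $r$ gives
\[
v_*R^\beta\le \mathrm{Vol}(N_x(R))\le m\max\{v^*r^\beta,1\}\le m\,v^*r^\beta,
\]
and hence $m\ge (v_*/v^*)(R/r)^\beta$. Fix $r=1$ and choose $R\to\infty$, which is possible because the family has unbounded diameter. Then $(v_*/v^*)R^\beta\le C'R^{\beta'}$, a contradiction.

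\textbf{Main obstacle.} This lower bound relies on unbounded diameters. The Ahlfors condition with $v_*>0$ forces $|G_\lambda|\ge v_*\operatorname{diam}^\beta$, so unbounded diameter is equivalent to unbounded size. In the bounded case the family has dimension $0$ by the convention of Definition~\ref{code_dimension}, and I will state this case as excluded (or trivial). The other delicate point is keeping the constants independent of $\lambda$ across the boundary cases $r<1$ and $R$ near $\operatorname{diam}(G_\lambda)$; I expect the bookkeeping there, rather than any conceptual step, to be the only real work.
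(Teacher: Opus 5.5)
Your proof is correct and follows essentially the same route as the paper. For the upper bound, both arguments pack $N_x(R)$ maximally by balls of radius $r/2$ and compare volumes; the paper compares against $N_q(R+r/2)$ and gets the constant $(v^*/v_*)3^\beta$ where you get $4^\beta$. For the reverse inequality, both count covering balls against the lower volume bound; the paper uses radius $1/2$, i.e.\ single vertices. Your explicit treatment of the cases $r<1$ and $R$ or $2R$ exceeding the diameter is extra care rather than a different argument. So is your remark that the lower bound needs unbounded diameters, a hypothesis the paper's proof uses silently.
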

\begin{proof}
     We first show that $\dim_{\mathrm{Assouad}}\mathcal{X}\geq \beta$. A ball of radius $R$ requires at least $v_{*}R^{\beta}=\frac{v_{*}}{2^{\beta}}\cdot (\frac{R}{\frac12})^{\beta}$ balls of radius $\frac12$ in any cover. Thus, we have $\dimasd\mathcal{X}\geq \beta$. For the converse, consider the two cases. First, when $v^{*}R^{\beta}<1$, $R<1,N_q(R)$ contains only a single vertex, so the covering condition is trivially satisfied. When $ v^{*}R^{\beta}\geq1,\mathrm{Vol}(N_{q}(R))\leq v^{*}R^{\beta}$, we consider the maximal disjoint packing of $N_{q}(R)$ with balls of radius $\frac{r}{2}$, $N_{q_i}(\frac{r}{2}):$
     $\bigcup\limits_{i=1}^{k>0}N_{q_i}(\frac{r}{2})\subseteq N_{q}(R), N_{q_i}(\frac{r}{2})\cap N_{q_j}(\frac{r}{2})=\emptyset;k \text{ maximal}$.
     By maximality, enlarging the radius from $\frac{r}{2}$ to $r$ forms a cover for $N_q(R)$: $N_q{(R)}\subseteq\bigcup\limits_{i=1}^{k>0}N_{q_i}(r)$. Taking volume, we obtain
     \begin{align}k v_{*}\left(\frac{r}{2}\right)^{\beta}\leq \mathrm{Vol}\left(N_{q}(R+\frac{r}{2})\right)\leq v^{*}\left(R+\frac{r}{2}\right)^{\beta}.\end{align}
     This implies $k\leq \frac{v^{*}}{v_{*}}(\frac{3R}{r})^{\beta}$, so the covering condition is satisfied with index $\beta$. We conclude that $\dimasd\mathcal{X}= \beta$.
\end{proof}

 \begin{proposition} \label{bi-Lipschitz reduction}
     Consider a family of continuous bijections $\{f_{\lambda}\}_{\lambda\in I}$ between $\mathcal{X}=\{(X_{\lambda},d_{\lambda})\}_{\lambda\in I}$ and $\mathcal{Y}=\{(Y_{\lambda},d^{\prime}_{\lambda})\}_{\lambda\in I}$, where $f_{\lambda}:X_{\lambda}\to Y_{\lambda}$ is uniformly bi-Lipschitz with respect to the corresponding metric,
     \begin{align} K_1 \cdot d_{\lambda}(x,x^{\prime})\leq d^{\prime}_{\lambda}(f_{\lambda}(x),f_{\lambda}(x^{\prime}))\leq K_2\cdot d_{\lambda}(x,x^{\prime}),\forall \lambda\in I, \forall x,x^{\prime}\in X_{\lambda}.\end{align}
     Then, the Assouad dimension is preserved by $\{f_{\lambda}\}$, i.e. $\dimasd\mathcal{X}=\dimasd\mathcal{Y}$.
 \end{proposition}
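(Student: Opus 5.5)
By symmetry it suffices to prove $\dimasd\mathcal{Y}\leq\dimasd\mathcal{X}$. The same argument applied to the inverse family $\{f_\lambda^{-1}\}$, which is uniformly bi-Lipschitz with constants $K_2^{-1}$ and $K_1^{-1}$, then gives the reverse inequality.

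Fix any $\beta>\dimasd\mathcal{X}$ for which the covering condition in Definition~\ref{Assouad dimension} holds on $\mathcal{X}$ with constant $C$. Take $\lambda\in I$, $y\in Y_\lambda$, and $0<r<R$, and write $x=f_\lambda^{-1}(y)$. The lower Lipschitz bound gives
\[
f_\lambda^{-1}\big(N_y(R)\big)\subseteq N_x(R/K_1).
\]
I treat two cases according to the ratio of the target radii.

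In the main case $r/K_2<R/K_1$, apply the covering condition in $X_\lambda$ to $N_x(R/K_1)$ at scale $r/K_2$. This produces a set $\mathcal{F}\subseteq X_\lambda$ with
\[
|\mathcal{F}|\leq C\Big(\frac{K_2R}{K_1r}\Big)^{\beta}
\quad\text{and}\quad
N_x(R/K_1)\subseteq N_{\mathcal{F}}(r/K_2).
\]
The upper Lipschitz bound gives $f_\lambda(N_{z}(r/K_2))\subseteq N_{f_\lambda(z)}(r)$. Since $f_\lambda$ is a bijection, $N_y(R)=f_\lambda(f_\lambda^{-1}(N_y(R)))$, and therefore $N_y(R)\subseteq N_{f_\lambda(\mathcal{F})}(r)$. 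The number of $r$-balls used is at most $C(K_2/K_1)^{\beta}(R/r)^{\beta}$, and this constant does not depend on $\lambda$, $y$, $r$ or $R$.

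In the degenerate case $r/K_2\geq R/K_1$, we have $R/r\leq K_2/K_1$. The ball $N_x(R/K_1)$ lies inside the single ball $N_x(r/K_2)$, so its image is covered by one ball $N_y(r)$. One ball is at most $(R/r)^{\beta}\cdot\max\{1,(K_2/K_1)^{\beta}\}$ because $R/r>1$.

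Combining the two cases, the covering condition holds on $\mathcal{Y}$ with exponent $\beta$ and the uniform constant $C'=\max\{C,1\}\,(K_2/K_1)^{\beta}$. Hence $\dimasd\mathcal{Y}\leq\beta$, and letting $\beta$ decrease to $\dimasd\mathcal{X}$ gives $\dimasd\mathcal{Y}\leq\dimasd\mathcal{X}$.

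The "up to scale $\overline{R}$" versions transfer in the same way. A scale bound $\overline{R}$ on $\mathcal{Y}$ becomes $\overline{R}/K_1$ on $\mathcal{X}$, and taking the supremum over $\overline{R}$ gives the all-scale statement. If $\dimasd\mathcal{X}=+\infty$, the inequality is trivial.

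The argument has no real obstacle. It uses only the bijectivity of $f_\lambda$ and the constants $K_1,K_2$; continuity is never needed. The only point requiring care is checking that the radius ratios are admissible and that the constant stays uniform in $\lambda$. This is exactly the rescaling argument of Proposition~\ref{factor multiplication} with unequal distortion factors.
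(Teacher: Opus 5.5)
Your proof is correct and follows the route the paper intends. The paper gives no separate proof; it only remarks that the rescaling argument of Proposition~\ref{factor multiplication} extends directly to bi-Lipschitz maps. Transporting covers through $f_\lambda$ at radii $(R/K_1,\,r/K_2)$ is exactly that extension, and your handling of the degenerate case $r/K_2\geq R/K_1$ and of the up-to-scale definition (scale $\overline{R}$ on $\mathcal{Y}$ versus $\overline{R}/K_1$ on $\mathcal{X}$) supplies details the paper leaves implicit.
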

 \begin{corollary} \label{Tanner/connectivity unified}
     For any family of stabilizer codes, the Assouad dimension of their Tanner graphs equals the Assouad dimension of their connectivity graphs.
 \end{corollary}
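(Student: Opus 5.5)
The plan is to compare the two graph metrics directly, vertex by vertex, and then transfer covering bounds between them; Proposition~\ref{bi-Lipschitz reduction} cannot be applied verbatim, because the two vertex sets differ. Let $G=G(\mathcal S)$ be the connectivity graph and $G_T=(V_0,V_1,E)$ the Tanner graph, with $V_1$ the qubits and $V_0$ the checks.

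\textbf{Step 1 (metric comparison).} Two qubits share a generator exactly when they are at Tanner distance $2$. Hence a shortest path in $G_T$ between qubits alternates qubit--check--qubit, and
\[
d_{G_T}(u,v)=2\,d_G(u,v)\quad\text{for all } u,v\in V_1 .
\]
Every check has nonempty support, so every check vertex lies within distance $1$ of $V_1$. Thus $V_1$ with the metric inherited from $G_T$ is isometric to $(V,2d_G)$, and it is a $1$-net of $G_T$. By Proposition~\ref{factor multiplication}, $(V,2d_G)$ has the same Assouad dimension as $G$. It therefore suffices to show that a graph family and a uniformly $1$-dense subset family have the same Assouad dimension.

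\textbf{Step 2 (degree reduction and the infinite case).} By Proposition~\ref{doubling condition} applied at $R=\tfrac12$, finite Assouad dimension of either family forces uniformly bounded degree. I first show the two graphs have bounded degree simultaneously.
\begin{itemize}
\item If $G_T$ has degree at most $\Delta$, then each check has weight at most $\Delta$ and each qubit lies in at most $\Delta$ checks, so $\deg_G\le \Delta^2$.
\item Conversely, suppose $\deg_G\le\Delta$. Then every check has weight at most $\Delta+1$. The checks containing a given qubit $q$ are supported in $N_q(1)$, which has at most $\Delta+1$ qubits. Since the generators are independent Paulis, there are at most $4^{\Delta+1}$ of them, so the Tanner degree is bounded.
\end{itemize}
Hence if either dimension is $+\infty$ because degrees are unbounded, so is the other. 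The case where degrees are bounded but the dimension is still infinite is handled by the quantitative comparison in Step 3, which gives inequalities in both directions.

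\textbf{Step 3 (transfer of coverings).} Assume bounded degree $\Delta$ for both graphs. For $r<1$ a ball is a single vertex, so the covering condition at such scales reduces to the case $r=1$ at the cost of a constant factor. We may therefore take $1\le r<R$.
\begin{itemize}
\item \emph{From $G_T$ to $V_1$.} Take a ball $N^{V_1}_q(R)$ in $V_1$ and cover the corresponding Tanner ball by $C(R/r)^\beta$ Tanner balls of radius $r$. Each such ball $N^{G_T}_c(r)$ meeting $V_1$ is contained in $N^{G_T}_{q'}(2r)$ for some qubit $q'$ in it. Refine these radius-$2r$ balls into $C'$ balls of radius $r$ using doubling of $G_T$. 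Restricting to $V_1$ then gives $C''(R/r)^\beta$ balls of radius $r$ in $V_1$.
\item \emph{From $V_1$ to $G_T$.} Take a Tanner ball $N^{G_T}_x(R)$. It lies within distance $1$ of $N^{V_1}_{q}(R+1)$, where $q$ is a qubit neighbour of $x$ if $x$ is a check. Cover $N^{V_1}_{q}(R+1)$ by $C(2R/r)^\beta$ balls of radius $r/2$ in $V_1$, and enlarge each to a Tanner ball of radius $r/2+1\le 3r/2$. Refine once more with the doubling constant.
\end{itemize}
Both directions keep the exponent $\beta$, which gives equality of the Assouad dimensions.

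I expect the main obstacle to be the small- and intermediate-scale bookkeeping in Step 3. The net comparison only holds up to the additive constant $1$, and Assouad dimension is not a coarse invariant in general. I would absorb this using the uniform doubling constant obtained in Step 2, which is exactly why the bounded-degree equivalence has to be established first.
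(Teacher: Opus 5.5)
Your route is the paper's: identify $V_1\subseteq G_T$ with $(V,2d_G)$, use monotonicity of Assouad dimension under passing to that subspace, transfer bounded degree via the doubling condition at $R=\tfrac12$, and argue that adjoining check vertices does not change the dimension. Your Step~3 turns that last claim, which the paper only asserts, into an explicit covering argument. Your bound of $4^{\Delta+1}$ generators per qubit is also more careful than the paper's count of $2^{\Delta}-1$ possible supports, since distinct generators (e.g.\ an $X$- and a $Z$-check) may share a support.

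One step of Step~3 is not justified as written. In the direction from $V_1$ to $G_T$ you enlarge to Tanner balls of radius $3r/2$ and then ``refine once more with the doubling constant'', which you attribute to Step~2. Step~2 only gives a uniform degree bound. Bounded degree does not imply that Tanner balls of radius $3r/2$ are covered by boundedly many Tanner balls of radius $r$ at all scales: finite balls in a $3$-regular tree form a bounded-degree family that is not uniformly doubling. Doubling of $G_T$ is essentially what this direction is meant to establish, so the argument is circular at that point.

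The fix is to choose radii so that no refinement is needed. For $r\ge 2$, cover $N^{V_1}_q(R+1)$ by $V_1$-balls of radius $r-1\ge r/2$; there are at most $C\bigl(\tfrac{R+1}{r-1}\bigr)^{\beta}\le 4^{\beta}C(R/r)^{\beta}$ of them. Their $1$-neighbourhoods in $G_T$ lie in Tanner balls of radius exactly $r$ centred at qubits, and these cover $N^{G_T}_x(R)$. The range $r<2$ then follows from the case $r=2$ (or directly when $R\le 2$), at the cost of a factor $\sup_y|N^{G_T}_y(2)|$, which the degree bound of Step~2 does control.

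The other direction has a similar, harmless slip. Refining the radius-$2r$ balls by doubling reintroduces centres at check vertices, so their traces on $V_1$ are not $V_1$-balls. Instead, cover by Tanner balls of radius $r/2$ and re-centre each one meeting $V_1$ at a qubit, which gives $V_1$-balls of radius $r$.
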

 \begin{proof}
     Denote the family of Tanner graphs by $\mathcal{Y}$ and the family of connectivity graphs by $\mathcal{X}$. Consider the natural bijection between the vertices corresponding to physical qubits on the Tanner graph and that on the connectivity graph. Their distance differs by a factor of 2. By the bi-Lipschitz mapping in proposition~\ref{bi-Lipschitz reduction} to the subspace of $\mathcal{Y}$, $\mathcal{X}$ has smaller Assouad dimension than $\mathcal{Y}$: $\beta_{\mathcal{X}}\leq \beta_{\mathcal{Y}}$. When $\beta_{\mathcal{Y}}<+\infty$, we take $R=\frac12$. The doubling condition implies that the vertices in $\mathcal{Y}$ have bounded degree, i.e., the stabilizer code is qLDPC. The number of associated stabilizer generators (check nodes) is at most a constant multiple of the number of physical qubits, so the Assouad dimension remains unchanged by adding check nodes. When $\beta_{\mathcal{Y}}=+\infty$, we prove that $\beta_{\mathcal{X}}=+\infty$ by contradiction. Assume $\beta_{\mathcal{X}}<+\infty$, the vertices in $\mathcal{X}$ have bounded degree. If there exists a sequence of check nodes in $\mathcal{Y}$ that has unbounded degree, then the physical qubits associated with this sequence have unbounded degree in $\mathcal{X}$ as well (See definition~\ref{Connectivity graph}). Similarly, for any $x\in \mathcal{X}$, the associated check nodes (stabilizer generators) in $\mathcal{Y}$ are finite, because $x$ has bounded degree $\Delta$. More concretely, any adjacent check node (stabilizer generator) can only connect to the neighbors of $x$ in $\mathcal{X}$, so there are at most $2^{\Delta}-1$ possible connection configurations, and the number of associated check nodes is bounded. In other words, the original code family is a qLDPC code family. Again, adding only a constant number  of ancilla qubits to each physical qubit does not change the Assouad dimension, thus $\beta_{\mathcal{X}}=\beta_{\mathcal{Y}}=+\infty$, which contradicts our assumption $\beta_{\mathcal{X}}<+\infty$. In conclusion, we have $\beta_{\text{Tanner graph}}=\beta_{\text{connectivity graph}}$.
 \end{proof}

\subsection{Riemannian manifolds and fractal dimensions} \label{Appendix A.3}

In this subsection, we explain how Assouad dimension applies to compact Riemannian manifolds. The Riemannian metric induces a geodesic distance, so covering numbers and Assouad dimension can be defined using geodesic balls. We show that their Assouad dimensions agree with the manifolds' dimensions.

Let $(\mathcal{M}^{n},g)$ be a compact $n$-dimensional Riemannian manifold, where the Riemannian metric $g|_{p}: T_{p}\mathcal{M} \times T_{p}\mathcal{M} \to \mathbb{R}$ defines a global inner product on the tangent bundle. The distance between any two points $p, q \in \mathcal{M}$ is defined by the infimum of the lengths of all admissible curves connecting them:
\begin{equation}
    d_{g}(p,q) = \inf_{\gamma\in \mathcal{A}_{p,q}} \int_{0}^{1} \sqrt{g_{\gamma(t)}(\gamma'(t),\gamma'(t))} \, dt,
\end{equation}
where the family of admissible curves is given by $\mathcal{A}_{p,q} = \{\gamma \mid \gamma \text{ is a piecewise smooth curve with } \gamma'(t) \neq 0, \gamma(0)=p, \gamma(1)=q\}$. In our framework, this shortest path is exactly realized by the minimal geodesic with respect to the Levi-Civita connection. A closed geodesic ball of radius $r$ centered at $p \in \mathcal{M}$ is denoted by $N_{p}(r) = \{q \in \mathcal{M} \mid d_{g}(p,q) \leq r\}$.

We introduce the Ricci curvature at $p \in \mathcal{M}$:
\begin{align}
\begin{split}
    \mathrm{Ric}_{p}: T_{p}\mathcal{M} \to \mathbb{R}, \quad X \longmapsto \mathrm{Tr}_{Z\in T_p{\mathcal{M}}} (Z \to R(X,Z)X), \\
    \text{where } R(X,Y)Z = \nabla_{X}\nabla_{Y}Z - \nabla_{Y}\nabla_{X}Z - \nabla_{[X,Y]}Z.
\end{split}
\end{align}
The associated volume form, strictly compatible with $g$, is locally expressed in a chart $(U, x)$ as $d\mathrm{Vol}|_{U} = \sqrt{\det(g_{ij})} \, dx_1 \cdots dx_n$. 

We first show that any compact manifold satisfies Ahlfors regularity.

\begin{lemma}[Ahlfors regularity] \label{Volume on manifold}
    For a compact Riemannian manifold $(\mathcal{M},g)$, there exist two global constants $c_1, c_2 > 0$ such that for any geodesic ball $N_{p}(r)$ of radius $r\leq\mathrm{diam}(\mathcal{M})$ centered at an arbitrary point $p \in \mathcal{M}$, the volume satisfies:
    \begin{equation}
        c_1 \cdot r^{n} \leq \mathrm{Vol}(N_{p}(r)) \leq c_2 \cdot r^n.
    \end{equation}
\end{lemma}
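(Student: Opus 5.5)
The plan is to split into small and large radii and to use compactness twice: once for a uniform local comparison with Euclidean balls, and once for a uniform positive lower bound on volumes at macroscopic scales.

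\emph{Small radii.} Since $\mathcal{M}$ is compact, its injectivity radius $i_0=\mathrm{inj}(\mathcal{M})$ is positive. Its sectional curvature is bounded, $|K|\leq \kappa$. For $r\leq r_0:=\min\{i_0/2,\ \pi/(4\sqrt{\kappa})\}$, the exponential map $\exp_p$ is a diffeomorphism from the Euclidean ball $B(0,r)\subset T_p\mathcal{M}$ onto $N_p(r)$. We then write
\begin{equation}
\mathrm{Vol}(N_p(r))=\int_{B(0,r)} J_p(v)\,dv ,
\end{equation}
where $J_p$ is the Jacobian of $\exp_p$ in normal coordinates. The Rauch comparison theorem, or equivalently Bishop--Gromov for the upper bound and Günther's inequality for the lower bound, sandwiches $J_p$ between the Jacobians of the model spaces of constant curvature $\kappa$ and $-\kappa$. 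On the range $|v|\leq r_0$ these model Jacobians are bounded above and below by positive constants. Hence there are constants $a,b>0$, uniform in $p$, with
\begin{equation}
a\,\omega_n r^n\leq \mathrm{Vol}(N_p(r))\leq b\,\omega_n r^n ,
\end{equation}
where $\omega_n$ is the volume of the Euclidean unit ball. An alternative route avoids comparison theory: cover $\mathcal{M}$ by finitely many charts in which $\lambda^{-1}\delta_{ij}\leq g_{ij}\leq\lambda\delta_{ij}$. Then metric balls of radius $r$ are sandwiched between Euclidean balls of radii $r/\sqrt{\lambda}$ and $\sqrt{\lambda}\,r$ (using a Lebesgue number for small $r$), and the volume form is comparable to Lebesgue measure.

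\emph{Large radii.} For $r_0\leq r\leq \mathrm{diam}(\mathcal{M})$, monotonicity gives the two bounds directly. For the upper bound,
\begin{equation}
\mathrm{Vol}(N_p(r))\leq \mathrm{Vol}(\mathcal{M})\leq \frac{\mathrm{Vol}(\mathcal{M})}{r_0^n}\,r^n .
\end{equation}
For the lower bound,
\begin{equation}
\mathrm{Vol}(N_p(r))\geq \mathrm{Vol}(N_p(r_0))\geq a\,\omega_n r_0^n\geq \frac{a\,\omega_n r_0^n}{\mathrm{diam}(\mathcal{M})^n}\,r^n .
\end{equation}
Here $\mathrm{diam}(\mathcal{M})<\infty$ by compactness. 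Taking $c_1$ to be the minimum and $c_2$ the maximum of the constants from the two regimes gives the lemma.

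\emph{Main obstacle.} The key step is the uniformity of the small-scale comparison, namely that the constants do not depend on $p$. This uniformity is exactly what compactness buys: a positive injectivity radius, bounded curvature, or equivalently a finite atlas with uniformly controlled metric coefficients. I would cite the standard comparison theorems rather than reprove them.
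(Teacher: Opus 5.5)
Your proposal is correct and follows essentially the same route as the paper. Both arguments use a positive injectivity radius together with Bishop--G\"unther-type comparison against constant-curvature model spaces for $r$ below a fixed $r_0$, and then the same monotonicity, $\mathrm{Vol}(\mathcal{M})$, and $\mathrm{diam}(\mathcal{M})<\infty$ argument for $r_0\le r\le\mathrm{diam}(\mathcal{M})$. The only cosmetic difference is that you bound the model Jacobians directly on $|v|\le r_0$, whereas the paper fixes $r_0$ through the small-$r$ expansion of space-form ball volumes.
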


While the local bi-Lipschitz equivalence between $\mathcal{M}$ and $\mathbb{R}^n$ intuitively suggests $r^n$ scaling for sufficiently small radius, estimating the volume of geodesic balls is in general nontrivial for non-compact manifolds. To proceed with the formal proof, we make use of the following Bishop--Günther volume comparison theorem.

\begin{theorem}[Bishop–Günther Comparison Theorem~\cite{Bishop1963,gunther1960einige}] \label{B-G Theorem}
    Let $(\mathcal{M}^n,g)$ be a Riemannian manifold with Ricci curvature bounded by $\mathrm{Ric}(\mathcal{M})\geq(n-1)ag$, and sectional curvature $
\mathrm{Sec }_{\mathcal{M}}\leq b$ for some constants $a,b \in \mathbb{R}$. Let $M_{k}^n$ denote the complete, simply connected $n$-dimensional space form with constant sectional curvature $k$. Then, the volume ratio function
    \begin{equation}
        \phi(r) = \frac{\mathrm{Vol}(N_{p}(\mathcal{M},r))}{\mathrm{Vol}(N_{p'}(M^{n}_a,r))} \quad (p\in \mathcal{M}, p'\in M^{n}_{a})
    \end{equation}
    is monotonically non-increasing on $(0,\mathrm{Inj}(\mathcal{M}))$, where $\mathrm{Inj}(\mathcal{M})$ is the injectivity radius. Conversely, the function
    \begin{equation}
        \varphi(r) = \frac{\mathrm{Vol}(N_{p}(\mathcal{M},r))}{\mathrm{Vol}(N_{p'}(M^{n}_{b},r))} \quad (p\in \mathcal{M}, p'\in M^{n}_{b})
    \end{equation}
    is monotonically non-decreasing on $(0,\mathrm{Inj}(\mathcal{M}))$.
\end{theorem}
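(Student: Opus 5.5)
We work in geodesic polar coordinates about $p$ and reduce both monotonicity claims to a pointwise comparison of volume densities along radial geodesics, which is then integrated. For $0<r<\mathrm{Inj}(\mathcal{M})$ the exponential map $\exp_p$ is a diffeomorphism from the open ball of radius $\mathrm{Inj}(\mathcal{M})$ in $T_p\mathcal{M}$ onto its image. So we can write
\begin{equation}
\mathrm{Vol}(N_p(\mathcal{M},r))=\int_{\mathbb{S}^{n-1}}\int_0^r J(t,\theta)\,dt\,d\theta ,
\end{equation}
where $J(t,\theta)$ is the Jacobian of $\exp_p$ at $t\theta$ times $t^{n-1}$. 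Concretely, $J(t,\theta)=\det(Y_1,\dots,Y_{n-1})(t)$, where the $Y_i$ are the normal Jacobi fields along $\gamma_\theta(t)=\exp_p(t\theta)$ with $Y_i(0)=0$ and $Y_i'(0)=e_i$. In the model space $M^n_k$ the same formula holds with $J_k(t)=\mathrm{sn}_k(t)^{n-1}$, which is independent of $\theta$.

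\textbf{Step 1 (Bishop side).} Set $m(t,\theta)=\partial_t\log J(t,\theta)$, the mean curvature of the geodesic sphere. Along $\gamma_\theta$ it satisfies the Riccati inequality obtained by tracing the Riccati equation for the shape operator and applying Cauchy--Schwarz:
\begin{equation}
m'+\tfrac{m^2}{n-1}+\mathrm{Ric}(\dot\gamma,\dot\gamma)\le 0 .
\end{equation}
With $\mathrm{Ric}\ge (n-1)a$ and the model equality $m_a'+\frac{m_a^2}{n-1}+(n-1)a=0$, where $m_a=(n-1)\mathrm{sn}_a'/\mathrm{sn}_a$, a standard ODE comparison using $m\sim m_a\sim (n-1)/t$ as $t\to0$ gives $m\le m_a$. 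Hence $t\mapsto J(t,\theta)/\mathrm{sn}_a(t)^{n-1}$ is nonincreasing.

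\textbf{Step 2 (Günther side).} Under $\mathrm{Sec}\le b$ we use the Rauch comparison theorem, or equivalently the index-form comparison for the matrix Riccati equation. For $t$ below the first conjugate time, which is $\ge\pi/\sqrt b$ when $b>0$ and is at least $\mathrm{Inj}(\mathcal{M})$ here, this gives a lower bound for the full shape operator: $S(t)\ge (\mathrm{sn}_b'/\mathrm{sn}_b)\,\mathrm{Id}$. Taking the trace gives $m\ge m_b$, so $J/\mathrm{sn}_b^{n-1}$ is nondecreasing. This step needs matrix-level control rather than only the trace, so I expect it to be the main technical point. I would invoke Rauch's theorem directly rather than reprove it, noting that within the injectivity radius no conjugate points occur and the comparison applies.

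\textbf{Step 3 (integration).} We use the elementary lemma: if $f,g>0$ on $(0,R)$ and $f/g$ is nonincreasing (resp. nondecreasing), then $\int_0^r f\big/\int_0^r g$ is nonincreasing (resp. nondecreasing). This follows from differentiating the ratio and using $f(r)\int_0^r g\le g(r)\int_0^r f$. First apply it for each fixed $\theta$, with $f=J(\cdot,\theta)$ and $g=\mathrm{sn}_k^{n-1}$. Then integrate over $\theta\in\mathbb{S}^{n-1}$: since $\int_0^r f(t,\theta)\,dt\big/\int_0^r g$ is monotone in $r$ for every $\theta$, so is its integral over $\theta$. The denominator $\omega_{n-1}\int_0^r\mathrm{sn}_k^{n-1}$ is exactly $\mathrm{Vol}(N_{p'}(M^n_k,r))$. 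Taking $k=a$ gives the claim for $\phi$, and taking $k=b$ gives the claim for $\varphi$.
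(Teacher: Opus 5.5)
The paper gives no proof of this statement. It is quoted from Bishop and G\"unther and used as a black box in the proof of Lemma~\ref{Volume on manifold}, where only the sandwich $\mathrm{Vol}(N_{p'}(M^n_b,r))\le\mathrm{Vol}(N_p(\mathcal M,r))\le\mathrm{Vol}(N_{p'}(M^n_a,r))$ below the injectivity radius is extracted. Your reconstruction is the standard proof, and it is essentially correct. It also makes explicit two things the citation hides. First, the hypotheses are mixed because the traced Riccati inequality only gives an upper bound on $m$; the lower bound genuinely needs the matrix-level (sectional) comparison. Second, your Steps 1--2, together with $J/\mathrm{sn}_k^{n-1}\to 1$ as $t\to 0$, already give the pointwise bounds $\mathrm{sn}_b^{n-1}\le J\le \mathrm{sn}_a^{n-1}$. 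That is all the paper actually uses; Step 3 is needed only for the monotonicity of the ratios themselves.

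\textbf{The case $b>0$ needs repair.} The bound $S(t)\ge(\mathrm{sn}_b'/\mathrm{sn}_b)\,\mathrm{Id}$ does not hold for all $t$ below the first conjugate time, only for $t<\pi/\sqrt b$. As $t\downarrow\pi/\sqrt b$, the model quotient $\sqrt b\cot(\sqrt b\,t)$ tends to $+\infty$ while $S$ stays bounded. Two further things fail past that point:
\begin{itemize}
\item $\mathrm{sn}_b$ changes sign, so the positivity hypothesis in Step 3 fails;
\item $\omega_{n-1}\int_0^r\mathrm{sn}_b^{n-1}$ is no longer the model ball volume.
\end{itemize}
This matters because $\mathrm{Inj}(\mathcal M)$ can exceed $\pi/\sqrt b$ (for example, a large flat torus with $b=1$). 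The fix is to run Steps 2--3 only for $r<\min\{\mathrm{Inj}(\mathcal M),\pi/\sqrt b\}$ and handle $\pi/\sqrt b\le r<\mathrm{Inj}(\mathcal M)$ separately. In that range $N_{p'}(M^n_b,r)$ is the whole round sphere, so the denominator is constant while the numerator is nondecreasing. Continuity of $\varphi$ at $\pi/\sqrt b$ then glues the two ranges.

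\textbf{The case $a>0$ on the Bishop side.} No such issue arises, but you should say why. Step 1 gives $J\le\mathrm{sn}_a^{n-1}$, which would force $J$ to vanish at $\pi/\sqrt a$, whereas $J>0$ below the injectivity radius. Hence $\mathrm{Inj}(\mathcal M)\le\pi/\sqrt a$, and $\mathrm{sn}_a>0$ on the whole range used in Step 3.
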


\begin{proof}[Proof of Lemma~\ref{Volume on manifold}]
    Since $\mathcal{M}$ is compact, its injectivity radius is strictly positive, $\widetilde{r_0} \equiv \mathrm{Inj}(\mathcal{M}) > 0$. Applying Theorem~\ref{B-G Theorem}, and noting that $\phi(r), \varphi(r) \to 1$ as $r \to 0$, we obtain the bounds $\mathrm{Vol}(N_{p'}(M^{n}_{b},r)) \leq \mathrm{Vol}(N_{p}(\mathcal{M},r)) \leq \mathrm{Vol}(N_{p'}(M^{n}_{a},r))$ for all $0 < r < \widetilde{r_0}$. 
    
    According to the standard volume expansion for space forms~\cite{gray1974volume}, 
    \begin{equation}
        \mathrm{Vol}(N_p(M_{k}^n,r)) = \frac{\alpha_n \cdot r^n}{n} \left(1 - \frac{k}{6(n+2)}r^2 + o(r^2)\right), \quad \alpha_{n} = \frac{2\pi^{n/2}}{\Gamma(n/2)}.
    \end{equation}
    There exists a small scale $\epsilon > 0$ such that for $r < \epsilon$, the higher-order residual term is strictly bounded by $1/6$. By defining a critical radius $r_0 = \min\{\epsilon, \widetilde{r_0}, \sqrt{\frac{3(n+2)}{|a|}}, \sqrt{\frac{3(n+2)}{|b|}}\}$, we ensure that for all $r < r_0$:
    \begin{equation}
        \frac{\alpha_n}{2n}r^n < \mathrm{Vol}(N_p(M_{k}^n,r)) < \frac{5\alpha_n}{3n}r^n.
    \end{equation}
    For the regime $r \geq r_0$, the bounded diameter $d$ and total volume $\mathrm{Vol}(\mathcal{M})$ of the manifold allow us to control the growth:
    \[ 
        \left(\frac{\alpha_n r_0^n}{2n d^n}\right) r^n \leq \mathrm{Vol}(N_p(M_{k}^n,r_0)) \leq \mathrm{Vol}(\mathcal{M}) = \frac{\mathrm{Vol}(\mathcal{M})}{r_0^n}r_0^n \leq \frac{\mathrm{Vol}(\mathcal{M})}{r_0^n}r^n.
    \]
    Combining both regimes, the lemma holds globally by choosing the constants $c_1 = \frac{\alpha_n r_0^n}{2n d^n}$ and $c_2 = \max\left\{\frac{5\alpha_n}{3n}, \frac{\mathrm{Vol}(\mathcal{M})}{r_0^n}\right\}$. (In the following context, these bounding constants are denoted as $v_{n*}$ and $v_n^*$, respectively).
\end{proof}

\begin{remark}
    We note that establishing such bounds for non-compact complete manifolds with an appropriate metric can be highly nontrivial. While the Bishop--Gromov theorem provides precise upper bounds when $\mathrm{Ric}(\mathcal{M}) \geq 0$, ensuring a linear lower bound requires more delicate estimates, as demonstrated by Yau's result~\cite{Completemanifold73}.
\end{remark} 

Using the volume bounds in Lemma~\ref{Volume on manifold}, we now determine the Assouad dimension of $\mathcal{M}$.

\begin{definition}
    The Assouad dimension of a metric space $(\mathcal{M},g)$, denoted $\dim_{\mathrm{A}}\mathcal{M}$, is the infimum of all $\beta \in \mathbb{R} \cup \{\infty\}$ satisfying the covering condition for geodesic balls at all scales on $\mathcal{M}$.
\end{definition}

\begin{theorem} \label{Dimension unification}
    For any compact Riemannian manifold $\mathcal{M}$, its topological dimension equals its Assouad dimension at all scales, i.e., $\dim_{\mathrm{A}}\mathcal{M} = n$.
\end{theorem}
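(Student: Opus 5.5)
The plan is to reduce everything to Lemma~\ref{Volume on manifold} combined with the argument of Proposition~\ref{Ahlfors to Assouad}, adapted from graphs to the continuous metric space $(\mathcal{M},d_g)$. By Lemma~\ref{Volume on manifold}, there are constants $0<c_1\le c_2$ such that $c_1 r^n\le \mathrm{Vol}(N_p(r))\le c_2 r^n$ for all $p\in\mathcal{M}$ and $0<r\le \mathrm{diam}(\mathcal{M})$. For radii $r>\mathrm{diam}(\mathcal{M})$ the ball is all of $\mathcal{M}$, so the upper bound persists with the same constant and the lower bound is never needed there. We will prove the two inequalities $\dim_{\mathrm{A}}\mathcal{M}\le n$ and $\dim_{\mathrm{A}}\mathcal{M}\ge n$ separately.

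\emph{Upper bound.} Fix $p\in\mathcal{M}$ and $0<r<R$. If $r\ge \mathrm{diam}(\mathcal{M})$, a single ball suffices. Otherwise take a maximal family of points $q_1,\dots,q_k\in N_p(R)$ with pairwise distances $>r$. Compactness guarantees that such a family is finite, and the volume bound below gives an explicit bound on $k$. Maximality implies $N_p(R)\subseteq\bigcup_i N_{q_i}(r)$. The balls $N_{q_i}(r/2)$ are pairwise disjoint and contained in $N_p(R+r/2)\subseteq N_p(3R/2)$, so
\[
k\,c_1 (r/2)^n\le \mathrm{Vol}\bigl(N_p(3R/2)\bigr)\le c_2(3R/2)^n .
\]
This gives $k\le (c_2/c_1)3^n(R/r)^n$. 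Since $r/2<\mathrm{diam}(\mathcal{M})$, the lower Ahlfors bound applies to each small ball. This is exactly the covering condition with exponent $n$.

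\emph{Lower bound.} Suppose that for some $\beta<n$ and constant $C$, every ball $N_p(R)$ can be covered by $C(R/r)^\beta$ balls of radius $r$. Take $R=\mathrm{diam}(\mathcal{M})/2$ (or any fixed radius below the diameter) and let $r\to0$. Comparing volumes gives
\[
c_1R^n\le \mathrm{Vol}(N_p(R))\le C(R/r)^\beta c_2 r^n ,
\]
that is, $c_1R^{n-\beta}\le Cc_2 r^{n-\beta}$, which fails as $r\to0$. Hence $\dim_{\mathrm{A}}\mathcal{M}\ge n$. The phrase ``at all scales'' is covered because both estimates are uniform in $0<r<R$. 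Finally, the topological dimension of a smooth $n$-manifold is $n$, since it is locally homeomorphic to $\mathbb{R}^n$.

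The main obstacle is not this argument itself but the validity of the uniform two-sided volume bound for every radius up to the diameter. That bound is what Lemma~\ref{Volume on manifold} supplies, and we simply invoke it. The only remaining care is the bookkeeping of radii exceeding $\mathrm{diam}(\mathcal{M})$, where the lower bound is unavailable. We handle this by noting that such radii are only ever used in the upper bound, and that any covering problem with $r\ge\mathrm{diam}(\mathcal{M})$ is trivial.
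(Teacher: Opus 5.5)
Your proof is correct. The upper bound follows the paper's argument: a maximal packing of $N_p(R)$ by disjoint $r/2$-balls, whose $r$-enlargements cover, combined with the two-sided bounds of Lemma~\ref{Volume on manifold}. This gives the same constant $c_2 3^n/c_1$, and you are more careful than the paper about radii beyond $\mathrm{diam}(\mathcal{M})$, where that lemma is not stated.

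The lower bound is where you differ. The paper simply cites the fact that $\mathcal{M}$ is locally bi-Lipschitz to $\mathbb{R}^n$. This implicitly uses three things: a Euclidean ball has Assouad dimension $n$, Assouad dimension is monotone under passing to subsets, and it is bi-Lipschitz invariant. You instead fix $R$ below the diameter and let $r\to 0$, deriving a volume-counting contradiction from the same lemma, mirroring the lower-bound half of Proposition~\ref{Ahlfors to Assouad}.

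Your route makes the whole theorem a consequence of Ahlfors $n$-regularity alone, so it applies unchanged to any Ahlfors $n$-regular space. The paper's route needs no volume lower bound for that direction, but it relies on the chart structure and on unstated facts about Euclidean space.
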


\begin{proof}
    Because $\mathcal{M}$ is locally bi-Lipschitz equivalent to $\mathbb{R}^{n}$, the standard metric definition immediately implies $\dim_{\mathrm{A}}\mathcal{M} \geq n$.
    
    To prove the upper bound, consider a target ball $N_{p}(R) \subset \mathcal{M}$. We construct a maximal disjoint packing of this region using smaller geodesic balls of radius $r/2$ centered within $N_p(R)$. Specifically, we select a maximal finite sequence $\{x_i\}_{i=1}^k$ in $N_p(R)$ such that $N_{x_i}(r/2) \cap N_{x_j}(r/2) = \emptyset$ for $i \neq j$, and $\bigsqcup_{i=1}^{k} N_{x_i}(r/2) \subset N_p(R + r/2)$.
    
    By the maximality of this packing, the enlarged balls of radius $r$ form a cover: $N_p(R) \subset \bigcup_{i=1}^{k} N_{x_i}(r)$. If this is not a cover, then there will be a point $q \in N_p(R)$ outside the union, allowing us to add $N_q(r/2)$ to our collection without intersecting the existing balls, contradicting the maximality assumption.
    
    Since the balls of radius $r/2$ are disjoint, their volumes add and is bounded by the volume of $N_p(R + r/2)$. Invoking the Ahlfors regularity from Lemma~\ref{Volume on manifold}, we have:
    \begin{equation}
        k \cdot c_1 \left(\frac{r}{2}\right)^n \leq \mathrm{Vol}\left(\bigsqcup_{i=1}^{k} N_{x_i}\left(\frac{r}{2}\right)\right) \leq \mathrm{Vol}\left(N_p\left(R + \frac{r}{2}\right)\right) \leq c_2 \left(R + \frac{r}{2}\right)^n.
    \end{equation}
    Rearranging this inequality yields an upper bound on the covering number $k$:
    \begin{equation}
        k \leq \frac{c_2 2^n}{c_1} \left(\frac{R}{r} + \frac{1}{2}\right)^n < \frac{c_2 2^n}{c_1} \left(\frac{3R}{2r}\right)^n = \frac{c_2 3^n}{c_1} \left(\frac{R}{r}\right)^n.
    \end{equation}
    Since $N_p(R)$ can always be covered by $C(R/r)^n$ balls of radius $r$ (with $C = c_2 3^n / c_1$), the  Assouad dimension of the manifold is bounded from above by $n$. Thus, $\dim_{\mathrm{A}}\mathcal{M} = n$.
\end{proof}
    
\section{Dimensions of topological codes}\label{Appendix Dimension unification}

In this appendix, we define general topological stabilizer codes and show that for such codes, the Assouad dimension exactly coincides with the topological dimension of the underlying triangulated manifold. Furthermore, we show how to define codes on a quasi-convex self-similar set $K$, such that they have the same dimension as the Assouad dimension of $K$.

\subsection{Construction of general topological stabilizer codes}
Given any $D$-dimensional Riemannian manifold $(\mathcal{M}^D,g)$, the strong Whitney embedding theorem guarantees its smooth embedding within an ambient Euclidean space $\mathbb{R}^{p}$ where $p=2D$~\cite{whitney1944self}. Consequently, classic results by Whitehead~\cite{whitehead1940c1} ensure the existence of a well-defined and combinatorially equivalent triangulation for $\mathcal{M}$ in the ambient space $\mathbb{R}^{p}$. This extrinsic triangulation does not strictly respect the original metric. To better preserve the intrinsic geometric data of the manifold, several improved triangulation algorithms have been proposed~\cite{liu2015efficient,boissonnat:hal-01509888,sharp2019navigating}. 

We now introduce two key ingredients for a good triangulation, which measure the ``worst'' behavior of simplexes in a triangulation. Let the length of the shortest edge in the triangulation be $\delta$, and suppose each simplex $\sigma$ in the triangulation can be bounded by a circumscribed ball of radius $R(\sigma)$. The radius-edge ratio of a triangulation $\mathcal{C}^{\bullet}$ is defined as:
\begin{equation}
    \rho(\mathcal{C}^{\bullet}) = \sup_{\sigma\in\mathcal{C}^{\bullet}} \frac{R(\sigma)}{\delta}.
\end{equation}
This ratio enforces the triangulation to be uniform at scale $\delta$, whose longest edge of simplex has to be $O(\delta)$. The other ingredient is the concept of the thickness. 
\begin{definition}[Uniform thickness]
    For any $k$-simplex $\sigma\in \mathcal{C}^k_{\mathcal{M},\delta}$, let  $\Delta(\sigma)\coloneqq \max\limits_{p,q\in \sigma} d(p,q)$ denote the length of its longest edge. For each $p\in\mathrm{Vert}(\sigma)$, let $
\sigma_p\coloneqq \mathrm{conv}\bigl(\mathrm{Vert}(\sigma)\setminus\{p\}\bigr)$
be the facet opposite $p$, and define the altitude of $p$ in
$\sigma$ by $h(p,\sigma)
\coloneqq
\mathrm{dist}\bigl(p,\mathrm{aff}(\sigma_p)\bigr)$. Then the thickness of $\sigma$ is defined by 
    \[
    \Upsilon(\sigma)
        \coloneqq
        \begin{cases}
            1, & k=0, \\[2mm]
            \displaystyle
            \min_{p\in\operatorname{Vertex}(\sigma)}
                \frac{h(p,\sigma)}
     {k\,\Delta(\sigma)},
& k\geq 1,
\end{cases}
\]
 where the $d$ is the standard Euclidean distance associated with the simplex. We say the triangulation is uniformly thick if $\Upsilon(\sigma)\geq \Upsilon_0>0$ for any $D$-simplex $\sigma\in \mathcal{C}^{\bullet}_{\mathcal{M},\delta}$. Any $k$-simplex $(k\leq D)$ in the triangulation has thickness $\geq \Upsilon_0$ if the condition is satisfied for $D$-simplex.
\end{definition}
These two quantities bound the smallest solid angle of a triangulation and is of fundamental importance in numerical analysis, particularly in the finite-element method, serving as a primary benchmark for the quality of triangulation.

\begin{definition}[Good triangulation]\label{Good triangulation}
    For a Riemannian manifold $(\mathcal{M}^{D},g)$, a family of triangulations $\{\mathcal{C}^{\bullet}_{\mathcal{M},\delta} \mid \delta \leq \delta_0\}$ at sufficiently small scales is a \emph{good triangulation} if it satisfies the following conditions:
    \begin{itemize}
        \item The radius-edge ratio $\rho(\mathcal{C}^{\bullet}_{\mathcal{M},\delta})$ is uniformly bounded from above by a constant $\rho_0$.

        \item The triangulation is uniformly thick $\Upsilon(\sigma)\geq\Upsilon_0>0$
        for every $D$-simplex $\sigma\in\mathcal C^\bullet_{\mathcal M,\delta}$.

        \item There exists a family of bi-Lipschitz homeomorphisms $\{H_{\delta}\}$ mapping $\mathcal{C}^{\bullet}_{\mathcal{M},\delta}$ to $\mathcal{M}$ at scale $\delta$, such that for all $x,y \in \mathcal{C}^{\bullet}_{\mathcal{M},\delta}$:
        \begin{equation}
            K_1 \cdot d_{PL}(x,y) \leq d_{g}(H_{\delta}(x),H_{\delta}(y)) \leq K_2 \cdot d_{PL}(x,y),
        \end{equation}
        where $K_1, K_2 > 0$ are constants independent of $\delta$, and $d_{PL}$ is the piecewise-linear Euclidean metric restricted to the skeleton of $\mathcal{C}^{\bullet}_{\mathcal{M},\delta}$.
    \end{itemize}
\end{definition}

For any compact $C^2$-differentiable manifold, there always exists a $(\delta,\epsilon)$-sampling set. Based on this sampling, Boissonnat, Dyer, and Ghosh~\cite{boissonnat2010manifold} demonstrated that a triangulation with a globally bounded radius-edge ratio is guaranteed for any sufficiently small $\delta$. Furthermore, \cite{boissonnat2010manifold} proved the existence of a family of restricted Delaunay triangulations whose associated homeomorphisms exhibit strong bi-Lipschitz properties between $d_{g}$ and $d_{PL}$~\cite{boissonnat:hal-01509888}.

\begin{theorem}[Riemannian Delaunay triangulation~\cite{boissonnat:hal-01509888}] \label{Delaunay triangulation}
    Let $\mathcal{M}^{m}$ be a Riemannian manifold, and $P \subset \mathcal{M}^{m}$ be a $(\mu_0,\epsilon)$-net with respect to $d_{g}$, where 
    \[ \epsilon \leq \min\left\{\frac{\mathrm{Inj}(\mathcal{M})}{4}, \frac{1}{2^6\sqrt{\Lambda}} \left(\frac{\rho_0}{\widetilde{C}}\right)^{2m+1}\right\}. \]
    Here, $\Lambda$ bounds the absolute value of the sectional curvatures, $\mathrm{Inj}(\mathcal{M})$ is the injectivity radius, and $\widetilde{C} = m^{3/2}(2/\mu_0)^{5m^2+5m+21}$. If $\rho_0 \leq \mu_0/5$, the output $\mathrm{Del}(P')$ of the extended algorithm is a Delaunay triangulation equipped with a natural homeomorphism $H: |\mathrm{Del}(P')| \to \mathcal{M}$ satisfying:
    \begin{equation}
        |d_{g}(H(x),H(y)) - d_{PL}(x,y)| \leq \left(2^8\Lambda \left(\frac{\widetilde{C}}{\rho_0}\right)^{2m} \epsilon^2\right) d_{PL}(x,y).
    \end{equation}
\end{theorem}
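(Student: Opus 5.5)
The plan is to follow the Boissonnat--Dyer--Ghosh strategy: reduce everything to Euclidean Delaunay geometry in local normal coordinates, and control the error by curvature. The geometry is close to Euclidean at scale $\epsilon$ because the metric in normal coordinates deviates from the identity only to second order.

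First, fix the local charts. Since $\epsilon\le \mathrm{Inj}(\mathcal M)/4$, for each $p\in P$ the map $\exp_p^{-1}$ is a diffeomorphism on $N_p(4\epsilon)$. The standard Rauch/Jacobi-field comparison under $|\mathrm{Sec}|\le\Lambda$ gives that the pulled-back metric satisfies $|g_{ij}-\delta_{ij}|\le C\Lambda r^2$ on the ball of radius $r\le 4\epsilon$. Consequently distances computed in $T_p\mathcal M$ and by $d_g$ agree up to a factor $1+O(\Lambda\epsilon^2)$.

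Second, obtain robust Delaunay structure. Here $P'$ is produced from $P$ by the extended (perturbation) algorithm. I will use it to secure a protection condition: every Delaunay simplex has an empty circumball that misses all other sample points by a margin $\iota\epsilon$. Together with the radius-edge bound $\rho_0\le\mu_0/5$, protection yields a quantitative thickness lower bound $\Upsilon_0\gtrsim(\rho_0/\widetilde C)^{m}$. By a Delaunay stability lemma, protection with margin $\iota$ survives metric perturbations of relative size $O(\Lambda\epsilon^2)$ as long as $\Lambda\epsilon^2\ll\iota$. The smallness hypothesis $\epsilon\le 2^{-6}\Lambda^{-1/2}(\rho_0/\widetilde C)^{2m+1}$ is chosen precisely to guarantee this. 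Hence the Euclidean Delaunay complexes of $\exp_p^{-1}(P')$ in the overlapping charts coincide combinatorially on their common stars. They therefore glue to a single abstract simplicial complex $\mathrm{Del}(P')$ whose vertex links are all $(m-1)$-spheres.

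Third, define $H$ on each simplex by the Riemannian barycentric (Karcher center-of-mass) map: a point with barycentric coordinates $\lambda$ is sent to the minimizer of $\sum_i\lambda_i d_g(\cdot,p_i)^2$. This minimizer is well defined and smooth at scale $\epsilon<\mathrm{Inj}/4$. In the normal chart at a vertex, $H$ differs from the affine map by $O(\Lambda\epsilon^2)$ in $C^1$. Thickness $\Upsilon_0$ bounds the inverse of the differential of the affine map by $1/\Upsilon_0$. So $dH$ is within relative error $O(\Lambda\epsilon^2/\Upsilon_0^2)\sim 2^8\Lambda(\widetilde C/\rho_0)^{2m}\epsilon^2$ of an isometry, which is $<1$. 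This makes $H$ a local homeomorphism on the star of each vertex. A degree argument (a covering map onto a connected $\mathcal M$, with injectivity checked near a vertex) upgrades it to a global homeomorphism. Integrating the differential bound along $d_{PL}$-geodesics, and along $d_g$-geodesics for the reverse inequality, gives the stated two-sided estimate $|d_g(H(x),H(y))-d_{PL}(x,y)|\le 2^8\Lambda(\widetilde C/\rho_0)^{2m}\epsilon^2\,d_{PL}(x,y)$.

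The main obstacle is the second step: proving that the local Delaunay complexes agree across charts. That requires a quantitative protection/thickness lemma whose constants degrade like $(2/\mu_0)^{O(m^2)}$, which is where $\widetilde C$ originates. I would first try the perturbation algorithm with a Lovász-local-lemma-style argument to guarantee protection. I would then invoke the Euclidean stability theorem for protected Delaunay triangulations and feed in the chart distortion bound from the first step.
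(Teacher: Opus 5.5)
The paper gives no proof of this statement; it quotes it from Boissonnat--Dyer--Ghosh. Your outline matches the architecture of that source:
\begin{itemize}
\item exponential-map charts with $1+O(\Lambda\epsilon^2)$ distortion;
\item a perturbation yielding a protected net, then protection $\Rightarrow$ thickness;
\item Delaunay stability $\Rightarrow$ consistent chart-wise stars that glue to a manifold complex;
\item the Riemannian (Karcher) barycentric-coordinate map of Dyer--Vegter--Wintraecken, with a covering argument and integration of the differential bound.
\end{itemize}
The strategy is right. However, your step 2 rests on a misread hypothesis, and that leaves its quantitative core unsupported.

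You treat $\rho_0$ as a radius-edge ratio bound. It cannot be one. Every edge of a simplex is at most twice its circumradius, so radius-edge ratios are never below $1/2$. A $(\mu_0,\epsilon)$-net of a connected manifold with at least two points has $\mu_0\le 2$, so the hypothesis $\rho_0\le\mu_0/5\le 2/5$ could never be met. Your own bound $\Upsilon_0\gtrsim(\rho_0/\widetilde C)^m$ also improves as $\rho_0$ grows, which is backwards for an upper bound on radius-edge ratios.

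In the statement, $\rho_0$ is the relative perturbation radius of the extended algorithm, $d_g(p,p')\le\rho_0\epsilon$. The hypothesis $\rho_0\le\mu_0/5$ only keeps $P'$ a net: separation at least $(\mu_0-2\rho_0)\epsilon\ge\tfrac{3}{5}\mu_0\epsilon$, and covering radius at most $(1+\rho_0)\epsilon$. The protection margin is an output of the perturbation lemma, controlled by the room $\rho_0\epsilon$ each point has to escape near-degenerate configurations. With it come the thickness bound you quote, the smallness condition on $\epsilon$, and the factor $(\widetilde C/\rho_0)^{2m}$. As written, your thickness bound therefore has no source. The missing ingredient is exactly that quantitative perturbation lemma, which you defer as ``the main obstacle'' but tie to the wrong parameter. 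The paper's reuse of $\rho_0$ for the radius-edge bound in its definition of good triangulation invites this confusion.

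Two smaller points:
\begin{itemize}
\item The stability requirement is not just $\Lambda\epsilon^2\ll\iota$. Displacing the vertices of a simplex moves its circumcentre by roughly the displacement divided by its thickness, so $\Lambda\epsilon^2$ must be small against $\iota\,\Upsilon_0$.
\item The ``Delaunay'' part of the conclusion still has to be verified: empty circumballs for the intrinsic metric, not merely agreement of the chart-wise Euclidean stars. This follows from the same protection-versus-distortion comparison, but it must be done.
\end{itemize}
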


Because the protected manifold Delaunay triangulations constructed by
the above theorem are stable under controlled perturbations
and enjoy both uniform thickness and a uniformly bounded global
radius-edge ratio~\cite{boissonnat:hal-01509888}, they serve as a
canonical model for our \emph{good triangulations}.

 To construct a homological CSS code, we utilize the simplicial complex induced by a triangulation. The boundary operator $\partial_k$ acts on an labeled $k$-simplex by taking the alternating sum of its $(k-1)$-dimensional faces, explicitly given by the chain complex: 
\begin{equation}
    \dots \xrightarrow{\partial_{k+1}} [t_0, t_1, \dots, t_k] \xrightarrow{\partial_k} \sum_{i=0}^k (-1)^i [t_0, \dots, \hat{t}_i, \dots, t_k] \xrightarrow{\partial_{k-1}} \dots,
\end{equation}
where $\hat{t}_i$ denotes the omission of the $i$-th vertex. The composition of two consecutive boundary maps vanishes $\partial_{k-1} \circ \partial_k = 0$.

Assuming the base field $\mathbb{F}$ has characteristic 2, we can take any three consecutive terms from this chain complex to define a CSS code, where the parity-check matrices are given by their boundary maps:
\begin{equation}
   \mathcal{C}^{k+1} \xrightarrow{H_{Z}^{T}} \mathcal{C}^{k} \xrightarrow{H_X} \mathcal{C}^{k-1}.
\end{equation}
The logical code space is isomorphic to the $k$-th homology group. Any sequence of three consecutive terms in a $D$-dimensional triangulation defines a valid $D$-dimensional topological code (e.g., the 3D cubic toric code and its dual). In other words, the geometric dimension $D$ of the triangulated manifold naturally defines the topological dimension of the code, coinciding with the highest nontrivial homological degree.

\subsection{Dimension unification for regular topological codes}\label{Dimension_unification_for_classical_code}
 In this subsection, we will show that, for a regular topological code derived by a family of good triangulations on a compact connected manifold $\mathcal{M}$, the Assouad dimension of its connectivity graph coincides with the topological dimension of the ambient manifold.
To prove the uniform bi-Lipschitz property between the ambient manifold and the connectivity graph, we establish the following proposition to capture the microscopic behaviour of good triangulation.

\begin{proposition}\label{local behavior of D-Triangulation}
    Given a family of good triangulations $\mathcal{C}^{\bullet}_{\mathcal{M},\delta}$ with bounded radius-edge ratio $\rho_0$ and bi-Lipschitz homeomorphisms $H_{\delta}$, there exists a local scale $r_0 = 4K_2\rho_0\delta$ such that for sufficiently small $\delta$:
    \begin{itemize}
        \item Any geodesic ball of radius $r_0$ on $\mathcal{M}$ intersects at most $N_0$ $k$-simplexes in $\mathcal{C}^{k}_{\mathcal{M},\delta}$ under the image of $H_{\delta}$, where $N_0$ is independent of $\delta$ and of the specific center of the geodesic ball.
        \item For any fixed $\sigma_i \in \mathcal{C}^k_{\mathcal{M},\delta}$, all of its adjacent $k$-simplexes on the connectivity graph are strictly contained within some pulled-back neighborhood $H_{\delta}^{-1}(N_{x}(r_0))$, where $x \in H_{\delta}(\sigma_i)$.
    \end{itemize}
\end{proposition}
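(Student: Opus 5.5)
The plan is to combine the bounded radius-edge ratio and uniform thickness, which make the triangulation uniform at scale $\delta$, with the Ahlfors regularity of $\mathcal{M}$ from Lemma~\ref{Volume on manifold}, and then count simplices by volume.

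\textbf{Preliminary step.} First we record the metric properties of the simplices at scale $\delta$. Since the shortest edge is $\delta$ and $\rho(\mathcal{C}^\bullet_{\mathcal{M},\delta})\le\rho_0$, every simplex lies in a Euclidean ball of radius at most $\rho_0\delta$. So every simplex has $d_{PL}$-diameter at most $2\rho_0\delta$. Applying the bi-Lipschitz map $H_\delta$, each image $H_\delta(\sigma)$ has $d_g$-diameter at most $2K_2\rho_0\delta$.

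\textbf{First bullet.} Fix a geodesic ball $N_x(r_0)$ with $r_0=4K_2\rho_0\delta$. Any $D$-simplex whose image meets $N_x(r_0)$ has image contained in $N_x(r_0+2K_2\rho_0\delta)=N_x(\tfrac32 r_0)$.

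We lower-bound the volume of each such $D$-simplex using uniform thickness. All altitudes are at least $D\Upsilon_0\Delta(\sigma)\ge D\Upsilon_0\delta$, so the Euclidean volume of $\sigma$ is at least $c(D,\Upsilon_0)\delta^D$. Under the bi-Lipschitz map $H_\delta$, volumes change by at most a factor $K_1^{D}$ to $K_2^{D}$. For this step I would argue locally in a chart where the PL and Riemannian volume forms are comparable, which is allowed for $\delta$ small.

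The images of distinct $D$-simplices have disjoint interiors. Combining this with the upper Ahlfors bound $\mathrm{Vol}(N_x(\tfrac32 r_0))\le c_2(\tfrac32 r_0)^D=O(\delta^D)$, the number of $D$-simplices involved is at most a constant $N_D$ independent of $\delta$ and $x$.

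Every $k$-simplex meeting the ball is a face of some $D$-simplex meeting the ball, since the triangulation is pure of dimension $D$ for a manifold. Each $D$-simplex has $\binom{D+1}{k+1}$ faces of dimension $k$, so $N_0=\binom{D+1}{k+1}N_D$ works.

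\textbf{Second bullet.} Two $k$-simplices are adjacent in the connectivity graph exactly when they share a stabilizer generator. Such a generator is a $(k+1)$-simplex containing both, or a $(k-1)$-simplex contained in both.

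In the first case, both lie in a common $(k+1)$-simplex. Its image has diameter at most $2K_2\rho_0\delta$, so the neighbour lies within $d_g$-distance $2K_2\rho_0\delta<r_0$ of any $x\in H_\delta(\sigma_i)$.

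In the second case, they share a vertex. Each has diameter at most $2K_2\rho_0\delta$, so every point of the neighbour is within $4K_2\rho_0\delta=r_0$ of $x$.

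In both cases the neighbour lies in $H_\delta^{-1}(N_x(r_0))$. If the statement is read as requiring strict containment, one enlarges the constant slightly, or uses that the shortest-edge and circumradius bounds are not both attained.

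\textbf{Main obstacle.} The hard step is the volume comparison: converting Euclidean simplex volume in the PL structure into Riemannian volume through $H_\delta$. A bi-Lipschitz homeomorphism is only differentiable almost everywhere (Rademacher), and its Jacobian is bounded between $K_1^D$ and $K_2^D$. I would therefore use the area formula, or alternatively avoid volumes altogether. The alternative is a packing argument: the barycenters of distinct $D$-simplices are $c\,\Upsilon_0\delta$-separated in $d_{PL}$, hence $K_1c\Upsilon_0\delta$-separated in $d_g$. The covering bound from Theorem~\ref{Dimension unification}, applied at ratio $R/r=O(\rho_0K_2/(K_1\Upsilon_0))$, then bounds their number directly. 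I would present this packing version, since it only uses results already established.
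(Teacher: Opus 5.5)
Your proposal is correct. The second bullet is argued exactly as in the paper: a common $(k+1)$-simplex, or a shared vertex giving $d_{PL}$-distance at most $2\Delta(\sigma)\le 4\rho_0\delta$, hence at most $4K_2\rho_0\delta=r_0$ after applying $H_\delta$. The paper's balls $N_p(r)$ are closed, so this already gives containment and no enlargement of the constant is needed.

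Where you differ is the counting step of the first bullet. The paper counts on the PL side. It pulls the enlarged geodesic ball back to a PL ball of radius $(c_1+2K_2\rho_0)\delta/K_1$ and asserts that this PL ball has volume at most $v_D^*\bigl((c_1+2K_2\rho_0)\delta/K_1\bigr)^D$. It then divides by the thickness lower bound on the volume of a $D$-simplex. That Ahlfors-type bound for the PL complex is not proved in the paper: the constant $v_D^*$ comes from Lemma~\ref{Volume on manifold} for $\mathcal{M}$. So the paper's count silently relies on the volume transfer through $H_\delta$ that you flag as the delicate point.

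Your manifold-side volume count supplies this transfer, and more cheaply than Rademacher or the area formula would. Since $H_\delta^{-1}$ is $K_1^{-1}$-Lipschitz, the definition of $D$-dimensional Hausdorff measure gives $\mathrm{Vol}_g(H_\delta(\tau))\ge K_1^D\,\mathrm{Vol}(\tau)$ directly.

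Your packing variant avoids volumes altogether. The barycenter of a $D$-simplex $\tau$ lies at distance $\min_p h(p,\tau)/(D+1)\ge D\Upsilon_0\delta/(D+1)$ from $\partial\tau$. Hence the barycenter images are $\Omega(K_1\Upsilon_0\delta)$-separated in $d_g$; this is the analogue of the estimate the paper itself uses in Theorem~\ref{Dimension unification for code}. The covering bound of Theorem~\ref{Dimension unification}, taken at the $\delta$-independent ratio $O(K_2\rho_0/(K_1\Upsilon_0))$, then bounds their number.

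Your route uses only facts already established about $\mathcal{M}$. The paper's route buys brevity and an explicit formula for $N_0$, though that formula is only as reliable as the unproved PL volume bound.
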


\begin{proof}
    Without loss of generality, we define the local scale as $r_0 = c_1 \delta$ (with $c_1$ to be determined). The inverse image of a geodesic ball $N_{x}(r)$ on $\mathcal{M}$ is entirely covered by a ball of radius $r / K_1$ in $\mathcal{C}^{\bullet}_{\mathcal{M},\delta}$. Therefore,  the volume of the covering ball on the triangulation is at most $v^{*}_D (r/K_1)^D$. Meanwhile, the volume of any single $D$-dimensional simplex $\sigma$ is bounded from below by an iterated reduction to the volume of its sub-simplex:
    \begin{equation} \mathrm{Vol}_{D}(\sigma)=\frac{h(p,\sigma)}{D}\mathrm{Vol}_{D-1}(\sigma^{\prime})\geq\frac{h(p,\sigma)}{D\Delta(\sigma)}\mathrm{Vol}_{D-1}(\sigma^{\prime})\delta\geq \Upsilon_0 \delta\cdot\mathrm{Vol}_{D-1}(\sigma^{\prime})\geq \Upsilon_0^{D-1}\delta^D\eqqcolon V_0. 
    \end{equation}
    For any simplex $\sigma\bigcap H_{\delta}^{-1}\left(N_{x}(r_0)\right)\neq \emptyset$, the image $H_{\delta}(\sigma)\subseteq N_{x}((c_1+2K_2\rho_0)\delta)$.
    Because the interiors of distinct simplexes are mutually disjoint, the volume comparison limits the maximum number of simplexes within the ball to $\frac{v^{*}_D (c_1+2K_2\rho_0)^D} {\Upsilon_0^{D-1}K_1^{D}}$. Since each $D$-simplex is associated with $\binom{D+1}{k+1}$ $k$-simplexes, the first claim holds for $N_0=\binom{D+1}{\lfloor\frac{D+1}{2}\rfloor}\frac{v^{*}_D (c_1+2K_2\rho_0)^D} {\Upsilon_0^{D-1}K_1^{D}}$.

    For the second claim, two adjacent $k$-simplexes on the connectivity graph must share either an $X$-check ($(k-1)$-simplex) or a $Z$-check ($(k+1)$-simplex). Thus, they are either glued by a common $(k+1)$-simplex or share a common vertex (bounded by $2\Delta(\sigma) \leq 4\rho_0\delta,\sigma\in \mathcal{C}^k_{\mathcal{M},\delta}$). Therefore, setting $c_1 = 4K_2\rho_0$ ensures containment, completing the proof.
\end{proof}

\begin{lemma}[Uniform Assouad dimension under net approximation]
\label{Uniform net approximation lemma}
Let $(X,d)$ be a compact metric space, and let
$I\subset(0,\delta_0]$ be a set with $0$ as an
accumulation point. Fix constants $c_1,c_2>0$. For every
$\delta\in I$, let $X_{\delta}\subseteq X$ be a finite $\delta$-net
satisfying the separetion
\begin{equation}\label{eq:net-separation}
    d(u,v)\geq c_1\delta,
    u,v\in X_{\delta},\, u\neq v,
\end{equation}
and the density condition
\begin{equation}\label{eq:net-density}
    d(x,X_{\delta})\leq c_2 \delta,\,
    x\in X.
\end{equation}
Equip $X_{\delta}$ with the metric inherited from $X$, and define $\mathcal{X}
    \coloneqq
    \{(X_{\delta},d|_{X_{\delta}})\}_{\delta\in I}$.
Then
\begin{equation}\label{eq:net-Assouad-equality}
    \dimasd\mathcal{X}=\dimasd X.
\end{equation}
\end{lemma}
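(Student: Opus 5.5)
The plan is to prove the two inequalities $\dimasd\mathcal{X}\leq \dimasd X$ and $\dimasd\mathcal{X}\geq \dimasd X$ separately. In each direction we transfer covers between $X$ and the nets $X_\delta$. The only cost is a change of radius by a bounded factor, which alters the constant $C$ but not the exponent. The separation condition \eqref{eq:net-separation} is not needed. The density condition \eqref{eq:net-density} is used only in the lower bound, together with the fact that $0$ is an accumulation point of $I$.

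\emph{Upper bound.} Fix $\beta>\dimasd X$ with covering constant $C$ for $X$. Take $\delta\in I$, $u\in X_\delta$ and $0<r<R$, and let $N^{\delta}_u(R)=N_u(R)\cap X_\delta$ be the ball in $X_\delta$.
\begin{itemize}
\item Cover $N_u(R)\subseteq X$ by at most $C(2R/r)^\beta$ balls $N_y(r/2)$ with centers $y\in X$.
\item Discard the balls that miss $X_\delta$. For each remaining ball pick a point $u_y\in N_y(r/2)\cap X_\delta$.
\item By the triangle inequality, $N_y(r/2)\cap X_\delta\subseteq N^{\delta}_{u_y}(r)$.
\end{itemize}
This yields a cover of $N^{\delta}_u(R)$ by at most $2^\beta C (R/r)^\beta$ balls of $X_\delta$ centered in $X_\delta$. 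The constant is uniform in $\delta$, so $\dimasd\mathcal{X}\leq\beta$. Letting $\beta$ decrease to $\dimasd X$ gives the first inequality.

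\emph{Lower bound.} Suppose $\mathcal{X}$ satisfies the covering condition with exponent $\beta$ and constant $C$. Fix $x\in X$ and $0<r<R$. The key step, and the main point to check, is choosing the scale.
\begin{itemize}
\item Choose $\delta\in I$ with $c_2\delta\leq \min\{r/4,R/2\}$. Such a $\delta$ exists because $0$ is an accumulation point of $I$.
\item Pick $u\in X_\delta$ with $d(x,u)\leq c_2\delta$.
\item For every $z\in N_x(R)$ pick $w\in X_\delta$ with $d(z,w)\leq c_2\delta$. Then $d(u,w)\leq R+2c_2\delta\leq 2R$, so $w\in N^{\delta}_u(2R)$.
\item Apply the covering condition in $X_\delta$ to $N^{\delta}_u(2R)$ at radius $r/4$. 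This gives at most $C(8R/r)^\beta$ centers $v\in X_\delta$, with $d(w,v)\leq r/4$ for some center $v$.
\item Then $d(z,v)\leq c_2\delta+r/4\leq r/2<r$, so the balls $N_v(r)$ in $X$ cover $N_x(R)$.
\end{itemize}
Hence $X$ satisfies the covering condition with exponent $\beta$ and constant $8^\beta C$, so $\dimasd X\leq\beta$. Taking the infimum over admissible $\beta$ gives $\dimasd X\leq\dimasd\mathcal{X}$.

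Combining both inequalities gives \eqref{eq:net-Assouad-equality}. The same argument works scale by scale, so the equality also holds for the dimensions restricted to scale $\overline{R}$. The only potential obstacle is that very small $r$ (with $r< c_1\delta$) is trivial in the nets but not in $X$. This is exactly why in the lower bound the scale $\delta$ is chosen depending on $r$, rather than fixing $\delta$ first.
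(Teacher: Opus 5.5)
Your proof is correct and follows the paper's argument. The upper bound is monotonicity of Assouad dimension under passing to subsets, and you usefully make explicit the recentering of the covering balls into $X_\delta$, which the paper's one-line justification leaves implicit. The lower bound is the same snap-to-the-net argument with $\delta$ chosen small relative to $r$; it differs from the paper only in constants ($r/4$-balls and $8^\beta C$, versus the paper's $r/2$-balls and $4^\beta C_\alpha$), and, as you note, neither argument uses the separation condition.
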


\begin{proof}
 As $\{X_{\delta}\}_{\delta\in I}$ are subspaces of $X$ with induced metrics, any admissible Assouad exponent for $X$ is also admissible for $\mathcal{X}$. Hence, we have $\dimasd \mathcal{X} \leq \dimasd X$.

For the reverse inequality, suppose $\alpha<\dim_{\mathrm{Assouad}}X$ is an admissible
Assouad exponent for $\mathcal{X}$, with uniform covering
constant $C_{\alpha}$. Fix $x\in X$ and $0<r<R$. Since $0$ is an
accumulation point of $I$, choose $\delta\in I$ such
that $c_2\delta<\frac{r}{8}$. By density condition Eq.~\eqref{eq:net-density}, one can choose $x_\delta\in X_\delta$
with $d(x,x_\delta)\leq c_2\delta$. Similarly, for every $z\in N_{x}(R)$, we choose $z_\delta\in X_\delta$ satisfying $d(z,z_\delta)\leq c_2\delta$. Then by the triangle inequality, we have 
\begin{equation}
    d(z_\delta,x_\delta)\leq d(z_\delta,z)+d(z,x)+d(x,x_\delta)\leq R+2c_2\delta<2R
\end{equation}
and $z_\delta\in N_{x_\delta}(2R)$. The latter ball can be covered by at most $C_{\alpha}\left(\frac{2R}{r/2}\right)^\alpha=4^\alpha C_{\alpha}\left(\frac{R}{r}\right)^\alpha$ balls $N_{u_i}(r/2)$, where $u_i\in X_{\delta}$. If
$z_\delta\in N_{u_i}(r/2)$ on $X_{\delta}$, then $d(z,u_i)\leq d(z,z_\delta)+d(z_\delta,u_i)\leq c_2\delta+\frac{r}{2}<r$ on $X$.
It follows that the corresponding balls $N_{u_i}(r)$ cover
$N_{x}(R)$ on $X$. Hence $\alpha<\dim_{\mathrm{Assouad}}X$ is also an admissible exponent for
$X$, and therefore
 $\dim_{\mathrm{Assouad}}X\leq\dim_{\mathrm{Assouad}}\mathcal{X}$.
This completes the proof of Eq.~\eqref{eq:net-Assouad-equality}.
\end{proof}

With these preparations, we present our core unification theorem.

\begin{theorem} \label{Dimension unification for code}
    Given a family of good triangulations of $(\mathcal{M}^{D},g)$ as $\delta \to 0$, the Assouad dimension of the connectivity graph of the resulting topological code $\{C_{\delta}\}$ exactly equals the Assouad dimension of the underlying manifold $\mathcal{M}$, i.e. \[\dim(C_{\delta}) = \dim(\mathcal{M}) = D.\]
\end{theorem}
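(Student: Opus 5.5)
The plan follows the chain sketched after Theorem~\ref{Example1}:
\[ (\mathcal{M},d_{g}) \supseteq \mathcal{M}_\delta \xrightarrow{H_{\delta}^{-1}} (\mathcal{C}^{\bullet}_{\mathcal{M},\delta},d_{PL}) \xrightarrow{\mathcal{S}_{\delta}} (G_{\delta},d_{\delta}), \]
with $d_\delta=\delta\cdot d_{G_\delta}$. By Proposition~\ref{factor multiplication}, rescaling by $\delta$ does not change the Assouad dimension. So it suffices to prove three things: (i) the barycenter set $\mathcal{M}_\delta=\{H_\delta(b(\sigma)):\sigma\in\mathcal{C}^k_{\mathcal{M},\delta}\}$ is a net in the sense of Lemma~\ref{Uniform net approximation lemma}; (ii) the map $b(\sigma)\mapsto\sigma$ is uniformly bi-Lipschitz from $(\mathcal{M}_\delta,d_g)$ to $(G_\delta,d_\delta)$; and (iii) the conclusion then follows from Proposition~\ref{bi-Lipschitz reduction}, Lemma~\ref{Uniform net approximation lemma} and Theorem~\ref{Dimension unification}. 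One small caveat is that distinct $k$-simplices have distinct barycenters, so the map is a bijection.

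\textbf{Step (i): net property.} Density holds because every point of $\mathcal{M}$ lies in the image of some $D$-simplex. That simplex has diameter at most $2\rho_0\delta$, and any of its $k$-faces supplies a barycenter, so $d(x,\mathcal{M}_\delta)\le 2K_2\rho_0\delta$. Separation comes from uniform thickness. Barycenters of distinct $k$-simplices in the PL metric are at distance $\gtrsim \Upsilon_0\delta$; this holds whether the two simplices are disjoint or share a face, since the barycenters differ by a convex combination involving an edge of length $\ge\delta$ weighted by $1/(k+1)$. Applying $K_1$ then gives separation $c_1\delta$ in $d_g$.

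\textbf{Step (ii): bi-Lipschitz comparison.} The upper bound $d_g\le C\, d_\delta$ is local. By the second part of Proposition~\ref{local behavior of D-Triangulation}, adjacent vertices of $G_\delta$ have barycenter images within $2r_0=8K_2\rho_0\delta$ of each other, so summing along a shortest graph path gives $d_g\le 8K_2\rho_0\, d_\delta$.

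The reverse bound $d_\delta\le C\,d_g$ is the main obstacle. Given two barycenters at geodesic distance $L$, I would pull the minimizing geodesic back via $H_\delta^{-1}$ to a PL curve of length at most $L/K_1$. I would then subdivide this curve into pieces of length $\sim\delta$ (at least one piece even when $L<\delta$, which is harmless because separation forces $L\ge c_1\delta$). Each piece lies in a ball of radius $r_0$, which by the first part of Proposition~\ref{local behavior of D-Triangulation} meets at most $N_0$ simplices. The remaining task is to show that consecutive simplices met by the curve are joined in $G_\delta$ by a path of bounded length. Two $k$-simplices sharing a vertex are connected through the link or star of that vertex using at most $N_0$ steps; this uses the connectivity of stars in a manifold triangulation, passing through adjacent $k$-faces of a common $(k+1)$-simplex or a common $(k-1)$-face. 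Concatenating these bounded paths along the $\lceil L/(K_1\delta)\rceil$ pieces gives $d_{G_\delta}\le C' (L/\delta+1)$, and with the separation bound $L\ge c_1\delta$ this yields $d_\delta\le C'' L$.

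The delicate point is making the star-connectivity claim uniform in $\delta$. I would handle it by noting that a vertex star contains a bounded number of simplices (via $N_0$) and is a connected PL ball (manifold condition), so its $k$-face adjacency graph is connected with diameter at most its size. Once (i) and (ii) are in hand, $\dimasd\{G_\delta\}=\dimasd\{\mathcal{M}_\delta\}=\dimasd\mathcal{M}=D$.
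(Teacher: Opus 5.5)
Your proposal is correct and follows the paper's proof. You rescale $G_\delta$ by $\delta$, show the barycenter set is a separated, dense net so the net-approximation lemma gives dimension $D$, get $d_g\lesssim d_\delta$ from adjacency within $r_0$, and get $d_\delta\lesssim d_g$ by chaining boundedly many simplices along the pulled-back geodesic and absorbing the additive constant with the separation bound.

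The only real deviation is the chaining step. You pass between consecutive simplices through a shared vertex star, which needs \emph{strong} connectivity of stars rather than mere connectivity; this holds in a manifold triangulation. The paper instead perturbs the geodesic into general position so that consecutive $D$-simplices share a $(D-1)$-face, then moves between $k$-faces using the Johnson graph $J(D+1,k+1)$. Your variant is equally valid and avoids the paper's loosely controlled perturbation constant $c_0$.

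Your edge-based justification of separation is loose. A clean argument: for $p\in\sigma\setminus\sigma'$, the hat function $\lambda_p$ equals $1/(k+1)$ at $b(\sigma)$, vanishes at $b(\sigma')$, and is $1/(D\Upsilon_0\delta)$-Lipschitz. This gives exactly the paper's constant $\frac{D\Upsilon_0}{k+1}\delta$.
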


\begin{proof}
    By Corollary~\ref{factor multiplication}, scaling the metric on $G_{\delta}$ does not alter its Assouad dimension. We define the discrete graph distance as $d_{G_{\delta}}(q_i,q_j) = \delta \cdot |\text{the shortest  path  between }q_i,q_j|$. We trace the sequence of mappings:
    \[ (\mathcal{M},d_{g}) \xrightarrow{H_{\delta}^{-1}} (\mathcal{C}^{\bullet}_{\mathcal{M},\delta},d_{PL}) \xrightarrow{\mathcal{S}_{\delta}} (G_{\delta},d_{\delta}). \]
    Let $\mathcal{M}_{\delta}(\mathcal{C}^{k}_{\mathcal{M},\delta})\coloneqq \{H_{\delta}(b(\sigma))\mid \sigma\in \mathcal{C}^{k}_{\mathcal{M},\delta}\}$ denote the discrete net determined by the images of barycenters $b(\sigma)$ of $k$-simplex $\sigma$ under $H_{\delta}$, where the distance is induced by the metric $d_{g}$ on $\mathcal{M}$.
    One can show that for two distinct $k$-simplexes $\sigma_i, \sigma_j$, the distance between their barycenters is lower bounded by $d_{PL}(b(\sigma_i),b(\sigma_j))\geq \frac{D\Upsilon_0}{k+1}\delta$, which implies the separation condition Eq. (\ref{eq:net-separation}) on $\mathcal{M}_{\delta}\colon d_{g}(H_{\delta}(b(\sigma_i)),H_{\delta}(b(\sigma_j))) \geq \frac{K_1D\Upsilon_0}{k+1}\delta$.
    Furthermore, the density condition Eq. (\ref{eq:net-density}) is satisfied because for any point $x\in \mathcal{M}$, there exists a $k$-simplex $\sigma$ such that $d_{g}(x,H_{\delta}(b(\sigma)))\leq r_0=4K_2\rho_0\delta$.
    So $M_{\delta}$ is indeed a discrete $\delta$-net of $\mathcal{M}$. By Lemma~\ref{Uniform net approximation lemma}, the Assouad dimension of $\mathcal{M}_{\delta}(\mathcal{C}^{k}_{\mathcal{M},\delta})$ is equal to that of $\mathcal{M},\, \dimasd(\mathcal{M}_{\delta}(\mathcal{C}^{k}_{\mathcal{M},\delta})) = \dimasd(\mathcal{M})=D$.
    By Proposition~\ref{bi-Lipschitz reduction}, it suffices to show that $\mathcal{S}_{\delta} \circ H_{\delta}^{-1}|_{\mathcal{M}_{\delta}}$ is uniformly bi-Lipschitz between $(\mathcal{M}_{\delta},d_g)$ and $(G_{\delta},d_{\delta})$ for all sufficiently small $\delta$.

    Consider two physical qubits corresponding to $k$-simplexes $\sigma_i, \sigma_j$, connected by a graph path of length $L$. By the second property in proposition~\ref{local behavior of D-Triangulation}, each $H_{\delta}(b(\sigma))$ can be connected to the next adjacent barycenter $H_{\delta}(b(\sigma'))$ by a geodesic of length at most $r_0$. Therefore, the corresponding geodesic distance on $\mathcal{M}_{\delta}$ is tightly upper bounded by: 
    \begin{equation}\label{ineq:manifold biLip_1}
        d_{g}(H_{\delta}(b(\sigma_i)),H_{\delta}(b(\sigma_j))) \leq r_0L = 4K_2\rho_0\cdot d_{\delta}(\sigma_i,\sigma_j).
    \end{equation}
    
    Conversely, for any two points $x=H_{\delta}(b(\sigma_i)),y = H_{\delta}(b(\sigma_j)) \in \mathcal{M}_{\delta}\,(i \neq j)$, we may find a geodesic connecting them with length $d_{g}(x,y) = L\geq \frac{K_1D\Upsilon_0}{k+1}\delta$. After an arbitrarily small general-position perturbation, the length of geodesic differs from $L$ by at most a uniform factor $c_0$, whose pullback image under $H_\delta^{-1}$ crosses a sequence of $D$-simplexes $T_0,T_1,\ldots,T_m$, where $\sigma_i\preceq T_0$, $\sigma_j\preceq T_m$, and $T_{\ell-1}\cap T_\ell$ is a common $(D-1)$-face. Then we apply Proposition~\ref{local behavior of D-Triangulation} with $k=D$, and cover the geodesic by balls of radius $r_0$. It gives $m+1\leq N_0\left(\frac{c_0 L}{r_0}+2\right)$. For each $1\leq\ell\leq m$, choose a $k$-face$\eta_\ell\preceq T_{\ell-1}\cap T_\ell$. The $k$-faces contained in any fixed $D$-simplex form the connected Johnson graph $J(D+1,k+1)$ under connectivity-graph adjacency. Define its diameter by $J_{D,k}\coloneqq\operatorname{diam}J(D+1,k+1)=\min\{k+1,D-k\}$, which depends only on $D$ and $k$. Hence $\sigma_i$ can be connected to $\sigma_j$ by a path in $G_\delta$ of length at most $(m+1)J_{D,k}$.
    
    This observation yields a uniform upper bound on the graph distance:
    \begin{align}
        d_{\delta}(\sigma_i,\sigma_j) &\leq \delta \cdot J_{D,k}N_0 \left(\frac{c_0L}{r_0} + 2\right) = J_{D,k}N_0 \left(\frac{c_0L}{4K_2\rho_0} + 2\delta\right)\\
        &\leq J_{D,k}N_0\left[\frac{c_0L}{4K_2\rho_0}+\frac{2(k+1)}{K_1D\Upsilon_0}L\right]=\left[\frac{c_0}{4K_2\rho_0}+\frac{2(k+1)}{K_1D\Upsilon_0}\right]J_{D,k}N_0\cdot d_g(x,y).\label{ineq:manifold biLip_2}
    \end{align}

    Combining \eqref{ineq:manifold biLip_1} and~\eqref{ineq:manifold biLip_2} together demonstrate that $\mathcal{S}_{\delta} \circ H_{\delta}^{-1}$ provides a uniform bi-Lipschitz map between $(G_{\delta},d_{\delta})$ and $(\mathcal{M}_{\delta},d_g)$. Hence, the induced Assouad dimension of $\{G_{\delta}\}_{\delta\in I}$ agrees with the Assouad dimension of $\{\mathcal{M}_{\delta}\}_{\delta\in I}$, which in turn agrees with the topological dimension $D$.
\end{proof}

\subsection{Dimension unification for fractal stabilizer codes} \label{Appendix Dim unification fractal case}
   We start by assuming $K$ is a quasi-convex self-similar set on manifold $\mathcal{M}$, with induced metric on a rectifiable path connecting any two points on $K$. For readers who may be interested, Proposition 2.10 in Ref.~\cite{tyson2002conformal} and 2.15 in Ref.~\cite{heinonen1998quasiconformal} provide sufficient conditions for self-similar $K$ to be quasi-convex. Recall that the Assouad dimension of $K$ is defined by the covering condition for a geodesic ball of radius $r$ restricted on $K$ $N_{K}(r)\stackrel{\text{def}}{=}N_{\mathcal{M},g}(r)\bigcap K$. 
\begin{lemma}[Assouad dimension under controlled approximation]
\label{Controlled approximation lemma}
Let $(\mathcal{M},d)$ be a metric space and let $K\subseteq \mathcal{M}$ be compact.
Let $I\subset(0,\delta_0]$ have $0$ as an accumulation
point. For every $\delta\in I$, let
$K_{\delta}\subseteq \mathcal{M}$ be a finite set. Suppose that:

\begin{enumerate}
    \item There exists $A>0$, independent of $\delta$, such that
    \begin{equation}\label{eq:controlled-Hausdorff-distance}
        \sup_{w\in K_{\delta}}d(w,K) \leq A\delta,\quad  \sup_{x\in K}d(x,K_{\delta}) \leq A\delta.
    \end{equation}

    \item The family has uniformly bounded local multiplicity
    at scale $\delta$: for every fixed $t>0$, there exists
    $C(t)<+\infty$, independent of $\delta$ and $z\in \mathcal{M}$, such
    that
    \begin{equation}\label{eq:uniform-local-multiplicity}
        \#\bigl(K_\delta\cap N_{z}(t\delta)\bigr)\leq C(t).
    \end{equation}
\end{enumerate}

Then, with each $K_\delta$ equipped with the metric inherited
from $\mathcal{M}$,
\begin{equation}
    \dimasd \{(K_\delta,d|_{K_\delta})\}_{\delta\in I}=\dimasd K.
    \label{eq:controlled-Hausdorff-Assouad}
\end{equation}
\end{lemma}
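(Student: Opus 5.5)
We prove the two inequalities separately, following the scheme of Lemma~\ref{Uniform net approximation lemma}; the Hausdorff-closeness condition \eqref{eq:controlled-Hausdorff-distance} plays the role of the density condition, and the multiplicity bound \eqref{eq:uniform-local-multiplicity} replaces the separation condition. The main difficulty is that $K_\delta$ need not lie inside $K$. So we cannot pass to subspaces directly, and both directions must route through small scales and large scales separately.

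\emph{Step 1: $\dimasd K\le \dimasd\{K_\delta\}$.} Let $\alpha$ be an admissible exponent for the family, with constant $C_\alpha$. Fix $x\in K$ and $0<r<R$, and choose $\delta\in I$ with $A\delta<r/8$. Pick $x_\delta\in K_\delta$ with $d(x,x_\delta)\le A\delta$, and for each $z\in N_K(R)$ pick $z_\delta\in K_\delta$ with $d(z,z_\delta)\le A\delta$. Then $z_\delta\in N_{x_\delta}(2R)$. Cover this ball by at most $4^\alpha C_\alpha (R/r)^\alpha$ balls $N_{u_i}(r/4)$ with $u_i\in K_\delta$. Each center $u_i$ lies within $A\delta$ of some $u_i'\in K$, and then $d(z,u_i')\le 2A\delta+r/4<r$. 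Hence the balls $N_{u_i'}(r)\cap K$ cover $N_K(R)$, and $\alpha$ is admissible for $K$.

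\emph{Step 2: $\dimasd\{K_\delta\}\le \dimasd K$.} Fix $\alpha>\dimasd K$ with constant $C$. Take $w\in K_\delta$ and $0<r<R$, and split into two regimes with a threshold $t_0$ chosen large (say $t_0=8A$).

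\textbf{Small scales, $R\le t_0\delta$.} Here $N_w(R)\cap K_\delta$ has at most $C(t_0)$ points, and singleton balls cover it. So the covering number is at most $C(t_0)\le C(t_0)(R/r)^\alpha$.

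\textbf{Large scales, $R>t_0\delta$, with $r\ge 4A\delta$.} Choose $w'\in K$ with $d(w,w')\le A\delta$. Every point of $N_w(R)\cap K_\delta$ lies within $A\delta$ of $N_{w'}(2R)\cap K$. Cover the latter by $C(8R/r)^\alpha$ balls $N_{y_j}(r/4)$ with $y_j\in K$. Replace each $y_j$ by a nearby $y_j^\delta\in K_\delta$, discarding balls that contain no point of $K_\delta$. Then the balls $N_{y_j^\delta}(r/4+2A\delta)\subseteq N_{y_j^\delta}(r)$ cover the required set.

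\textbf{Large scales, $r<4A\delta$.} First cover at scale $r'=4A\delta$ by the previous case, using $C'(R/r')^\alpha$ balls. Then refine each ball of radius $r'=4A\delta$, which by the multiplicity bound contains at most $C(4A)$ points of $K_\delta$, into singleton balls. This gives at most $C'C(4A)(R/r)^\alpha$ balls, since $r<r'$.

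The step I expect to be most delicate is Step 2: the transfer between $K$ and $K_\delta$ only works at scales $\gg\delta$, and \eqref{eq:uniform-local-multiplicity} is precisely what controls scales $\lesssim\delta$. Everything else consists of triangle-inequality bookkeeping with constants depending only on $A$, $C(\cdot)$ and $\alpha$. Combining Steps 1 and 2 yields \eqref{eq:controlled-Hausdorff-Assouad}.
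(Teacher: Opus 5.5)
Your proof is correct and follows essentially the same route as the paper: Step 1 matches the paper's argument for $\dim K\le\dim\{K_\delta\}$, and Step 2 uses the same split into scales $r\ge 4A\delta$ (transfer through $K$ via the Hausdorff bound) and $r<4A\delta$ (finish with the multiplicity bound). The remaining differences are cosmetic: you bootstrap from the $4A\delta$-cover and re-center via $y_j\mapsto y_j^\delta$, where the paper re-covers $K$ at radius $\delta$ and picks an assigned representative $w_i$; the one slip is a harmless constant in Step 1, since covering $N_{x_\delta}(2R)$ by $r/4$-balls costs $8^\alpha C_\alpha(R/r)^\alpha$ rather than $4^\alpha C_\alpha(R/r)^\alpha$.
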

\begin{proof}
 Let $\dimasd K=\beta$. We first show that $\dimasd\{K_\delta\}\leq \beta$. 
 
 Fix $\alpha>\beta$, and let
$C_\alpha$ be a covering constant for $K$ with Assouad exponent $\alpha$. We will show that there exists a uniform constant works for $K_{\delta}$ with exponent $\alpha$ as well. Fix $\delta\in I$, $v\in K_\delta$, and $0<r<R$.

\emph{Case 1: $r\geq 4A\delta$.}
Choose $x_v\in K$ such that $d(v,x_v)\leq A\delta$. For every
$w\in N_{v}(R)\bigcap K_{\delta}$, choose $x_w\in K$ satisfying
$d(w,x_w)\leq A\delta$. Then by $d(x_w,x_v)\leq R+2A\delta\leq R+\frac{r}{2}<\frac32 R$, we know the corresponding points $x_w$ lie in
$N_{x_v}(\frac{3R}{2})\bigcap K$. We cover $N_{x_v}(\frac{3R}{2})$ by balls of radius $r/4$, and the number
required is at most $C_\alpha\left(\frac{3R/2}{r/4}\right)^\alpha=6^\alpha C_\alpha\left(\frac{R}{r}\right)^\alpha$.
If $x_w$ belongs to one such ball centered at $y_i\in K$,
then $d(w,y_i)\leq d(w,x_w)+d(x_w,y_i)\leq A\delta+\frac{r}{4} \leq\frac{r}{2}$.
For every covering ball that contains at least one point of
$K_{\delta}$, choose such a point $w_i\in K_{\delta}$. Then all
points in $K_{\delta}$ assigned to that ball lie in $N_{w_i}(r)$.
This proves the required covering estimate in Case 1.

\emph{Case 2: $r<4A\delta$.} We cover $N_{x_v}(R+2A\delta)$ by balls of radius $\delta$. When $R<4A\delta$, the local multiplicity condition immediately gives
a uniform bound $C(6A)$. When $R\geq 4A\delta$, the number of $\delta$-balls
is at most $C_{\alpha}\left(\frac32\right)^{\alpha}\left(\frac R\delta\right)^{\alpha}$. Every point of $K_{\delta}\cap N_{v}(R)$ lies within the $A\delta$-neighborhood of one of these balls. By
Eq.~\eqref{eq:uniform-local-multiplicity}, each corresponding
ball of radius $(A+1)\delta$ contains at most $C(A+1)$ points
of $K_{\delta}$. Covering these points individually by
$r$-balls gives at most
\begin{equation}
    C(A+1)C_{\alpha}\left(\frac32\right)^{\alpha}
    \left(\frac R\delta\right)^\alpha<C(A+1)C_{\alpha}\left(6A\right)^{\alpha}
    \left(\frac Rr\right)^\alpha
\end{equation}
balls. Hence $\dimasd\{K_\delta\}\leq\dimasd K$. 

For the reverse inequality, suppose that $\alpha$ is an
admissible exponent for the family $\{K_\delta\}_{\delta\in I}$, with
uniform covering constant $C_{\alpha}$. Fix $x\in K$ and $0<r<R$, we choose $\delta\in I$ sufficiently small that $A\delta<\frac{r}{8}$.
By Eq.~\eqref{eq:controlled-Hausdorff-distance}, one may choose
$v_x\in K_{\delta}$ with $d(x,v_x)\leq A\delta$. For every
$y\in N_{x}(R)$, we choose $v_y\in K_{\delta}$ with
$d(y,v_y)\leq A\delta$. Then by $d(v_y,v_x)\leq R+2A\delta<2R$, we can cover $N_{v_x}(2R)$ by at most
$C_{\alpha}\left(\frac{2R}{r/2}\right)^{\alpha}=4^{\alpha} C_{\alpha}\left(\frac Rr\right)^\alpha$
balls $N_{w_i}(r/2),\, w_i\in K_{\delta}$. For each center $w_i$, choose $z_i\in K$ satisfying
$d(w_i,z_i)\leq A\delta$. If $v_y\in N_{w_i}(r/2)$, then $d(y,z_i)\leq d(y,v_y)+d(v_y,w_i)+d(w_i,z_i)\leq 2A\delta+\frac{r}{2}<r$.

Therefore at most $4^{\alpha} C_{\alpha}\left(\frac Rr\right)^\alpha$ balls $N_{z_i}(r)$ cover $N_{x}(R)$. Hence $\dimasd K\leq\dimasd\{K_\delta\}$.
These two inequalities prove the Eq.~\eqref{eq:controlled-Hausdorff-Assouad}.
\end{proof}

By leveraging the barycenters net of the simplexes in the triangulation $C^{k}_{\mathcal{M},\delta}(K)$, we can construct a controlled approximation $K_{\delta}\coloneqq \{H_{\delta}(b(\sigma))\mid \sigma\in C^{k}_{\mathcal{M},\delta}(K)\}$ of the quasi-convex set $K$ on the manifold $\mathcal{M}$ at sufficiently small scale $\delta$. By constructing a family of uniform bi-Lipschitz maps, the following argument establishes the unification between Assouad dimension of the connectivity graph of the resulting fractal code $\{G_{\delta}(K,k)\}_{\delta\in I}$ and the Assouad dimension of $\{K_{\delta}\}_{\delta\in I}$. Hence, we can conclude that
\begin{equation}
    \dimasd\{G_{\delta}(K,k)\}_{\delta\in I} = \dimasd\{K_{\delta}\}_{\delta\in I} = \dimasd K.
\end{equation}
    \begin{proof}
        We first show that the barycenters net $\{K_{\delta}\}_{\delta\in I}$ satisfies the conditions of Lemma~\ref{Controlled approximation lemma}. The first condition is satisfied because for any $w=H_{\delta}(b(\sigma))\in K_{\delta}$, we have $\sigma\preceq\tau\in \mathcal{C}^D_{\mathcal{M},\delta},\tau\bigcap H_{\delta}^{-1}(K)\neq \emptyset$, which implies 
        \begin{equation}
            d_g(w,K)\leq ||H_{\delta}|| \Delta(\tau) \leq (2K_2\rho_0) \delta.
        \end{equation}
        And for any $x\in K$, there exists a simplex $\tau\in\mathcal{C}_{\mathcal{M},\delta}^{D}$, such that $x\in H_{\delta}(\tau)$. Therefore, there exists some $k$-simplex $\sigma\preceq\tau$, which implies
        \begin{equation}
            d_g(x,K_{\delta})\leq d_{g}(x,H_{\delta}(b(\sigma)))\leq ||H_{\delta}||\Delta(\tau)\leq (2K_2\rho_0)\delta.
        \end{equation}
        Therefore, we may let $A=2K_2\rho_0$ in Lemma~\ref{Controlled approximation lemma}.
        
        The second condition is satisfied due to the separation property of berycenters net $d_g(H_{\delta}(b(\sigma_i)),H_{\delta}(b(\sigma_j)))\geq \frac{DK_1\Upsilon_0}{k+1}\delta$, which can be derived by the covering condition for some finite Assouad exponent $\alpha$ on $\mathcal{M}$.
        
        Then we apply a similar strategy in Theorem~\ref{Dimension unification for code} here, to trace the representative map from $K$ to its associated connectivity graph, which is uniformly bi-Lipschitz between $\{K_{\delta}\}_{\delta\in I}$  and $\{G_{\delta}(K,k)\}_{\delta\in I}$ at any sufficiently small scale $\delta$, 
        \[ K\subset\mathcal{M}\xrightarrow{H_{\delta}^{-1}|_{K}}\mathcal{C}^{\bullet}_{\mathcal{M},\delta}(K)\xrightarrow{\mathcal{S}_{\delta}}G_{\delta}(K,k).\]
        The graph distance on $G_{\delta}(K,k)$ is defined as $d_{\delta}(q_i,q_j) = \delta \cdot |\text{the shortest  path  between }q_i,q_j|$. 
        By Corollary~\ref{factor multiplication}, rescaling the metric by a constant factor does not change its Assouad dimension. 

        For $w=H_{\delta}(b(\sigma_i)),v=H_{\delta}(b(\sigma_j))\in K_{\delta}$ with representing qubits as $q_i=\mathcal{S}_{\delta}(\sigma_i), q_j=\mathcal{S}_{\delta}(\sigma_j)$, assume the distance between $q_i,q_j$ is $d_{\delta}(q_i,q_j)=L\cdot \delta$, i.e., they can be connected by $L$ consecutive simplexes on $\mathcal{C}^{k}_{\mathcal{M},\delta}(K)$. According to Proposition~\ref{local behavior of D-Triangulation}, each simplex can be connected to the next adjacent barycenter $H_{\delta}(b(\sigma^{\prime}))$ by a geodesic of length at most $r_0$. Therefore, the distance of $x,y$ on $\mathcal{M}$ is controlled by:
        \begin{equation}
            d_{\mathcal{M}}(w,v)\leq r_0 L\leq 4K_2\rho_0\, d_{\delta}(q_i,q_j).\label{ineq:fractal biLip_1}
        \end{equation} 
        For the converse direction, for any $w=H_{\delta}(b(\sigma_i)),v=H_{\delta}(b(\sigma_j))\in K_{\delta}$ with distance $d_{\mathcal{M}}(w,v)=L\geq \frac{DK_1\Upsilon_0}{k+1}\delta$, there exists two point $x_w\in H_{\delta}(\tau_i)\bigcap K, x_v\in H_{\delta}(\tau_j)\bigcap K$, where $\sigma_i\preceq \tau_i,\sigma_j\preceq\tau_j\in \mathcal{C}^{D}_{\mathcal{M},\delta}(K)$ and the distance between $x_w,x_v$ on $K$ is upper bounded by 
        \begin{equation}
            d_{K}(x_w,x_v)\leq \gamma d_{\mathcal{M}}(x_w,x_v)\leq \gamma(d_{\mathcal{M}}(w,v)+2\sup_{u\in K_{\delta}}\inf_{x_{u}\in K} d(u,x_{u}))\leq \gamma(L+2A\delta).
        \end{equation}
        Then we may cover the rectifiable shortest path connecting $x_w$ and $x_v$ on $K$ by a sequence of geodesic balls of radius $r_0$, each intersecting at most $N_0$ $D$-simplexes under $H_{\delta}^{-1}$. Differ by a constant factor $c_0\geq 1$, the number of such balls is at most $\frac{c_0\gamma(L+2A\delta)}{r_0} + 2$, so the total number of $k$-simplexes to connect $\sigma_i,\sigma_j$ is bounded by $J_{D,k}N_0 \left(\frac{c_0\gamma(L+2A\delta)}{r_0} + 2\right)$. This observation yields a uniform upper bound for the graph distance on $G_{\delta}(K,k)$:
        \begin{align}
            d_{\delta}(q_i,q_j) &\leq \delta \cdot J_{D,k}N_0 \left(\frac{c_0\gamma(L+2A\delta)}{r_0} + 2\right)= \delta\cdot J_{D,k}N_0 \left(\frac{c_0\gamma(L+4K_2\rho_0\delta)}{4K_2\rho_0\delta} + 2\right) \\
            &= J_{D,k}N_0\left[\frac{c_0\gamma}{4K_2\rho_0}L+(2+c_0\gamma)\delta\right]\leq J_{D,k}N_0\left[\frac{c_0\gamma}{4K_2\rho_0}+\frac{(c_0\gamma+2)(k+1)}{DK_1\Upsilon_0}\right]L\\
            &= J_{D,k}N_0\left[\frac{c_0\gamma}{4K_2\rho_0}+\frac{(c_0\gamma+2)(k+1)}{DK_1\Upsilon_0}\right]d_{\mathcal{M}}(w,v).\label{ineq:fractal biLip_2}
        \end{align}
        Combine inequalities~\eqref{ineq:fractal biLip_1} and~\eqref{ineq:fractal biLip_2} together, we conclude that $\mathcal{S}_{\delta} \circ H_{\delta}^{-1}|_{K_{\delta}}$ is uniformly bi-Lipschitz between $(K_{\delta},d_{\mathcal{M}})$ and $(G_{\delta}(K,k),d_{\delta})$. Hence, the induced Assouad dimension of $\{G_{\delta}(K,k)\}_{\delta\in I}$ agrees with the Assouad dimension of $\{K_{\delta}\}_{\delta\in I}$, which in turn agrees with the Assouad dimension of $K$.
    \end{proof}

\section{Assouad--Nagata reduction for connectivity graphs}\label{app:A-N reduction}

In this appendix, we carefully discuss the connection between the Assouad and Nagata dimensions.  Using the Assouad--Nagata reduction, we show that a graph family with Assouad dimension $\beta$ has Nagata dimension at most $\lfloor \beta \rfloor$. This provides the geometric input needed in Section~\ref{sec:intrinsic dimension bridges code symmetry and indistinguishability} to derive constraints on the code properties from their intrinsic dimensions.

\begin{definition}[Nagata dimension, cf.~\cite{Assouad1982,Lang2005,Buyalo2007,le2015assouad}]\label{Nagata dimension}
    Let $\mathcal{X}=\{(G_{\lambda},d_\lambda)\}_{\lambda \in I}$ be a family of graphs indexed by $I$, where $d_{\lambda}$ is the path metric on $G_{\lambda}$. We say a graph $G_{\lambda}=(V_{\lambda},d_{\lambda})$ satisfies the $D$-division condition  if for any $r>0$, there always exists $D+1$ families of subsets of $V_{\lambda}$,  $\mathcal{B}_1=\mathop{\bigcup}\limits_{i_1} B_1^{i_1},\mathcal{B}_2=\mathop{\bigcup}\limits_{i_2} B_2^{i_2},\cdots,\mathcal{B}_{D+1}=\mathop{\bigcup}\limits_{i_{D+1}} B_{D+1}^{i_{D+1}}$, which satisfy the conditions:
    \begin{itemize}
        \item $\mathop{\bigcup}\limits_{i=1}^{D+1}{ \mathcal{B}_i}=V_{\lambda}$,
        \item Each $\mathcal{B}_i$ is $r$-separated, i.e., any pair of sets $B_i^{(1)},B_i^{(2)}$ in $\mathcal{B}_i$ has distance $d_{\lambda}(B_{i}^{(1)},B_i^{(2)})\geq r$,
        \item $B\in \mathcal{B}_i$ is $c\cdot r$-bounded, i.e., each $B$ is contained in a ball with radius $c\cdot r$. 
    \end{itemize}
     The \emph{Nagata dimension} of $\mathcal{X}$ is the minimal integer $D\in \mathbb{N}$ such that the $D$-division conditions are satisfied for all $G_{\lambda},\lambda\in I$. Furthermore, we say $\mathcal{X}=\{(G_{\lambda},d_\lambda)\}_{\lambda \in I}$ has Nagata dimension $D$ up to scale $\overline{r}$, if $D$ is the minimal integer such that $D$-division conditions are satisfied for any $0<r<\overline{r}$. 
\end{definition}

  \begin{theorem}[\cite{le2015assouad}]
      For any metric space $(X,d)$, the Nagata dimension of $X$ is bounded above
    by
    the Assouad dimension of $X$:
 \begin{align}
\mathrm{dim}_{\mathrm{Nagata}}X\leq \dimasd X.
\end{align}
  \end{theorem}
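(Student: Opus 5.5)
We prove that a space of finite Assouad dimension $\beta$ has Nagata dimension at most $\lfloor\beta\rfloor$ by an iterated shell-removal construction, as sketched after Lemma~\ref{Assouad--Nagata Reduction at scale $R$}. The outer loop lowers an effective dimension parameter by one per round and stops when it reaches zero. Fix $r>0$ and let $n\vdef\lfloor\beta\rfloor$; choose $\alpha$ with $\beta<\alpha<n+1$ so that the covering condition holds with exponent $\alpha$ and constant $C_\alpha$. The goal is to produce $n+1$ families $\mathcal{B}_1,\dots,\mathcal{B}_{n+1}$, each $r$-separated and $cr$-bounded with $c$ depending only on $\alpha,C_\alpha$, covering $X$.

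\textbf{Step 1 (one round).} Pick a large scale $R=Kr$ and a maximal $R$-separated net $\{y_j\}$ in the current residual set $Y_0\vdef X$. The balls $N_{y_j}(R)$ cover $Y_0$, and by the doubling property (Proposition~\ref{doubling condition}) each point lies in at most $M=M(C_\alpha)$ balls $N_{y_j}(4R)$. Order the centers, and for each $j$ choose a radius $s_j\in[R,2R]$. Set $B^{(j)}\vdef N_{y_j}(s_j)\setminus\bigcup_{i<j}N_{y_i}(s_i+r)$, shrunk by an $r$-collar so that distinct $B^{(j)}$ are $r$-separated. The sets $B^{(j)}$ form $\mathcal{B}_1$; they are $2R$-bounded and pairwise $r$-separated by construction. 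The uncovered residue $Y_1$ is contained in the union of annuli $A_j\vdef N_{y_j}(s_j+r)\setminus N_{y_j}(s_j-r)$.

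\textbf{Step 2 (choosing the shells).} We pigeonhole the radius $s_j$ over the roughly $R/(2r)$ disjoint candidate shells of thickness $2r$ in $[R,2R]$. The quantity to control is the $r$-covering number of the shell, $\mathcal{N}_r(A_j\cap Y_{\mathrm{prev}})$. Summed over candidates this is at most $\mathcal{N}_r(N_{y_j}(2R))\le C_\alpha K^{\alpha}$, so some shell satisfies
\[
\mathcal{N}_r(A_j)\lesssim C_\alpha K^{\alpha-1}.
\]
This is the sense in which the residue has effective dimension one lower. Inductively, after $m$ rounds the residue $Y_m$ satisfies a covering bound at scale $R$ with exponent $\alpha-m$: every $R$-ball meets $Y_m$ in a set coverable by $C_m K^{\alpha-m}$ $r$-balls. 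At each subsequent round we rerun Step 1 on $Y_m$ with the intrinsic metric of $X$ restricted to it, keeping the scale ratio $R/r=K$ fixed and pigeonholing shells inside $Y_m$.

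\textbf{Step 3 (termination).} After $n$ rounds the residue satisfies $\mathcal{N}_r(Y_n\cap N_y(R))\le C_n K^{\alpha-n}$ with $\alpha-n<1$. We then pick $K$ large so that $C_nK^{\alpha-n}<K/(10M)$. At this density, inside each $R$-ball the residue is too sparse to form a connected $r$-chain across radii $[R,2R]$. A final pigeonhole therefore yields radii at which a \emph{complete} $r$-gap shell misses $Y_n$, so $Y_n$ splits into $r$-separated, $2R$-bounded pieces. These pieces form $\mathcal{B}_{n+1}$ with no residue. All constants depend only on $\alpha$ and $C_\alpha$, which gives $c=2K$.

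\textbf{Main obstacle.} The main obstacle is Step 2's inductive bookkeeping. Pigeonholing a single ball's shell is easy, but the overlaps of up to $M$ neighboring annuli and the $r$-collars must not destroy the exponent drop. Moreover, the residue's covering bound must hold uniformly in every $R$-ball, not only on average. I would try to keep these under control by charging every residual $r$-ball to a bounded number of the centers $y_j$, using bounded multiplicity, so that the loss is only a constant factor $M^{n}$ absorbed into $C_n$.

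For the scale-restricted Lemma~\ref{Assouad--Nagata Reduction at scale $R$}, the same argument applies with all balls of radius at most $2KR$ lying below the Assouad scale, which produces $r_{\mathrm{Nagata}}=C_\alpha R$. The inequality $\lfloor\beta\rfloor\le\beta$ is trivial.
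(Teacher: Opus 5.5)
Your proposal is correct and follows essentially the same route as the paper's proof, which adapts the iterated annular decomposition of \cite{le2015assouad}: take a maximal net, pigeonhole one thin shell per ball to lower the local $r$-covering exponent by one, use bounded multiplicity of nearby centers to make the residual bound uniform, and iterate with $R/r$ fixed and constants independent of it. Your separate ``final pigeonhole'' that finds an empty $r$-gap shell is the paper's $(\lfloor\alpha\rfloor+1)$-th round stated directly: there the residual bound $C_m(R_0/r)^{\alpha-m}<1$ forces $X_m=\emptyset$, so the connectivity (``$r$-chain'') heuristic is not needed.
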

  The original proof focuses on a single infinite metric space~\cite{le2015assouad}, whereas our framework requires a formulation that applies uniformly to a family of graphs. The following lemma explicitly proves that this reduction remains valid across all scales and is independent of the index $\lambda$. 

 \begin{lemma}[Assouad--Nagata reduction with scale]\label{Assouad--Nagata Reduction at scale $R$} Let $(X,d)$ be a metric space with finite Assouad dimension $\dim_{\mathrm{Assouad},R}(X) = \beta$ up to scale $R$, satisfying the covering condition with constant $C_R$. For any $\beta < \alpha < \lfloor\beta\rfloor + 1$, there exists a constant $C_\alpha$, depending only on $\alpha$ and $C_R$, such that $X$ has Nagata dimension at most $\lfloor\alpha\rfloor$ up to scale $r_{\mathrm{Nagata}} = C_\alpha R$.
\end{lemma}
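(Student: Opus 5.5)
The plan is to follow the sketch given in the main text: an iterated shell-removal construction, run on a finite range of scales $r\le C_\alpha R$ so that the covering condition with constant $C_R$ applies throughout. Fix $\alpha\in(\beta,\lfloor\beta\rfloor+1)$ and set $D\coloneqq\lfloor\alpha\rfloor=\lfloor\beta\rfloor$. Given a target separation $r$, work with a geometric ladder of scales $\rho_0=Lr$, $\rho_{j+1}=L\rho_j$ ($j\le D$), where $L$ is a large constant depending only on $\alpha,C_R$. The requirement $\rho_D\le R$ is what fixes $r_{\mathrm{Nagata}}=C_\alpha R$ with $C_\alpha\approx L^{-(D+1)}$.

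At stage $j$ we are left with a residual set $Y_j\subseteq X$ (with $Y_0=X$). Choose a maximal $\rho_j$-separated net $\{y_i\}$ in $Y_j$. The covering condition bounds, uniformly, the number of net points within distance $4\rho_j$ of any point.

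For each $i$, choose a radius $s_i\in[\rho_j,2\rho_j]$ and an annulus $A_i=\{x:s_i-r\le d(x,y_i)\le s_i+r\}$. The pieces are the Voronoi-type cells obtained by removing the union of annuli from $Y_j$ and assigning each point to the first center whose ball contains it. These cells are $c\rho_j$-bounded and pairwise $r$-separated, since any path between different cells crosses a full annulus of width $2r$. They form the family $\mathcal{B}_{j+1}$.

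The new residue is $Y_{j+1}=Y_j\cap\bigcup_i A_i$. The key quantitative claim is that the $s_i$ can be chosen so that $Y_{j+1}$ has effective dimension reduced by one. Concretely, every $\rho_{j+1}$-ball meets $Y_{j+1}$ in a set coverable by $C'(\rho_{j+1}/r)^{\alpha-1}$ balls of radius $r$, in place of $C(\rho_{j+1}/r)^{\alpha}$ for $Y_j$.

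To get this I would use the same pigeonhole and averaging trick as in Lemma~\ref{Division lemma}. Partition $[\rho_j,2\rho_j]$ into about $\rho_j/(2r)$ disjoint candidate shells. Their $r$-covering numbers inside a fixed ball sum to at most $C_R(\rho_j/r)^{\alpha}$, so some shell, or a simultaneous choice for all centers by averaging over products as in Lemma~\ref{Division lemma}, carries at most a $2r/\rho_j$ fraction of this.

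Iterating, the covering exponent of $Y_j$ at scale $r$ relative to $\rho_j$ is at most $\alpha-j$. Once $j=D+1>\alpha$, the bound on $Y_{D+1}$ becomes $C(\rho/r)^{\alpha-D-1}$ with a negative exponent, which is less than $1$ for $L$ large. So $Y_{D+1}$ is empty, or at least consists of points that are $r$-separated at bounded diameter and can be absorbed into the last family. This yields $D+1$ families, each $r$-separated and $cr$-bounded with $c\sim L^{D+1}$, at every scale $r\le C_\alpha R$, uniformly in $\lambda$. That gives the Nagata bound $\lfloor\alpha\rfloor$.

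The main obstacle is making the dimension-drop estimate for $Y_{j+1}$ hold uniformly at every location and scale, not merely on average. Averaging controls the total size of the chosen shells, but the Assouad-type bound must hold in every ball of $Y_{j+1}$. The plan is to choose radii so that the local bound holds with a constant factor loss using the bounded overlap multiplicity: each $\rho_{j+1}$-ball meets only $O(L^\alpha)$ annuli, and a Markov-inequality version of the pigeonhole handles all of them at once. The slack $\alpha>\beta$ absorbs the polynomial losses in $L$. This is the step where I expect the most care to be needed, and where $C_\alpha$ degenerates as $\alpha\to\lfloor\beta\rfloor+1$.
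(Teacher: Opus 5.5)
Your construction has a genuine gap in the scale bookkeeping, at exactly the step you flagged. You use one large $L$ both as the resolution ratio ($\rho_0=Lr$) and as the ladder ratio ($\rho_{j+1}=L\rho_j$). As a result, the constants in your residual bounds grow with $L$, and the closing claim that $C(\rho/r)^{\alpha-D-1}<1$ ``for $L$ large'' does not follow.

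Suppose $Y_j\cap N_x(\rho_j)$ is covered by $C_j(\rho_j/r)^{\alpha-j}$ balls of radius $r$.
\begin{itemize}
\item Pigeonhole gives each center a shell costing about $C_j(\rho_j/r)^{\alpha-j-1}$ balls.
\item A $\rho_{j+1}$-ball meets shells from order $L^{\alpha}$ centers. This is your own count, and it is not an artifact: $Y_j$ is a foam whose cells have size $\rho_{j-1}\ll\rho_j$, so at scale $\rho_j$ it is spread through $X$ in full dimension.
\item Hence $Y_{j+1}\cap N_x(\rho_{j+1})$ costs about $L^{\alpha}C_j(\rho_j/r)^{\alpha-j-1}=L^{j+1}C_j(\rho_{j+1}/r)^{\alpha-j-1}$ balls. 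Up to factors depending only on $\alpha$ and $C_R$, this gives $C_{j+1}\approx L^{j+1}C_j$, so $C_{D+1}\approx L^{(D+1)(D+2)/2}$.
\item Your final bound is therefore about $L^{(D+1)(D+2)/2}\cdot L^{-(D+2)(D+1-\alpha)}=L^{(D+2)(\alpha-(D+1)/2)}$. Since $\alpha>D$, this blows up as $L\to\infty$ for every $D\ge 1$.
\end{itemize}

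You can see this directly in $\mathbb{Z}^2$. There $Y_1$ is a foam of loops of size $\rho_0$. A ball of radius comparable to $\rho_2$ contains order $L^3$ clusters of $Y_2$: order $L^2$ stage-$1$ shells, each crossing order $L$ foam cells. There are only order $L^3$ candidate shells at the last stage, so no empty shell is forced.

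The slack $\alpha>\beta$ cannot absorb these losses. It only provides a covering constant at exponent $\alpha$; the one useful slack, $\lfloor\beta\rfloor+1-\alpha$, is spent once, at the end. Your fallback of absorbing leftover points into the last family also fails. Those points lie in stage-$D$ shells that can be adjacent to cells of $\mathcal{B}_{D+1}$, so they would need a $(D+2)$-nd family.

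The remedy, which is what the paper does (the iterated annulus argument of Le Donne--Rajala), is to keep one working scale $R_0$ at every stage, so that $R_0/r$ is the only large parameter.
\begin{itemize}
\item At stage $m$, take a maximal $(R_0/4)$-separated net of the current residue.
\item Independently for each center, pick by pigeonhole a width-$r$ shell in $[R_0/2,R_0]$ costing at most $12C_{m-1}(R_0/r)^{\alpha-m}$ balls.
\item Any $R_0$-ball meets shells from at most $C_R 18^{\alpha}$ centers, a number independent of $R_0/r$.
\end{itemize}
Thus $C_m=12\cdot 18^{\alpha}C_R C_{m-1}$ depends only on $\alpha$ and $C_R$, and the exponent drops by one per stage with no loss in $R_0/r$. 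At $m=\lfloor\alpha\rfloor+1$ one gets $C_m(R_0/r)^{\alpha-m}<1$ once $R_0/r$ is large, so the residue is empty. Setting $R_0=r/(2C_\alpha)<R/2$ then gives the scale $C_\alpha R$.

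This also removes your ``main obstacle'': no simultaneous or Markov-type choice of radii is needed. Uniformity over all balls comes from bounded multiplicity at a fixed scale ratio. A ladder whose ratio is a fixed constant, independent of $\rho_0/r$, would work equally well.
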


\begin{proof}
    To establish this result, we follow the iterative annular decomposition argument of Ref.~\cite{le2015assouad}. We first fix a pair of radii $(r,R_0)$ satisfying $r < R_0<\frac{R}{2}$, where $R_0$ will be explicitly determined at the end of the iteration. At each step $m$ (for $1 \leq m \leq \lfloor\alpha\rfloor + 1$), we partition the metric space $X$ into a disjoint union:
    \begin{align}
        X = X_m \bigsqcup \bigcup_{k=1}^{m} \left( \bigsqcup_{n=1}^{N_k} Y_n^k \right),
    \end{align}
    where the sets $Y_n^k$ satisfy $\mathrm{diam}(Y_n^k) < 2R_0$ and $\mathrm{dist}(Y_n^k, Y_l^k) > r$ for any $n \neq l$. The construction ensures that for any $x \in X$, the local residual set $X_m \cap N_x(R_0)$ can be covered by at most $C_m (R_0/r)^{\alpha-m}$ balls of radius $r$. 
    
    Our goal is to show that at the final step $m = \lfloor\alpha\rfloor + 1$, we can choose $r$ sufficiently small such that the residual space $X_m$ becomes empty. Specifically, to guarantee $X_m = \emptyset$, we require $C_m (R_0/r)^{\alpha-m} < 1$, where $C_m$ is a constant accumulating at each step, depending only on $\alpha$ and $C_{R}$. This condition is satisfied when $r < C_m^{-\frac{1}{m-\alpha}} R_0$. 
    
    To explicitly construct this partition, we begin with the base case. By Zorn's lemma, there exists a maximal $(R_0/4)$-separated net of $X$, denoted by $\{x_i\}_{i=1}^F$. Assuming the space is separable, $F$ is at most countably infinite. By the maximality of the net, the union of balls $N_{x_i}(R_0/2)$ forms a cover of $X$: \begin{align}\bigcup_{i=1}^F N_{x_i}(R_0/2) = X.
    \end{align}
    
    Since the space $X$ has Assouad dimension $\beta < \alpha$, for any $x_i$ in the net, the ball $N_{x_i}(R_0)$ can be covered by a collection $\mathcal{B} = \{N_{y_j}(r) \mid j \in [C_R (R_0/r)^\alpha]\}$ of at most $C_R (R_0/r)^\alpha$ balls of radius $r < R_0/4$. We define an integer $w$ such that $wr \leq R_0/2 < (w+1)r$. Consider the concentric annuli:
    \begin{align}
        A_{n,i} = N_{x_n}(R_0 - ir) \setminus N_{x_n}(R_0 - (i+1)r).
    \end{align}
    The disjoint union of $w$ such annuli forms a thick shell:
    \begin{align}
        \bigsqcup_{i=0}^{w-1} A_{n,i} = N_{x_n}(R_0) \setminus N_{x_n}(R_0 - wr) \subset N_{x_n}(R_0) \setminus N_{x_n}(R_0/2).
    \end{align}
    Because any single ball in $\mathcal{B}$ can intersect at most 3 such annuli $A_{n,i}$, the pigeonhole principle guarantees the existence of an index $0 \leq w_n \leq w-1$ such that the specific annulus $A_{n,w_n}$ is covered by at most $\frac{3}{w} C_R (R_0/r)^\alpha$ balls from $\mathcal{B}$. Since $w = \lfloor R_0/(2r) \rfloor > R_0/(2r) - 1$, we have
    \begin{align}
        \frac{3}{w} C_R \left(\frac{R_0}{r}\right)^\alpha \leq \frac{3}{\frac{R_0}{2r} - 1} C_R \left(\frac{R_0}{r}\right)^\alpha \leq 12 C_R \left(\frac{R_0}{r}\right)^{\alpha-1}.
    \end{align}
    
    We then define the first residual space as $X_1 = \bigcup_{n=1}^F A_{n,w_n}$, and construct the disjoint sets:
    \begin{align}
        Y_n^1 = N_{x_n}(R_0 - (w_n+1)r) \setminus \left( X_1 \cup \bigcup_{i=1}^{n-1} N_{x_i}(R_0 - w_ir) \right).
    \end{align}
    By construction, the family $\{Y_n^1\}_{n=1}^F$ is $r$-separated and bounded by diameter $2R_0$. Furthermore, $X_1 \bigsqcup \left(\bigsqcup_{n=1}^F Y_n^1\right) = \bigcup_n N_{x_n}(R_0 - (w_n+1)r) \supset \bigcup_n N_{x_n}(R_0/4) = X$ forms a valid covering partition. Indeed, for every $ x\notin X_1$, there exists a smallest integer $n$ such that $x\in N_{x_n}(R_0-w_n r)$, which implies that $x\in Y^1_n$.
    
    Next, we bound the covering number of $X_1$. For any $x \in X$, if $A_{n,w_n} \cap N_x(R_0) \neq \emptyset$, it follows that $x_n \in N_x(2R_0)$. Since the net $\{x_n\}$ is $(R_0/4)$-separated, applying the covering condition to $N_x(2R_0)$ with balls of radius $R_0/9$ implies that the number of such net points $x_n$ within $N_x(2R_0)$ is bounded by $C_R \cdot 18^\alpha$. Therefore, $N_x(R_0)$ intersects at most $C_R \cdot 18^\alpha$ annuli $A_{n,w_n}$ in $X_1$. Consequently, $N_x(R_0) \cap X_1$ can be covered by at most $C_1 (R_0/r)^{\alpha-1}$ balls of radius $r$, where $C_1 = 12 \cdot 18^\alpha C_R^2$.
    
    This peeling process can be iterated for each subsequent residual space $X_k$, reducing the exponent of $(R_0/r)$ by $1$ at each step. The accumulated constants $C_k$ strictly depend only on $C_R$ and $\alpha$. The iteration terminates with the final partition $\bigsqcup_{k=1}^{m} \left( \bigsqcup_{n=1}^{N_k} Y_n^k \right)$, where $m=\lfloor\alpha\rfloor+1$ and the $C_{m}$ is the constant depending only on $\alpha$ and $C_{R}$. We set $C_{\alpha}\coloneqq \min\{\frac{1}{16},\frac{1}{4}C_{m}^{-\frac{1}{m-\alpha}}\}$, $R_0=\frac{1}{2C_{\alpha}}r$. Then for any $0<r<r_{\mathrm{Nagata}}=C_{\alpha}R$, we derive $R_0=\frac{r}{2C_{\alpha}}<\frac{R}{2}$. The initial assumption $\mathrm{diam}(Y_n^k) < 2R_0$ and $\mathrm{dist}(Y_n^k, Y_l^k) > r$ provide a Nagata division with $r$-separateness and $\frac{1}{C_{\alpha}}r$-boundness, completing the proof that the Nagata dimension is at most $\lfloor\alpha\rfloor$ at scale $C_\alpha R$. 
    
\end{proof}
 \begin{theorem}
     \label{A-N bound at scale}
     Let $\mathcal{X}=\{(G_{\lambda},d_{\lambda})\}_{\lambda\in I}$ be a family of graphs with shortest-path metric $d_{\lambda}$.
     If $\mathcal{X}$ has Assouad dimension $\beta$ up to scale $R$, then the Nagata dimension of $\mathcal{X}$ is bounded above by $\lfloor\beta\rfloor$ up to scale $c_{\beta} R$, where $c_{\beta}$ depends only on $\beta$ and the uniform constant $C$ in the definition of Assouad dimension .
 \end{theorem}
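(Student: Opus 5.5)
The plan is to derive the statement directly from Lemma~\ref{Assouad--Nagata Reduction at scale $R$}, applied uniformly to every member $G_\lambda$ of the family. The key observation is that the lemma's proof never uses anything about a particular space beyond the covering constant $C_R$ and the exponent $\alpha$; all constants $C_k$ accumulated in the iterative peeling depend only on $(\alpha, C_R)$. Since $\mathcal{X}$ has Assouad dimension $\beta$ up to scale $R$, there is a single constant $C$ (independent of $\lambda$) witnessing the covering condition for some exponent $\alpha$ with $\beta<\alpha<\lfloor\beta\rfloor+1$. For definiteness I would fix $\alpha=\beta+\tfrac12(\lfloor\beta\rfloor+1-\beta)$, so that $\alpha$, and hence $\lfloor\alpha\rfloor=\lfloor\beta\rfloor$, is a function of $\beta$ alone. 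Because $\beta$ is an infimum, the covering condition holds with exponent $\alpha>\beta$ with some constant $C_\alpha'$, which is the uniform constant "$C$" of the statement.

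Concretely, I would first verify the hypotheses of the lemma for each graph $(G_\lambda,d_\lambda)$: each is a finite (hence separable) metric space, so the maximal $(R_0/4)$-net exists without Zorn's lemma and is finite. The covering condition with exponent $\alpha$ and constant $C_\alpha'$ holds for all $0<r<R'\le R$ uniformly in $\lambda$ by hypothesis. Next, I would run the annular decomposition of the lemma on each $G_\lambda$ with the same choice of $r$ and $R_0=r/(2C_\alpha)$. This produces at most $\lfloor\alpha\rfloor+1=\lfloor\beta\rfloor+1$ families $\{Y^k_n\}_n$ that are $r$-separated and $2R_0=r/C_\alpha$-bounded, valid whenever $r<C_\alpha R$. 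Setting $c_\beta:=C_\alpha$ then gives Nagata dimension at most $\lfloor\beta\rfloor$ up to scale $c_\beta R$, with boundedness constant $c=1/C_\alpha$, both independent of $\lambda$.

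One graph-specific point should be checked. In a shortest-path metric, distances are integers, so the annuli $A_{n,i}$ have thickness $r$ only when $r\ge1$. For $r<1$, however, $r$-separation is trivial: singletons are $r$-separated for any $r\le1$, so a single family of singletons gives a $0$-division. The case $r<1$ is therefore handled separately and trivially. For $r\ge1$ the lemma's argument goes through verbatim, including the count that each $r$-ball meets at most $3$ consecutive annuli.

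The main obstacle is making sure the constants genuinely do not depend on $\lambda$. In particular, the bound on the number of net points $x_n$ within $N_x(2R_0)$, namely $C_R\,18^\alpha$, must use the covering condition at radius $2R_0\le R$. That requirement is exactly why $R_0<R/2$ is imposed, and it has to remain valid at every iteration step, where the residual sets $X_k$ are covered inside balls of radius $R_0$. I would check inductively that the constant satisfies $C_{k+1}\le 12\cdot18^\alpha C_R\,C_k$, so that $C_m$, and hence $c_\beta=\min\{1/16,\tfrac14C_m^{-1/(m-\alpha)}\}$, depends only on $\beta$ and $C$.
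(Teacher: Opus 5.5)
Your proposal is correct and is essentially the paper's own argument. You fix $\alpha\in(\beta,\lfloor\beta\rfloor+1)$ so that $\lfloor\alpha\rfloor=\lfloor\beta\rfloor$, apply the Assouad--Nagata reduction lemma to each $G_\lambda$, and observe that every constant in the annular peeling depends only on $\alpha$ and the uniform covering constant, so $c_\beta=C_\alpha$ is independent of $\lambda$.

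Your extra checks (finite nets, the recursion $C_{k+1}\le 12\cdot 18^{\alpha}C_R C_k$, the constraint $2R_0\le R$) make explicit what the paper leaves implicit. Your separate treatment of $r<1$ is harmless but not needed: the three-annuli count and the $r$-separation of the $Y^1_n$ follow from the triangle inequality for any real $r$.
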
 
 \begin{proof}
     Take $\alpha=\beta+\epsilon$, where $\epsilon>0$ is sufficiently small such that $\lfloor\beta\rfloor=\lfloor\alpha\rfloor$. Applying Lemma~\ref{Assouad--Nagata Reduction at scale $R$} for each $G_{\lambda},\lambda\in I$, we obtain a corresponding coefficient that is independent of the index $\lambda$.
 \end{proof}
 Given that the scale associated with the Assouad dimension can be chosen arbitrarily, the Nagata dimension is bounded across all scales, up to a constant that depends on $C$ and $\beta$, by taking the asymptotic limit $R \to +\infty$ in Theorem~\ref{A-N bound at scale}.
 \begin{corollary} \label{Assouad-Nagate bound} 
    Let $\mathcal{X}=\{(G_{\lambda},d_{\lambda})\}_{\lambda\in I}$ be a family of graphs with shortest-path metric $d_{\lambda}$. If the Assouad dimension of $\mathcal{X}$ at  all scales is $\beta$, then the Nagata dimension of $\{G_{\lambda}\}$ is bounded above by $\lfloor\beta\rfloor$ at all scales:
    \begin{align}\mathrm{dim}_{\mathrm{Nagata}}\mathcal{X}\leq \lfloor\beta\rfloor\leq \dimasd\mathcal{X}=\beta.
    \end{align}
\end{corollary}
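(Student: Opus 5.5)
The plan is to derive the corollary from Theorem~\ref{A-N bound at scale} by letting the scale go to infinity. The point to check is that the constant produced there is independent of the scale. Assume $\dimasd\mathcal{X}=\beta<\infty$; if $\beta=+\infty$ the statement is vacuous. Then for every $\alpha>\beta$ there is a single constant $C$ such that, for every $\lambda$, every $x\in G_\lambda$ and every $0<r<R$, the ball $N_x(R)$ is covered by $C(R/r)^\alpha$ balls of radius $r$. Here "all scales" in Definition~\ref{Assouad dimension} is a supremum over $\overline R$ of scale-restricted dimensions. So I first have to argue that the uniform covering constant $C_R$ can be taken bounded in $R$.

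I would handle this as follows. Fix $\alpha$ with $\beta<\alpha<\lfloor\beta\rfloor+1$. Finiteness of the Assouad dimension at all scales, in the sense of Definition~\ref{Assouad dimension short}, gives a constant $C$ that is independent of $\lambda$, $x$, $r$ and $R$. If only the scale-wise definition is available, I would pass through the doubling characterization, Proposition~\ref{doubling condition}. That yields a uniform doubling constant, and iterating it recovers a uniform $C$ at every scale, at the possible cost of enlarging the exponent. This is the main obstacle: if the exponent grows, it may cross $\lfloor\beta\rfloor+1$. So I would rely on the definition of $\dimasd$ at all scales as producing, for each $\alpha>\beta$, one constant $C_\alpha$ valid for all $R$. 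This is exactly the hypothesis of Definition~\ref{Assouad dimension short}.

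With the constant uniform, I apply Lemma~\ref{Assouad--Nagata Reduction at scale $R$} to each $G_\lambda$, for every $R>0$. It gives a constant $C'_\alpha$, depending only on $\alpha$ and $C$, such that every $G_\lambda$ has a $\lfloor\alpha\rfloor=\lfloor\beta\rfloor$-division for all $0<r<C'_\alpha R$, with boundedness constant $c=1/C'_\alpha$. None of these constants depends on $R$ or $\lambda$. Given any $r>0$, choose $R>r/C'_\alpha$. This produces, for all $\lambda$, families $\mathcal{B}_1,\dots,\mathcal{B}_{\lfloor\beta\rfloor+1}$ that cover $V_\lambda$, are $r$-separated, and are $cr$-bounded. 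This is the $\lfloor\beta\rfloor$-division condition at every scale.

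The Nagata dimension is the minimal such integer, so it follows that $\dim_{\mathrm{Nagata}}\mathcal{X}\le\lfloor\beta\rfloor$. The inequality $\lfloor\beta\rfloor\le\beta$ is trivial. This completes the plan.
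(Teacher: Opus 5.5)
Your proposal is correct and follows the same route as the paper. The paper's proof is simply to let $R\to+\infty$ in Theorem~\ref{A-N bound at scale}, using that the reduction constant depends only on $\beta$ (through $\alpha$) and on the covering constant $C$, and not on $R$ or $\lambda$. Your insistence that $C$ be uniform in $R$, as in Definition~\ref{Assouad dimension short}, rather than merely finite at each truncation scale, makes explicit what the paper leaves implicit when it says the scale ``can be chosen arbitrarily.''
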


 \begin{remark}
      Unlike the Assouad dimension, the Hausdorff dimension and the Nagata dimension are not comparable in general. For examples, even in simple cases, $\mathrm{dim}_{\mathrm{Nagata}}(\mathbb
     Q)=1>0=\mathrm{dim}_{\mathrm{Hausdorff}}(\mathbb{Q},|\cdot|_{\mathbb{R}})$ but $\mathrm{dim}_{\mathrm{Nagata}}(\mathbb{R})=1<2=\mathrm{dim}_{\mathrm{Hausdorff}}(\mathbb{R},|\cdot|_{\frac12}).$  
 \end{remark}

\section{Fractalized quantum codes}\label{app:Fractalized stabilizer code}

In this appendix, we discuss the intrinsic dimension of the fractalized stabilizer codes introduced in Ref.~\cite{Devakul2021fractalizingquantum}. The fractalization procedure can be viewed as a map that generates
higher-dimensional quantum codes from a set of linear cellular automaton rules. Our purpose is to show that the codes produced by this construction have integer Assouad dimension. We refer the reader to Ref.~\cite{Devakul2021fractalizingquantum} for details of the construction. Hence, from the perspective of intrinsic dimension, these codes do not provide a route to constructing fractional-dimensional code families that improve the parameter tradeoffs considered in the main text.

We firstly briefly summarize the construction, with notations following Ref.~\cite{Devakul2021fractalizingquantum}. The key observation of the fractalization map is that given an input $D$ dimensional stabilizer code and a set of LCA rules for creating an additional $m$ dimension, the output is a quantum code on a $D+m$ dimensional stabilizer code. 

The formalism is expressed in the language of polynomials. A Pauli operator on a \(D\)-dimensional translation-invariant stabilizer code can be represented by a polynomial over
\[
\mathbb{F}_2[x_1,\ldots,x_D]/(x_1^{L_1}-1,\ldots,x_D^{L_D}-1),
\]
where each monomial labels a site of the \(D\)-dimensional lattice. An \(X\) or $Z$-type Pauli operator anchored at \(\mathbf r_0\) is written as
\begin{equation}
    A_{\bm{r_0}}=\sigma_{X}\left[x^{\bm{r_0}}a(x)\right], B_{\bm{r_0}}=\sigma_{Z}\left[x^{\bm{r_0}}b(\overline{x})\right].
\end{equation}
The fractalization procedure extends the original \(D\)-dimensional code by introducing \(m\) additional lattice directions, denoted by
\(y=(y_1,\ldots,y_m)\). The behavior of the model along these added directions is determined by a collection of linear cellular automaton (LCA) update rules, which is denoted by
\begin{align}\mathbf f(y)=\{f^{(i)}(y)\}_{i=1}^{m}.
\end{align}
Following Ref.~\cite{Devakul2021fractalizingquantum}, the stabilizers are mapped from the original $A_{r}$ to a family of fractalized stabilizers on a higher dimensional lattice, 
\begin{align}
    &A^{\mathrm{frac}}_{\bm{r_0},\bm{s_0}}=\sigma_{X}\left[x^{\bm{r_0}}y^{\bm{s_0}} a(\bm{f}(y)\circ x)\right],\\
    &B^{\mathrm{frac}}_{\bm{r_0},\bm{s_0}}=\sigma_{Z}\left[ x^{\bm{r_0}}y^{\bm{s_0}}b(\bm{f}(\overline{y})\circ \overline{x})\right].
\end{align} where $\left[\bm{f}(y)\circ x\right]^{\bm{r}}=\left(\mathop{\prod}\limits_{i=1}^{m}f^{(i)}(y)^{r_i}\right)x^{\bm{r}}$. 


We now discuss several examples from Ref.~\cite{Devakul2021fractalizingquantum}. These examples illustrate that
fractalization can produce operators with visually fractal support, but the associated connectivity graphs still have ordinary integer-dimensional growth.

The simplest example is the fractalized one-dimensional Ising model with
linear cellular automaton rule \(f(y)=1+y\). The original and fractalized
Hamiltonians are

\begin{align}
    H_{\mathrm{Ising},1\mathrm{D}}
    &=
    -\sum_{r\in \mathbb{Z}/L\mathbb{Z}}
    Z_r Z_{r-1},
    \\
    H_{\mathrm{Ising},1\mathrm{D}}^{\mathrm{frac}}
    &=
    -\sum_{(r,s)}
    Z_{(r,s)} Z_{(r,s-1)} Z_{(r-1,s-1)} .
\end{align}

\begin{figure}[t]
    \centering
    \includegraphics[width=0.48\columnwidth]{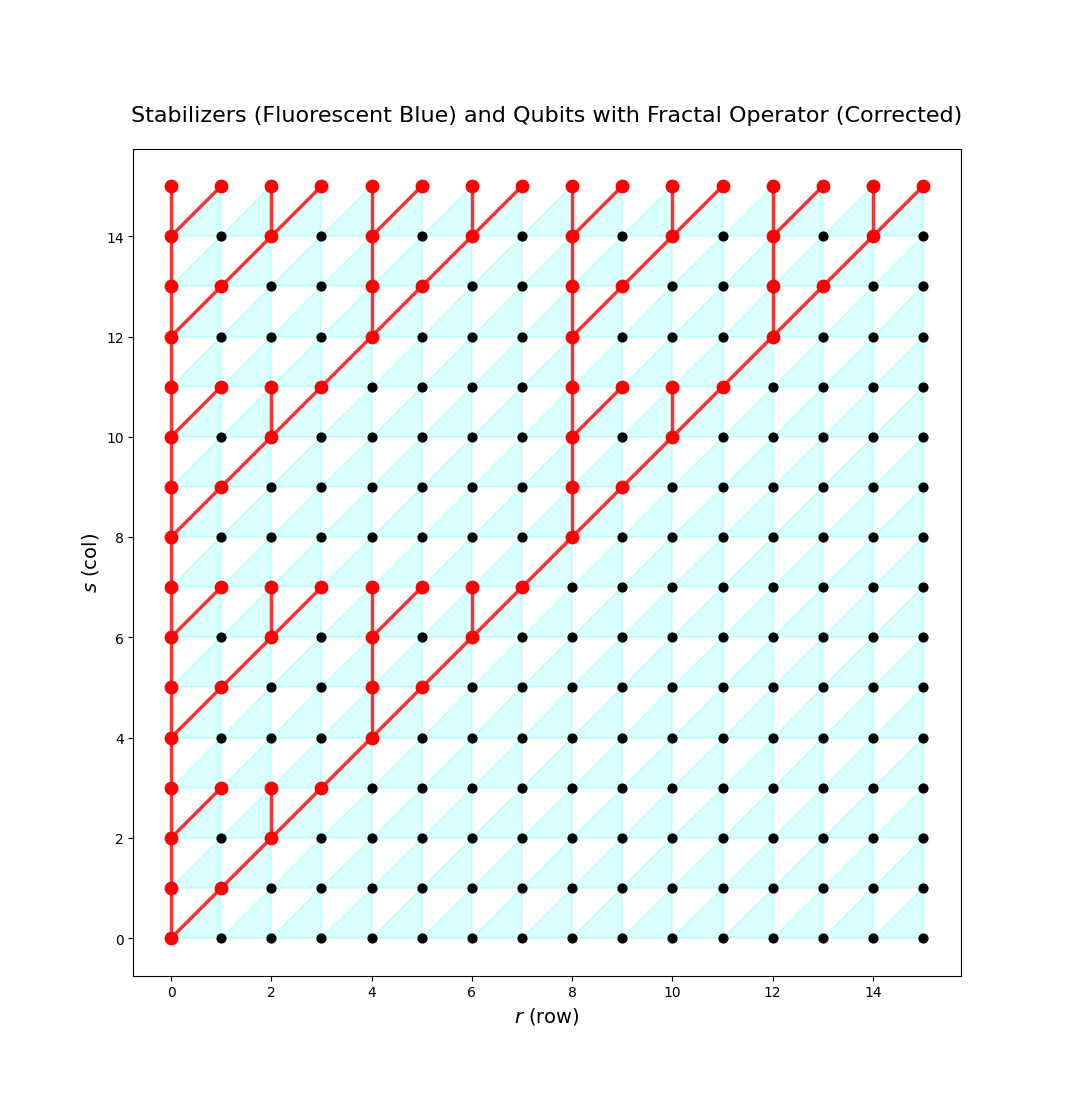}
    \caption{
    A logical operator in the fractalized one-dimensional Ising model. Although
    the operator has fractal support, the connectivity graph has integer
    Assouad dimension \(2\).
    }
    \label{fig:fractalized_1D_Ising}
\end{figure}

\begin{figure}[t]
    \centering
    \includegraphics[width=0.23\columnwidth]{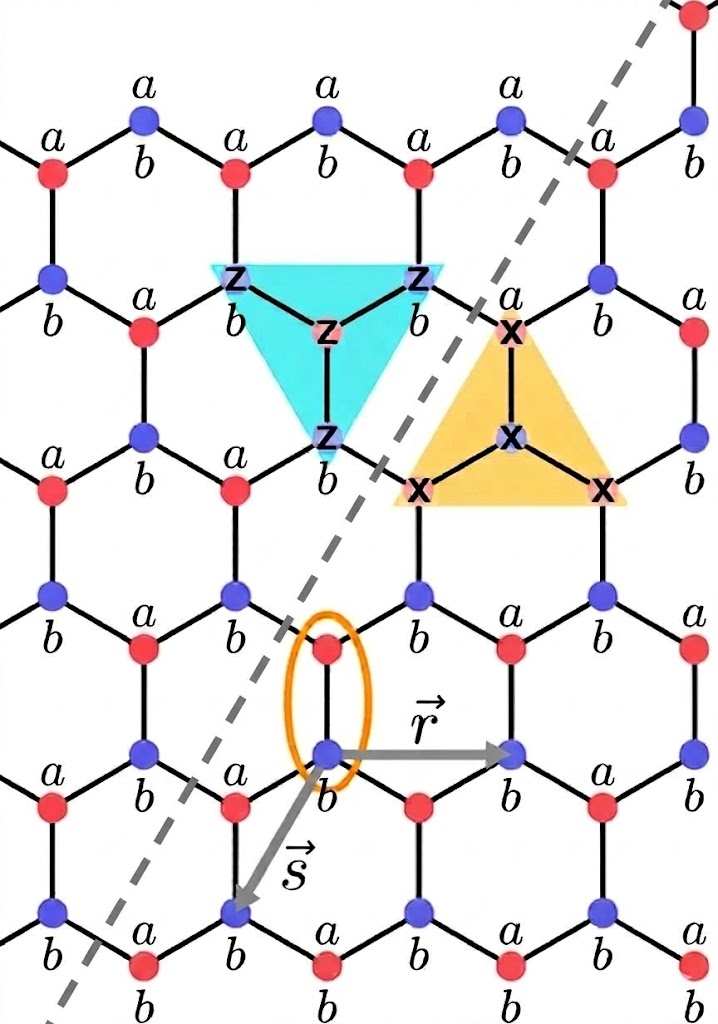}
    \caption{
    A two-dimensional layout of the fractalized one-dimensional cluster model.
    Each orange circle denotes one site containing two qubits, labeled \(a\)
    and \(b\).  
    }
    \label{fig:fractalized_1D_cluster}
\end{figure}
The corresponding connectivity graph is a two-dimensional graph consisting of triangles(see Fig.~\ref{fig:fractalized_1D_Ising}). Therefore, despite the fractal form of its logical operators, the graph has
integer Assouad dimension \(2\). In particular, it admits a transversal $\overline{\mathrm{CNOT}}$ gate, saturating Theorem~\ref{Intrinsic gate bound}.

The fractalized one-dimensional cluster model gives a similar example. For the
same rule \(f(y)=1+y\), the Hamiltonians before and after fractalization are
\begin{align}
    H_{\mathrm{Cluster},1\mathrm{D}}
    &=
    -\sum_{r}
    X_{r}^{(a)}X_{r}^{(b)}X_{r+1}^{(a)}
    -\sum_{r}
    Z_{r-1}^{(b)}Z_{r}^{(a)}Z_{r}^{(b)},
    \\
    H_{\mathrm{Cluster},1\mathrm{D}}^{\mathrm{frac}}
    &=
    -\sum_{r,s}
    X_{(r,s)}^{(a)}
    X_{(r+1,s)}^{(a)}
    X_{(r+1,s+1)}^{(a)}
    X_{(r,s)}^{(b)}
    -\sum_{r,s}
    Z_{(r,s)}^{(a)}
    Z_{(r,s)}^{(b)}
    Z_{(r-1,s)}^{(b)}
    Z_{(r-1,s-1)}^{(b)} .
\end{align}
This model admits a two-dimensional layout on a honeycomb
lattice( see Fig.~\ref{fig:fractalized_1D_cluster}). Hence its connectivity
graph again has integer Assouad dimension \(2\). The code supports transversal 
Clifford gate implemented by Hadamards \(H^{\otimes n}\), followed by a
product of SWAP gates between symmetrically positioned qubits across the
dashed axis in Fig.~\ref{fig:fractalized_1D_cluster}. 

A higher-dimensional example is obtained by fractalizing the two-dimensional
toric code. The Hamiltonians before and after fractalization are
\begin{align}
    H_{\mathrm{TC},2\mathrm{D}}
    &=
    -\sum_{(r_1,r_2)}
    \left[
    \sigma_X\!\left(
    x^{\bm r}
    \begin{pmatrix}
        1+x_1 \\
        1+x_2
    \end{pmatrix}
    \right)
    +
    \sigma_Z\!\left(
    x^{\bm r}
    \begin{pmatrix}
        1+\overline{x}_1 \\
        1+\overline{x}_2
    \end{pmatrix}
    \right)
    \right],
    \\
    H_{\mathrm{TC},2\mathrm{D}}^{\mathrm{frac}}
    &=
    -\sum_{(r_1,r_2,s)}
    \sigma_X\!\left(
    x^{\bm r} y^s
    \begin{pmatrix}
        1+x_1 f_1(y) \\
        1+x_2 f_2(y)
    \end{pmatrix}
    \right)
    -\sum_{(r_1,r_2,s)}
    \sigma_Z\!\left(
    x^{\bm r} y^s
    \begin{pmatrix}
        1+\overline{x}_1 f_1(\overline y) \\
        1+\overline{x}_2 f_2(\overline y)
    \end{pmatrix}
    \right).
\end{align}

For $f_1(y)^{L_1}=f_2(y)^{L_2}=1$, the three-dimensional
layout decomposes into \(L_3\) inclined two-dimensional translation invariant codes, encoding $2L_3$ logical qubits. For other algebraic relations in the cellular-automaton rules, several inclined layers can form a larger
connected component. However, the resulting connectivity graphs still have dimension 2 as its Assouad dimension. 

\end{document}